\documentclass[runningheads]{llncs}

\usepackage[mathlines]{lineno}
\usepackage[T1]{fontenc}
\usepackage{tikz}
\usepackage{stmaryrd}
\usepackage{graphicx}
\usepackage{DotArrow}
\usepackage{chemarrow}
\usepackage{array}
\usepackage{tabularx}

\usepackage{lipsum}
\usepackage{dsfont}
\usepackage{amsmath}
\usepackage{amssymb}
\usepackage{mathtools}
\usepackage{keyval}
\usepackage{environ}

\usepackage{algpseudocode}
\usepackage[linesnumbered,ruled,vlined]{algorithm2e}
\usepackage{algorithmicx}
\usepackage{wrapfig}
\usepackage{multicol}

\usepackage{hyperref}
\hypersetup{hidelinks}          %
\usepackage[capitalise,nameinlink]{cleveref}
\usepackage{thmtools}
\usepackage{thm-restate}        %

\newcommand{\wdwidth}{\ensuremath{{\sf wd}}}

\def\multiset#1{\ensuremath{\left\{\kern-.3em\left\{{#1}\right\}\kern-.3em\right\}}}

\usepackage{tabularx}

\newcommand{\problemframe}[2]{%
	\par\smallskip\noindent
	\begingroup
	\setlength{\tabcolsep}{0pt}%
	\fbox{%
		\begin{minipage}{\dimexpr\linewidth-2\fboxsep-2\fboxrule\relax}
			\emergencystretch=1em\relax
			\begin{tabularx}{\linewidth}{@{}l@{\hspace{0.6em}}X@{}}
				Input:    & #1\\[2pt]
				Question: & #2
			\end{tabularx}
	\end{minipage}}%
	\endgroup
	\par\smallskip}

\newcommand{\xdomain} {\domain_\var}

\newcommand{\infun}{\textsf{in}}
\newcommand{\outfun}{\textsf{out}}
\newcommand{\marking}{\texttt{M}}
\newcommand{\places}{\texttt{P}}
\newcommand{\transition}{\texttt{T}}
\newcommand{\var}{\texttt{X}}
\newcommand{\newvar}{\texttt{Y}}

\DeclareSymbolFont{Shuffle}{U}{shuffle}{m}{n}
\DeclareFontFamily{U}{shuffle}{}
\DeclareFontShape{U}{shuffle}{m}{n}{%
	<-8>shuffle7%
	<8->shuffle10%
}{}
\DeclareMathSymbol\shuffle{\mathbin}{Shuffle}{"001}
\DeclareMathSymbol\cshuffle{\mathbin}{Shuffle}{"002}

\newcommand{\word}{w}

\newcommand{\relconst}{\hat{\mathcal{C}}}
\newcommand{\bigO} {\mathcal{O}}

\newcommand{\instr} {\iota}

\newcommand{\lregmap}{\vec{l}}

\newcommand{\reglang}{\mathtt{Reg}}

\newcommand{\run} {\pi}
\newcommand{\irun} {\rho}
\newcommand{\goto} {\textsf{goto}}

\newcommand{\vecregmapinit}{\vec{l}_{\tt init}}

\newcommand{\Labels}{\Lambda}
\newcommand{\reachedLab}{\Labels}
\newcommand{\lab}{\ell}
\newcommand{\initlab}{\lambda_{\tt init}}
\newcommand{\xvar}{ \mathtt{x}}
\newcommand{\yvar}{ \mathtt{y}}
\newcommand{\zvar}{ \mathtt{z}}
\newcommand{\rvar}{ \mathtt{r}}

\newcommand{\nextof}[1]{ \textsf{next}(#1)}

\newcommand{\tuple}[1] {\left< #1 \right>}
\newcommand{\PDS} {\text{Para-Prog}\xspace}
\newcommand{\PCS}{\text{PCS}\xspace}
\newcommand{\pcs}{\mathcal{C}}

\newcommand{\Naturals} {\mathbb{N}}

\newcommand{\sset} {\mathcal{S}}

\newcommand{\tset} {\mathcal{T}}
\newcommand{\aset} {\mathcal{A}}

\newcommand{\vset}{\mathcal{V}}

\newcommand {\const} {d}
\newcommand {\constants} {\mathbb{C}}

\newcommand{\PCRFR} {\text{PFReach}\xspace}

\newcommand{\assign}[2] {#1 := #2}

\newcommand{\States} {Q}

\newcommand{\initstate} {q_{\tt in}}
\newcommand{\Alphabet} {\Sigma}
\newcommand{\Trans} {\Delta}

\newcommand{\domain}{\mathbb{D}}

\newcommand{\op}{\textsf{op}}

\newcommand{\val} {d}

\newcommand{\Conf}{\textsf{Conf}}
\newcommand{\conf}{{c}}

\newcommand{\Confof}[1] {\Conf_{#1}}
\newcommand{\InitConfof}[1]{\textsf{InitConf}_{#1}}

\newcommand{\initgregmap}{\vec{g}_{\tt in}}
\newcommand{\movesto}[1]{\xrightarrow{#1}}
\newcommand{\Movesto}[1]{\overset{#1}{\Longrightarrow}}
\newcommand{\updatefun}[3]{#1\left[ #2 \leftarrow #3\right]}
\newcommand{\fin} {f}

\newcommand{\pds} {\mathcal{P}}
\newcommand{\globalregs} {\mathcal{G}}
\newcommand{\localregisters} {\mathcal{L}}

\newcommand{\gregmap}{\vec{g}}
\newcommand{\labelmap} {\vec{{\bf \lambda}}}

\newcommand{\pdsconf}{\textsf{pc}}
\newcommand{\iconf}{\textsf{ic}}
\newcommand{\labelmapinit} {\vec{{\bf \lambda}}_{\tt init}}

\newcommand{\initpdsconf}{\pdsconf_{\tt init}}

\newcommand{\lreg}{x}
\newcommand{\greg} {r}

\newcommand{\project}[2]{#1\downarrow_{#2}}

\newcommand{\assert} {{\sf assert}}

\newcommand{\hlab} {\ell^{h}}

\newcommand{\ins} {\textsf{ins}}
\newcommand{\prop} {\textsf{prop}}

\newcommand{\del} {\textsf{del}}

\newcommand{\delprop} {\textsf{dprop}}

\newcommand{\Parts}{\textsf{Parts}}
\newcommand{\Partition}{\textsf{Parts}}
\newcommand{\pt}{\textsf{pt}}

\newcommand{\id} {\textsf{id}\xspace}

\newcommand {\ffun} {f}

\newcommand{\supp} {\textsf{supp}}
\newcommand{\asupp} {\textsf{valsupp}}
\newcommand{\msupp}{\textsf{msupp}}

\newcommand{\hfun} {h}
\newcommand{\gfun} {g}

\newcommand{\true} {\textsf{true}}
\newcommand{\false} {\textsf{false}}

\newcommand{\finalStates}{\texttt{Fin}}

\newcommand{\lang} {\mathcal{L}}

\newcommand{\WSTS} {\text{WSTS}\xspace}
\newcommand{\wqo} {WQO}
\newcommand{\pre} {\textsf{Pre}}
\newcommand{\upclosure}[1] {#1\!\uparrow}

\newcommand{\mset} {\textsf{m}}

\newcommand{\norm}[1] {\|#1\|}

\newcommand{\finalABS}{\ensuremath{\ABS_{\tt final}}}

\newcommand{\lossy}{\mathcal{L}}
\newcommand{\lossytuples}{\tuple{\States_\lossy, \Trans_\lossy, \initstate^\lossy, \finalStates_\lossy}}
\newcommand{\counterop} {\textsf{cop}}
\newcommand{\inc} {\textsf{inc}}
\newcommand{\dec} {\textsf{dec}}
\newcommand{\zero} {\textsf{zero}}

\newcommand{\nonce}{\textsf{Nonce}}
\newcommand{\CComment}[1]{\mbox{}}

\newcommand{\MAX} {\mathbf{Max}}

\newcommand{\abstraction} {\texttt{abs}}
\newcommand{\ABS} {\texttt{Abs}}
\newcommand{\initABS} {\ABS_\texttt{init}}
\newcommand{\abs} {\texttt{ab}}
\newcommand{\functions}[2] {[#1 \to #2]}

\newcommand{\kreadrep}{can-replicate-with-\ensuremath{k}-demands\xspace}
\newcommand{\kreadrepPar}[1]{can-replicate-on-\ensuremath{#1}-demands\xspace}
\newcommand{\kwritepr}{writes-\ensuremath{k}-times-before-consumption\xspace}
\newcommand{\kwriteprPar}[1]{writes-\ensuremath{#1}-times-before-consumption\xspace}
\newcommand{\NMAX}{\MAX}

\newcommand{\chain} {\textsf{Chain}}

\newcommand{\Thread}{ \texttt{Th}}
\newcommand{\thread}{ \texttt{th}}
\newcommand {\xleadsto}[1] {\overset{#1}{\rightsquigarrow}}
\newcommand{\tins}[1]{\text{ins}(#1)}
\newcommand{\tdel}[1]{\text{del}(#1)}
\newcommand{\tprop}[1]{\text{prop}(#1)}
\newcommand{\cmd}{\textsf{cmd}}

\title{Parameterized Reachability for Register Machines with Data}
\titlerunning{Parameterized Reachability for Register Machines}

\author{%
	Parosh Aziz Abdulla\inst{1} \and
	Mohammed Faouzi Atig\inst{1} \and
	Abhijith R Nair\inst{2} \and
	K Narayan Kumar\inst{3} \and
	Prakash Saivasan\inst{2}%
}

\authorrunning{P. A. Abdulla et al.}

\institute{%
	Department of Information Technology, Uppsala University, Sweden\\
	\email{\{parosh,mohamed\_faouzi.atig\}@it.uu.se}
	\and
	The Institute of Mathematical Sciences, Homi Bhabha National Institute, India\\
	\email{\{abhijitrn,prakashs\}@imsc.res.in}
	\and
	Chennai Mathematical Institute, India\\
	\email{kumar@cmi.ac.in}%
}

\begin{document}
	
	\maketitle
	
	\begin{abstract}
		We investigate the parameterized reachability problem for concurrent register machines over infinite data domains.
		In this framework, each machine is a program equipped with a set of local registers, the communication
		across machines is mediated through a set of shared registers. Both local and shared registers can take
		values from an infinite data domain. The program's primitive operations include copying values
		between registers, assigning constants, comparing registers for (dis-)equality, and
		nondeterministic assignments that store an arbitrary domain value into a local register.
		
		The parameterized reachability problem considers a program and a target location, asking
		whether there exists some $n \in \Naturals$ such that an execution of $n$ identical machines (referred to as instances)
		results in at least one instance reaching the specified location. We show that this problem is
		{\sc Pspace-complete} in the general case and it becomes undecidable if a freshness assumption (i.e., each assignment must produce a unique value
		distinct from all the constants and from any of the previously generated values)
		is applied to nondeterministic assignments. This undecidability persists even for systems restricted to two
		shared and two local registers. Finally, we establish  decidability results for two
		restricted settings: when each thread is limited to a single local register, or when the
		system utilizes only one shared register.
		
		\keywords{Parameterized reachability \and Register automata \and Concurrency \and Complexity \and Program verification}
	\end{abstract}
	
	\section{Introduction}
\label{sec:intro}

A \emph{parameterized system} consists of an arbitrary number of identical concurrent instances. 
\emph{Parameterized verification} refers to the task of verifying the correctness of such 
systems independently of the number of participating instances. This topic has been the focus 
of extensive theoretical and practical research for nearly three decades (see, e.g., 
\cite{DBLP:journals/ipl/AptK86,DBLP:journals/jacm/GermanS92,DBLP:journals/sigact/BloemJKKRVW16,DBLP:journals/sttt/AbdullaD16}), 
and is discussed in detail in a dedicated chapter of the \emph{Handbook of Model Checking}~\cite{DBLP:reference/mc/AbdullaST18}.

Initial research in this field primarily considered systems where each instance is \emph{finite-state}~\cite{DBLP:reference/mc/AbdullaST18}. 
Even under this restriction, the verification problem remains inherently difficult, as the unbounded 
number of instances generates an infinite state space that must be effectively managed.
Subsequent studies have extended these techniques to more expressive computational models, 
including pushdown systems~\cite{DBLP:conf/lics/Kahlon08}, timed systems~\cite{DBLP:journals/tcs/AbdullaJ03},  broadcast networks \cite{DelzannoST16,Delzanno16,DelzannoT13}, 
and programs operating under weak memory models~\cite{DBLP:conf/podc/KrishnaG0C22,DBLP:journals/pacmpl/AbdullaAR20}.

In this paper, we investigate parameterized verification in settings where the instances operate on 
data drawn from an infinite domain. We model each process as a classical register machine 
equipped with a set of local registers and access to a finite set of shared global registers. 
Supported operations include comparisons, assignments, and copying. 
Additionally, all the instances have access to a finite set of 
shared global registers.
Assignment operations may include copying the value of one register to another or reading an 
arbitrary value from the environment into a local register. We refer to the latter as a 
\emph{non-deterministic assignment}.
We distinguish between 
two types of \emph{non-deterministic assignments}: the \emph{unconstrained} variant, 
where an instance may assign any domain value, and the \emph{constrained} variant (freshness constraint), 
where assigned values must be distinct from all values previously observed.

We focus on the parameterized \emph{reachability} problem: determining if a system, starting 
from a common initial state, can reach a configuration where at least one instance is in a 
specified \emph{target state}. 
Our results demonstrate that the unconstrained case is {\sc Pspace}-complete. In contrast, 
the constrained case is undecidable, even for systems limited to two global and two 
local registers. We complete the landscape by showing that decidability is recovered 
if the system is restricted to either a single global register or a single local register 
per each instance. 
We prove that the former case is {\sc Pspace}-complete. For the single local register case, we show that 
parameterized reachability is decidable in  non-primitive recursive time via a reduction to data nets \cite{Rosa-VelardoF11}. 
Furthermore, we establish an non-primitive recursive lower bound for this restricted setting.

\noindent
\textbf{Related Work:}
Parameterized verification has been extensively studied (see surveys~\cite{DBLP:journals/sigact/BloemJKKRVW16,DBLP:reference/mc/AbdullaST18}). Since the problem is generally undecidable for certain topologies and communication primitives even with finite-state processes~\cite{DBLP:journals/ipl/AptK86}, research has primarily focused on identifying decidable subclasses based on topology, communication mechanisms, and process features~\cite{DBLP:conf/lics/EsparzaFM99,DBLP:conf/concur/DelzannoSZ10,DBLP:journals/toplas/GantyM12,DBLP:conf/csl/EmersonK04,DBLP:conf/charme/EmersonK03,DBLP:conf/tacas/NamjoshiT16}. Concurrently, various abstraction techniques have been developed for infinite-state reasoning, including regular model checking~\cite{DBLP:journals/tcs/KestenMMPS01,DBLP:conf/cav/BoigelotLW03,DBLP:journals/sttt/BouajjaniHRV12,DBLP:journals/sttt/Abdulla12}, monotonic abstraction~\cite{DBLP:conf/concur/AbdullaCDHHR10}, and symmetry reduction~\cite{DBLP:conf/cav/KaiserKW10,DBLP:conf/lics/EmersonHT00,DBLP:journals/sttt/AbdullaHH16}.

Early work established that model checking arbitrary finite-state processes communicating via rendezvous is {\sc Expspace}-complete~\cite{DBLP:journals/jacm/GermanS92}. This was later extended to pushdown automata: reachability is decidable for arbitrarily many identical pushdown processes~\cite{DBLP:conf/lics/Kahlon08}, and was later shown to be  {\sc Pspace}-complete when generalized to a \emph{leader} with arbitrarily many helpers~\cite{DBLP:conf/fsttcs/Hague11,DBLP:journals/jacm/EsparzaGM16}. %

The parametrized reachability has been studied in  other contexts as well. In real-time systems, parametrized reachability is decidable for processes with a single clock~\cite{DBLP:journals/tcs/AbdullaJ03} but becomes undecidable with two clocks~\cite{DBLP:conf/lics/AbdullaDM04}, though restricted versions remain {\sc Pspace}-complete~\cite{DBLP:conf/concur/AbdullaACMT18}. More recently, verification of finite-state processes has been explored under weak memory models, such as Total Store Order (TSO)~\cite{DBLP:journals/pacmpl/AbdullaAR20} and Release-Acquire semantics~\cite{DBLP:conf/podc/KrishnaG0C22}.

Finally, our formalism also relates to broadcast networks~\cite{ARBCWK2021,Balasubramanian18,DelzannoST16,Delzanno16,DelzannoT13}, where parameterized verification over infinite data domains has been studied~\cite{DelzannoST13,GuillouMW24}. Although these models are generally incomparable to ours, we show that our results yield a {\sc Pspace} upper bound for broadcast networks with a single global register over a unary alphabet. We note that the complexity aspect of the  parametrized reachability for the unary alphabet in this setting was previously unaddressed; in the general (non-unary) case, the problem was shown to be non-primitive recursive complete \cite{GuillouMW24}.

	\section{Notation and Model}

We denote by $\Naturals$ the set of all natural numbers. For any $n \in \Naturals$, we let $[1,n]$ denote the set $\{1, \dots,n\}$. Given sets $\sset$ and $\tset$, $\functions{\sset}{\tset}$ denotes the set of all functions from $\sset$ to $\tset$. For a function $\ffun \in \functions{\sset}{\tset}$, $\updatefun{\ffun}{a}{b}$ represents the updated function $\ffun'$ such that $\ffun'(a) = b$ and $\ffun'(s) = \ffun(s)$ for all $s \in \sset \setminus \{a\}$. For a finite alphabet $\Sigma$ and a word $\word \in \Sigma^*$, the width of the word, $\wdwidth(\word)$, is the maximum frequency of any single letter in $\word$ (e.g., $\wdwidth(ababa) = 3$).  Given a set $X$, we define a \emph{multiset} over it as a map $\mset: X \to \Naturals$, we let $X(m) = \{ x \in X \mid\mset(x) >0 \}$ and say the $\mset$ is finite if $X(m)$  is finite. Given two multisets $\mset_1,\mset_2$, we define their addition as the multiset $\mset_1+\mset_2$, where for all $x \in X$, $\mset_1+\mset_2(x) = \mset_1(x) + \mset_2(x)$. We say $\mset_1 \leq \mset_2$ if for all $x \in X$, $\mset_1(x) \leq \mset_2(x)$, in this case, we define its subtraction as the multiset $\mset_2 - \mset_1$, where for all $x \in X$, $\mset_2 - \mset_1 (x) = \mset_2(x) - \mset_1(x)$. We use $ X^\oplus$ to denote the set of all finite multisets over $X$.
\vspace{-10pt}
\begin{figure}[h!]
	\centering
	\vspace{-1ex} %
	\setlength{\fboxsep}{6pt} %
	\fbox{
		$
		\renewcommand{\arraystretch}{1.2}
		\begin{array}{lcl}
			\mathtt{prog} & ::= & \mathtt{var}\;x^* \;\; \mathtt{instr}^*, \quad \mathtt{instr} ::= \mathtt{lbl}:\mathtt{stmt} \\
			\mathtt{stmt} & ::= & \xvar := \yvar \mid \zvar := * \mid \mathtt{goto}\ \lab \mid \mathtt{if\;exp\;then\;goto\;} \lab \mathtt{\ end} \\
			\mathtt{exp}  & ::= & \xvar = \yvar \mid \xvar \neq \yvar  \mid \star
		\end{array}
		$
	}
	\vspace{-1ex} %
	\caption{Syntax for \PDS. Here, $\xvar \in \localregisters \cup \globalregs$, $\zvar \in \localregisters$, $y \in \globalregs \cup \localregisters \cup \domain$, and $\lab \in \Labels$.}
	\label{prg:rpsyntax:fig}
	\vspace{-2ex} %
\end{figure}

We investigate parameterized programs consisting of an arbitrary number of identical concurrent instances, or \emph{threads}. These programs operate over registers containing values from a fixed, potentially infinite data domain $\domain$. Each thread manages its own private local registers, while communication is achieved through a set of shared (\emph{global}) registers. %

We assume that $\domain$ contains the constant $0$, which serves as the initial value for all registers. A \emph{Parameterized Program} (\PDS) is defined as a pair $\pds=(\globalregs,P)$, where $\globalregs$ is the set of global registers and $P$ is the program  defined by the syntax in Figure \ref{prg:rpsyntax:fig}. A program begins with a declaration of local registers $\localregisters$ (disjoint from $\globalregs$), followed by a finite sequence of labelled instructions of the form $\lambda: s$, where $\lambda$ is a unique label and $s$ is a statement. The label $\initlab$ designates the entry point of the program. Let $\constants \subseteq \domain$ denote the set of constants appearing in program $P$ along with the special constant $0$. We let $\Labels$ be the set of all labels in the program and define the system size as $\norm{\pds} = |\localregisters| + | \globalregs| + |\constants| + |\Labels|$. 

Statements include register assignments, unconditional jumps, and conditional jumps. We specifically include a non-deterministic assignment ($x:=*$) that stores an arbitrary value from $\domain$ into $x$. To facilitate program modelling, we employ the following macros, its implementation is given in appendix \ref{sec:highmac}:

\noindent
- $\mathtt{assert}(e_1, \dots,e_n):$ ensures that a set of expressions evaluate to true (not necessarily atomically); if any expression is false, the execution is blocked.

\noindent
- $\mathtt{if\;e\;} \mathtt{then}\ \{\textsf{instr}_1\}\ \mathtt{else}\ \{\textsf{instr}_2\}\ \mathtt{end}:$ allows for conditional branching based on the evaluation of expression $e$.

\noindent
- $\textsf{If}(\star)\{\textsf{instr}_1\}; \dots; \{\textsf{instr}_k\}:$ represents a non-deterministic choice where the system picks exactly one   instruction to execute from the available collection.

Given a parameterized program $\pds$  and an $n \in \Naturals$, the \emph{$n$-configuration} is the configuration involving $n$-threads. It is a tuple $\pdsconf = \tuple{\labelmap, \lregmap,\gregmap}$, where $\labelmap:[1,n] \to \Labels$ maps each thread to a program label, $\lregmap: [1,n] \to [\localregisters \mapsto \domain ]$ maps each thread to its local register valuation, and $\gregmap: \globalregs \to \domain$ maps each global register to its current value. We let $\Confof{\pds}(n)$ be the set of all $n$-configurations and $\Confof{\pds} = \bigcup_{n \in \Naturals} \Confof{\pds}(n)$ be the set of all possible configurations. The initial $n$-configuration is $\initpdsconf(n) = \tuple{\labelmapinit, \vecregmapinit,\initgregmap}$, where all registers are initialized to $0$ and $\labelmapinit$ maps every thread to the initial label $\initlab$.

For any configuration $\pdsconf = \tuple{\labelmap, \lregmap,\gregmap} \in \Confof{\pds}(n)$ , we let $\labelmap(\pdsconf)=\labelmap$, $\lregmap(\pdsconf)=\lregmap$, and $\gregmap(\pdsconf)=\gregmap$. The set $\domain(\pdsconf) = \{\lregmap(i)(\lreg) \mid \lreg \in \localregisters, i \in [1,n] \} \cup \{\gregmap(\greg) \mid \greg \in \globalregs \}$ are all values present in the configuration. For any $\lab \in \Labels$, $\nextof{\lab}$ indicates the immediate successor label.

Let $\pdsconf_1 = \tuple{\labelmap_1, \lregmap_1,\gregmap_1}$ and $\pdsconf_2 = \tuple{\labelmap_2, \lregmap_2,\gregmap_2}$ be $n$-configurations. A transition $\pdsconf_1 \movesto{\instr}_i \pdsconf_2$ occurs if thread $i$ executes instruction $\instr$ of the form $\lambda: s$ according to the following rules:

\smallskip\noindent
 \textbf{Local assignment:} If $s$ is $\xvar := \yvar$ for $\xvar \in \localregisters$, then $\lregmap_2$ is obtained by updating the thread $i$'s local register with  value $t$ (defined below) and $\labelmap_2$ is obtained by updating its label to the next instruction.  That is, $\gregmap_2 = \gregmap_1$, $\labelmap_2 = \updatefun{\labelmap_1}{i}{\nextof{\labelmap_1(i)}}$ and $\lregmap_2 = \updatefun{\lregmap_1}{i}{\updatefun{\lregmap_1(i)}{\lreg}{t}}$ where  $t = \yvar$ if $\yvar \in \constants$, $t =  \lregmap_1(i)(\yvar) $ if $\yvar \in  \localregisters$, and   $t = \gregmap_1(\yvar)$ if  $\yvar \in  \globalregs$. 
 
	\smallskip\noindent
	 \textbf{Nonce Assignment:} If $s$ is $\xvar := \ast$, then the register $\xvar$ is assigned a value $\val \in \domain$ for some $\val \in \domain$. That is, $\labelmap_2 = \updatefun{\labelmap_1}{i}{\nextof{\labelmap_1(i)}}$. Moreover,  $\gregmap_2=\gregmap_1$ and  $\lregmap_2 = \updatefun{\lregmap_1}{i}{\updatefun{\lregmap_1(i)}{\lreg}{\val}}$,  we let $\nonce(\pdsconf_1 \movesto{\instr}_i \pdsconf_2) = \val$.
	
		\smallskip\noindent
		 \textbf{Global Assignment:} If $s$ is $\rvar := \yvar$ for $\rvar \in \globalregs$, then $\gregmap_2$ is updated with the value of $\yvar$ and the label of the thread $i$ is updated to the next instruction. That is,  $\labelmap_2 = \updatefun{\labelmap_1}{i}{\nextof{\labelmap_1(i)}}$,   $\lregmap_2 = \lregmap_1$ and $\gregmap_2 = \updatefun{\gregmap_1}{\rvar}{t}$, where $t = \lregmap_1(i)(\yvar)$ if $\yvar \in \localregisters$, $t = \gregmap_1(\yvar)$ if  $\yvar \in  \globalregs$, and $ t = \yvar$ otherwise.
	
		\smallskip\noindent
		 \textbf{Unconditional choice:} If $s$ is $\mathtt{goto\;\lab_1}$, then the label of thread $i$ is updated accordingly. That is, $\labelmap_2 = \updatefun{\labelmap_1}{i}{\lab_1}$, other components remain the same.
		 
		\smallskip\noindent
		 \textbf{Conditional choice:} If $s$ is $\mathtt{if\; e \;then\; \mathtt{goto\;\lab_1\ end}}$, thread $i$ moves to $\lab_1$ if $e$ evaluates to $\true$, and to $\nextof{\labelmap_1(i)}$ otherwise. That is,   
	$\labelmap_2 = \updatefun{\labelmap_1}{i}{\lab_1}$ if $e$ evaluates to $\true$, and 
	$\labelmap_2 = \updatefun{\labelmap_1}{i}{\nextof{\labelmap_1(i)}}$ otherwise.  The other components remain unchanged.

		\smallskip\noindent
		 \textbf{Non-deterministic choice:} If $s$ is $\mathtt{if\;\star\;then\; \mathtt{goto\;\lab_1\ end}}$, thread $i$ non-deterministically moves to $\lab \in \{\lab_1, \nextof{\labelmap_1(i)}\}$. That is, $\labelmap_2 = \updatefun{\labelmap_1}{i}{\lab}$, where $\lab \in \{\lab_1,\nextof{\labelmap_1(i)}\}$ is chosen non-deterministically, rest of the components is untouched.

	\smallskip
Observe that if $\pdsconf_1 \movesto{\instr}_i \pdsconf_2$, then $\pdsconf_1$ and $\pdsconf_2$ have the same number of threads.
An \emph{$n$-run} $\pi$ is a sequence of zero or more moves, $\pi = \pdsconf_1 \movesto{\instr_1}_{j_1} \pdsconf_2 \movesto{\instr_2}_{j_2} \cdots \movesto{\instr_{m-1}}_{j_{m-1}} \pdsconf_m$, where each $\pdsconf_i$ is an $n$-configuration. We say that $\pi$ is an initial $n$-run if it starts from the initial configuration, that is $\pdsconf_1 = \initpdsconf(n)$. 

A run is \emph{fresh} if every nonce generated throughout the execution is unique and distinct from any constant in $\constants$. Formally, $\pi$ is fresh if for all $i \in [1, m-1]$ such that $\instr_{i} = (\ell_i:\xvar := \ast)$:
$\nonce(\pdsconf_i \movesto{\instr_{i}} \pdsconf_{i+1}) \not \in \constants$ 
and for all $j < i$, if $\instr_{j} = (\ell_j: \yvar := \ast)$, then $\nonce(\pdsconf_i \movesto{\instr_i} \pdsconf_{i+1}) \neq \nonce(\pdsconf_j \movesto{\instr_j} \pdsconf_{j+1})$. 

We write $\pdsconf \movesto{} \pdsconf'$ to indicate the existence of an instruction $\instr$ and a thread index $i$ such that $\pdsconf \movesto{\instr}_i \pdsconf'$, and  use $\movesto{}^*$ to denote the reflexive-transitive closure of $\movesto{}$. An example of a run is provided in Figure \ref{main:fig:2run}. This specific run involves two threads and satisfies the freshness condition, as the values assigned in each non-deterministic step are pairwise distinct.

Given two $n$-runs $\run_1$ and $\run_2$ such that the terminal configuration of $\run_1$ matches the starting configuration of $\run_2$, their concatenation is denoted by $\run_1\run_2$. We now proceed to formulate the verification problems of interest.

\noindent\textbf{Parameterized Reachability (PReach):}
\problemframe
{	A parameterized program $\pds$ and a target label $\lab \in \Labels$}
{ Does there exist $n \in \Naturals$ and an $n$-run 
	$\initpdsconf(n) \movesto{}^* \tuple{\labelmap, \lregmap,\gregmap}$ such that $\labelmap(1) = \lab$}

\textbf{Parameterized Fresh Reachability (PFReach):}

\problemframe
{	A parameterized program $\pds$ and a target label $\lab \in \Labels$}
{Does there exist $n \in \Naturals$ and a \emph{fresh} $n$-run 
		$\initpdsconf(n) \movesto{}^* \tuple{\labelmap, \lregmap,\gregmap}$ such that $\labelmap(1) = \lab$}

\medskip 
Observe that assuming the target thread index is $i = 1$ is without loss of generality, as the transition semantics are agnostic to thread identities. In what follows, we investigate the decidability and complexity of these two problems. To facilitate our analysis, we introduce the notion of \emph{pseudo-runs} and \emph{projection}.

\begin{figure}[h!]
	\centering
	\includegraphics[trim= 0.5cm 0.25cm 0.25cm 0,scale=0.6]{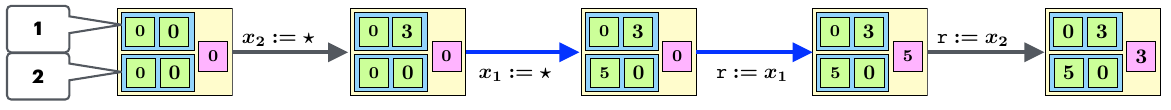} 
	\caption{An example of a $2$-run. In the figure, each thread has $2$ local registers and there is a single shared register. The state component of each thread is abstracted away.}
		\vspace{-2ex} %
	\label{main:fig:2run} 
\end{figure} 
\begin{figure*}
	\centering
	\includegraphics[trim= 0.25cm 0.25cm 0.25cm 0.25cm, scale=0.4]{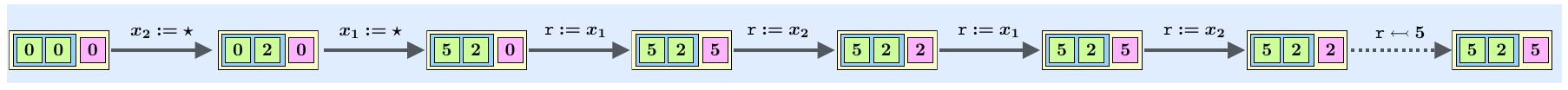}
	\caption{Example of a pseudo-$1$-run, involving $2$ local registers and a single global register}\label{main:fig:psuedorun}
	\vspace{-2ex} %
\end{figure*}

\noindent
\textbf{Pseudo Runs:\ }
Let $\pdsconf_1 = \tuple{\labelmap_1, \lregmap_1,\gregmap_1}$ be a configuration, $\greg \in \globalregs$ a global register and $\val \in \domain$ be a data value. A \emph{pseudo-assignment} transition at $\pdsconf_1$ sets the register $\greg$ to $\val$, resulting in the configuration $\pdsconf_2 = \tuple{\labelmap_1, \lregmap_1,\updatefun{\gregmap_1}{\greg}{\val}}$. We write $\pdsconf_1 \dotarrow{(\greg \leftarrowtail \val)} \pdsconf_2$ to denote such a transition and say $\val$ is \emph{pseudo-assigned} to $\greg$. A \emph{pseudo-run} extends the notion of a run by permitting pseudo-assignment transitions to be interspersed with regular transitions. Thus, a pseudo-run is a sequence of the form $\rho = \pdsconf_1 \movesto{}^*\pdsconf'_1 \dotarrow{(\greg_1 \leftarrowtail \val_1)} \pdsconf_2 \movesto{}^*\pdsconf'_2 \cdots \pdsconf_m \movesto{}^* \pdsconf'_m$ where each $\pdsconf_i \movesto{}^* \pdsconf'_i$ is a run. 

We let the \emph{support} of the pseudo-run $\rho$ (denoted $\supp(\rho)$) be the sequence $(\greg_1, \val_1)(\greg_2, \val_2) \dots (\greg_m, \val_m)$. Observe that the number of threads remains invariant along a pseudo-run; a pseudo $n$-run involves exactly $n$ threads. We extend  $\nonce$ to pseudo-runs where $\nonce(\rho)$ refers to the set of nonce values generated along the standard segments: $\nonce(\rho) = \bigcup_{i=1}^m \nonce(\pdsconf_i \movesto{}^* \pdsconf'_i)$. If a pseudo-run involves $n$ threads, we will refer to it as pseudo-$n$-run. Figure~\ref{main:fig:psuedorun} provides an example of a pseudo-$1$-run.

\smallskip
\noindent
\textbf{Projection:\ } We define the projection of a run onto a specific thread to obtain the pseudo-runs that will be used in our analysis. Let $\run$ be a run and $i$ be any thread. Then $\run$ can be decomposed as $\run = \run_0 \run'_0 \run_1 \run'_1 \run_2 \ldots \run_m \run'_m \run_{m+1}$ where each $\run_j$ consists of moves by threads other than $i$, and each $\run'_j$ consists only of moves by thread $i$. By restricting the configurations in $\run'_j$ onto thread $i$, we obtain a sequence of $1$-runs $\irun_0, \irun_1, \dots, \irun_m$. 

While each $\irun_j = \pdsconf_j \xrightarrow{}^* \pdsconf'_j$ is locally valid, their concatenation $\irun_j\irun_{j+1}$ is a valid $1$-run only if the shared register values in $\pdsconf'_j$ and $\pdsconf_{j+1}$ are identical. By inserting pseudo-assignments between $\irun_j$ and $\irun_{j+1}$ whenever the shared register values differ, we obtain a pseudo-$1$-run $\irun$. We call $\irun$ the \emph{projection of $i$ on $\pi$}.

To distinguish pseudo-runs from standard runs, we use $\Movesto{}$ to denote its transitions: $\irun = \iconf_1 \Movesto{} \iconf_2 \Movesto{} \dots \Movesto{} \iconf_m$, where $\Movesto{}$ represents either $\movesto{}$ or $\dotarrow{}$. We say that a pseudo-run $\irun$ is \emph{fresh} if all nonces generated are non-constants and are distinct from all previously observed values (including nonces and pseudo-assignments) in the run.

\section{Undecidability of PFReach} \label{sec:undec}

We demonstrate that the \textsf{PFReach} problem is undecidable through a reduction from the reachability problem for perfect channel systems. 
Formally, a \emph{perfect channel system} (\PCS) is defined as a tuple $\pcs = \tuple{\States,\Alphabet,\Trans,\initstate,F}$, 
where $\States$ is a finite set of states, $\Alphabet$ is a finite message alphabet, and 
$\Trans \subseteq \States \times \{ !a,?a \mid a \in \Alphabet\} \times \States$ is the transition relation. 
The components $\initstate \in \States$ and $F \subseteq \States$ represent the initial state and the set of final states, respectively. 
A transition of the form $\tuple{q_1,!a,q_2}$ appends message $a$ to the tail of a FIFO channel, while $\tuple{q_1,?a,q_2}$ 
removes message $a$ from the head.

A configuration of a \PCS is a pair $(q,w) \in \States \times \Alphabet^*$. For configurations $\conf_1 = \tuple{q_1,w_1}$ and 
$\conf_2 = \tuple{q_2,w_2}$, we define $\conf_1 \movesto{\tau} \conf_2$ if $\tau = \tuple{q_1,!a,q_2}$ and $w_2 = w_1 \cdot a$, 
or if $\tau = \tuple{q_1,?a,q_2}$ and $w_1 = a \cdot w_2$. The reachability problem, which asks if a state in $F$ can be 
reached from the initial configuration $\tuple{\initstate, \epsilon}$, is known to be undecidable~\cite{BrandZ83}.

\begin{wrapfigure}[10]{r}{0.3\textwidth}
	\vspace{-10pt}
	\centering
	\includegraphics[trim=0.25cm 0.5cm 0.25cm 1cm, scale=0.29]{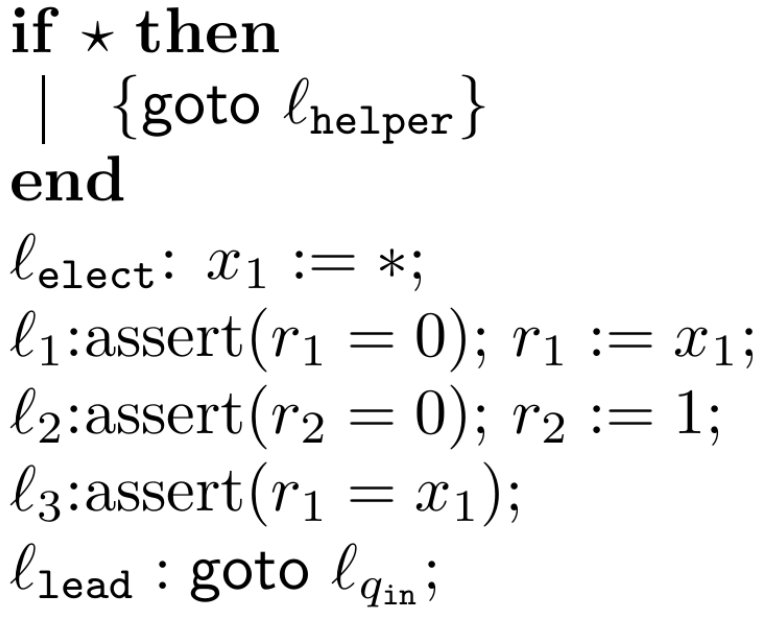}
	\caption{Leader Election}\label{main:fig:LE}
\end{wrapfigure}

For the reduction, given a \PCS $\pcs$, we construct a parameterized program to simulate its execution. The program uses two shared 
registers $\greg_1, \greg_2$ and two local registers $\lreg_1, \lreg_2$. The simulation consists of two phases: 
(i) a leader election phase and (ii) a simulation phase. The  simulation phase is activated only upon the successful election of a unique leader. 
The elected leader simulates the finite-state control of the \PCS and orchestrates the channel simulation, which is maintained 
via a chain of helper threads.

Every thread begins by executing the code fragment illustrated in Figure~\ref{main:fig:LE}. A thread may either jump to 
$\ell_{\tt helper}$ to act as a helper or enter the protocol starting at $\ell_{\tt elect}$ to compete for leadership.

\noindent
\textbf{Leader Election:\ }
The leader election protocol (Figure~\ref{main:fig:LE}) employs global registers $\greg_1, \greg_2$ and local register $\lreg_1$. 
Under the freshness assumption, the instruction at $\ell_{\tt elect}$ generates a unique identity (\emph{id}) for each thread 
in its local register $\lreg_1$. Each thread then attempts to write its id to $\greg_1$ (provided it can read a $0$ from it), 
toggles $\greg_2$ (line $\ell_2$), and verifies if its id persists in $\greg_1$ (line $\ell_3$). We prove that these constraints ensure at most one thread reaches 
the label $\ell_{\tt lead}$.

For an $n$-run $\pi$, let $\reachedLab(\pi,i)$ denote the set of labels reached by thread $i$. The following lemma formalizes the 
uniqueness of the leader:

\begin{lemma}\label{lem:undec:uniqueleader}
	Let $\pi = \pdsconf_1 \movesto{}^* \pdsconf_m$ be any fresh $n$-run. Then 
	$|\{ i \in [1,n] \mid \ell_{\tt lead} \in \reachedLab(\pi,i) \}| \leq 1$.
\end{lemma}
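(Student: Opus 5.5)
The plan is to argue by contradiction, by analysing the order of events in the run. Suppose two distinct threads $i \neq j$ both reach $\ell_{\tt lead}$ in the fresh run $\pi$. For each of them, look at its pass through the election fragment of Figure~\ref{main:fig:LE}. It generates its id at $\ell_{\tt elect}$, reads $0$ from $\greg_1$, writes its id to $\greg_1$, toggles $\greg_2$ at $\ell_2$, and checks at $\ell_3$ that $\greg_1$ still holds its id. Each of these is a single atomic move of $\pi$, so each has a well-defined position. I will write $t^k_{\tt gen}, t^k_{\tt read}, t^k_{\tt write}, t^k_{\tt tog}, t^k_{\tt chk}$ for the positions of these moves of thread $k \in \{i,j\}$. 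By the program order of a single thread, $t^k_{\tt gen} < t^k_{\tt read} < t^k_{\tt write} < t^k_{\tt tog} < t^k_{\tt chk}$.

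The first step collects the invariants given by freshness. The nonces $\lreg_1$ of $i$ and $j$ are distinct from each other and from $0$, since every constant of $\constants$, and in particular $0$, is excluded. The only writes to $\greg_1$ in the fragment store the writer's id. Hence, once $\greg_1$ has been written, it never again holds $0$. This invariant has to be checked against the rest of the program, namely the helper and simulation code; I would verify that no instruction there writes $0$ (or anything else) into $\greg_1$ before leadership is settled.

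Consequently both reads of $0$ happen before the first write. Assuming without loss of generality $t^i_{\tt write} < t^j_{\tt write}$, we get $t^i_{\tt read}, t^j_{\tt read} < t^i_{\tt write} < t^j_{\tt write}$. Now consider when $i$ performs its check. At $t^i_{\tt chk}$ the register $\greg_1$ must equal $\lreg_1$ of $i$, so by distinctness of ids $t^i_{\tt chk} < t^j_{\tt write}$: after $j$'s write, $\greg_1$ holds $j$'s id forever, by the same invariant. This gives the interleaving
\[
t^i_{\tt write} < t^i_{\tt tog} < t^i_{\tt chk} < t^j_{\tt write} < t^j_{\tt tog} < t^j_{\tt chk},
\]
with $t^j_{\tt read}$ lying before $t^i_{\tt write}$.

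The main obstacle is ruling out this remaining interleaving, and this is exactly the role of the toggle on $\greg_2$. The toggle at $\ell_2$ tests and updates $\greg_2$, for instance by requiring a particular value before flipping it. So $i$'s toggle changes the value of $\greg_2$ that $j$ later sees at $t^j_{\tt tog}$ (or at its accompanying check) relative to what $j$ observed before its read. I would first try to show that $\greg_2$ is touched only inside the election fragment before leadership is decided. Then the parity of toggles between $t^j_{\tt read}$ and $t^j_{\tt tog}$ is forced. Exactly one foreign toggle, namely $i$'s, lies in that window, and this makes $j$'s toggle or check at $\ell_2$ block. That is the desired contradiction. If the fragment in fact re-checks $\greg_2$ at $\ell_3$ together with $\greg_1$, the same counting argument applies at $t^j_{\tt chk}$ instead. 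In either case I expect some case analysis on which other threads toggle in that window; these can be handled by the same observation that every toggler must have read $0$ from $\greg_1$ before $t^i_{\tt write}$.
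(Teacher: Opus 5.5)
Your reduction to the single interleaving $t^i_{\tt write} < t^i_{\tt tog} < t^i_{\tt chk} < t^j_{\tt write} < t^j_{\tt tog}$ is sound. It mirrors the paper's case split: a thread writing $\greg_1$ after the first leader fails at $\ell_2$, and one writing before it fails at $\ell_3$. But the step that is supposed to rule out this interleaving is not established, and the mechanism you sketch would not work.

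The window $[t^j_{\tt read}, t^j_{\tt tog}]$ contains $t^i_{\tt chk}$. From that point on $i$ is the leader, and it and the helpers answering it may write commands, ids, $\bot$, \textsf{last} and so on into $\greg_2$ arbitrarily often before $j$ reaches $\ell_2$. An invariant that controls $\greg_2$ only ``before leadership is decided'' therefore says nothing about the relevant part of the window. Neither a parity of the writes to $\greg_2$ there nor the claim that $i$'s toggle is the only foreign one can be derived. Even inside the election fragment, counting togglers is the wrong handle. The fact that every toggler read $0$ from $\greg_1$ before $t^i_{\tt write}$ does not bound their number, and if a toggle could be undone, further togglers could restore whatever value $j$ needs.

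The missing idea is that $\greg_2$ is a one-way flag, not a parity bit. The step at $\ell_2$ can succeed only while $\greg_2 = 0$ and leaves a non-zero value there, and no instruction of $\pds$ ever writes $0$ into $\greg_2$. That is the invariant you must check against the whole program, including the leader and helper code running concurrently with late election participants. That code writes only commands, $\bot$, \textsf{last} and fresh, hence non-zero, ids. With this invariant, $\greg_2 \neq 0$ at every configuration after $t^i_{\tt tog}$. Since $t^i_{\tt tog} < t^j_{\tt tog}$, thread $j$ blocks at $\ell_2$, with no case analysis on other threads. This is exactly what the paper uses when it notes that $\greg_2$ is non-zero once the first leader appears.

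A smaller slip: after $t^j_{\tt write}$ the register $\greg_1$ need not hold $j$'s id forever, since a third thread that read $0$ early may overwrite it. What you need is only that $\greg_1$ never again holds $i$'s id. This holds because each thread writes its distinct id at most once and the simulation never touches $\greg_1$.
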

\begin{proof}
	Let $\tset = \{ i \in [1,n] \mid \ell_{\tt lead} \in \reachedLab(\pi,i) \}$. If $|\tset| > 0$, let $\pdsconf'$ be the first 
	configuration in $\pi$ where $\ell_{\tt lead}$ is reached by some thread $i$. At this point, $\greg_1$ must contain the id 
	of $i$ ($\gregmap(\pdsconf')(\greg_1) = \lregmap(\pdsconf')(i)(\lreg_1)$), and $\greg_2$ must be non-zero. 
	Any thread that writes to $\greg_1$ after $i$ will fail the assertion at $\ell_2$. If another thread $j$ had 
	written to $\greg_1$ before $i$, its id would have been overwritten by $i$, causing $j$ to fail the assertion at $\ell_3$. 
	Therefore, only thread $i$ can successfully reach $\ell_{\tt lead}$.
\end{proof}
\vspace{-10pt}
\noindent
\textbf{Simulation of \PCS $\pcs$:\ }
Towards the simulation, we will assume that a leader is elected. Only the global register $\greg_2$ will be used during the simulation. The role of the leader is to maintain the state of the $\PCS$ $\pcs$ and use a chain of helpers to maintain the channel (each thread in the chain stores a message in its state). 

Each helper has a unique id stored in its local register $\lreg_1$. The leader and each helper that is part of the chain stores the id of its successor in the chain in the register $\lreg_2$ (storing $\bot$ if there is none). To insert a message $a$, if the channel is empty, the leader writes $\ins(a)$ to $\greg_2$ starting off a selection protocol; participating helpers write their id onto $\greg_2$, and the leader reads one of them and copies it to its $\lreg_2$, identifying a potential winner. To communicate the identity of the winner unambiguously, it writes $\bot$ followed by the id of the winner to $\greg_2$. The helper that reads $\bot$ (thus avoiding a read of its own write) followed by its own id learns of its victory and conveys this knowledge along with the value it is storing back to the leader. The leader, by selecting only one helper ensures that only a helper with the correct value is inserted to the list. While a late-coming candidate may overwrite $\greg_2$ in many runs, resulting in a deadlock, the existence of a run with a unique winner suffices for our purposes.

Appending to the end of a non-empty chain proceeds as follows: the leader starts the protocol by writing $\prop(a)$ followed by the id of its neighbour. Information about the insertion travels as a wave from the leader to the end of the chain and back: each intermediate helper reads $\prop(a)$ from $\greg_2$ to learn of the operation, then its own id from $\greg_2$ (as the wave moves outwards), and responds by writing its neighbour's id onto $\greg_2$. The last helper executes a protocol similar to the leader's empty-channel induction to add a new helper via the selection process. For inward propagation, each intermediate helper (after ensuring it is not reading its own write) reads the value of its neighbour (stored in $\lreg_2$) from the global register $\greg_2$ and propagates the wave by writing its own id to $\greg_2$.

Dequeueing a message $a$ removes the \emph{head} of the channel (the leader's successor) and restores the chain. The head identifies itself unambiguously using non-determinism. The leader writes $\del(a)$ to $\greg_2$ followed by the id of its successor. If the head is the only thread in the chain, it identifies itself and responds to the leader. Otherwise, the tail thread (which can identify itself uniquely) writes its own id to $\greg_2$, starting a wave towards the leader. Each intermediate thread, upon reading its successor's id from $\greg_2$, replaces it with its own id. Only the thread at the head of the channel can read $\del(a)$, followed by its own id, followed by its successor's id, thus unambiguously confirming its identity. Each helper guesses its role and verifies it; incorrect guesses end in deadlocked configurations. Once identified, the head indicates this to the leader and transfers the id of its successor so that the leader may update its $\lreg_2$. We note that a simpler protocol involving only the leader and head would fail, as helpers must verify their position in the chain to avoid misleading neighbours. Detailed algorithms and correctness proofs are presented in the appendix \ref{sec:undec1}.

\begin{theorem}\label{thm:undec}
	The \PCRFR\ problem is undecidable.
\end{theorem}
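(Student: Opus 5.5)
We reduce from reachability for perfect channel systems, which is undecidable~\cite{BrandZ83}. Given a \PCS\ $\pcs = \tuple{\States,\Alphabet,\Trans,\initstate,F}$, we build a parameterized program $\pds_\pcs$ with global registers $\greg_1,\greg_2$ and local registers $\lreg_1,\lreg_2$. Every thread starts with the leader-election fragment of Figure~\ref{main:fig:LE}. A thread reaching $\ell_{\tt lead}$ runs the leader code, which encodes the finite control of $\pcs$ in its labels. A thread jumping to $\ell_{\tt helper}$ runs the helper code, which generates its fresh id in $\lreg_1$ and then takes part in the insertion, propagation and deletion protocols sketched above. The constants of $\pds_\pcs$ are $0$, $\bot$ and the markers $\ins(a),\prop(a),\del(a)$ for $a\in\Alphabet$; each helper stores the message it represents in its control label. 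The target label $\lab_F$ is a label the leader jumps to whenever it is at a label encoding a state in $F$. We must show that $\pcs$ reaches $F$ from $\tuple{\initstate,\epsilon}$ if and only if some fresh $n$-run of $\pds_\pcs$ reaches $\lab_F$. Since the construction is clearly effective, this gives the theorem.

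\emph{Completeness (PCS run $\Rightarrow$ fresh run).} Take a PCS run with $k$ send operations and let $n = k+1$. One thread wins the election uncontested while the others sit in the helper region, which is possible because they need not touch $\greg_1$. Each send $!a$ is then simulated by running the selection protocol with exactly one idle helper volunteering, followed by the outward and inward waves over the current chain. Each receive $?a$ is simulated by the deletion protocol, in which every helper guesses its role correctly. We prove by induction on the PCS run the following invariant: after simulating a prefix, the leader's label encodes the current state; the chain reachable from the leader's $\lreg_2$ through the helpers' $\lreg_2$ lists exactly the channel contents in order; and $\greg_2$ is back to a neutral value. Freshness is easy to respect, since all nonces are ids chosen distinct from each other and from $\constants$.

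\emph{Soundness (fresh run $\Rightarrow$ PCS run).} By Lemma~\ref{lem:undec:uniqueleader}, at most one thread ever reaches $\ell_{\tt lead}$, so there is a unique leader. Freshness makes all helper ids pairwise distinct and distinct from the protocol constants, so a read of an id from $\greg_2$ identifies its writer unambiguously. We then prove a per-operation lemma: if the leader goes from the label of one completed operation to the label of the next, the helper chain is transformed exactly as the corresponding PCS transition requires. That is, exactly one new helper carrying $a$ is appended for $!a$, and exactly the head, which must store $a$, is removed for $?a$. All other threads either stay in their previous roles or are blocked forever. Composing these steps along the leader's projection gives a PCS run reaching $F$.

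\emph{Main obstacle.} The hard step is the soundness lemma for a single operation. The argument has to handle arbitrary interleavings, late volunteers overwriting $\greg_2$, helpers guessing their position wrongly, and stale values left in $\greg_2$. The approach I would try first is to identify, for each protocol step, a \emph{handshake} pattern on $\greg_2$. A thread accepts only after reading the sequence $\bot$ followed by its own id, or $\prop(a)/\del(a)$ followed by its own id followed by a specific successor id. Using freshness, each such read sequence is attributed to a unique writer, which is either the leader or the chain neighbour. From this, show by induction along the wave that each helper that proceeds occupies its true position in the chain. Any deviation, such as an overwrite by a competing volunteer, leaves some expected handshake unreadable, so the leader never reaches its next label. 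Deadlocks are harmless because only reachability of $\lab_F$ matters.
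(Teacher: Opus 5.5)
Your proposal matches the paper's proof essentially step for step: the same reduction from perfect channel systems with two shared and two local registers, leader uniqueness via Lemma~\ref{lem:undec:uniqueleader}, a helper chain encoding the channel, completeness by induction on the PCS run with one helper per send, and soundness via a per-operation lemma on how the chain changes. The one claim to tighten is that every other thread ``stays in its role or is blocked forever''. A helper that read $\ins(a)$ earlier can still be inducted by a later insertion of $a$; the paper handles such mid-protocol helpers by normalizing to \emph{very good} configurations, postponing their steps if they join later and deleting them otherwise.
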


	\section{\PCRFR\  with Single Shared Register}\label{sec:oneglobal}

We establish \textsc{Pspace}-completeness of \PCRFR\ for \PDS\ with one
shared register. For clarity, we assume that there are no constants other
than the initial value $0$, and that no run ever writes the value $0$.
Constants, including writes of $0$, add no new ideas, but they make the
lemmas and proofs longer. Since $0$ is never written, no run can support a
pseudo-assignment of $0$. So we only consider pseudo-runs whose support does
not contain $0$. Appendices~\ref{apx:sec:oneglobal:acyclic}
and~\ref{apx:sec:oneglobal:cyclic} show how to handle constants. They also
contain the missing proofs.
 
 Fix a parameterized program $\pds=(\{\greg\},P)$ and a target label $\lab$. To solve the \PCRFR\ problem, we must determine the existence of a fresh execution wherein at least one thread reaches $\lab$. A \emph{witnessing run} is defined as a minimal such execution, meaning no strict prefix of the run visits $\lab$.
 
 For any $\pds$ with a single shared register, the support of any pseudo-run $\rho$ is of the form $\supp(\rho) \in (\{ \greg \} \times \domain)^*$. In this section, we simplify notation by dropping the reference to $\greg$, letting $\supp(\rho) \in \domain^*$. We denote the set of values appearing in this support as $\asupp(\rho)$. The term \emph{width of $\irun$} refers to $\wdwidth(\supp(\irun))$. Throughout this section, all runs are assumed to be fresh unless stated otherwise. We begin by introducing necessary notations followed by an intuition of our proof strategy.
 
 Given a word $w = \val_1 \dots \val_l \in \domain^*$, we say a pseudo-run $\irun = \iconf_0 \movesto{\tau_0} \iconf_1 \dots \movesto{\tau_{m-1}} \iconf_m$ \emph{writes} $w$ if there exists a subsequence of moves $\tau_{i_1}, \dots, \tau_{i_l}$ at configurations $\iconf_{i_1}, \dots, \iconf_{i_l}$ such that each $\tau_{i_j}$ is a write instruction to $\greg$ resulting in $\greg(\iconf_{i_j}) = \val_j$. Essentially, $w$ is written if its constituent values are assigned to the shared register by the moves of the pseudo-run. 
 
 A run $\run$ \emph{supports} a pseudo-run $\irun$ if $\run$ writes $\supp(\irun)$. We define $\run$ as a $k$\emph{-writing run} if it writes a word of width at least $k$. For example, the run in Figure \ref{main:fig:2run} writes the sequence $5, 3$ and is thus $1$-writing. The supporting $1$-run at the bottom of Figure \ref{main:fig:acyclic} is a $2$-writing run, as it writes the sequence $1, 2, 1$.

 A pseudo-run $\irun$ is \emph{cyclic} if $\nonce(\irun) \cap \asupp(\irun) \neq \emptyset$; that is, it is cyclic if it reads—via a pseudo-assignment—a nonce value that it previously generated. Otherwise, $\irun$ is \emph{acyclic}. A witnessing run is classified as cyclic or acyclic based on the pseudo-run of the witnessing thread. The pseudo-run  in Figure \ref{main:fig:psuedorun} is cyclic because the nonce $5$ that it generated is subsequently pseudo-assigned. Conversely, the top part of Figure \ref{main:fig:acyclic} illustrates an acyclic pseudo-$1$-run.
 \vspace{-10pt}
 \begin{figure*}
	\centering
	\includegraphics[trim=0.5cm 0.5cm 0.5cm 0.5cm cm, scale=0.5]{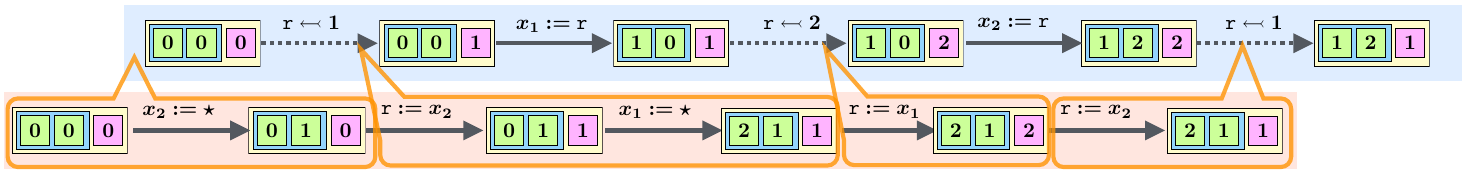}
	\caption{An acyclic pseudo $1$-run whose width is bounded by $2$, a corresponding supporting $1$-run and their combination}\label{main:fig:acyclic}
	\vspace{-2ex} %
\end{figure*}

\noindent
\textbf{Proof strategy:\ }
In any fresh witnessing run $\pi$, there is a thread that reaches the label $\lab$. The projection of $\pi$ onto this thread yields a pseudo-$1$-run reaching $\lab$ (a \emph{witnessing pseudo-$1$-run}). In the inverse direction, the existence of a witnessing pseudo-$1$-run does not inherently guarantee a witnessing run; elevating a pseudo-run to a full run requires the construction of an appropriate supporting run. We identify the necessary and sufficient conditions for this elevation, distinguishing between the cyclic and acyclic cases.  We will remark about the need to consider these two cases separately at the end of this section. 

We demonstrate that an acyclic witnessing run exists if and only if there exists an integer $k$, a witnessing acyclic pseudo-$1$-run $\irun$ of width $k$, and a corresponding $k$-writing run. The intuition for this characterization is that an acyclic witnessing run provides an acyclic pseudo-$1$-run (of some width $k$) through projection, while the run itself acts as the $k$-writing support. Conversely, we show that a $k$-writing run can be configured to write any sequence of values whose width does not exceed $k$, allowing it to support the acyclic pseudo-$1$-run. To achieve a {\sc PSpace} complexity bound, we prove it is sufficient to restrict $k$ such that $k \leq \MAX$, where $\MAX =\norm{\pds}^{ \bigO(\norm{\pds})} $ is at most exponential in the size of the system. We then show that checking the existence of these components for $k \leq \MAX$ can be performed in {\sc PSpace}. Figure \ref{main:fig:acyclic} provides an example of an acyclic witnessing pseudo-$1$-run and a corresponding supporting run. 

The cyclic case is handled via a different mechanism. For a cyclic pseudo-$1$-run, the support must include nonces generated by the pseudo-$1$-run itself. The supporting run must therefore be capable of reading values supplied by the pseudo-$1$-run and writing them back to the global register. We formalize this using the notion of a \emph{replicating run}---a run in which at least one thread reads a value from the shared register and subsequently writes it back. 

While any cyclic witnessing run implies the existence of a cyclic pseudo-$1$-run and a replicating run, not all replicating runs provide sufficient support. A replicating thread might do multiple reads of a value before it can replicate it by writing to the shared register. This necessitates that the pseudo-$1$-run writes that specific value to the register at least that many times. We prove that a cyclic witnessing run exists if and only if there is a cyclic witnessing pseudo-$1$-run $\irun$ that writes each value $\val \in \nonce(\irun) \cap \asupp(\irun)$ at least $k$ times (for some $k$), and a replicating run that writes the value after reading it at most $k$ times. We further show that it is sufficient to bound $k \leq \MAX$ and that searching for such a pair is possible in {\sc PSpace}.
Given that reachability in register automata is already known to be {\sc Pspace}-hard, we then obtain the following result.

\begin{theorem}
	The \PCRFR\  with single shared register is {\sc Pspace complete}.	
\end{theorem}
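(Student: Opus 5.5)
The plan follows the proof strategy just outlined. Hardness is immediate: a single-thread program with local registers is a register automaton, and reachability for register automata is already {\sc Pspace}-hard. Taking one thread and a single global register that the program never uses embeds that problem into \PCRFR. All the work is in the upper bound. There we split a witnessing run into the acyclic and the cyclic case, according to the projection onto the witnessing thread, and prove a characterization for each.

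\textbf{Acyclic case.} I would prove that an acyclic witnessing run exists iff, for some $k\le\MAX$, there is an acyclic witnessing pseudo-$1$-run $\irun$ of width $k$ and a $k$-writing run.
\begin{itemize}
\item The ``only if'' direction is by projection. Freshness and acyclicity ensure that every value in $\asupp(\irun)$ was produced by other threads, so the whole run writes $\supp(\irun)$.
\item For ``if'', the key lemma is that a $k$-writing run can be \emph{renamed and replicated} so that it writes any prescribed word of width at most $k$. We take enough copies of the supporting run, apply fresh data permutations (valid because the program is invariant under renaming of non-constant values), and interleave the copies. Acyclicity guarantees that the values required in $\supp(\irun)$ are disjoint from $\irun$'s own nonces, so the renaming never clashes with freshness.
\item To bound $k$, I would use a pumping/contraction argument. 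A word written by a run can be shortened to a subsequence of width at most $\MAX=\norm{\pds}^{\bigO(\norm{\pds})}$ while still supporting some witnessing pseudo-run. The reason is that a single thread's behaviour only depends on the equality type of its registers with $\greg$, and there are $\norm{\pds}^{\bigO(\norm{\pds})}$ such abstract states.
\end{itemize}

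\textbf{Cyclic case.} I would prove that a cyclic witnessing run exists iff there are $k\le\MAX$ and a cyclic witnessing pseudo-$1$-run $\irun$ that writes each value in $\nonce(\irun)\cap\asupp(\irun)$ at least $k$ times, together with a replicating run that writes a read value back after at most $k$ reads.
\begin{itemize}
\item Necessity again comes from projection. Take the first thread that copies a nonce of the witness back into $\greg$; the number of reads it performs before doing so is at most the number of writes of that value by the witness. The contraction argument bounds $k$.
\item For sufficiency, we run one renamed copy of the replicating run per value to be echoed back, and one copy of a $k$-writing support (obtained as in the acyclic case) for the foreign values. These copies are scheduled so that each replicator's $\le k$ reads coincide with the witness's writes.
\end{itemize}

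\textbf{Algorithm.} Each condition is checked in {\sc Pspace} by nondeterministically guessing runs on an abstraction of the state. A single thread's valuation is stored up to equality, and we keep counters up to $\MAX$, which need polynomially many bits. The $k$-writing and replicating properties are checked on an abstraction of the multithread system: a set of abstract thread states together with the ``data-class'' profile of the shared value, which is monotone because threads can be duplicated. The witness pseudo-run is checked on the same abstraction. Savitch's theorem then gives {\sc Pspace}.

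\textbf{Main obstacle.} The hardest step is the bound $k\le\MAX$ together with the replication lemma. Showing that an arbitrary run writing a high-width word can be truncated, or that width $\MAX$ already suffices to generate every larger width, needs a careful abstraction of which values are ``live'' in threads. I would attack it by proving that once a run achieves width exceeding the number of abstract states, some abstract thread configuration repeats with two distinct data values. Duplicating that segment under renaming then yields unbounded width, so $k$-writing for $k>\MAX$ is equivalent to $\MAX$-writing.
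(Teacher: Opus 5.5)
Your two characterizations are essentially the paper's. In the acyclic case you use a witnessing pseudo-1-run of width at most $k\leq\MAX$ plus a $k$-writing run, glued by renaming and interleaving copies of the support. In the cyclic case you use a pseudo-1-run that writes every consumed nonce at least $k$ times plus a replicating run needing at most $k$ demands. The single-thread checks with counters up to $\MAX$ are also fine.

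The gap is in deciding the \emph{multithread} side conditions: whether a $k$-writing run, or a replicating run with at most $k$ demands, exists. You propose to search sets of abstract thread states, ``monotone because threads can be duplicated''. Under freshness this fails exactly where it matters. A clone of a thread must draw its own fresh nonces, so it writes \emph{different} values. It never increases the number of writes of any single value, which is precisely what $k$-writing measures. A profile relative to the shared value alone also forgets which threads share a value once $r$ is overwritten. So when that value is written back, you cannot tell who now agrees with $r$.

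For example, let each thread either do $x:=\ast$ then $r:=x$ once, or read $y:=r$, wait for $r\neq y$, then wait for $r=y$, then jump to $\ell$. No fresh run reaches $\ell$, since no value is ever written twice. Yet an abstraction that clones the writer, or guesses these equalities, does reach $\ell$. Thread duplication is the argument the paper reserves for PReach, where freshness is dropped.

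Your ``main obstacle'' sketch has the same defect. Copying a segment under a renaming produces writes of the renamed value, not further writes of the original one. Pumping the generator's own loop can multiply the demands it places on values held in its registers, so it does not give unbounded width. The claim is also unnecessary: by the length bound, only the largest $k\leq\MAX$ matters.

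The missing idea is the paper's reduction of these multithread questions to single-thread ones, with replication as the pivot. Cloning \emph{does} work for a thread that echoes a foreign value, because all copies read and rewrite the same value. Hence a replicating run is $k$-writing for every $k$. A $k$-writing run then exists iff, for some $j<k$, there is a $j$-writing non-replicating run and an acyclic pseudo-1-run of support width at most $j$ that is replicating or writes its own nonce $k$ times. Iterating $K=1,2,\ldots,\MAX$ computes the largest achievable $K\leq\MAX$ using only polynomial-space single-thread checks.

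Dually, a replicating run with at most $k$ demands arises in one of two ways. Either there is an acyclic such pseudo-1-run with a supporting run. Or an established replicating run with $j>k$ demands is combined with a cyclic pseudo-1-run that writes each consumed nonce $j$ times and itself replicates with at most $k$ demands. One iterates downward from $\MAX$, after showing that some replicating run needs at most $\MAX$ demands. Without something of this kind, your proposal gives no decision procedure, let alone a {\sc Pspace} one, for the side conditions.

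Three smaller points:
\begin{itemize}
\item In the cyclic case count demands (pseudo-assignments), not reads. A thread may reread an unchanged value arbitrarily often, so ``reads $\leq$ writes by the witness'' is false as stated.
\item The contraction must record whether each register holds an own nonce or a pseudo-assigned value; otherwise it can turn an acyclic pseudo-run cyclic.
\item Constants written to $r$ require computing the realizable constants alongside $K$.
\end{itemize}
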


\subsection{Searching for acyclic witnessing runs} \label{main:sec:acyclic}

The main result of this subsection is to establish the following theorem, showing that determining the existence of an acyclic witnessing run is in {\sc Pspace}. 

\begin{theorem}\label{thm:dec:acyclic}
	The existence of an acyclic witnessing run can be determined in space polynomial in $\norm{\pds}$.
\end{theorem}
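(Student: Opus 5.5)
The plan follows the characterization sketched in the proof strategy. An acyclic witnessing run exists iff there are $k\le\MAX$, an acyclic witnessing pseudo-$1$-run $\irun$ of width at most $k$, and a fresh $k$-writing run. The proof then checks each side of this characterization in polynomial space.

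\textbf{Step 1 (necessity).} Given an acyclic witnessing run $\run$, project it onto the witnessing thread to obtain an acyclic pseudo-$1$-run $\irun$ reaching $\lab$. Every pseudo-assigned value was written by some other thread in $\run$, and consecutive pseudo-assignments correspond to distinct writes. Hence $\run$ writes $\supp(\irun)$, so $\run$ is $k$-writing with $k=\wdwidth(\supp(\irun))$.

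\textbf{Step 2 (sufficiency via value renaming).} Take a $k$-writing run writing a word $u$ with $\wdwidth(u)\ge k$, and a word $w=\supp(\irun)$ with $\wdwidth(w)\le k$. The goal is a run writing $w$ in order. Since the data domain is only compared for (dis)equality, fresh runs are invariant under bijective renaming of nonces. So the main idea is as follows.
\begin{itemize}
\item Fix a letter $d$ of $u$ occurring at least $k$ times.
\item Run $|w|$ disjoint, independently renamed copies of the supporting run with fresh nonces. Copies do not interfere except through $\greg$; the plan is to schedule them so that each copy reads only its own writes, for instance by running a copy contiguously up to the point where it writes $d$ and then pausing it.
\item Make the $j$-th occurrence of each value $v$ of $w$ be the $j$-th write of $d$ in a copy renamed so that $d\mapsto v$. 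Because $v$ occurs at most $k$ times in $w$, a single copy per distinct value suffices.
\item Interleave the $\irun$-thread so that its pseudo-assignments are realized by these writes.
\end{itemize}
Acyclicity of $\irun$ is used here. The values in $\supp(\irun)$ are not nonces of $\irun$, so they can be chosen as nonces of the supporting copies, and freshness is preserved globally. The delicate point is that pausing a copy between its writes of $d$ must not let other copies overwrite $\greg$ in a way that copy reads. This is why the paused copy is resumed exactly when needed and intervening copies only write at their own designated writes. I would argue this by treating each copy as isolated except for the target write moments.

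\textbf{Step 3 (bounding $k$).} Show that if a $k$-writing run exists for some $k>\MAX$, then a $\MAX$-writing run exists (trivially, by monotonicity). Conversely, show that any acyclic witnessing pseudo-$1$-run can be shortened to width at most $\MAX=\norm{\pds}^{\bigO(\norm{\pds})}$. I would prove the latter by an abstraction argument: the $1$-thread state is a label plus an equality type of $|\localregisters|+1$ registers, so there are at most $\norm{\pds}^{\bigO(\norm{\pds})}$ abstract states. Between two repetitions of the same abstract state with the same "role" of the current pseudo-assigned value, a segment can be cut, so repeated pseudo-assignments of one value number at most $\MAX$. I expect this pumping-down, done while preserving acyclicity and freshness, to be the main obstacle. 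It requires care about values remembered in local registers across the cut.

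\textbf{Step 4 (PSpace algorithms).} Both checks run in nondeterministic polynomial space, and Savitch's theorem gives deterministic polynomial space.
\begin{itemize}
\item \emph{Acyclic pseudo-$1$-run of width at most $\MAX$.} Guess it on the fly, storing the label, the equality type of the registers, and counters in binary (polynomially many bits since $\MAX$ is exponential). Track acyclicity by keeping a polynomial-size symbolic record of which register contents are own nonces, and per-value occurrence counters only for values currently held in registers. Values no longer in any register can never recur in an acyclic run in a way that matters, since freshness forbids re-generating them.
\item \emph{$\MAX$-writing run.} Deciding its existence is needed without bounding the number of threads. It suffices to find a single value written $\MAX$ times. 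I would reduce this to a repeated form of the same characterization: a thread writing a value $\MAX$ times is a pseudo-$1$-run with a counter, supported by a run writing its support. I would set up a recursion on $k$ with polynomial depth in $\log\MAX$, or, more simply, show that a $k$-writing run exists iff a pseudo-$1$-run writing some value $k$ times exists whose own support is writable with width at most $k$. This gives a fixpoint over $k\le\MAX$ that is computable in polynomial space by guessing and verifying threads one at a time.

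\end{itemize}
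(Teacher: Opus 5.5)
Your Steps 1--3 and the first bullet of Step 4 follow the paper's route. That route is: projection for necessity; cloning and renaming a $k$-writing run for sufficiency; a provenance-aware pumping bound; and a symbolic search with counters for the live support values. For the cloning step to work, pause each copy only immediately before one of its writes to $\greg$, so that every resumed segment starts by overwriting $\greg$.

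The gap is the second bullet of Step 4, i.e.\ deciding for which $k\le\MAX$ a $k$-writing run exists. Your criterion says a $k$-writing run exists iff some pseudo-$1$-run writes one value $k$ times and its support is writable with width at most $k$. This is false, because the $k$ writes of a value in a real run need not come from a single thread: a thread can read a value from $\greg$ and write it back. Concretely, let each thread do one of the following.
\begin{itemize}
\item[(a)] Generate a nonce and write it once.
\item[(b)] Copy a nonzero value of $\greg$ into a local register and later write it back once.
\item[(c)] Generate $e$, read a nonzero $y$ from $\greg$, write $e$, and go to the target only if $\greg=y$ again.
\end{itemize}
Real runs write a type-(a) nonce $d$ arbitrarily often, since many type-(b) threads can read $d$ before the target overwrites it and then write it back. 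A type-(c) thread then reaches the target, and its projection is an acyclic pseudo-$1$-run with support $d\,d$. But no thread ever writes any value twice, so your fixpoint stops at $k=1$. Every witnessing pseudo-$1$-run has width at least $2$, so your procedure wrongly answers no.

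The missing idea is \emph{replication}. Call a run replicating if some thread writes a value it did not generate. Lemma~\ref{lem:dec:RepToWrite} shows that a replicating run yields $k$-writing runs for every $k$: clone the replicating thread's pseudo-run $k$ times and schedule the clones segment by segment along the decomposition of the supporting run.

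The correct inductive criterion is Lemma~\ref{lem:dec:kwriting}. A $k$-writing run exists iff, for some $j<k$, there are a $j$-writing non-replicating run and an \emph{acyclic} pseudo-$1$-run of width at most $j$ that is either $k$-writing or replicating. Its proof takes a minimal $k$-writing run and, if that run replicates, its shortest replicating prefix. Three features matter here:
\begin{itemize}
\item The strict $j<k$ is what makes the iteration progress; with width at most $k$ your condition is circular.
\item Acyclicity is what lets Lemma~\ref{lem:dec:1greg:forward} combine the pseudo-run with its support.
\item The algorithm raises $k$ one step at a time and jumps to $\MAX$ as soon as an admissible acyclic replicating pseudo-$1$-run is found.
\end{itemize}

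Two smaller points. In the first bullet of Step 4, you may forget the counter of a value that has left all registers, but not because of freshness: support values can be pseudo-assigned again. The justification is that any later pseudo-assignment of that value can be renamed to a brand-new value, which the program cannot distinguish. In Step 3, bound the whole length of the pseudo-run rather than the per-value count. A relabelling cut aimed at one value can merge a suffix value into a prefix value and raise its count, whereas a length bound bounds the width trivially.
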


We begin with a lemma that provides a necessary and sufficient condition for the existence of an acyclic witnessing runs. The remainder of the section is devoted to establishing that this condition can be effectively checked in {\sc Pspace}. To facilitate our complexity analysis, we first establish a structural characterization that reduces the existence of a global witnessing run to the existence of a local pseudo-run and its corresponding support.

\begin{lemma}\label{lem:dec:1greg:forward}
	There is an acyclic witnessing run if and only if there is an acyclic witnessing pseudo $1$-run and a supporting run.
\end{lemma}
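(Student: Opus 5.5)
The forward direction uses projection; the backward direction composes the supporting run with a copy of the witnessing thread, using freshness to rename data.

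\emph{Forward direction.} Take an acyclic witnessing fresh $n$-run $\run$, with witnessing thread $i$ reaching $\lab$. Let $\irun$ be the projection of $i$ on $\run$.

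By construction, $\irun$ is a pseudo-$1$-run that reaches $\lab$. It is fresh, because every nonce of thread $i$ is fresh in $\run$. It is acyclic by the definition of an acyclic witnessing run. Truncating $\irun$ at its first visit to $\lab$ makes it witnessing.

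For support, every pseudo-assignment $(\greg \leftarrowtail \val)$ inserted in $\irun$ records that the shared register changed between two blocks of moves of $i$. Hence some move of another thread wrote $\val$ there. Choosing the last such write before each pseudo-assignment gives writes occurring in order in $\run$. Therefore $\run$ itself, or its restriction to the threads other than $i$, writes $\supp(\irun)$ and is a supporting run.

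\emph{Backward direction.} We are given an acyclic witnessing pseudo-$1$-run $\irun$ and a fresh $m$-run $\run'$ that writes $\supp(\irun)$ via moves $\tau_{i_1},\dots,\tau_{i_l}$. We build a fresh $(m+1)$-run in which thread $m+1$ follows $\irun$ and threads $1..m$ follow $\run'$.

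The interleaving is as follows. Split $\irun$ at its pseudo-assignments into genuine segments $\irun_0,\dots,\irun_l$. Split $\run'$ right after each designated write $\tau_{i_j}$. Then schedule $\irun_0$, the prefix of $\run'$ up to $\tau_{i_1}$, then $\irun_1$, and so on.

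Local transitions of thread $m+1$ depend only on its own registers and $\greg$. The designated write guarantees that $\greg$ holds $\val_j$ exactly when $\irun_j$ starts. Segment $\irun_j$ must also leave the helpers' behaviour intact. Any write by thread $m+1$ to $\greg$ can alter what the helper threads subsequently read.

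I would handle this by noting that helpers' next step after our segment begins from the register value their own run expects. Concretely, in $\run'$ the configuration after $\tau_{i_j}$ has $\greg=\val_j$, and the next helper move expects that value. So after executing $\irun_j$, which may overwrite $\greg$, we must restore $\greg$ before continuing $\run'$.

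Restoration is not directly available. The fix is to place $\irun_j$ \emph{in between} $\tau_{i_j}$ and the next helper move, and let $\irun_j$ end with the value that $\irun$'s next pseudo-assignment overwrites anyway. Helpers never read that value, since the next helper step in $\run'$ reads the value left by $\tau_{i_j}$.

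This still conflicts if $\irun_j$ writes to $\greg$. My first attempt is to reorder: insert $\irun_j$ before the point where $\run'$ performs its next write. Then use that $\run'$'s next read-dependent steps, up to its next write, see only $\val_j$.

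Alternatively, and more robustly, take two copies of the helpers. Assume each helper move that reads $\greg$ occurs right after a helper write to $\greg$. If not, delay the reading helper and re-run the copy. This uses the parameterization: with enough copies of $\run'$ one can always regenerate the needed value after our thread's interference.

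\emph{Freshness.} The composed run must be fresh. Nonces of $\irun$ and $\run'$ may coincide. Acyclicity is used here. The values in $\asupp(\irun)$ come from $\run'$, and $\irun$ never reads its own nonces back through $\greg$. So we may apply a bijective renaming to the nonces of $\irun$, making them distinct from all values of $\run'$. Nonces of $\irun$ are distinct from one another and from previously observed values, so the renaming preserves every (dis)equality test of thread $m+1$.

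The renamed nonces may still be written by thread $m+1$ to $\greg$. Helpers reading them could in principle break, which is again the interference issue above.

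\emph{Main obstacle.} I expect the main obstacle to be exactly this interference: writes by the simulated thread corrupting what helpers read. The intended resolution is to schedule each segment $\irun_j$ immediately after $\tau_{i_j}$, and to place the next helper reads only after a fresh helper write, duplicating helper threads where needed so that every helper read is preceded by a helper write of the expected value.
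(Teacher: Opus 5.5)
Your forward direction matches the paper's. One small fix there: take $\run$ itself as the supporting run, not its restriction to the threads other than $i$, since those threads may read values written by $i$.

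The backward direction has a genuine gap: the resolution you finally commit to does not work. You run $\irun_j$ immediately after $\tau_{i_j}$ and then repair the helpers' reads by duplicating helper threads so that ``the needed value'' is regenerated. Under freshness this fails. In this section every value written to $\greg$ is a nonce, and a copy of a helper thread must draw new nonces, so it cannot write $\val_j$ again. Reproducing a specific value requires some thread that reads $\val_j$ from $\greg$ and writes it back, i.e.\ a replicating run. That is exactly the extra hypothesis the paper needs elsewhere (Lemma~\ref{lem:dec:RepToWrite} and the cyclic case), and it is not available here. Your assumption that every helper read immediately follows a helper write is also unjustified.

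The fix is the ``first attempt'' you discarded, and it is the paper's entire proof. Cut $\run'$ not right after $\tau_{i_j}$ but just before the first write to $\greg$ that follows it:
\begin{itemize}
\item $\run'_0$ is the prefix before the first write;
\item for $j\ge 1$, $\run'_j$ runs from the first write after $\run'_{j-1}$ through $\tau_{i_j}$, plus all subsequent non-writing moves;
\item everything after $\run'_l$ is discarded.
\end{itemize}
Then schedule $\run'_0\,\irun_0\,\run'_1\,\irun_1\cdots\run'_l\,\irun_l$. Note that $\run'_0$ must precede $\irun_0$, since $\irun_0$ may write to $\greg$ while $\run'_0$ reads the initial value.

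This works for two reasons. Each $\run'_j$ with $j\ge 1$ begins with a write, so it is oblivious to whatever $\irun_{j-1}$ left in $\greg$; no helper ever reads a value written by thread $m+1$. And the last write in $\run'_j$ is $\tau_{i_j}$, so the trailing helper moves still see $\val_j$, and $\irun_j$ then starts with $\greg=\val_j$, as its pseudo-assignment demands.

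No duplication is needed, and your worry about helpers reading the renamed nonces of $\irun$ disappears. What remains is your renaming step: use acyclicity to move the nonces of $\irun$ away from $\asupp(\irun)$ and from all values of $\run'$. That is exactly the paper's freshness argument.
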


\begin{proof} 
	Let $\run$ be a witnessing acyclic run. By projecting $\run$ onto the witnessing thread and adding suitable pseudo transitions, we obtain an acyclic pseudo $1$-run for which $\run$ itself serves as a supporting run.
	\smallskip
	
	For the converse, let $\irun$ be the given witnessing acyclic pseudo $1$-run and $w = \val_1 \dots \val_k$ be its support. We decompose $\irun$ based on its pseudo-assignments as $\rho = \rho_0 \dotarrow{(\greg \leftarrowtail \val_1)} \rho_1 \dotarrow{(\greg \leftarrowtail \val_2)} \cdots \dotarrow{(\greg \leftarrowtail \val_k)} \rho_k$. 
	
	Since $\pi$ writes $w$, we decompose it as $\pi = \pi_0 \cdot \pi_1 \cdot \pi_2 \dots \pi_k$, where $\pi_0$ is the prefix of $\pi$ up to, but not including, its first write to $\greg$. For each $i \in [1,k]$, $\pi_i$ is the sequence of moves following $\pi_{i-1}$ that begins with a writes and ends with  the earliest move writing $\val_i$ to $\greg$ and includes all subsequent moves that do not write to $\greg$. 
	Thus, for each $i \in [1,k]$, $\pi_i$ starts with a write to the shared register, and its last write to $\greg$ results in the value $\val_i$. This segment also accounts for all subsequent transitions that read this value back.
	
	Consequently, $\pi_i$ can be scheduled to execute independently of the current value of $\greg$, leaving $\greg$ with the value $\val_i$ upon completion. It follows that $\pi'=\pi_0 \rho_0 \pi_1 \rho_1 \pi_2 \rho_2 \dots \pi_k \rho_k$ is a valid run of $\pds$: the execution of $\pi_i$ ($1 \leq i \leq k$) leaves the value $\val_i$ in $\greg$, thereby simulating the effect of $\dotarrow{\greg \leftarrowtail \val_i}$. Furthermore, $\pi_i$ overwrites $\greg$ with its first transition and is thus unaffected by the preceding execution of $\rho_{i-1}$. Clearly, the label $\ell$ is reached at the end of $\pi'$. To guarantee that the resulting run is fresh, we assume $\nonce(\irun) \cap \nonce(\pi) = \emptyset$. This is without loss of generality because (a) nonces in $\rho$ can be renamed and (b) by the acyclicity of $\irun$, none of these nonces appear in the sequence $\val_1, \ldots, \val_k$.
\end{proof}
We next provide an algorithm to determine the existence of witnessing pseudo $1$-runs and corresponding supporting runs in {\sc PSpace}. This is achieved through several steps. First, we show that if a witnessing pseudo $1$-run exists, there is one whose length is bounded by a number $\MAX$, where $\MAX$ is at most exponential in the size of the system; the existence of such a run can be verified in {\sc Pspace}. 

Next, we demonstrate that the problem of computing a supporting run can be decoupled from the specific pseudo $1$-run in the following sense: the existence of a supporting run depends only on the width of the support. That is, if a run exists supporting a word $w$, and $w'$ is any word with the same width as $w$, then a run supporting $w'$ also exists. Thus, for any two acyclic pseudo $1$-runs with supports of identical width, either both have supporting runs or neither does. Since we can limit the width to $\MAX$, we finally show that we can determine the largest integer $k \leq \MAX$ for which a support can be found in {\sc Pspace}. We now describe the details of each of these steps.

\noindent
\textbf{Length Bounds for Pseudo $1$-runs:\ }
We establish a bound on the length of witnessing acyclic pseudo $1$-runs that need to be examined. The core idea is to abstract away the specific data values. Suppose $\iconf_i$ and $\iconf_j$ appear along a pseudo $1$-run (in that order) with the same program label, such that for any pair of registers (local or shared), their values are equal at $\iconf_i$ if and only if they are equal at $\iconf_j$. One would expect that by removing the segment from $\iconf_i$ to $\iconf_j$ and applying a suitable relabelling to the remainder of the run, a shorter witnessing pseudo $1$-run would result. 

While this intuition is largely correct, it requires a subtle refinement: a local register might store a nonce at configuration $\iconf_i$, yet hold a value obtained through a pseudo-assignment at $\iconf_j$. Under these circumstances, a naive relabelling of the data could inadvertently transform an acyclic pseudo-run into a cyclic one. We circumvent this issue by explicitly tracking the provenance of each register's value---specifically, whether it originates from a generated nonce or a pseudo-assignment. This distinction forms the core of the following lemma; a comprehensive proof is provided in the appendix \ref{apx:sec:oneglobal:acyclic}.

\begin{lemma}\label{lem:dec:1pseudobound}
If there is a witnessing acyclic  pseudo $1$-run then there is one whose
length is bounded by $\MAX$
\end{lemma}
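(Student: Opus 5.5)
We argue by a pumping-down argument on an abstraction of configurations of pseudo $1$-runs. Fix a witnessing acyclic pseudo $1$-run $\irun = \iconf_0 \Movesto{} \iconf_1 \Movesto{} \cdots \Movesto{} \iconf_m$ of minimal length. To each configuration $\iconf_i$ we associate an \emph{abstract state} consisting of: (i) the program label; (ii) the equivalence relation on $\localregisters \cup \{\greg\}$ induced by equality of values at $\iconf_i$; (iii) for each equivalence class, which constant of $\constants$ it equals (or none); and (iv) for each class, a \emph{provenance tag} recording whether its value was produced by a nonce of $\irun$ or entered via a pseudo-assignment. Acyclicity makes these tags disjoint, since no value is both a nonce of $\irun$ and a member of $\asupp(\irun)$. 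To handle freshness, we also record for each class whether its value has already been written to or pseudo-assigned to $\greg$ earlier. The number of abstract states is bounded by $|\Labels|\cdot B(|\localregisters|+1)\cdot (|\constants|+1)^{\bigO(\norm{\pds})}\cdot 4^{\bigO(\norm{\pds})}$, which is $\norm{\pds}^{\bigO(\norm{\pds})}$ (with $B$ the Bell number). We take $\MAX$ to be this quantity, possibly times a small polynomial factor.

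The main step is the following claim: if $i<j$ and $\iconf_i,\iconf_j$ have the same abstract state, then there is a data bijection $h$ such that $\iconf_0 \Movesto{}\cdots\Movesto{}\iconf_i$ followed by $h$ applied to the suffix $\iconf_j \Movesto{}\cdots\Movesto{}\iconf_m$ is again a witnessing acyclic pseudo $1$-run. We define $h$ to map the value held by each class at $\iconf_j$ to the value held by the corresponding class at $\iconf_i$, and to fix the constants. For values first appearing after $\iconf_j$, $h$ renames them injectively to values absent from the prefix, mapping nonces to fresh nonces and pseudo-assigned values to fresh non-nonce values.

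We then verify the claim transition by transition. Assignments, comparisons, and gotos are preserved by any injective renaming that fixes constants, because they depend only on equality patterns, which agree at $\iconf_i$ and $\iconf_j$. Nonce assignments remain fresh because new nonces are sent to globally new values. Pseudo-assignments remain legal, since any value may be pseudo-assigned; only the restriction that $0$ never appears in the support must be kept, and this holds because $h$ fixes constants. Acyclicity is preserved because the provenance tags match. A value carrying a ``pseudo-assigned'' tag at $\iconf_j$ is mapped to a value with the same tag at $\iconf_i$, so a later pseudo-assignment of that value never hits a nonce of the prefix. Conversely, a nonce-tagged value is mapped to a nonce-tagged value, so it is never pseudo-assigned after the cut, since it was not pseudo-assigned in the original run either. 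The final label is unchanged, so the shortened run is still witnessing.

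The main obstacle is precisely this provenance bookkeeping. A naive abstraction on equality pattern alone can glue a pseudo-assigned value at $\iconf_i$ to a nonce value at $\iconf_j$, and thereby create a cycle; the tags in component (iv) are designed to rule this out. Once the claim holds, minimality of $\irun$ forbids any repeated abstract state, so $m$ is at most the number of abstract states, which is at most $\MAX$.
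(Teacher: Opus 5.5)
Your proposal is correct and is essentially the paper's argument. The paper fixes a partition $(N,S)$ of $\domain$ with $\nonce(\irun)\subseteq N$ and $\asupp(\irun)\subseteq S$, which plays the role of your provenance tags; it pigeonholes on label, equality pattern, constants and $N$-membership, then applies to the suffix a constant-preserving, $(N,S)$-respecting bijection sending each register's value at $\iconf_j$ to its value at $\iconf_i$. Your extra ``already written'' flag is unnecessary, since freshness already follows from renaming every value that occurs after $\iconf_j$ but is absent from $\iconf_j$ (including reappearing pseudo-assigned values) to a globally new one, but it only enlarges the state count harmlessly.
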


Since the width of the support is fundamentally limited by the total number of instructions  in the pseudo-run, the length bound on the run implies a bound on the support width.

\begin{corollary}\label{cor:dec:widthbound}
If there is a witnessing acyclic pseudo $1$-run, there is one whose support has width bounded by $\MAX$.
\end{corollary}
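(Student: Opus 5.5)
The plan is to derive the corollary directly from Lemma~\ref{lem:dec:1pseudobound}, observing that the width of the support of a pseudo $1$-run never exceeds its length. Suppose some witnessing acyclic pseudo $1$-run exists. By Lemma~\ref{lem:dec:1pseudobound} we may replace it by a witnessing acyclic pseudo $1$-run $\irun$ whose length is at most $\MAX$. That lemma already preserves the two properties we need, namely acyclicity and the fact that $\lab$ is reached. So it only remains to bound $\wdwidth(\supp(\irun))$ in terms of the length of $\irun$.

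For this, write $\irun$ as an alternating sequence of ordinary segments and pseudo-assignments:
\[
\irun = \irun_0 \dotarrow{(\greg \leftarrowtail \val_1)} \irun_1 \cdots \dotarrow{(\greg \leftarrowtail \val_m)} \irun_m ,
\]
so that $\supp(\irun) = \val_1 \cdots \val_m$.

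Each pseudo-assignment is one step of $\irun$. Hence $m$ is at most the number of steps of $\irun$, which is at most $\MAX$. The width of a word counts the occurrences of its most frequent letter, so it is at most the length of the word. This gives
\[
\wdwidth(\supp(\irun)) \le m \le \MAX .
\]

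One point needs care: the convention for measuring length in Lemma~\ref{lem:dec:1pseudobound}. If length counts only ordinary moves, consecutive pseudo-assignments could make the support arbitrarily long. I would rule this out by normalising first. Two consecutive pseudo-assignments to the single register $\greg$ can be merged into the later one, because the earlier value is overwritten before any move can read it. The same holds for any pseudo-assignment immediately followed by an ordinary write to $\greg$: it can simply be dropped.

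After this normalisation, every pseudo-assignment is followed by at least one ordinary move, or ends the run. Therefore $m$ is at most the number of ordinary moves plus one. The bound then holds up to replacing $\MAX$ by $\MAX+1$, which is absorbed into the $\norm{\pds}^{\bigO(\norm{\pds})}$ form of $\MAX$.

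The normalisation does not affect acyclicity, because it only removes values from $\asupp$. It also keeps the run witnessing, because the sequence of labels is unchanged. This normalisation step is the only real obstacle, and it is straightforward.
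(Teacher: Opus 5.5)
Your proposal is correct and follows the paper's own one-line argument: apply Lemma~\ref{lem:dec:1pseudobound} to get a witnessing acyclic pseudo $1$-run of length at most $\MAX$, then note that the width of its support is at most the number of pseudo-assignments, hence at most its length. Your normalisation step is harmless but unnecessary, because in the paper a pseudo-run is a sequence of configurations joined by $\Movesto{}$ steps that include the pseudo-assignments, and the pigeonhole argument behind Lemma~\ref{lem:dec:1pseudobound} counts all of these configurations, so the bound is exactly $\MAX$ without any $+1$.
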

The following lemma establishes that, for a given $k \in \Naturals$, checking for the existence of a witnessing acyclic pseudo-$1$-run whose width is bounded by $k$ can be achieved in {\sc PSpace}. A detailed treatment of the algorithm is provided in the appendix \ref{apx:sec:oneglobal:acyclic}, while a proof sketch is presented below.

\begin{lemma}\label{lem:dec:tracking}
	There is an algorithm that checks, for a given $k$, if there is a witnessing acyclic pseudo-$1$ run $\rho$ such that $\wdwidth(\supp(\rho)) \leq k$, which runs in space polynomial in the size of the program and $\log(k)$. 
\end{lemma}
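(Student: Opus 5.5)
We will give a nondeterministic algorithm that guesses the witnessing acyclic pseudo-$1$-run $\rho$ move by move, keeping only a polynomial-size abstraction of the current configuration plus a few counters of $\bigO(\log k)$ bits. Savitch's theorem then gives a deterministic algorithm in polynomial space.

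\textbf{The abstract state.} Since there is a single thread and a single shared register, a pseudo-$1$-configuration is determined up to data renaming by three things:
\begin{itemize}
\item the label;
\item the equality partition on $\localregisters \cup \{\greg\}$ together with the constants $\constants$;
\item for each class not containing a constant, a provenance tag recording whether its value originated as a nonce generated by $\rho$ or as a pseudo-assigned value.
\end{itemize}
This is the same provenance refinement used for \cref{lem:dec:1pseudobound}. The state has size polynomial in $\norm{\pds}$. Real moves update it in the obvious way:
\begin{itemize}
\item tests are decided from the partition;
\item copies merge a register into a class;
\item a fresh nonce creates a new singleton class tagged ``nonce''.
\end{itemize}

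\textbf{Pseudo-assignments and the width.} Acyclicity forbids pseudo-assigning a value that $\rho$ generated as a nonce. So a pseudo-assignment may only put into $\greg$ either a value currently held in a class tagged ``pseudo'', or a brand-new value never seen before (tagged ``pseudo''). Freshness also forbids pseudo-assigning a constant, and $0$ is excluded by assumption.

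The width constraint needs care, because width counts occurrences of the \emph{same} data value over the whole support, including values no longer held in any register. We handle this by letting the algorithm guess, at each pseudo-assignment, one of two options:
\begin{itemize}
\item \emph{reuse} a value, which must be currently present in a ``pseudo'' class: the value's counter is incremented and checked to be $\le k$;
\item \emph{introduce} a new value, with counter set to $1$.
\end{itemize}
A value that disappears from all registers can never be reused. Reusing it later is equivalent to introducing a new value, up to renaming, because a forgotten value cannot be distinguished by any later test. Hence we keep counters only for values present in some register, that is, at most $|\localregisters|+1$ counters of $\bigO(\log k)$ bits each, attached to the ``pseudo'' classes. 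When classes merge through copies they carry the same value, so the counter is shared.

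\textbf{Main obstacle: correctness of the abstraction.} We must show, in both directions, that abstract runs correspond to genuine witnessing acyclic pseudo-$1$-runs of width $\le k$.
\begin{itemize}
\item \emph{Soundness.} Concretize an abstract run by choosing a fresh data value each time a new nonce or new pseudo value is introduced. Freshness of nonces must also hold with respect to earlier pseudo-assigned values. This is ensured by the fact that nonces always create new classes, distinct from everything.
\item \emph{Completeness.} Take a concrete run and track, for each value currently in a register, its exact occurrence count in the support. The key point is that dropping counters of forgotten values never underestimates a count that matters: a forgotten value would reappear only through a pseudo-assignment, and we may rename that occurrence and all its subsequent occurrences to a new value. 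This renaming preserves every equality test, because the old value was absent from all registers. It does not increase the width, and it keeps acyclicity.
\end{itemize}
With this invariant in place, reachability of the target label in the abstract graph is decided by an NPSPACE search with space polynomial in $\norm{\pds}$ and $\log k$. No explicit length bound is needed, since graph reachability in a graph of exponential size is in NL of that size.
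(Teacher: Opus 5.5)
Your proposal is correct and follows the paper's argument. You do a symbolic search over the label, the equality partition of the registers, a nonce/pseudo provenance flag per class, and $\bigO(\log k)$-bit counters kept only for pseudo-assigned values still held in some register, with forgotten values recycled. The appendix phrases this as a finite-domain register automaton over $\domain_1 \cup \domain_2$ whose $\domain_2$-slot counters may be reset once the slot's value has left every register, and its backward direction is your fresh-value concretization.

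One small correction. Freshness of a pseudo-run constrains only nonces, not pseudo-assignments, so it does not by itself forbid pseudo-assigning constants; the appendix version of this lemma explicitly allows constants from a given $\constants' \subseteq \constants$ in the support. Constants are absent here only because of this section's standing assumption that $0$ is the sole constant and is never written.
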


\begin{proof}(sketch)
	Our first observation is that to track the width of an acyclic pseudo-$1$ run, it suffices to count the number of pseudo-reads of each non-nonce value currently present in the registers. 
	This facilitates a symbolic search for the run: we maintain in the state the current program label, the equality of registers (represented as a partition of the set of registers), an indicator of whether each partition holds a nonce, and, for each non-nonce value, a counter for the number of pseudo-reads of that value. 
	Since these counters only need to be maintained up to $k$, the complexity follows.
\end{proof}
We now turn our attention to the task of finding supporting runs. The following lemma establishes that the capacity to generate a word is equivalent to the capacity to generate any word of the same width. Thus, the existence of a supporting run for a word $w$ implies the existence of a supporting run for any other word $w'$ where $\wdwidth(w) = \wdwidth(w')$.

\begin{lemma}\label{lem:dec:kbounded}
	If there is a $k$-writing run of $\pds$, then for any $w \in \domain^*$ with $\wdwidth(w) \leq k$, there is a run that writes $w$.
\end{lemma}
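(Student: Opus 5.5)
The plan is to build the run for $w$ by running several disjoint, renamed copies of the given $k$-writing run side by side, one copy for each distinct letter of $w$. Scheduling them is done in blocks, using the same trick as in the proof of Lemma~\ref{lem:dec:1greg:forward}: a segment that begins with a write to $\greg$ does not depend on the current contents of $\greg$.

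\textbf{Step 1 (extracting $\val^k$).} Let $\run$ be a $k$-writing $n$-run. It writes a word of width at least $k$, so some value $\val$ is written to $\greg$ at least $k$ times. Fix $k$ such write moves $\tau_{i_1},\dots,\tau_{i_k}$, each leaving $\val$ in $\greg$; thus $\run$ writes $\val^k$. By the standing assumptions (no constants other than $0$, and $0$ is never written), $\val$ is a nonce of $\run$.

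\textbf{Step 2 (renaming).} Let $h:\domain\to\domain$ be any bijection fixing $\constants=\{0\}$. Applying $h$ to every data value in $\run$ gives a valid run $h(\run)$, because the semantics only tests (dis)equality and copies values. Since $h$ is injective and fixes $\constants$, $h(\run)$ is fresh, and it writes $h(\val)^k$. I expect the main care to be needed here: we must choose the bijections so that freshness holds for the combined run. Write $w$'s distinct letters as $e_1,\dots,e_m$; each is nonzero because $0$ is never written. Choose bijections $h_1,\dots,h_m$ such that $h_j(\val)=e_j$, and such that the images $h_j(\nonce(\run))$ are pairwise disjoint and contain no constant. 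This is possible because $\domain$ is infinite (any finitely many values can be avoided) and the $e_j$ are distinct. Set $\run^j=h_j(\run)$, acting on its own block of $n$ fresh thread indices. The union of all threads then starts from $\initpdsconf(nm)$.

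\textbf{Step 3 (blocks and scheduling).} Cut each $\run^j$ immediately before every write to $\greg$. Let $\alpha^j$ be the prefix before the first write; it contains no writes, so it only ever reads $0$. For $t\in[1,k]$, let $B^j_t$ run from the end of $B^j_{t-1}$ (or of $\alpha^j$ when $t=1$) up to, but excluding, the first write after the $t$-th chosen write $\tau_{i_t}$ (or to the end of $\run^j$ if there is none). Then:
\begin{itemize}
\item each $B^j_t$ starts with a write, so its behaviour is independent of the value of $\greg$ when it starts;
\item the last write inside $B^j_t$ is the chosen one, so $B^j_t$ ends with $e_j$ in $\greg$ and writes $e_j$ at its end.
\end{itemize}
The combined run executes $\alpha^1\cdots\alpha^m$ first, while $\greg$ still holds $0$. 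It then scans $w=w_1\cdots w_l$: if $w_p=e_j$ is the $t$-th occurrence of $e_j$ in $w$, it executes $B^j_t$. This is well defined since $t\le\wdwidth(w)\le k$. Unused blocks are simply dropped, since a prefix of a run is a run.

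\textbf{Verification.} Each block executes exactly as it does in $\run^j$, for three reasons: local states of different copies are disjoint; the reads in $B^j_t$ happen after its own opening write; and its reads before the chosen write see the same values as in $\run^j$. Hence the concatenation is a valid run. It is fresh by the choice of the $h_j$, and its subsequence of chosen writes spells $w$.
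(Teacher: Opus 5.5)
Your proposal is correct and follows essentially the same route as the paper. You cut the $k$-writing run into segments that each open with a write to $\greg$ and end with the repeated value resident in $\greg$, make one relabelled copy of the run per distinct letter of $w$, and splice the segments in the order dictated by $w$; your version is more careful than the paper's sketch in spelling out the pairwise-disjoint renaming that keeps the combined run fresh and in noting that unused trailing segments can be dropped.
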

\begin{proof}(sketch)
	The core of the proof is a constructive argument demonstrating that a run capable of writing a single value $k$ times can be transformed into a run writing any word of width at most $k$ through cloning and relabelling.
	
 Consider a $k$-writing run $\run$ that writes a value $a$ exactly $k$ times. We decompose $\run$ into segments $\run_0, \run_1, \dots, \run_k$, where each $\run_i$ (for $i \in [1, k]$) begins with a write to $\greg$ and concludes by leaving $\greg$ with the value $a$. Because each segment $\run_i$ starts by overwriting the shared register, its internal moves are independent of the register's prior state.
 Let $w = a_1, \dots, a_m$ be the target word. For each value $a_j$ in $w$, we instantiate a copy of the system's threads. Since $a$ and $a_j$ are nonces, we can relabel the original run to write $a_j$ instead of $a$ without violating any transition rules.
 By interleaving these relabelled segments according to the sequence required by $w$, we construct a unified run. 
\end{proof}
As a consequence of Corollary \ref{cor:dec:widthbound} and Lemma \ref{lem:dec:kbounded} we have:
\begin{corollary}\label{cor:dec:acyclic}
Let $k$ be largest value below $\MAX$ such there is a $k$-writing run. There is an acyclic witnessing run iff there is a witnessing pseudo $1$-run whose width is bounded by $k$.
\end{corollary}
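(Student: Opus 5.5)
The plan is to prove the two directions separately, with Lemma~\ref{lem:dec:1greg:forward} as the bridge. That lemma already reduces the existence of an acyclic witnessing run to the existence of an acyclic witnessing pseudo-$1$-run together with a run supporting it. It therefore suffices to show the following: an acyclic witnessing pseudo-$1$-run with a supporting run exists if and only if a witnessing (acyclic) pseudo-$1$-run of width at most $k$ exists. Here $k$ is the largest value not exceeding $\MAX$ for which a $k$-writing run exists. We also note that $k$-writing runs are downward closed in $k$, since a word of width at least $k$ also has width at least $k'$ for every $k'\le k$. Hence ``largest'' is meaningful, and for every $k'\le k$ a $k'$-writing run exists.

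\emph{($\Leftarrow$)} Let $\irun$ be an acyclic witnessing pseudo-$1$-run with $\wdwidth(\supp(\irun))\le k$. By the choice of $k$ there is a $k$-writing run. Lemma~\ref{lem:dec:kbounded}, applied to the word $w=\supp(\irun)$, gives a run that writes $w$, i.e.\ a run supporting $\irun$. Lemma~\ref{lem:dec:1greg:forward} then yields an acyclic witnessing run. The only point to check is that the supporting run produced by Lemma~\ref{lem:dec:kbounded} can be taken fresh and with nonces disjoint from $\nonce(\irun)$. This follows by renaming, exactly as at the end of the proof of Lemma~\ref{lem:dec:1greg:forward}, using acyclicity of $\irun$.

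\emph{($\Rightarrow$)} Suppose an acyclic witnessing run $\run$ exists. Projecting it onto the witnessing thread gives an acyclic witnessing pseudo-$1$-run $\irun$ supported by $\run$. The delicate point is that the width of $\irun$ may exceed $\MAX$, so we cannot directly compare it with $k$. I would first shrink $\irun$ with Lemma~\ref{lem:dec:1pseudobound} and Corollary~\ref{cor:dec:widthbound} to an acyclic witnessing pseudo-$1$-run $\irun'$ of length, and hence width, at most $\MAX$. The shortening should only delete segments and relabel values injectively, so $\wdwidth(\supp(\irun'))\le\min(\MAX,\wdwidth(\supp(\irun)))$. The inequality against $\wdwidth(\supp(\irun))$ is the part to extract from the proof of Lemma~\ref{lem:dec:1pseudobound}. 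Now set $k'=\wdwidth(\supp(\irun'))$. Since $\run$ writes $\supp(\irun)$, whose width is at least $k'$, the run $\run$ is $k'$-writing. As $k'\le\MAX$, maximality of $k$ gives $k'\le k$, so $\irun'$ is a witnessing pseudo-$1$-run of width at most $k$.

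The main obstacle is the monotonicity claim in ($\Rightarrow$), namely that the width-reduced run $\irun'$ does not have larger width than the original. If the relabelling in Lemma~\ref{lem:dec:1pseudobound} could merge distinct pseudo-assigned values, the width could grow. I would argue that it cannot: the lemma removes a contiguous segment and renames values bijectively on the remainder. Consequently each letter of $\supp(\irun')$ corresponds to a distinct letter occurrence of a subsequence of $\supp(\irun)$, and the multiplicity of each value can only drop. A fallback that avoids this issue is to use the $k'$-writing property differently. If $\wdwidth(\supp(\irun))\ge\MAX$, then $\run$ is $\MAX$-writing, so $k=\MAX$, and $\irun'$ has width at most $\MAX=k$. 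Otherwise $\irun$ itself already has width below $\MAX$, is supported by $\run$, and so has width at most $k$. This case split sidesteps monotonicity entirely, and I would present it as the main argument.
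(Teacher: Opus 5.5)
Your proof is correct and follows the paper's own (one-line) argument, which combines Lemma~\ref{lem:dec:1greg:forward}, Lemma~\ref{lem:dec:kbounded} and Corollary~\ref{cor:dec:widthbound}; your case split on whether $\wdwidth(\supp(\irun))\ge\MAX$ is exactly the step the paper leaves implicit in the ($\Rightarrow$) direction. Do drop the monotonicity justification rather than just demoting it, because it is false: in the proof of Lemma~\ref{lem:dec:1pseudobound} the bijection $f$ is applied only to the part of the run after the removed segment, so a value held in a register at the later configuration is renamed to the value held in that register at the earlier one, that value may already have been pseudo-assigned in the retained prefix, and the two multiplicities then add up, so the width can grow.
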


The rest of this subsection is devoted to showing how to determine the largest $k \leq \MAX$ for which there is a $k$-writing run. This needs the notion of a replicating (pseudo) run. We say that a value $v$ is \emph{non-local} for thread $i$ in a run/pseudo-run, if $v$ is not  a nonce generated by that thread.
A run/pseudo-run  is \emph{replicating} if there is a thread that writes a non-local value to the shared register $\greg$. Equivalently, a run/pseudo run is {replicating} if there is a thread in it that writes to the shared register, a value that was not generated by itself.  Otherwise it is a non-replicating run/pseudo-run.  An example of a replicating run is shown in the Figure \ref{main:fig:replicating}.
The following two technical lemma allows us to find a $k$-writing run inductively, a detailed proof of this characterization can be found in the appendix. %
\begin{figure*}
	\includegraphics[trim=0.26cm 0.25cm 0.25cm 0cm , scale=0.39]{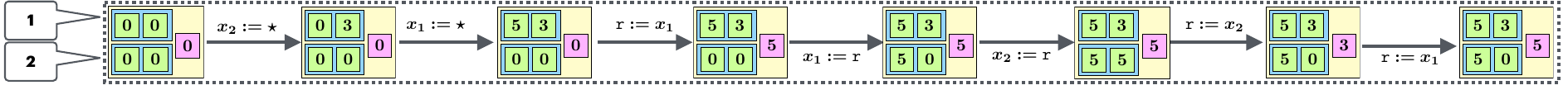}
	\caption{A replicating $2$-run, the thread $2$ write the non-local value $5$ to the shared register}\label{main:fig:replicating}
		\vspace{-2ex} %
\end{figure*}

\begin{lemma}\label{lem:dec:RepToWrite}
	If there is a replicating run, then there is $k$-writing run for any $k$.
\end{lemma}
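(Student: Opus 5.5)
Starting from a replicating (fresh) run $\run$, I would build a $k$-writing run by induction on $k$, using thread cloning: many fresh copies of a thread can be run side by side without interfering, except through the shared register $\greg$. Fix in $\run$ a thread $i$ and a move $\tau$ where $i$ writes a non-local value $v$ to $\greg$. Since $v$ was not generated by $i$, thread $i$ acquired $v$ from $\greg$. So $i$ has an earlier read from $\greg$ (a copy $\lreg:=\greg$ or a comparison) after which the value is held in a local register until $\tau$. The value $v$ was itself written to $\greg$ earlier by some other thread, either as its own nonce or by replicating it. Taking the earliest such write, I may assume it comes from a thread $j$ that generated $v$ as a nonce.

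The construction is a \emph{relay}. Cut $\run$ at the write $\tau$ to get a prefix $\run^{\le\tau}$; its length is finite, say with $r$ reads of $v$ by $i$ between acquiring it and writing it. Take $k$ copies of the threads of $\run^{\le\tau}$, renaming nonces so that the copies share only the value $v$ of thread $j$. Run a single copy of the part of $\run$ that produces $v$ in $\greg$. Then schedule the copies of thread $i$ one after another, each executing its segment up to $\tau$ and writing $v$. The other threads of each copy are first run on their own (renamed) data in a way that does not touch $\greg$ after the point where copy $i$ needs $v$. Each copy writes $v$ once, so we get $k$ writes of the same value $v$, giving width $\ge k$.

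The main obstacle is the interleaving. Copy $\ell$ of thread $i$ must see $v$ in $\greg$ at each of its reads of $v$, and must see the correct values at its other reads of $\greg$, while the other copies also write to $\greg$.

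I would handle this the way Lemma~\ref{lem:dec:1greg:forward} handles supports. Decompose each copy's run into blocks that start with a write to $\greg$ and contain all subsequent reads until the next write. Such blocks can be executed atomically, independently of the prior content of $\greg$. The only blocks that depend on values written by another copy are the reads of $v$. These lie in blocks headed by a write of $v$, either by the original generator $j$ or by an earlier copy of $i$ writing $v$. Executing the copies sequentially, copy $\ell$ can be run with its $v$-reading blocks placed immediately after a write of $v$ by copy $\ell-1$ (or by $j$). Repeated reads of $v$ are fine, because after copy $\ell-1$ has written $v$ any number of $v$-reads can be scheduled before the next write.

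Freshness is preserved because renamed nonces are chosen disjoint across copies and only $v$ is shared, and $v$ is generated only once, by $j$. The induction on $k$ (or the direct chain $j\to i_1\to\cdots\to i_k$) then yields a run writing $v$ at least $k$ times, that is, a $k$-writing run for every $k$.
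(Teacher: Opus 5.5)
\textbf{Gap in the relay.} A copy of thread $i$ writes $v$ only once, at $\tau$, and $\tau$ is the last move of its segment. So copy $\ell-1$ can make $v$ available to copy $\ell$ at a single moment, after which copy $\ell-1$ is finished. But the projection of $\run$ on $i$ may have $v$ pseudo-assigned several times, separated by pseudo-assignments of other values. For example, $i$ might execute $\lreg:=\greg$, wait until $\greg\neq\lreg$, wait until $\greg=\lreg$, and then execute $\greg:=\lreg$, so its support is $v\,e\,v$. Your remark that repeated reads of $v$ are harmless only covers reads with no intervening write. The second $v$-reading block of copy $\ell$ has no supplier: copy $\ell-1$ is done, and the single run of $j$ has a fixed number of writes of $v$ at fixed positions.

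\textbf{Treatment of $j$.} This is also inconsistent. If $j$ is copied, its copies must either regenerate $v$, which violates freshness, or generate a renamed value, and then copy $\ell$ of $i$ no longer writes $v$. If $j$ is not copied, the copies lose $j$'s other writes, including its other nonces, which you rename per copy but which no thread of the copy generates. Likewise, you cannot require that the other threads of a copy not touch $\greg$ after copy $i$ first needs $v$: $i$ may later have to read values that those threads write.

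\textbf{The missing idea.} You need neither a relay nor copies of the other threads. All copies of $i$ need the same support $a_1\cdots a_m$ in the same order, so the original run supports all of them at once. The paper's proof does the following.
\begin{enumerate}
\item Take $\run$ with no replicating prefix. Then the replicating move is the last one, and the projection $\irun$ on $i$ is acyclic: a pseudo-assignment of $i$'s own nonce would need another thread to replicate it earlier.
\item Make $k$ copies of $\irun$, renaming $i$'s nonces independently in each copy and keeping the replicated value $a$.
\item Decompose $\run=\run_0\run_1\cdots\run_m$, where each $\run_j$ starts with a write and leaves $a_j$ in $\greg$.
\item After each $\run_j$, schedule the non-writing parts of all $k$ copies, which read $a_j$. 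Then schedule their write-initiated parts. These do not depend on the register's content, and their writes are overwritten before any other thread reads them.
\end{enumerate}
Each copy ends with its replicating write of $a$, which gives $k$ writes of the same value.
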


\begin{lemma}\label{lem:dec:kwriting}  For any $k$, there is a  $k$-writing  run if and only if there is a $j < k$ such that there is a $j$-writing non-replicating run $\run$  and an acyclic pseudo-$1$-run $\irun$ with   $\wdwidth(\supp(\irun)) \leq j$ which is either replicating or $k$-writing.
\end{lemma}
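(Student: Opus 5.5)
The plan is to prove the two directions separately. For ($\Leftarrow$) the pseudo-run is lifted to a real run, essentially as in Lemma~\ref{lem:dec:1greg:forward}. For ($\Rightarrow$) a minimal $k$-writing run is cut just before its crucial write, and the thread performing that write is projected.

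\emph{($\Leftarrow$).} Let $\run$ be $j$-writing and non-replicating, and let $\irun$ be acyclic with $\wdwidth(\supp(\irun))\le j$.
\begin{enumerate}
\item By Lemma~\ref{lem:dec:kbounded} there is a run $\run''$ that writes $\supp(\irun)$.
\item Decompose $\run''$ into write-led segments and interleave them with the segments of $\irun$, exactly as in the converse direction of Lemma~\ref{lem:dec:1greg:forward}. This gives a fresh run $\run'$ in which one added thread performs every move of $\irun$.
\item Freshness holds because $\irun$ is acyclic, so its nonces can be renamed away from $\nonce(\run'')$ without touching the support.
\end{enumerate}
The writes of $\irun$ are genuine moves in $\run'$, so two cases close this direction.
\begin{itemize}
\item If $\irun$ is $k$-writing, then $\run'$ is $k$-writing.
\item If $\irun$ is replicating, the added thread writes a value it did not generate, and this stays true after the renaming. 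Hence $\run'$ is replicating, and Lemma~\ref{lem:dec:RepToWrite} yields a $k$-writing run.
\end{itemize}

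\emph{($\Rightarrow$).} Take a run $\sigma$ that is $k$-writing or replicating, and choose it of minimal length. Let $\run$ be $\sigma$ with its last move $\tau$ removed. By minimality $\run$ is neither replicating nor $k$-writing, so $\run$ is non-replicating and $j$-writing with $j<k$; take $j$ to be the maximal width of a word written by $\run$. The move $\tau$ is a write, and it is either the first replicating write or the write that completes width $k$.
\begin{itemize}
\item \emph{Replicating case.} Let $i$ be the thread performing $\tau$.
\item \emph{$k$-writing case.} Some value $a$ is written $k$ times in $\sigma$. Since $\run$ is non-replicating, every write of a value is by its generator, and $\tau$ itself is not replicating. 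So all $k$ writes of $a$ are performed by the generator of $a$; let $i$ be that thread.
\end{itemize}
Let $\irun$ be the projection of thread $i$ on $\sigma$. Every pseudo-assigned value is the current register value after some write in $\run$. Hence $\supp(\irun)$ is, value by value, bounded by the writes of $\run$, and $\wdwidth(\supp(\irun))\le j$. By the choice of $i$, $\irun$ is replicating in the first case and $k$-writing in the second.

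\emph{Main obstacle: acyclicity of $\irun$.} Suppose $\irun$ pseudo-reads a nonce $d$ that thread $i$ itself generated. Non-replicating means only $i$ ever writes $d$. Then the pseudo-assignment arises when $i$ wrote $d$, another thread overwrote $\greg$, and later $i$ rewrote $d$ before reading it. Such a pseudo-assignment is redundant: I would try to show it can be replaced by $i$'s own earlier write. The plan is to first reschedule $\sigma$, keeping it minimal, so that every move of thread $i$ that reads $d$ immediately follows $i$'s own write of $d$. This is possible because no other thread writes $d$, so the intermediate moves of other threads commute. After this rescheduling the projection never pseudo-assigns a nonce of $i$, so $\irun$ is acyclic. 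I expect verifying this commutation argument carefully, including the effect of reads of $d$ by other threads, to be the delicate step. If it fails, the fallback is to refine the projection so that it inserts pseudo-assignments only for values not generated by $i$.
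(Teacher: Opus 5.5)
Your overall structure is the paper's. The ($\Leftarrow$) direction (Lemma~\ref{lem:dec:kbounded}, then the composition of Lemma~\ref{lem:dec:1greg:forward}, then Lemma~\ref{lem:dec:RepToWrite} in the replicating case) is exactly the paper's argument. In ($\Rightarrow$), choosing $\sigma$ minimal among runs that are $k$-writing \emph{or} replicating is a tidy way to merge the paper's two cases. The paper instead takes a minimal $k$-writing run and, if it is replicating, its shortest replicating prefix. Your width bound and your argument that all $k$ writes of $a$ come from its generator are correct.

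The gap is the acyclicity of $\irun$, which your proposal leaves unproven. You sketch a rescheduling you are unsure of. The fallback of inserting pseudo-assignments ``only for values not generated by $i$'' is not available: the projection must insert a pseudo-assignment whenever the register value differs, or the result is not a pseudo-run. Moreover, the scenario you want to eliminate cannot occur. If $i$ itself rewrites $d$, that write is a genuine move of $\irun$, so no pseudo-assignment of $d$ is produced.

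The missing observation is immediate. A pseudo-assignment is inserted before a block of $i$'s moves only if $\greg$ changed during the preceding gap. It then carries the value of the last write in that gap, which is performed by some thread $t \neq i$ and lies in $\run$, because $\tau$ is $i$'s move and is last. As $\run$ is non-replicating, $t$ generated that value, so by freshness no nonce step of $i$ produces it. Hence $\asupp(\irun)\cap\nonce(\irun)=\emptyset$ directly, with no rescheduling. This is the paper's remark that a non-replicating run is acyclic ``since each nonce can be written only by one thread''. You had in fact already derived that only $i$ writes $d$, which by itself rules out a pseudo-assignment of $d$.
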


Using Lemma \ref{lem:dec:kwriting}, we can construct the desired algorithm. We first note that for any $j < k \in \Naturals$, verifying the existence of an acyclic $k$-writing non-replicating pseudo-1-run or an acyclic replicating pseudo-1-run (each with support width bounded by $j$) is achievable in polynomial space relative to the system size and $\log(j)$.

\begin{lemma}\label{lem:sglobal:pspace:kwrite}
	There is a {\sc PSpace} algorithm to check the following:
	\begin{itemize}
		\item For a given $0 \leq j < k$, if there is a $k$-writing acyclic pseudo-1-run $\rho$ such that $\wdwidth(\supp(\rho)) \leq j$.
		\item For a given $k$, if there is a replicating acyclic pseudo-1-run $\rho$ such that $\wdwidth(\supp(\rho)) \leq j$.
	\end{itemize}
\end{lemma}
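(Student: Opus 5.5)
The plan is to extend the symbolic search from the proof sketch of Lemma~\ref{lem:dec:tracking} with a bounded amount of extra bookkeeping, and then run it as a nondeterministic polynomial-space procedure. By Savitch's theorem this gives {\sc PSpace}. The numbers $j$ and $k$ are encoded in binary, and we only ever need them up to $\MAX$, so $\log(k)$ and $\log(j)$ are polynomial in $\norm{\pds}$.

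\textbf{Abstract state.} It consists of:
\begin{itemize}
\item the current label;
\item the partition of $\localregisters\cup\{\greg\}$ induced by equality;
\item for each block, a provenance tag: nonce generated by the thread, pseudo-assigned value, or the initial constant $0$;
\item for each block holding a non-nonce (pseudo-assigned) value, a counter of how many times that value has been pseudo-assigned so far, capped at $j+1$.
\end{itemize}
Because there are at most $|\localregisters|+1$ blocks, only polynomially many counters of $\bigO(\log j)$ bits each are stored.

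\textbf{Simulating pseudo-assignments.} A pseudo-assignment $\greg\leftarrowtail\val$ is simulated in two ways.
\begin{itemize}
\item If $\val$ equals a value currently held in some block, we merge $\greg$ into that block and increment its counter. We reject if the counter exceeds $j$, and we also reject if the block is nonce-tagged, since that would break acyclicity.
\item Otherwise $\val$ is a fresh value, or a previously read value that is no longer stored. Forgotten values can safely be treated as new: freshness of nonces ensures they never coincide with later nonces, and renaming them apart can only decrease the width. Its counter then starts at $1$.
\end{itemize}
Ordinary moves update the partition, tags and counters in the obvious way. A nonce assignment creates a new singleton block tagged nonce. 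Values that disappear from all registers are dropped.

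\textbf{The two properties.} For the second item (replicating), we nondeterministically wait for a write $\greg:=x$ where $x$'s block is tagged pseudo-assigned; a non-nonce value is by definition non-local. For the first item ($k$-writing), we must witness a word of width at least $k$ among the written values. To do this we nondeterministically guess the moment a particular value becomes the one written $k$ times. We mark one block (a nonce or a held value) as ``tracked'' and keep a counter, capped at $k$, of writes of the tracked value to $\greg$. The tracked value stays identifiable as long as it sits in some register. If it vanishes from all registers it can never be written again, so the guess simply fails on that branch. We accept when the tracked counter reaches $k$. The witnessing run must also end appropriately, and the witnessing length is unconstrained here, but reachability in the finite abstract graph is what we check. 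Soundness and completeness then follow from showing that every abstract path can be concretized by choosing fresh data values, and that every concrete pseudo-run abstracts to a path.

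\textbf{Main obstacle.} I expect the hardest step to be justifying the treatment of pseudo-read values that are no longer held in any register. One must argue that they can be renamed to fresh distinct values without affecting feasibility, acyclicity, or the $k$-writing property, and that this renaming does not increase width. I would handle this with the same provenance-aware relabelling argument used in Lemma~\ref{lem:dec:1pseudobound}, applied to the suffix after the value is forgotten.
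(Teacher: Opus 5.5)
Your proposal is correct and follows essentially the paper's route (Lemmas~\ref{lem:33} and~\ref{lem:replicatepspace}). Both use the same finite-state abstraction:
\begin{itemize}
\item separate nonce and support slots;
\item support counters capped at $j$, reset once a value has left every register;
\item a guessed witness that must stay in some register while its writes to $r$ are counted up to $k$;
\item for replication, reaching a write to $r$ of a support-tagged value.
\end{itemize}

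One point needs tightening. The paper takes the witness to be one of the thread's own nonces. Your remark that a vanished tracked value can never be written again is false for a pseudo-assigned value: a later pseudo-assignment can re-supply it, and your abstraction would then treat it as a new value, losing the count. There are two easy fixes:
\begin{itemize}
\item restrict tracking to nonces, as the paper does; this loses nothing for the disjunction in Lemma~\ref{lem:dec:kwriting}, since a thread writing a non-nonce value is already replicating and is caught by the second check;
\item keep a reserved slot for the tracked value after it vanishes.
\end{itemize}
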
 

This allows an iterative search for the maximum $k \leq \MAX$. Since a width-0 pseudo-1-run cannot replicate, a $1$-writing run exists iff there is a $1$-writing non-replicating acyclic pseudo-1-run requiring no support. If no $1$-writing run exists, no higher $k$-writing runs exist. Otherwise, we iterate: a $2$-writing run exists iff there is an acyclic pseudo-1-run with support width $\leq 1$ that is either replicating or $2$-writing. This process continues up to $\MAX$; as the counter and checks are polynomial in space, the result holds. Theorem \ref{thm:dec:acyclic} follows from Corollary \ref{cor:dec:acyclic} and Lemma \ref{lem:dec:maxkwriting}.

\begin{lemma}\label{lem:dec:maxkwriting}
	There is an algorithm that determines the largest $k \leq \MAX$ for which there is a $k$-writing run and runs in space polynomial in $\norm{\pds}$.
\end{lemma}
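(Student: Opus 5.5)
The plan is an iterative search over $k = 1, 2, \dots, \MAX$. It maintains a binary counter for $k$, which needs only $\bigO(\log \MAX) = \bigO(\norm{\pds}\log\norm{\pds})$ bits, together with one bit recording whether a $k$-writing run has been certified so far. The justification is Lemma~\ref{lem:dec:kwriting}. It says a $k$-writing run exists iff, for some $j<k$, there is a $j$-writing non-replicating run and an acyclic pseudo-$1$-run of width at most $j$ that is either replicating or $k$-writing. Two observations simplify this. First, $k$-writing runs are downward closed in $k$: a run writing a word of width $\geq k$ also writes one of width $\geq k-1$. Second, by Lemma~\ref{lem:dec:RepToWrite}, once some replicating run exists, every $k$ is attainable. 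So I would let the algorithm track $k^\ast$, the largest $k$ certified so far. I would also keep a flag recording whether a replicating acyclic pseudo-$1$-run of width at most $k^\ast$ has been found; if so, it outputs $\MAX$ immediately.

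\textbf{Base step.} A $1$-writing run exists iff there is a $1$-writing acyclic pseudo-$1$-run of width $0$, i.e.\ an ordinary run of a single thread. This is the case $j=0$ of Lemma~\ref{lem:sglobal:pspace:kwrite}. The case needs a small argument, because Lemma~\ref{lem:dec:kwriting} demands $j<k$ with a "$0$-writing non-replicating run". That run is just the trivial empty run, and a width-$0$ pseudo-run cannot replicate, since it never reads any non-local value.

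\textbf{Inductive step.} Suppose $k^\ast$ is certified. I would take $j=k^\ast$, the largest available value, which is allowed because having more support width only helps. I would then run the two checks of Lemma~\ref{lem:sglobal:pspace:kwrite} with width bound $j$. If a replicating acyclic pseudo-$1$-run exists, Lemma~\ref{lem:dec:kwriting} yields a replicating run and hence, by Lemma~\ref{lem:dec:RepToWrite}, arbitrary $k$, so the algorithm returns $\MAX$. Otherwise, it tests whether a $(k^\ast+1)$-writing acyclic pseudo-$1$-run of width at most $k^\ast$ exists. If yes, it sets $k^\ast := k^\ast+1$; if no, it stops and returns $k^\ast$.

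The main obstacle is soundness of using only $j=k^\ast$ and of stopping at the first failure. Lemma~\ref{lem:dec:kwriting} requires the $j$-writing run to be \emph{non-replicating}. This holds here because at stage $j$ we have already verified that no replicating acyclic pseudo-$1$-run of width at most $j$ exists. Combined with the inductive characterization, that verification shows every $j$-writing run is non-replicating, since projecting a replicating run would give a replicating pseudo-run of that width. I would make this precise by proving the invariant "no replicating run has width-$\leq k^\ast$ support" alongside the iteration. Monotonicity then shows that failing at $k^\ast+1$ rules out all larger $k$: a larger $k$ would require a $j$ with $j \leq k^\ast$, and any such witness is also found with $j=k^\ast$. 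Space usage is the counter plus one call to the polynomial-space procedure of Lemma~\ref{lem:sglobal:pspace:kwrite} with parameters of $\bigO(\log\MAX)$ bits, reusing space between iterations. The whole algorithm is therefore polynomial in $\norm{\pds}$.
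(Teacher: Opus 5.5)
Your algorithm is the paper's. The paper's iterative search (sketched after Lemma~\ref{lem:sglobal:pspace:kwrite} and spelled out in the appendix's saturation procedure) starts from the width-$0$ base case. At each stage it asks for an acyclic pseudo-$1$-run of width at most the current $K$ that is either replicating, in which case it jumps to $\MAX$, or $(K+1)$-writing. It stops at the first failure and keeps only the counter between polynomial-space calls. The overall plan and the space bound are therefore fine.

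The paragraph you call the main obstacle contains a false claim, though the proof survives without it.

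\begin{itemize}
\item \textbf{The claim is false.} It is not true that, once no replicating acyclic pseudo-$1$-run of width $\leq j$ exists, \emph{every} $j$-writing run is non-replicating. At an intermediate stage there may well be a replicating run whose replicating thread needs support of width larger than $j$, and that run is still $j$-writing. Being $j$-writing is a lower bound on what the run writes. It says nothing about the support width of the replicator's projection, so projecting does not give a replicating pseudo-run of width $\leq j$.

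\item \textbf{You do not need it.} The proof of the ``if'' direction of Lemma~\ref{lem:dec:kwriting} never uses non-replication. By Lemma~\ref{lem:dec:kbounded}, \emph{any} $k^\ast$-writing run writes the support of the pseudo-run you found. The composition of Lemma~\ref{lem:dec:1greg:forward} then gives the $(k^\ast+1)$-writing run, or a replicating run, which Lemma~\ref{lem:dec:RepToWrite} turns into arbitrary width.

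\item \textbf{If you want the hypothesis verbatim anyway.} Your invariant does give it for a \emph{minimal} $k^\ast$-writing run. If such a run were replicating, the replicator's projection on its shortest replicating prefix would be an acyclic replicating pseudo-$1$-run of width $<k^\ast$, which the invariant excludes.
\end{itemize}

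State the stopping argument directly as well. Suppose both checks fail at stage $k^\ast$. Any witness $j$ that the ``only if'' direction of Lemma~\ref{lem:dec:kwriting} provides for a $(k^\ast+1)$-writing run satisfies $j\leq k^\ast$. Its pseudo-run has width $\leq k^\ast$, so it cannot be replicating, and it cannot be $(k^\ast+1)$-writing either, since both checks failed. Hence no $(k^\ast+1)$-writing run exists, and downward closure excludes every larger $k$.
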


\subsection{Searching for cyclic witnessing runs}
The objective of this section is to prove that determining the existence of a cyclic witnessing run is decidable in {\sc Pspace}.

\begin{theorem}\label{thm:dec:cyclic}
	The existence of a cyclic witnessing run can be solved in {\sc Pspace}.
\end{theorem}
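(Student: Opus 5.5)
The plan mirrors the acyclic case: first a characterization lemma, then bounds, then a symbolic search. Following the proof strategy above, I would first prove that a cyclic witnessing run exists if and only if there is some $k$ with two ingredients. The first is a cyclic witnessing pseudo-$1$-run $\irun$ that writes each value $\val \in \nonce(\irun) \cap \asupp(\irun)$ at least $k$ times. The second is a replicating run in which some thread, after reading a non-local value at most $k$ times, writes it back to $\greg$.

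For the forward direction, project a cyclic witnessing run onto the witnessing thread. Its own nonces that return to it via pseudo-assignments must have been written by another thread that first read them. That thread supplies the replicating run, and the number of its reads bounds $k$ from below.

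For the converse, I would clone the replicating run once for each cyclic value of $\irun$, renaming nonces to keep the run fresh. The replicating thread's reads of the non-local value are scheduled right after the corresponding writes of $\irun$; this is why $\irun$ must write the value $k$ times. The replicating thread's write then realizes the pseudo-assignment. Non-nonce values in the support of $\irun$ are handled exactly as in Lemma~\ref{lem:dec:1greg:forward}. This needs some $k'$-writing run for the width of the acyclic part of the support, which exists for every $k'$ by Lemma~\ref{lem:dec:RepToWrite}, because a replicating run exists.

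Next, I would bound $k \leq \MAX$. I would reuse the provenance-aware state-abstraction argument of Lemma~\ref{lem:dec:1pseudobound} on the replicating thread's pseudo-run. Between its first read of the value and its write back, repeated abstract states can be cut. Each cut removes reads but preserves the replication, so at most $\MAX$ reads are ever needed.

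The main obstacle is the bookkeeping in the converse. The copies of the replicating run must not interfere with each other or with $\irun$ through the single register. Each copy's segments must begin with a write, as in the decomposition used in Lemma~\ref{lem:dec:kbounded}, and the reads of the replicating thread have to be interleaved with $\irun$'s writes without other writes intervening. Freshness under renaming must also hold. I expect to need the replicating run to be assumed minimal, so that the replicated value is foreign to the remaining threads of the copy.

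Finally comes the {\sc Pspace} search. For each $k \leq \MAX$ (a counter of polynomially many bits), I would check the existence of a replicating run with at most $k$ reads before the write back. A replicating run, minimally, consists of a non-replicating supporting run plus a replicating pseudo-$1$-run. Its existence therefore reduces, via Lemma~\ref{lem:dec:kwriting} and Lemma~\ref{lem:sglobal:pspace:kwrite}, to a symbolic search that additionally counts, up to $k$, the reads of one marked non-local value.

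In parallel, I would search for a cyclic witnessing pseudo-$1$-run by the symbolic method of Lemma~\ref{lem:dec:tracking}. The abstract state is extended with a flag per register class marking it as a self-generated nonce that must be replicated, together with a counter up to $k$ of its writes. A pseudo-assignment of such a nonce is allowed only once its counter has reached $k$. All of this data is polynomial in $\norm{\pds}$ and $\log \MAX$, so Savitch-style nondeterministic search yields {\sc Pspace}.
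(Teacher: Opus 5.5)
\textbf{Gap: computing the minimal demand bound.} Your characterization, the $\MAX$ bound and the counter-based search for a witnessing pseudo-1-run line up with Lemmas~\ref{lem:dec:cyclicchar}, \ref{lem:dec:oneglobal:maxcyclic} and~\ref{lem:dec:cyclic:tracking}. The gap is in deciding, for a given $k$, whether some replicating run needs at most $k$ demands. You reduce this to a non-replicating supporting run plus an acyclic replicating pseudo-1-run. That describes a \emph{shortest} replicating run, but the replicating run with the fewest demands need not have this form. A thread may be able to replicate a foreign value cheaply only after its \emph{own} nonce has been replicated by another, expensive, replicator, so its projection is cyclic.

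Consider three pieces of code. Helper code A reads a foreign value and writes it back only after seeing it reappear $100$ times, each time after the register held something else. Helper code B generates $n$, writes it as often as it likes, checks that $n$ comes back after being overwritten, and then copies a foreign value after a single demand. The witness generates $m$, writes it once, and checks that $m$ comes back. A cyclic witnessing run exists: A replicates $n$, which B writes $100$ times before consuming it, and then B replicates $m$ with one demand. Yet every replicating run of your restricted form needs $100$ demands, while the witness writes $m$ only once. So your procedure rejects for every $k$.

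\textbf{How the paper handles it.} This is exactly the second case of Lemma~\ref{lem:sglobal:replicatingrun} together with the downward iteration on $k$. A run replicating with $k$ demands may come from a cyclic pseudo-1-run that writes its consumed nonces $j$ times before consumption and replicates with $k$ demands. That pseudo-1-run is supported by an already established replicating run with $j>k$ demands. Starting from $\MAX$, each newly found bound is fed back into the search for cheaper cyclic replicators.

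\textbf{Two related points.}
\begin{itemize}
\item You suggest the converse of your characterization may need the replicating run to be minimal. That would break the argument, since the cheap replicator B above is necessarily non-minimal. The combination must work for replicating runs whose replicating thread is itself cyclic.
\item Count demands, i.e.\ pseudo-assignments of the value in the replicator's projection, not reads. A replicator may read a resident value arbitrarily often, so in the forward direction the number of reads is not bounded by the witness's writes. What is bounded is the number of demands at the \emph{first} replication of the cyclic value with the fewest writes before consumption. That replication precedes the witness's first consumption, and each such demand corresponds to a distinct write by the witness.
\end{itemize}
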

\noindent
\textbf{Overview of the proof:\ } 
Our strategy for detecting a cyclic witness involves identifying a cyclic pseudo-1-run and pairing it with a suitable support in the form of a replicating run. To formalize this, we introduce the following definitions.

We define a replicating pseudo-1-run $\irun$ as \emph{\kreadrep} if it can be decomposed as $\irun = \irun_1 \movesto{\tau} \irun_2$, where $\tau$ is a replicating transition of some value $\val$, and $\irun_1$ contains no more than $k$ pseudo-assignments of $\val$. Intuitively, this means a thread copies a value back to the shared register while having requested that  value at most $k$ times. 

Note that being \kreadrep is distinct from requiring the run to read the value at most $k$ times, as a run may perform multiple reads of the same value from the shared register between pseudo-assignments. For example, the replicating run in Figure~\ref{main:fig:replicating} is
\kreadrepPar{1}: the shared register receives the value $5$ via a single
pseudo-assignment before the replicating write, even though the
replicating thread performs two separate local reads of that same
resident value in the meantime -- repeated reads of an unchanged value
never count against the  bound.
 A  run is considered \emph{\kreadrep} if its projection onto at least one thread is \kreadrep.

Recall that a pseudo-1-run is {cyclic} if $\nonce(\irun) \cap \asupp(\irun) \neq \emptyset$, meaning the pseudo-run receives its own generated nonce via a pseudo-assignment. We say a cyclic pseudo-1-run $\irun$ is \emph{\kwritepr} if, for every value $\val \in \nonce(\irun) \cap \asupp(\irun)$, the value $\val$ is written to the shared register at least $k$ times before its first pseudo-assignment. 

Formally, let $\sset$ be the set of values in $\supp(\irun)$. For each $\val \in \sset \cap \nonce(\irun)$, $\irun$ admits a decomposition $\irun = \irun^1_\val \irun^2_\val$ such that $\irun^1_\val$ contains at least $k$ writes of $\val$, and $\val$ does not appear in  the support of $\irun^1_\val$. A run (or pseudo-run) is \kwritepr if its projection onto every thread is \kwritepr. For instance, the pseudo-1-run in Figure \ref{main:fig:psuedorun} is cyclic and \kwriteprPar{2}.

We now outline the overall proof strategy. Consider any cyclic witnessing run $\pi$. By projecting $\pi$ onto the witnessing thread, we obtain a cyclic pseudo-1-run $\irun$. Let $k$ be the maximum integer such that $\irun$ is \kwritepr. It follows that $\pi$ is a run that is \kreadrep; since $\irun$ is cyclic and $k$ is maximum, there exists a nonce $\val$ written exactly $k$ times before its pseudo-assignment. %

We demonstrate that the converse also holds: a witnessing pseudo-1-run that is \kwritepr can be combined with a \kreadrep run to produce a cyclic witnessing run. We further show that if a replicating run exists, there must be one that is \kreadrep for some $k \leq \MAX$. Consequently, to find a cyclic witnessing run, we iterate through each $k \leq \MAX$ and check for the simultaneous existence of a \kreadrep run and a \kwritepr pseudo-1-run. We prove that both of these conditions can be verified using space at most polynomial in the size of the system. We will formalize the details below.

\noindent
\textbf{Characterization and Complexity of Cyclic Witnessing Runs: \ }
We characterize the existence of cyclic witnessing runs through the interaction of \kwritepr pseudo-runs and \kreadrep replicating runs.

\begin{lemma}\label{lem:dec:cyclicchar}
	A cyclic witnessing run exists if and only if there is a witnessing cyclic pseudo-1-run that is \kwritepr and a replicating run that is \kreadrep for some $k \in \mathbb{N}$.
\end{lemma}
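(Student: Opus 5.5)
The plan has two directions, following the pattern of Lemma~\ref{lem:dec:1greg:forward} but with replication taking the place of plain writing.

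\emph{Forward direction.} Let $\run$ be a cyclic witnessing run, and let $\irun$ be the projection of $\run$ onto the witnessing thread. Then $\irun$ is a witnessing cyclic pseudo-1-run. Let $k$ be the largest integer such that $\irun$ is \kwritepr. Since $\irun$ is cyclic, $\nonce(\irun)\cap\asupp(\irun)$ is nonempty and finite, so $k$ is well defined. By maximality there is some $\val$ in this intersection that the witnessing thread writes exactly $k$ times before its first pseudo-assignment in $\irun$.

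That first pseudo-assignment of $\val$ is realised in $\run$ by a write of $\val$ to $\greg$ by some other thread $j$. Because $\val$ is a nonce of the witnessing thread, the value is non-local for $j$, so this is a replicating transition in the projection $\irun_j$ of $\run$ onto $j$. Every pseudo-assignment of $\val$ in $\irun_j$ before that write corresponds to a distinct write of $\val$ in $\run$. None of these can be the write we just used, and freshness means only the originating thread writes $\val$ before anyone replicates it. Choosing the first replicating write of $\val$, the number of such pseudo-assignments is at most $k$. Hence $\run$ is itself a replicating run that is \kreadrep.

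\emph{Converse direction.} Suppose we are given a \kwritepr witnessing cyclic pseudo-1-run $\irun$ and a \kreadrep run $\run_R$. Let $\tau$ be the replicating transition of thread $j$ in $\run_R$, and let $a$ be the value it replicates. We build a global run as follows.
\begin{itemize}
\item For each $\val\in\nonce(\irun)\cap\asupp(\irun)$, take a fresh copy of the threads of $\run_R$, relabelled so that $a$ becomes $\val$ and all other nonces are renamed apart. The $\le k$ pseudo-assignments of $a$ that $j$ needs before $\tau$ are satisfied by the $\ge k$ writes of $\val$ performed by the witnessing thread before $\val$ is first pseudo-assigned. These are scheduled immediately after those writes.
\item Each later pseudo-assignment of $\val$ in $\irun$ is then realised by the copy's write of $\val$. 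Further such pseudo-assignments need more copies, which we get by running the copy's prefix several times in parallel. Alternatively, we can appeal to the cloning argument of Lemma~\ref{lem:dec:RepToWrite}: once replication is possible, $\val$ can be written arbitrarily often.
\item The pseudo-assignments of values in $\asupp(\irun)\setminus\nonce(\irun)$ are handled as in the acyclic case. By Lemma~\ref{lem:dec:RepToWrite}, the replicating run gives a $k'$-writing run for every $k'$, so Lemma~\ref{lem:dec:kbounded} supplies a run writing exactly these values, with nonces renamed apart. These segments are interleaved using the decomposition into segments that begin by overwriting $\greg$, as in the proof of Lemma~\ref{lem:dec:1greg:forward}.
\end{itemize}

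\emph{Expected obstacle.} The hardest step is the scheduling in the converse. The copied thread $j$ must see $\val$ at exactly its pseudo-assignment points. The other threads of the copy, which supply $j$'s remaining pseudo-assignments of values other than $a$, must not clobber the writes of the witnessing thread, and the witnessing thread must not observe their writes either. I would handle this by cutting the copy of $\run_R$ at each of $j$'s pseudo-assignments into atomic blocks that each start with an overwrite of $\greg$. Each block whose pseudo-assigned value is $a$ is replaced by "witnessing thread writes $\val$, then $j$ reads". A further check is needed that, before $\tau$, $j$ requires only the pseudo-assignments it is granted, so that the remaining support values come from the copy's other threads. Freshness of the combined run then follows by renaming all nonces of the copies apart from those of $\irun$ and from each other.
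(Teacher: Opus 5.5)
Your proposal is correct and follows essentially the paper's route. The forward direction projects onto the witnessing thread and takes $k$ maximal; your count via the first replicating write of $\val$ and freshness makes explicit a step the paper only asserts. The converse uses the pseudo-run's $\geq k$ writes of each self-consumed nonce to meet the demands of the replicating thread in renamed copies of the given replicating run, clones that thread for repeated reads, and covers the rest of the support with Lemmas~\ref{lem:dec:RepToWrite} and~\ref{lem:dec:kbounded}; this is the paper's Lemma~\ref{apx:lem:dec:replrun}, applied to all such nonces at once instead of by induction.

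One wording fix: the relabelling $a\mapsto\val$ must affect only thread $j$'s view. Relabelling the whole copy would generate $\val$ twice and break freshness. The copy's other threads keep a renamed-apart $a$ and perform their non-writing reads of it before the witnessing thread writes $\val$, as in the paper's split $\hat\pi^i_e=\hat\pi^i_1\hat\pi^i_2$.
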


By Lemma \ref{lem:dec:oneglobal:maxcyclic}, we show that we can bound our search for $k$ within the range $[0, \MAX]$.

\begin{lemma}\label{lem:dec:oneglobal:maxcyclic}
	If a replicating run exists, then there is one that is \kreadrep for some $k \leq \MAX$.
\end{lemma}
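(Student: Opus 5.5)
We take a replicating run $\run$ and extract from it a short replicating pseudo-$1$-run. By the definition, this pseudo-run requests the replicated value few times before writing it back. Fix a replicating run $\run$ and a thread $i$ that writes a non-local value $\val$ to $\greg$. Take the first such replicating write $\tau$ of thread $i$ and truncate $\run$ right after $\tau$. Project the truncated run onto thread $i$ to obtain a pseudo-$1$-run $\irun = \irun_1 \movesto{\tau} \irun_2$, with $\irun_2$ empty. The run $\run$ supports $\irun$, so $\run$ is \kreadrep exactly when $\irun_1$ contains at most $k$ pseudo-assignments of $\val$. The task therefore reduces to replacing $\irun$ by a pseudo-$1$-run with few pseudo-assignments of $\val$, while keeping it supported by some run.

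First we shorten $\irun_1$ with the same abstraction as in Lemma~\ref{lem:dec:1pseudobound}. The abstract state records the label, the equality partition of the registers, and the provenance of each class (nonce, or pseudo-assigned value). It additionally marks which class, if any, currently holds the distinguished value $\val$. Suppose two configurations on $\irun_1$ have the same abstract state. We cut out the segment between them and rename the data in the suffix by a bijection that fixes $\val$. The result is still a pseudo-$1$-run that ends with a write of a non-local value. Iterating, we get a replicating pseudo-$1$-run of length at most $\MAX$. In particular it has at most $\MAX$ pseudo-assignments of the replicated value, and its width is at most $\MAX$.

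Next we must produce a run that supports this shortened pseudo-run. Removing segments only deletes pseudo-assignments, and renaming is harmless because the supporting values are nonces. Hence the support of the new pseudo-run has width at most $\wdwidth(\supp(\irun))$ after a suitable relabelling. The original run $\run$, minus the replicating thread, is a $\wdwidth(\supp(\irun))$-writing run. By Lemma~\ref{lem:dec:kbounded} it can be turned into a run writing any word of that width, in particular the new support. For this we need the support to avoid $\nonce$ of the pseudo-run, which is where the provenance bookkeeping is used. Interleaving as in the proof of Lemma~\ref{lem:dec:1greg:forward} gives a run containing a thread whose projection is the short pseudo-run. That run is replicating and \kreadrep with $k \le \MAX$.

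The main obstacle is the cyclic case, where $\irun_1$ is not acyclic. The pseudo-run may read back its own nonces, and then Lemma~\ref{lem:dec:kbounded} does not directly supply a support. If $\irun_1$ is cyclic, we would first pick a replicating run minimising the run length. We then argue that the first replicating write in the whole run occurs in a thread whose prefix is acyclic. Before any replication happens, no thread can receive its own nonce through $\greg$ from another thread, so the chosen prefix is acyclic and the argument above applies. Making this precise is where we expect the delicate work to lie. In particular, own-nonce reads that are not pseudo-assignments do not count as cyclic, and the definitions must be checked to confirm this.
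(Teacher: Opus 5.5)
Your overall plan matches the paper's proof. You pass to a replicating run with no replicating prefix, so that the projection onto the replicating thread is an acyclic replicating pseudo-$1$-run. You shorten it to length at most $\MAX$ by the relabelling argument of Lemma~\ref{lem:dec:1pseudobound}, then supply a supporting run and recombine as in Lemma~\ref{lem:dec:1greg:forward}. Your last paragraph is the paper's ``without loss of generality, no prefix is replicating'', and it is not delicate. In such a run the only replicating write is the final move. So every pseudo-assigned value in any projection was written by the thread that generated it, and freshness makes it differ from the receiving thread's nonces.

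\textbf{The gap is in the support step.} You claim that cutting and renaming does not increase the width of the support, and this is false. The renaming is forced on the values held at $\iconf_j$: each must be sent to the value held by the same register at $\iconf_i$.

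Suppose $\lreg$ holds a pseudo-assigned value $a$ at $\iconf_i$ and a different pseudo-assigned value $b$ at $\iconf_j$. Then every pseudo-assignment of $b$ after $\iconf_j$ that is compared with $\lreg$ must become a pseudo-assignment of $a$. The new count of $a$ is its count before $\iconf_i$ plus the count of $b$ after $\iconf_j$, which can strictly exceed the original width. So knowing that $\run$ is $\wdwidth(\supp(\irun))$-writing does not yield a support for the shortened pseudo-run.

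Separately, ``$\run$ minus the replicating thread'' need not be a run at all, since other threads may have read nonces that this thread wrote. You do not need it, though: Lemma~\ref{lem:dec:1greg:forward} runs the pseudo-run on an additional thread alongside the whole supporting run.

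\textbf{The missing ingredient is Lemma~\ref{lem:dec:RepToWrite}.} Because $\run$ is replicating, there is a $k$-writing run for every $k$. By Lemma~\ref{lem:dec:kbounded}, the support of the shortened pseudo-run can then be written whatever its width (which you already bound by $\MAX$). This is exactly how the paper closes the argument, and it makes width bookkeeping across cuts unnecessary.

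Tracking the class of the replicated value $\val$ is harmless but also not needed. Its pseudo-assignments are bounded by the length, and the provenance-preserving relabelling already keeps the final write a write of a pseudo-assigned, hence non-local, value.
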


The property \kwritepr is downward closed, while \kreadrep is upward closed. Consequently, we can determine the existence of a cyclic witness by iterating through $k \leq \MAX$ and checking if both conditions hold for some $k$. Towards this, in the following lemma we show that checking for the existence of a \kwritepr witnessing pseudo-1-run can be done in {\sc PSpace}.

\begin{lemma}\label{lem:dec:cyclic:tracking}
	Given $k$, the existence of a \kwritepr witnessing pseudo-1-run can be checked in space polynomial in $\norm{\pds}$ and $\log(k)$.
\end{lemma}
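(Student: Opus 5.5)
The approach is a symbolic, nondeterministic on-the-fly search, in the spirit of Lemma~\ref{lem:dec:tracking}, followed by Savitch's theorem to obtain a deterministic polynomial-space procedure. A pseudo-1-run is explored step by step, guessing each transition. Data values are never stored explicitly; instead we store an abstract state, and only polynomially many values are alive at any moment, namely those held in the $|\localregisters|+1$ registers. Because the run is fresh, a value that disappears from every register can never reappear, except through a pseudo-assignment. This forces us to also remember which nonces may later be pseudo-assigned.

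\textbf{Abstract state.} The abstract state consists of the following components:
\begin{itemize}
\item the current label;
\item the partition of $\localregisters\cup\{\greg\}$ induced by equality of values;
\item for each class, a flag recording whether it holds a nonce of this thread or a non-local (pseudo-assigned) value.
\end{itemize}
For each nonce class we additionally keep a write counter in $[0,k]$, counting writes of that value to $\greg$, saturated at $k$, together with a flag \emph{will-be-pseudo-assigned}.

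The key idea for handling cyclicity is to guess, at the moment a nonce is generated, whether it will later be pseudo-assigned. If so, we mark it as \emph{pending}. A pending nonce must stay trackable until its first pseudo-assignment, but it may leave all registers in the meantime. To cover this, we keep a bounded pool of \emph{dormant} pending nonces, each stored with its counter. The support is a sequence of values; a pseudo-assignment either picks a dormant or live pending nonce, or a fresh non-local value, or a live non-local value.

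When a pending nonce is first pseudo-assigned, we check that its counter has reached $k$; after that its counter is irrelevant and it becomes an ordinary value. Writes of a nonce made after its first pseudo-assignment do not count. A pending nonce must be discharged before the run ends, so acceptance requires reaching $\lab$ with no pending nonce still awaiting its pseudo-assignment. At least one discharge must have occurred, which enforces cyclicity. The \kwritepr property needs this check only for values in $\nonce(\irun)\cap\asupp(\irun)$, and those are exactly the nonces we guessed to be pending.

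\textbf{Main obstacle.} The hard step is bounding the number of dormant pending nonces. These are nonces that are no longer in any register and await their first pseudo-assignment.

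My first attempt is an exchange argument. Suppose two dormant pending nonces $a$ and $b$ both have counter equal to $k$. Then, while dormant, they are interchangeable, since nothing distinguishes them except the order of their later pseudo-assignments. Renaming the later pseudo-assignments of $b$ into $a$ keeps the run valid, because each pseudo-assignment is arbitrary and $a$ has already met its write quota. It also keeps the run cyclic and \kwritepr. So we may assume at most one dormant pending nonce has a saturated counter.

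For unsaturated counters this argument does not work directly, since the counters are still growing. But a dormant nonce can no longer be written, so its counter is frozen, and if it is below $k$ the nonce can never be discharged. Hence such nonces simply yield a rejecting branch and need not be stored. After this reduction, only one saturated dormant representative plus the live classes need to be kept.

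Every component of the abstract state then takes $\mathrm{poly}(\norm{\pds})+O(\log k)$ bits. Each transition and each pseudo-assignment can be checked locally on the abstract state. Reachability in this implicitly given graph is therefore in NPSPACE $=$ PSPACE, with space polynomial in $\norm{\pds}$ and $\log k$.
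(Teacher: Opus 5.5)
Your overall plan matches what the paper intends: a nondeterministic symbolic search over the label, the equality partition of the registers, nonce/non-local flags and write counters saturated at $k$, where a nonce may be pseudo-assigned only after its counter reaches $k$, followed by Savitch. The paper only says the construction of Lemma~\ref{lem:33} adapts. Your observation that a pending nonce which goes dormant with counter below $k$ is a dead end is also correct.

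\textbf{The gap.} The step that fails is the exchange argument bounding the pool of dormant pending nonces. Renaming the later pseudo-assignments of $b$ into $a$ is not a bijective relabelling. It merges two distinct values, and later equality tests can detect this. For instance, after both are dormant the run may:
\begin{itemize}
\item pseudo-assign $a$ and execute $x_1 := r$;
\item then pseudo-assign $b$ and execute $\mathsf{assert}(x_1 \neq r)$.
\end{itemize}
The assertion passes in the original run and blocks after your renaming. A cyclic pseudo-run reads its own nonces back, so such comparisons are the normal case, not a corner case. Interchangeability does let you swap $a$ and $b$, but a swap does not shrink the pool.

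\textbf{The fix.} The bound you want is still true with a different renaming. Let $b$ be a pending nonce that is dormant at some point and first pseudo-assigned later. Replace every occurrence of $b$ from that point on by a value $c$ that occurs nowhere in the run.
\begin{itemize}
\item Since $b$ is in no register at that moment, every later occurrence of $b$ comes from one of its pseudo-assignments. So the map is injective on every later configuration, and all tests, labels and freshness are preserved.
\item $b$ leaves $\nonce(\rho)\cap\asupp(\rho)$, so the writes-$k$-times-before-consumption condition only becomes weaker.
\item Cyclicity is kept by $a$.
\end{itemize}
Hence you may assume that at most one nonce is ever first pseudo-assigned while dormant. Every other pseudo-assignment of a dormant value can be treated as a fresh non-local one, which your transition rules already allow. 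You do not even need to store that nonce's identity: its reintroduction looks to the thread exactly like a fresh value. One bit recording that a saturated pending nonce has gone dormant suffices. With this replacement the rest of your argument goes through.
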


To check for a $k \leq \MAX$ such that there is a  \kreadrep replicating run, we use the following inductive criteria.  Moreover in the appendix, we also show that these criteria can be checked in {\sc PSpace}.

\begin{lemma}\label{lem:sglobal:replicatingrun}
	There is a   run that is \kreadrep iff one of the following holds:
	\begin{enumerate}
		\item There is an acyclic \kreadrep pseudo-1-run and a supporting run.
		\item There is a \kreadrepPar{j} run for $j > k$ and a cyclic pseudo-1-run $\rho$ that is \kwriteprPar{j} and \kreadrep.
	\end{enumerate}
\end{lemma}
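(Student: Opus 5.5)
We prove the two directions separately. The forward direction takes a \kreadrep run and projects it onto a suitable thread. The backward direction glues a pseudo-1-run to a supporting run, reusing the scheduling idea of Lemma~\ref{lem:dec:1greg:forward}.

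\emph{Only if.} Let $\pi$ be a \kreadrep run. Among all threads whose projection is \kreadrep, choose the thread $i$ whose replicating transition $\tau$ (copying a non-local value $\val$) occurs earliest in $\pi$, and let $\irun$ be its projection truncated just after $\tau$. Then $\irun$ is a \kreadrep pseudo-1-run. There are two cases.
\begin{enumerate}
\item $\irun$ is acyclic. Then $\pi$ itself writes $\supp(\irun)$, because every pseudo-assignment in a projection corresponds to a write by another thread. So $\pi$ is a supporting run, and we are in case~1.
\item $\irun$ is cyclic. Then some nonce $\val'$ generated by $i$ is later pseudo-assigned to $i$, so another thread $i'$ replicated $\val'$ before that pseudo-assignment. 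Let $j$ be the largest integer such that $\irun$ is \kwriteprPar{j}; thus for some such nonce, $i$ wrote it exactly $j$ times before its first pseudo-assignment. The thread $i'$ that replicated this nonce therefore had at most $j$ pseudo-assignments of it beforehand, so $\pi$ is \kreadrepPar{j}.
\end{enumerate}
To obtain $j>k$ in the second case we use the minimality of $\tau$. Since $i'$ replicated before $i$ did, the choice of $i$ implies that the projection onto $i'$ is not \kreadrep, so $i'$ read the value more than $k$ times. Hence $j>k$. Making this counting precise, in particular matching the pseudo-assignments of $i'$ with the writes of $i$, is the subtle part of this direction. We expect to need to restrict attention to the first replication event over all threads.

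\emph{If.} Case~1 is direct. Given an acyclic \kreadrep pseudo-1-run $\irun$ and a run $\pi$ writing $\supp(\irun)$, we interleave $\pi_0\irun_0\pi_1\irun_1\cdots$ exactly as in the proof of Lemma~\ref{lem:dec:1greg:forward}. Renaming nonces keeps the combined run fresh, and the thread executing $\irun$ is \kreadrep in the combined run.

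Case~2 is the main obstacle. We are given a \kreadrepPar{j} run $\pi$ and a cyclic pseudo-1-run $\irun$ that is both \kwriteprPar{j} and \kreadrep. We must build a run containing a thread that follows $\irun$, together with enough support to supply
\begin{itemize}
\item the non-nonce values in its support, and
\item its own nonces fed back to it.
\end{itemize}
The plan has four steps.
\begin{enumerate}
\item For each cyclic nonce $\val$ of $\irun$, take a fresh copy of $\pi$ and relabel its replicated value to $\val$.
\item Whenever that copy requests $\val$ (at most $j$ times before its replicating write), schedule one of the at least $j$ writes of $\val$ made by $\irun$ before $\val$'s first pseudo-assignment. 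Because each copy overwrites $\greg$ first, as in Lemma~\ref{lem:dec:1greg:forward}, this scheduling is sound.
\item After the copy's replicating write, schedule the pseudo-assignment of $\val$ to $\irun$. Further pseudo-assignments of $\val$ are handled by cloning additional copies in the same way.
\item Provide the non-nonce support values, which are non-local to $\irun$, by further copies of $\pi$. Its replicating threads can be relabelled and driven to write any needed foreign value; this is in the spirit of Lemma~\ref{lem:dec:RepToWrite}. Alternatively, they can be taken from the non-cyclic part of $\pi$'s own support.
\end{enumerate}

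The delicate point in this construction is to keep the interleaving fresh and consistent. The other requests made by the copies of $\pi$ must themselves be satisfied, and we would do this recursively using $\pi$'s own internal structure, with all nonces renamed apart. The resulting run contains the thread executing $\irun$, which is \kreadrep, so the whole run is \kreadrep.
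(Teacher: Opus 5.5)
Your architecture matches the paper's. You project a minimal $k$-demand run onto its replicating thread and split on acyclic versus cyclic. For the converse you glue via Lemma~\ref{lem:dec:1greg:forward} in case~1, and in case~2 you use copies of the \kreadrepPar{j} run fed by the pseudo-run's own writes (Lemma~\ref{apx:lem:dec:replrun}), with non-nonce support coming from a replicating run.

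Your only-if direction is sound. It also gets the inequality right: Lemma~\ref{apx:lem:dec:replrun:char} in the appendix writes $j<k$, but minimality gives $j>k$. Two precisions are needed.
\begin{itemize}
\item Take $i'$ to be the \emph{first} thread other than $i$ to write the chosen nonce $\val$. Before that write only $i$ writes $\val$, and distinct pseudo-assignments of $\val$ to $i'$ are fed by distinct writes. So $i'$ has at most $j$ demands.
\item Your earliest choice of $\tau$ gives that \emph{this particular} replication has more than $k$ demands. It does not give that $i'$'s whole projection fails to be \kreadrep.
\end{itemize}

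The gap is step~1 of case~2. Relabelling the replicated value $e$ of a copy of $\pi$ to $\val$ cannot be done on the whole copy, for two reasons.
\begin{itemize}
\item The thread of the copy that generated $e$ would then generate $\val$. That is already a nonce of $\irun$, so freshness fails.
\item Every thread of the copy that reads $e$ would become a consumer of $\val$. The at least $j$ writes of $\val$ by $\irun$ cannot cover this, since only the replicating thread $t$'s demands are bounded by $j$.
\end{itemize}
This is exactly where your unresolved ``other requests \ldots recursively'' comes from.

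The missing idea, used in the paper, is to leave the copy intact (nonces renamed apart, $e$ still generated and written by its original writers) and change only $t$'s view. Fix a write of $e$ that feeds one of $t$'s at most $j$ demands. Reorder the non-writing moves that follow it as $\hat\pi^i_1\,\irun^i_\val\,\hat\pi^i_2$: first the other threads read $e$, then comes the segment of $\irun$ ending in its $i$-th write of $\val$, then $t$ does its reads. Since $e$ is non-local to $t$ and (w.l.o.g.) $t$ does not write it before its replicating transition, all of $t$'s tests survive the substitution $e\mapsto\val$ in its view. Its replicating write then copies $\val$, and no other thread's view changes.

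Each copy is therefore self-supporting apart from $t$'s at most $j$ demands, so no recursion is needed. Several copies can read the same write of $\val$ back to back, so the same $j$ writes serve all copies needed for repeated consumptions of $\val$.

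Finally, drop the alternative in step~4. The non-nonce support must come from Lemma~\ref{lem:dec:RepToWrite} followed by Lemma~\ref{lem:dec:kbounded}. The non-cyclic part of $\pi$'s support bears no relation to what $\irun$ needs.
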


\noindent
\textbf{Algorithm:\ } Deciding this problem for $k = \NMAX$ is, by the above, equivalent to deciding the existence of
any  replicating run.  Starting at $k = \MAX-1$, we iteratively decrement $k$ and test conditions (1) and (2). Condition (1) reduces to finite-state reachability (since the existence of a replicating run implies the existence of supporting runs). Condition (2) involves searching for specific pseudo-runs within the bounded $j > k$ range. Both are solvable in {\sc Pspace}, the proof can be found in appendix \ref{apx:sec:oneglobal:cyclic}. This completes the proof of  Theorem \ref{thm:dec:cyclic}.

	\section{Single Local Register}\label{sec:Hardness}

\noindent
\textbf{Decidability:\ } Towards proving decidability, we reduce the \PCRFR\ problem to coverability in a $\nu$-net, a problem known to be non-primitive recursive \cite{ROSAVELARDO20114439}. 
Informally, a $\nu$-net is a Petri net where each token is tagged with a value drawn from an infinite domain $\domain$. It operates over two disjoint sets of variables: $\var$ for manipulating existing tags, and $\newvar$ for introducing strictly new tags.
 A $\nu$-net is formally defined as a tuple $N = (\places,\transition,\infun,\outfun)$, where $\places$ is a finite set of places, $\transition$ is a finite set of transitions, and the flow functions are $\infun: \places \times \transition \to \var^\oplus$ and $\outfun: \places \times \transition \to (\var \cup \newvar)^\oplus$. A configuration (or marking) is a mapping $\marking: \places \to \domain^\oplus$. For any two markings $\marking_1, \marking_2$, we write $\marking_1 \geq \marking_2$ if $\marking_1(p) \geq \marking_2(p)$ (multi set ordering) for all $p \in \places$. 

Let $\sigma: \var \cup \newvar \to \domain$ be an injective function. We say $\sigma$ is consistent with a subset $\xdomain \subseteq \domain$ if $\sigma(x) \in \xdomain$ for all $x \in \var$, and $\sigma(y) \notin \xdomain$ for all $y \in \newvar$. This mapping naturally extends to multisets. For any $t \in \transition$, a transition $\marking_1 \movesto{t} \marking_2$ (or simply $\marking_1 \movesto{} \marking_2$) occurs if: 1) there exists an injective map $\sigma: \var \cup \newvar \to \domain$ consistent with the data values present in $\marking_1$ (i.e., $\sigma$ is $\xdomain$-consistent where $\xdomain = \bigcup_{p \in \places} \Sigma(\marking_1(p))$); 2) for each $p \in \places$, $\sigma(\infun(p,t)) \leq \marking_1(p)$; and 3) $\marking_2(p) = \marking_1(p) - \sigma(\infun(p,t)) + \sigma(\outfun(p,t))$. A marking $\marking'$ is coverable from $\marking$ if there exists a sequence $\marking \movesto{}^* \marking''$ such that $\marking'' \geq \marking'$.

Towards the reduction, let the given \PDS\ be $\pds$ with global registers $\globalregs = \{ \greg_1,\dots,\greg_k\}$, local register $\localregisters = \{\lreg\}$, and target label $\ell \in \Labels$. We construct a $\nu$-net $N = (\places,\transition,\infun,\outfun)$ alongside initial and target markings $\marking_{\tt in}, \marking_{\ell}$ such that $\marking_{\ell}$ is coverable from $\marking_{\tt in}$ if and only if $\ell$ satisfies \PCRFR\ in $\pds$. We outline the core reduction here, deferring the comprehensive details and also an alternative direct proof using the \WSTS\ framework to the appendix \ref{sec:onelocal}. 

In the constructed $\nu$-net, $\places$ includes one place for each global register, one for each constant, and one for each label in $\Labels$. In the initial marking $\marking_{\tt in}$, exactly one token resides in the place for the constant $0$, and one token resides in each global register place; all these tokens share an identical tag. Exactly one token is placed in each of the other constant places, carrying distinct tags. Intuitively, this models the initialization of all global registers to $0$, while constants are treated as fixed global registers with unique values. Our construction strictly maintains a single token in the places for global registers and constants, and ensures the tags within the constant places are never altered. 

To allow for unbounded threads, we introduce a transition $t$ that introduces threads starting at the initial label with the local register initialized to $0$. We define $\infun(0,t) = x$ and $\outfun(0,t) = \outfun(\initlab,t) = x$ (omitting zero-assignments). This ensures a new token is introduced in $\initlab$ tagged with the data value of $0$. Each token in the places of $\Labels$ represents a thread and its tag represents the value in its local register. Due to space limitations, we illustrate only the simulation of specific instructions:

For an instruction $\ell: \greg_1 = \greg_2$, we move exactly one token from $\ell$ to $\nextof{\ell}$ and replace the tag in $\greg_1$ with the tag from $\greg_2$. We define a transition $t_1$ with $\infun(\ell,t_1) = x$, $\infun(\greg_1,t_1) = y$, and $\infun(\greg_2,t_1) = z$. The output is $\outfun(\nextof{\ell},t_1) = x$ and $\outfun(\greg_1,t_1) = \outfun(\greg_2,t_1) = z$. This correctly advances the thread and copies the value.

For a non-deterministic assignment $\ell: \lreg = *$, we define a transition $t_2$ with $\infun(\ell,t_2) = x$ and $\outfun(\nextof{\ell},t_2) = y$, for some $y \in \newvar$. This guarantees that the new token arriving at $\nextof{\ell}$ carries a strictly new tag. 

	The complete construction presented in the appendix~\ref{apx:sec:sl:nunet} demonstrates the equivalence between the coverability in the $\nu$-net and the \PCRFR\ problem. With this, we obtain:

\begin{theorem}
	 \PCRFR\   with one  local register is decidable in   $\mathbf{F}_{\omega\cdot 2}$, by \cite{LazicSchmitz16}.	
\end{theorem}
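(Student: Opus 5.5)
The plan is to complete the reduction sketched above into a full construction of a $\nu$-net $N$ with markings $\marking_{\tt in}$ and $\marking_{\ell}$. We then prove that $\marking_{\ell}$ is coverable from $\marking_{\tt in}$ if and only if $\ell$ is a positive instance of \PCRFR. Decidability and the $\mathbf{F}_{\omega\cdot 2}$ bound then follow from the result of \cite{LazicSchmitz16} on coverability in $\nu$-nets, since the construction is computable in polynomial time and elementary reductions are absorbed by the class $\mathbf{F}_{\omega\cdot 2}$. The target marking $\marking_{\ell}$ puts a single token (with any tag) in the place for $\ell$ and nothing elsewhere. For the tag-independence of coverability we use the fact that $\nu$-net semantics is invariant under permutations of $\domain$.

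\textbf{Construction.} We have one place per global register, per constant, and per label, and the thread-spawning transition $t$ described above. For each instruction we add a transition, branching on the instruction's shape.
\begin{itemize}
\item For assignments $\xvar := \yvar$ with $\xvar,\yvar$ among $\lreg$, the global registers and the constants, we read the source tag and rewrite the target tag. The source is the token in the source place (or the thread's own token when the source is $\lreg$), and it is put back unchanged.
\item For the nonce assignment we use a fresh variable from $\newvar$, exactly as for $t_2$.
\item For conditionals $\xvar=\yvar$ we use the same variable for both input arcs. Since $\sigma$ is injective, this forces the tags to be equal.
\item For $\xvar\neq\yvar$ we use two distinct variables. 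Injectivity of $\sigma$ forces the tags to be distinct.
\item A test involving a single register against itself is decided syntactically.
\item Gotos and $\star$-choices only move the thread token between label places.
\end{itemize}
The crucial point is that injectivity of $\sigma$ gives us equality and disequality tests for free.

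\textbf{Soundness (net to program).} We first maintain an invariant. Each global and constant place carries exactly one token, and the constant places keep their initial pairwise distinct tags, which differ from those carried by nonces. We then map a reachable marking $\marking$ to a configuration. We fix an enumeration of the label-place tokens as threads, the tag of a thread token is that thread's value of $\lreg$, and the tags of the global places are the values of the global registers. We then show by induction on the firing sequence that every firing corresponds to a single program move of the corresponding thread. The one exception is the spawn transition: it corresponds to adding a thread at $\initlab$ with $\lreg=0$. To reconcile this with a fixed $n$, we pre-allocate all threads ever spawned, since an idle thread sitting at $\initlab$ interferes with nothing. Freshness holds because $\newvar$-tags avoid every tag currently present. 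A tag that has disappeared from the marking can be renamed to an unused value, by the permutation invariance above, which gives global uniqueness of nonces. Coverage of $\ell$ then yields a thread at $\ell$; we rename that thread to index $1$.

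\textbf{Completeness (program to net).} Given a fresh run, we first fire $t$ exactly $n$ times and then replay the moves one by one. A nonce is fresh in the run, so it is absent from the current configuration, and hence the $\newvar$ condition is met. For deterministic tests the net can take only the matching branch, which mirrors the program semantics.

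The main obstacle is the subtle mismatch between the two freshness notions. The net only requires the new tag to be absent from the \emph{current} marking, whereas the run requires it to differ from all previously generated values. We resolve this by a renaming argument: whenever a net run reuses a tag that has vanished, we relabel all subsequent occurrences to a globally new value. This preserves every equality and disequality among the data present at each step, because the old tag no longer occurs anywhere in the marking.
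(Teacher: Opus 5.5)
\textbf{Gap 1: injectivity of modes blocks most of your transitions.} Your overall route matches the paper's: reduce to $\nu$-net coverability, front-load the spawn transitions, repair reused tags by renaming, and invoke the $\mathbf{F}_{\omega\cdot 2}$ coverability bound. However, the construction as you describe it does not simulate the program, because you use injectivity of modes only for its benefits. Injectivity forces \emph{every} two distinct variables of a transition to denote distinct tags, not only the two being compared. Two of your transition types are affected:
\begin{itemize}
\item Assignment transitions consume the old token of the target place with its own variable; for $g := \xvar$ this is $\ell\colon u,\ g\colon v$.
\item A test between two global or constant places must also consume the thread token in order to move it; for $\mathtt{if}\ g = h$ this is $\ell\colon u,\ g\colon v,\ h\colon v$.
\end{itemize}
Such transitions are disabled whenever the thread's value, the target's old value or the source value coincide. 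This already happens at the start, where every register holds $0$: a thread at $\initlab$ cannot fire $g := \xvar$, $\xvar := g$, $g := h$, or the test $g = h$. So your completeness step ``replay the moves one by one'' fails.

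The missing ingredient, which the paper supplies, is as follows. For every transition, add all variants obtained by identifying some of its $\var$-variables, but never merge the two variables of a disequality branch. There are at most three such variables per transition, so the net stays polynomial. In the completeness direction, fire the variant that merges exactly the variables whose current values coincide; its mode is then injective. Soundness is unaffected, because each variant is still a correct program step.

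\textbf{Gap 2: the target marking is not tag-independent.} Permutation invariance only applies to bijections of $\domain$ that fix the tags of $\marking_{\tt in}$, that is, the constants including $0$. So coverability of a single token with a fixed tag at $\ell$ depends on the tag chosen. For the program $\initlab\colon \mathtt{goto}\ \ell$, the only token that can ever appear at $\ell$ is tagged $0$. For a program that must execute $\xvar := *$ before jumping to $\ell$, that token always carries a nonce. No single choice of target tag handles both programs. There are two fixes:
\begin{itemize}
\item Ask $|\constants|+1$ coverability queries: one per constant and one for a generic non-constant tag, using permutation invariance only among the non-constant tags.
\item As the paper does, add a transition $t_{\tt fin}$ that moves a token from $\ell$ to a new place $\mathit{done}$ and retags it with the tag of the constant-$0$ place. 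This transition needs its merged variant for the case where the thread's tag is $0$. Then cover a single $0$-tagged token in $\mathit{done}$.
\end{itemize}
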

\noindent
\textbf{Hardness:\ }
We next demonstrate that \PCRFR\ reachability in a \PDS\ with a single local register is at least as hard as reachability in lossy counter systems, a problem established as non-primitive recursive ($\mathtt{F}_{\omega}$-complete) \cite{Schnoebelen10,Schmitz19}. A lossy counter system extends a finite-state machine with counters that can be incremented, decremented, or tested for zero; importantly, it is "lossy", permitting counter values to decrease non-deterministically.

Formally, an \emph{$n$-lossy counter system} is defined by the tuple $\lossy(n) = \lossytuples$, where $\States_\lossy$ is a finite set of states, $\initstate^\lossy \in \States_\lossy$ is the initial state, $\finalStates_\lossy \subseteq \States_\lossy$ is the set of final states, and $\Trans_\lossy \subseteq \States_\lossy \times \counterop \times \States_\lossy$ is a finite set of transitions with operations $\counterop = \{ \inc_i, \dec_i, \zero_i \mid i \in [1,n] \}$.

The system's configurations are elements of $\States_\lossy \times \Naturals^n$. A transition between configurations $c = \tuple{q, d_1, \dots, d_n}$ and $c' = \tuple{q', d'_1, \dots, d'_n}$, denoted $c \movesto{} c'$, is valid if and only if either there is a real transition in $\Trans_\lossy$ and the move respects the corresponding counter operation, or there is a lossy transition where the state remains unchanged ($q = q'$) and the counter values non-deterministically decrease ($d'_j \leq d_j$ for all $j \in [1,n]$).

Given a final state $\fin \in \finalStates_\lossy$, the \emph{reachability problem} asks whether there exists a computation path $\tuple{\initstate^\lossy, 0, \dots, 0} \movesto{}^* \tuple{\fin, d_1, \dots, d_n}$ for some $d_1, \dots, d_n \in \Naturals$. We proceed to show the reduction of this problem to the \PCRFR\ problem with a single local register.
The core idea behind simulating a lossy counter machine is as follows: an increment must be simulated strictly without duplication, a decrement may be simulated with duplication, and a zero-test is simulated by logically dropping the counter value to $0$.

Our reduction relies on a "leader" thread---elected using two global variables and one local variable---to simulate the finite-state machine. This leader is assisted by helper threads, each identified by a unique nonce. Each counter is simulated using a dedicated pair of global registers; we describe the simulation for a single counter utilizing registers $r_1$ and $r_2$.

The simulation operates in phases, where a phase begins at a zero-test and concludes at the next zero-test. The current phase is uniquely identified by a value stored in $r_1$. Within a phase, operations are simulated as follows:

\noindent
\textbf{Simulating an increment}:
The leader populates $r_2$ with an $\inc$ instruction and waits. Helpers observing this instruction respond by writing their unique IDs to $r_2$. The leader selects exactly one helper via a protocol similar to the selection protocol described in section \ref{sec:undec}.  The selected helper  records the current phase identifier (from $r_1$) in its local register. 

\noindent
\textbf{Simulating a decrement}:
The leader writes a $\dec$ instruction to $r_2$. Any previously selected helper observing this instruction responds by writing its stored phase number to $r_2$ and subsequently halts. Upon reading the current phase number (matching $r_1$) in $r_2$, the leader confirms that at least one helper from the current phase has dropped out. Because multiple helpers may respond and halt, this naturally models a lossy decrement.

\noindent
\textbf{Simulating a zero-test}:
The leader simply replaces the value in $r_1$ with a new nonce. Consequently, all currently active helpers possess an outdated phase identifier in their local registers. This prevents them from participating in future decrements, effectively resetting the counter to zero.

This construction generalizes to an arbitrary number of counters by allocating an additional pair of global registers for each counter. Detailed construction and correctness arguments are deferred to the appendix \ref{apx:sec:hardness}.

\begin{theorem}\label{thm:dec:sl:hardness}
	 \PCRFR\  for  single local register is  $\mathbf{F}_{\omega}$-hard, by  \cite{Schmitz19}.	
\end{theorem}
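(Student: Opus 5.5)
We reduce reachability in lossy counter systems, which is $\mathbf{F}_{\omega}$-complete~\cite{Schmitz19}, to \PCRFR\ for programs with one local register $\lreg$. Given $\lossy(n)$ and a final state $\fin$, we build $\pds$ with global registers $\greg_1,\greg_2$ for leader election and a pair $(r^i_1,r^i_2)$ for each counter $i$. We then show that $\fin$ is reachable iff a designated label $\ell_{\fin}$ is reachable in some fresh run.

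\textbf{Leader election with one local register.} The protocol of Figure~\ref{main:fig:LE} already uses only $\lreg_1$. Each thread draws a fresh id into $\lreg$, writes it to $\greg_1$ after reading $0$, toggles $\greg_2$, and checks that $\greg_1=\lreg$. The argument of Lemma~\ref{lem:undec:uniqueleader} carries over verbatim, so at most one thread reaches $\ell_{\tt lead}$. The leader never needs its local register again, since it encodes the state of $\lossy(n)$ in its program label.

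A helper must hold both an identity and a phase tag, but we only have one register. We resolve this by using the identity only transiently: a helper draws a fresh nonce into $\lreg$, uses it in the selection handshake of Section~\ref{sec:undec} on $r^i_2$, and once selected overwrites $\lreg$ with the value of $r^i_1$, the current phase id. From then on its label records that it is an \emph{active helper of counter $i$}.

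\textbf{Encoding the operations.} An increment is a selection round: the leader writes $\inc$ to $r^i_2$, candidates write their ids, and the leader copies one id to $r^i_2$ after writing a separator $\bot$. The leader cannot remember the winner's id, so we use a three-step handshake. The leader reads some id $v$ from $r^i_2$ and rewrites $\bot$ then $v$ in a single guarded sequence, asserting $r^i_2=v$ via comparisons with $r^i_2$ only. For this, one extra global register $r^i_3$ per counter serves as a scratch copy. The winner acknowledges by writing a fresh marker before adopting the phase tag.

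A decrement works as follows: the leader writes $\dec$; active helpers of counter $i$ may write their $\lreg$ to $r^i_2$ and halt; the leader asserts $r^i_2=r^i_1$. A zero test has the leader assign a fresh nonce to $r^i_1$ through a nonce assignment staged in its own $\lreg$, which it no longer needs.

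\textbf{Correctness.} For soundness (counter run implies program run), we schedule one fresh helper per increment and exactly one responder per decrement. We then check that the number of active helpers with the current phase tag equals the counter value.

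Completeness (program run implies lossy run) is the main obstacle. We must prove the invariant that the number of halted-free active helpers tagged with the current value of $r^i_1$ is at least the number of increments minus the number of successful decrements since the last zero test. Equivalently, the simulated counter only \emph{loses}. This needs three facts:
\begin{enumerate}
\item Every successful increment handshake activates at most one helper. This holds by freshness of ids and the $\bot$ separator, which prevents a thread from mistaking its own write for the leader's choice.
\item A decrement that succeeds implies some current-phase helper has halted. This holds because, by freshness, only helpers activated in this phase hold $r^i_1$'s value.
\item Stale helpers can never match a new phase id, again by freshness.
\end{enumerate}
We would set up a mapping from program runs to lossy runs where the counter value is the number of active current-phase helpers, with lossy steps absorbing extra halts. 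We will verify that any deadlock-free prefix reaching $\ell_{\fin}$ yields a valid lossy run. The construction is polynomial, and $\mathbf{F}_{\omega}$-hardness follows.
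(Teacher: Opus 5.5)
\textbf{There is a genuine gap in your increment handshake.} Your architecture matches the paper's: leader election, a phase register $r^i_1$ and a message register $r^i_2$ per counter, selection by returning one id, decrement by matching the current phase, and a zero test by a fresh phase. The problem is that you have the winner acknowledge \emph{before} copying $r^i_1$ into $x$. Once the leader sees the acknowledgment, nothing stops it from leaving $[\inc_i]$ and executing $[\zero_i]$. The winner then copies the \emph{new} phase and becomes a current-phase active helper, although no increment occurred in that phase.

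A concrete failure: take $q_0 \xrightarrow{\inc_1} q_1 \xrightarrow{\zero_1} q_2 \xrightarrow{\dec_1} f$. In the lossy system, $f$ is unreachable, since counter $1$ must be $0$ at the zero test and nothing is added afterwards. In your program, the late-adopting winner answers $[\dec_1]$, and the leader reaches $\ell_f$ along a fresh run. Your fact~3 (freshness) does not help, because the helper copies whatever phase is current when it gets around to it. In addition, a fresh marker is unknown to the leader, so the leader can recognise it only by inequalities, and a late candidate's id passes those too.

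\textbf{What is missing} is the paper's ``joining'' fact. Let $\#_i$ be the set of active helpers of counter $i$ holding the current phase. A helper enters $\#_i$ only inside an $[\inc_i]$ snippet, and at most once per snippet. The paper gets this by ordering the winner's last steps as $x := r^i_1$ followed by $r^i_2 := \bot$. The id is no longer needed once it has come back, so one local register suffices. The acknowledgment is the constant $\bot$, which after the leader writes the id back only the selected helper can write. The leader does not touch $r^i_1$ inside the snippet and waits for $r^i_2 = \bot$, so the new member carries the current phase.

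\textbf{Two smaller points.} First, the leader \emph{can} remember the winner. Its local register is free after the election, as you yourself use for the zero test. The paper's $[\inc_i]$ is: $r^i_2 := \inc$; $x := r^i_2$; $\textsf{assert}(x \notin \{\inc, r^i_1\})$; $r^i_2 := \bot$; $r^i_2 := x$; $\textsf{assert}(r^i_2 = \bot)$. So $r^i_3$ is unnecessary, though harmless. Note also the check $x \neq r^i_1$, which you omit.

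Second, your invariant points the wrong way. Completeness needs an \emph{upper} bound on $|\#_i|$: after every simulated operation the lossy system can reach $(q, |\#_1|, \dots, |\#_n|)$, and each successful decrement witnesses $|\#_i| \geq 1$ at its start. An ``at least'' inequality says nothing about which decrements may succeed.
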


	\section{PReach Problem  is {\sc PSpace complete}} \label{sec:PRNoFresh}

Interestingly, the parameterized reachability problem is in {\sc Pspace}, with no restrictions on the number of local/global registers,  if dispense with the freshness assumption.  

\begin{theorem}\label{thm:nofresh:dec1}
	The PReach problem is {\sc Pspace-complete}.
\end{theorem}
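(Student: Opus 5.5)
Hardness is immediate. Take the single-thread reachability problem for register automata with (dis)equality tests and guessing, which is \textsc{Pspace}-hard. Reduce it by letting the first thread to start claim a lock. Concretely, each thread first executes $\mathtt{assert}(\greg_0 = 0)$ followed by $\greg_0 := 1$ on a fresh global register $\greg_0$. All further threads block at the assert before their first write, so they can never interfere with the shared state. The parameterized instance then reaches $\lab$ iff the single-thread program does.

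For membership, the plan is to exploit the absence of freshness. Without freshness, extra threads can be cloned freely. If some thread in a run performs a certain sequence of moves, we can add a copy of that thread that repeats exactly the same moves, with the same nonces, interleaved immediately after the original's moves. Each copy sees the same global values and writes the same values, so the copies are harmless: a write by the copy merely repeats a write that has just occurred. This is the classical copycat argument for parameterized systems with shared memory.

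From it I would derive a saturation characterization. Consider the set $S$ of abstract ``thread-local views'' that are reachable in some $n$-run. Such a view consists of a label, together with the equality type of local registers relative to the globals and the constants. Because copies can be spawned on demand, any reachable view can be assumed to be available in unboundedly many copies at any later point, with the global state suitably replayed. The key obstacle is that globals hold concrete data, and a thread's local copy of an old global value must be re-matched later. I would handle this by abstracting the global valuation as an equality pattern over $\globalregs \cup \constants$ plus the values held by ``remembered'' threads. The decision procedure is then a polynomial-space exploration of a product. It tracks one distinguished thread, in a single-thread symbolic configuration: a partition of $\localregisters \cup \globalregs \cup \constants$, which has polynomial size. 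It also allows environment moves that rewrite globals to any value that some other (cloned) thread could write at that moment.

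The crux is proving that the environment's ability to write a value into a global depends only on a bounded amount of information. The relevant facts are whether that value is currently in a global register, is a constant, or is a ``new'' value. A new value can always be obtained by a cloned thread that guessed it via $\zvar := *$ and is now at a label where it writes that register. Since guesses need not be fresh, a clone can guess exactly the value held by the distinguished thread, and replay its history on that choice. I would therefore prove the following lemma by induction on run length. A symbolic state (label, equality pattern of one thread against the globals) is reachable in some run iff it is in the least fixpoint of a monotone saturation rule. That rule lets global writes by already-reachable symbolic threads be applied to any other thread's view whenever the equality type is consistent. The fixpoint is over an exponential but polynomially-describable set, so I would not compute it explicitly. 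Instead I would guess a run of the distinguished thread and, on demand, recursively verify the environment moves it relies on, keeping the recursion depth polynomial. Alternatively, I would use a Savitch-style reachability over symbolic states, possibly invoking the single-global-register machinery of Section~\ref{sec:oneglobal} (with $\kwritepr$-style counters no longer needed, since duplication is free). Showing that this recursion fits in polynomial space is where I expect most of the effort.
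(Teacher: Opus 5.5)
\textbf{Membership.} Your copycat observation is also the paper's starting point. The decisive step, though, is the one you defer. You give no argument that a least fixpoint over exponentially many symbolic views can be decided in polynomial space, and the polynomial recursion depth you assert is not justified.

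The missing idea is that, without freshness, the environment does not have to be described by views of other threads at all. It suffices to record which writes are available: pairs $\tuple{\greg,c}$ with $c\in\constants$, and $\tuple{\greg,\ast}$, meaning that some non-constant value can be written into $\greg$. By cloning with paused copies, an available pair can be used at will. By renaming guesses, $\tuple{\greg,\ast}$ supplies \emph{every} non-constant value, including one equal to a value the consumer already holds. That is your own remark, but it makes the concrete values irrelevant.

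There are at most $|\globalregs|(|\constants|+1)$ such pairs. The paper guesses the order $a_1\cdots a_m$ in which they first become available. It then checks each $a_i$, and finally $\lab$, by one pseudo-1-run whose support lies in $a_1^*a_1\{a_1,a_2\}^*a_2\{a_1,a_2,a_3\}^*\cdots$. That is polynomially many single-thread reachability checks over a finite abstraction, each in PSPACE.

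The order is essential, and your rule as described misses it: the initial value of a register is a non-renewable resource. Take three branches, with nothing ever writing $0$:
(A) $\greg_1:=c_1$;
(B) $\mathtt{assert}(\greg_1=c_1)$; $\greg_2:=c_2$;
(C) $\mathtt{assert}(\greg_2=c_2)$; $\mathtt{assert}(\greg_1=0)$; $\lab$.
Here $\lab$ is unreachable, because $c_2$ enters $\greg_2$ only after $\greg_1$ has left $0$ for good. Yet your rule imports an already-derived environment write into another view whenever the written value's equality type fits. Importing $\greg_2:=c_2$ into the view of (C) then lets (C) reach $\lab$. Requiring writer and reader to agree on all globals blocks this example, but then needs its own completeness proof, since a paused clone may write long after the globals it read have changed. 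The paper's ordering settles the issue: a thread that uses $a_i$ must already have pseudo-assigned $a_1,\dots,a_{i-1}$.

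\textbf{Hardness.} Your lock is not a lock. $\mathtt{assert}(\greg_0=0)$ and $\greg_0:=1$ are separate steps, so any number of threads can pass the test before anyone writes. Indeed, your own copycat argument shows that no gadget can enforce mutual exclusion without freshness. If the automaton's registers are global, such interfering copies can create spurious reachability. The fix is to drop the lock and make all registers of the register automaton local. Threads then never interact, and PReach coincides with single-thread reachability, which is PSPACE-hard by Demri and Lazi\'c, the result the paper cites.
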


Without freshness, any value that is written to a global register $\greg$ can be written to it any number of times, and  at will. This is because, in this setting, we can duplicate a thread and let both of them run in a coupled manner (scheduling their corresponding instructions adjacently, and ensuring they generate the same values for non-deterministic assignments). This can be exploited to search for supporting runs in {\sc Pspace}. The hardness is a consequence of a result in \cite{DemriL09}. The details are in the appendix \ref{sec:PRNoFresh1}.

	\bibliographystyle{splncs04}
	\bibliography{Tacas}

	\section{Appendix for Undecidability}\label{sec:undec1}

\begin{figure}[htbp]
	\centering
	\begin{minipage}{0.47\textwidth}
		\begin{algorithm}[H]
			$\ell_{!a}^{q,{q_i}}:\; $\eIf{$\lreg_2 = \bot$}{
				$\goto\ \ell_{\tt nlist}^{q_i}$
			}{
				$\goto\ \ell_{\tt applist}^{q_i}$
			}
			\caption{Simulating Write}\label{main:alg:ins}
		\end{algorithm}
	\end{minipage}
	\hfill %
	\begin{minipage}{0.49\textwidth}
		\begin{algorithm}[H]
			$\ell_{\tt nlist}^{q_i}: \greg_2:=\text{ins(a)}$;
			$\lreg_2 := \greg_2$\;
			$\assert(\lreg_2 \neq \text{ins(a)})$\;
			$\greg_2:=\bot$;
			$\greg_2:=\lreg_2$\;
			$\assert(\greg_2 = \bot(a))$;
			$\goto\ \ell_{q_i}$\; 
			\caption{Insert to empty list}\label{main:alg:insempty}
		\end{algorithm}
	\end{minipage}
\end{figure}

We start by describing the protocol in more details, providing the pseudo code alongside. Recall that we are interested in simulating a system equipped with a perfect FIFO channel. For this, we need the unique leader to be able to simulate the enqueue and dequeue operations.  We describe each of these below.

\textbf{Enqueue protocol} The leader code  to simulate the operation $\tuple{q,!a,q_i}$ is shown in Algorithm \ref{main:alg:ins}.
In case the list is empty, the leader executes a protocol that inducts a helper into the chain, this is shown in Algorithm \ref{main:alg:insempty}. For this, the leader writes a command $\ins(a)$ into the register and expects one of the free helper to respond with its id. There may be many helper willing to join the list, each of them write their id into the register. The leader chooses one of them, stores it into its register  $\lreg_2$ and writes it back to $\greg_2$. It then awaits an acknowledgement from the helper in form of a $\bot(a)$ in $\greg_2 $.

\begin{wrapfigure}{l}{0.39\textwidth}
	\begin{minipage}{\linewidth}
		\begin{algorithm}[H]
			$\ell_{\tt applist}^{q_i}: \greg_2:=\prop(a)$\;
			$\greg_2:=\lreg_2$\;
			$\assert(\greg_2 \neq \lreg_2)$\;
			$\assert(\greg_2 = \lreg_2)$\;
			$\greg_2:=\bot(a)$\;
			$\goto\ \ell_{q_i}$\; 
			\caption{Append to  list}\label{main:alg:inspropogate}
		\end{algorithm}
	\end{minipage}
	\vspace{-10pt} 
\end{wrapfigure}

In case the queue is not empty, it writes the command $\prop(a)$ (indicating propagate) followed by the id of its neighbour into the global register and awaits an acknowledgement.  The protocol in this case is show in Algorithm \ref{main:alg:inspropogate}.

Each helper thread  starts by participating in an insert protocol. Once it is successful (i.e. part of a chain), it participates in the append protocol. Both these codes are shown in Algorithm \ref{main:alg:hlp:ins} and \ref{main:alg:hlp:append}. The helper thread to participate in the insert protocol, writes its id onto the global register and waits for it to be selected by the leader, it is selected if it can read its own id in the global register. The protocol ensures that at most  one of the many helpers that participated can read back its own id and successfully complete the protocol. The  helper which is selected,  acknowledges by writing a $\bot(a)$ to the global register.  Note that here, an thread can read an insert command and react to it after several other commands have been executed. While it is okay for such a helper to be inducted into the chain, we need to ensure that the helper is storing the right value. That is, it is indeed responding to $\ins(a)$. This is ensured by acknowledging with  $\bot(a)$ that includes the value that is being stored by the inducted thread. 
The helper thread remembers the symbol it has responded to by jumping to an appropriate location of the code.

\begin{figure}[htbp]
	\centering
	\begin{minipage}{0.4\textwidth}
		\begin{algorithm}[H]
			$\hlab_{\tt ins(a)}:\assert(\greg_2 = \text{ins(a)})$\;
			$\lreg_1 := * $\; 
			$\greg_2:=\lreg_1$\;
			$\assert(\greg_2 \neq \lreg_1)$\;
			$\assert( \greg_2 = \lreg_1)$\;
			$\lreg_2:=\bot$\;\ $\greg_2:=\bot(a)$\;
			$\textsf{goto } \hlab_a$\;
			\caption{Helper  Insert$(a)$}\label{main:alg:hlp:ins}
		\end{algorithm}
	\end{minipage}
	\begin{minipage}{0.58\textwidth}
		\begin{algorithm}[H]
			$\hlab_{\tt prop(a)}:\assert(\greg_2 = \prop(a), \greg_2 = \lreg_1)$\;
			\eIf{$\lreg_2 \neq \bot$}{
				$\greg_2 := \lreg_2$;
				$\assert(\greg_2 \neq \lreg_2, \greg_2 = \lreg_2)$\;
				$\greg_2:=\lreg_1$;\ $\goto\ \ell^{\tt hlp}_b$\;
			} {
				$\greg_2:=\text{ins(a)}$;
				$\lreg_2 := \greg_2$\;
				$\assert(\lreg_2 \neq \text{ins(a)})$;
				$\greg_2:=\bot$;
				$\greg_2:=\lreg_2$\;
				$\assert(\greg_2 = \bot(a))$;\ $\greg_2:=\lreg_1$;\ $\goto\ \ell^{\tt h}_b$\;
			}
			\caption{Helper  Append}\label{main:alg:hlp:append}
		\end{algorithm}
	\end{minipage}
\end{figure}

In the case of  append protocol, the behaviour of the helpers that are already included in the chain, depends on whether it has a neighbour or not, it is described in Algorithm \ref{main:alg:hlp:append}. The protocol starts when the helper thread reads the command $\prop(a)$. In  case the helper is not the last in the chain ($\lreg_2 \neq \bot$), on reading its own id, writes to the global register the id of its neighbour. It then waits for an acknowledgement in the form of its neighbour's id. Upon reading it,  writes its own id. In case the helper is last in the chain, it executes the insertion protocol that we discussed earlier to add a new helper into the chain, it also writes its own id to complete the protocol. 

\begin{wrapfigure}{l}{0.47\textwidth}
	
	\begin{minipage}{\linewidth}
		\begin{algorithm}[H]
			$\ell_{?a}^{q,{q_i}}: \greg_2:=\text{del(a)}$\;
			$\greg_2:=\lreg_2;\ \assert(\greg_2 \neq \lreg_2)$;
			$\lreg_2 := \greg_2$\;
			\eIf{$\lreg_2 \neq \textsf{last}$}{
				$\assert(\greg_2 \neq \lreg_2, \greg_2 = \lreg_2)$\;
			}{
				$\lreg_2 := \bot$\;
			}
			\caption{Simulating $\tuple{q,?a,q_i}$}\label{main:alg:del}
		\end{algorithm}
	\end{minipage}
	
\end{wrapfigure}
\textbf{Dequeue protocol}
In the dequeue protocol the leader interacts with its neighbour (dequeue is possible only if the chain is not empty). The neighbour (i.e. the head of the list) is responsible to delete itself and provide the leader with the id of its neighbour (provided there is one). 
The leader indicates its intent to delete by writing the value $\del(a)$  to the global register $\greg_2$, followed by the id of its neighbour ( Algorithm \ref{main:alg:del}). 
Each helper makes a guess as to whether it is the leader's neighbour (head) or not. It executes either $\textsf{DelSelf}(a)$ and $\textsf{DelHelp}$ based on the guess.  
If any helper guesses wrongly the resulting dequeue protocol will not complete and the simulation deadlocks.

A helper other than head executes the code in Algorithm \ref{main:alg:hlp:delhlp} (delHelp). On reading $\del(a)$, it checks if $\lreg_2 = \bot$ to learn if it is the tail
of the queue and in that case writes its \id on $\greg_2$. All other helpers (other than the tail and the head) wait to see their neighbour's id  on $\greg_2$. On seeing it, they replace it with
their own id. Thus in the desired execution, so far, the value of 
$\greg_2$ would start with $\del(a)$ and then be followed one by one 
by the \id  of the members of the queue starting with the tail,
listed in order all the way till the neighbour of the head.
(Of course, no \id would have been written if the queue had only 
one element).  Observe that our protocol forces all helper threads to participate in the deletion and to complete their role for the deletion to be successful. 
\noindent

\begin{figure}[htbp]
	\centering
	\begin{minipage}{0.48\textwidth}
		\begin{algorithm}[H]
			$ \ell^b_\delprop: \assert(\greg_2 = \del(a))$\;
			\eIf{$\lreg_2 = \bot$}{
				$\greg_2 := \lreg_1$;\ $\goto\ \ell^{\tt hlp}_b$\;
			}{
				$ \assert(\greg_2 = \lreg_2)$\;
				$\greg_2 := \lreg_1$;\ $\goto\ \ell^{\tt hlp}_b$\;
			}
			\caption{DelHelp}\label{main:alg:hlp:delhlp}
		\end{algorithm}
	\end{minipage}
	\hfill
	\begin{minipage}{0.49\textwidth}
		\begin{algorithm}[H]
			$\ell^a_{\del(a)}: \assert(\greg_2 = \del(a), \greg_2 = \lreg_1)$\;
			\eIf{$\lreg_2 = \bot$}{
				$\greg_2 := \textsf{last}$;\ $\goto\ \ell_{\tt halt}$;
			}{
				$\assert(\greg_2 = \lreg_2)$\;
				$\greg_2 := \bot$;\ $\greg_2 := \lreg_2$;
				$\goto\ \ell_{\tt halt}$;
			}
			\caption{DelSelf$(a)$}\label{main:alg:hlp:delself}
		\end{algorithm}
	\end{minipage}
\end{figure}

Now we turn our attention back to  the leader and the head.
First we deal with the case when the queue has only one element.
After reading $\del(a)$, and its own id from $\greg_2$ (we will
see why this read succeeds soon)  simply deletes itself if it is
the last element (i.e. its $\lreg_2 = \bot$) and indicates this
to the leader by writing a special value \textsf{last}.  
The leader,
after writing $\del(a)$, writes its neighbour's id on $\greg_2$ (thus,
the leader's neighbour can succeed in reading its own id from $\greg_2$
after reading $\del(a)$). The leader then reads $\greg_2$ and on seeing \textsf{last} simply sets its $\lreg_2$ to $\bot$ and completes the protocol.

If there is only one element in the queue this is the only way
the protocol can execute to completion.
A subtle situation is one where the queue has more than one element and the tail element guesses wrongly that it is the head. In this case, it will not be able to write \textsf{last} to mislead the leader  since it will not succeed in reading its own id as required. Similarly, it is possible that the only helper in the chain can 
decide to execute delHelp procedure. In this case, the protocol will deadlock as the helper will not write an id that is not its own.

In case the queue has more than one element, once again the only thread that can
read $\del(a)$, followed by  its own id is the head. Thus the head can
identify itself unambiguously.  However, to delete itself it must convey its neighbour's id to the leader. It cannot simply 
write it to $\greg_2$, we explain why in the next paragraph.  However, if it waits to read it's neighbour's id from $\greg_2$ then it can be sure that the neighbour has completed its
execution. It can then then convey this id to the leader again.
That is, the head writes a $\bot$ to $\greg_2$ followed
by its neighbour's identity (from $\lreg_2$). The only id that the
leader can read twice interspersed by a $\bot$ is the identity of the
neighbour of the head and this allows the leader to set its neighbour
correctly after the deletion.

While we have described the simulation of the operations, we now describe the main algorithms corresponding to the leader and the helper. The program starts as in Algorithm \ref{main:fig:LE}, there is a unique thread (leader) that reaches $\ell_{q_{in}}$. More over any instance can either choose to participate in the election or non-deterministically jump to the label $\ell_{\tt helper}$.  The code for the leader thread is obtained from a code to code translation. The main leader code has as its labels $\ell_q$ for each state $q \in \States$ of the given PCS. For any state $q \in \States$,  let $\tau_1 = \tuple{q,\op_1,q_1}, \dots \tau_k = \tuple{q,\op_k,q_k}$ be the transitions starting from it.  Then, we have  in the leader code the following:
$$
		\ell_q: \text{if }(\star) \big\{ \text{ goto } \ell^{\op_1}_{q_1}\big\} \cdots  \big\{ \text{ goto } \ell^{\op_k}_{q_k}\big\}
$$

That is a transition is picked non-deterministically and executed. Similarly the code of the helper from the state $\ell^h_a$, non-deterministically guesses the next operation and executes the same.
$$
\ell^h_b: \text{if }(\star) \big\{ \text{ goto } \ell^{b}_{\tt dprop}\big\}  \big\{ \text{ goto } \ell^{b}_{{\tt del}(b)}\big\}   \big\{ \text{ goto } \ell^{b}_{\tt prop}\big\}
$$

This completes the description of the reduction from \PCS to the parameterized program $\pds$. We show the correctness of our construction below. We first  prove the following lemma that relates a fresh run in the parameterized program $\pds$ to a run in the $\pcs$. In particular, if there is a run in    $\pds$ of the form 	$\initpdsconf(n) \movesto{}^* \tuple{\labelmap, \lregmap,\gregmap}$ such that $\labelmap(1) \in \lab_q$ for some state $q \in \States$, then there is a run in $\pcs$ of the form $(q_{\tt in},\epsilon) \xrightarrow{}^* (q,w)$ for some $w \in \Sigma^*$.  Here we assume without loss of generality that the thread instance $1$ is the elected leader. 
\begin{lemma}\label{apx:undec:correct:if}
	Corresponding to a fresh $n$-run in $\pds$ of the form $\initpdsconf(n) \movesto{}^* \tuple{\labelmap, \lregmap,\gregmap}$ such that $\labelmap(1) \in \lab_q$ for some state $q \in \States$, then there is a run in $\pcs$ of the form $(q_{\tt in},\epsilon) \xrightarrow{}^* (q,w)$ for some $w \in \Sigma^*$
\end{lemma}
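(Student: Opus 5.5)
We prove the statement by induction on the number of protocol \emph{rounds} completed by the leader (thread $1$) along the fresh run $\pi$. By Lemma~\ref{lem:undec:uniqueleader}, thread $1$ is the unique thread ever reaching $\ell_{\tt lead}$. All other threads either jumped to $\ell_{\tt helper}$ or are stuck in the election code. We cut $\pi$ at the successive visits of thread $1$ to labels of the form $\ell_{q}$. This gives $\pi = \pi_0 \pi_1 \cdots \pi_m$, where thread $1$ is at $\ell_{q_{j}}$ at the end of $\pi_j$, with $q_0 = \initstate$ and $q_m = q$. Each segment $\pi_j$ ($j \geq 1$) contains exactly one execution of the leader's code for a transition $\tau_j = \tuple{q_{j-1}, \op_j, q_j}$. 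The goal is to show that $\tau_1 \cdots \tau_m$ is a valid PCS run from $(\initstate,\epsilon)$.

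For this we strengthen the claim with an invariant $\mathcal{I}(w)$ on configurations where the leader sits at some $\ell_{q'}$. Here $w = a_1 \cdots a_p$ is the simulated channel content. The invariant requires distinct helper threads $h_1, \dots, h_p$ with pairwise distinct ids $\lreg_1$ (distinctness comes from freshness) such that:
\begin{itemize}
\item the leader's $\lreg_2$ holds the id of $h_1$, or $\bot$ if $p=0$;
\item $h_i$'s $\lreg_2$ holds the id of $h_{i+1}$, and $h_p$'s holds $\bot$;
\item $h_i$ sits at label $\hlab_{a_i}$, thus remembering $a_i$;
\item every other helper that has ever left the insertion code is either halted or blocked, and can never again succeed in an append or delete round.
\end{itemize}
We then show that if $\mathcal{I}(w)$ holds at the start of $\pi_j$, then $\tau_j$ is enabled at $(q_{j-1},w)$, and $\mathcal{I}(w')$ holds at the end of $\pi_j$, where $(q_{j-1},w)\movesto{\tau_j}(q_j,w')$.

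The inductive step splits into three cases:
\begin{itemize}
\item \textbf{Insert into an empty list} (Algorithm~\ref{main:alg:insempty} against Algorithm~\ref{main:alg:hlp:ins}). The leader stores exactly one id $v \neq \ins(a)$ and then writes $\bot$ followed by $v$. Only the helper whose nonce is $v$ can pass the check ``read something different from my own write, then read my id'', since freshness makes the ids distinct. The acknowledgement $\bot(a)$ forces that helper to have responded to $\ins(a)$ and to jump to $\hlab_a$.
\item \textbf{Append to a non-empty list.} We argue along the chain that the wave is carried out by $h_1, \dots, h_p$ in order, and that $h_p$ runs the same selection subprotocol.
\item \textbf{Dequeue.} Here we check that the only thread able to read $\del(a)$, then its own id, and then (when $p>1$) its successor's id is $h_1$, and that $h_1$ stores exactly $a$. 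This is what makes the step sound with respect to $?a$. We also check that the value the leader copies into $\lreg_2$ is the id of $h_2$ (or that the leader reads \textsf{last} exactly when $p=1$).
\end{itemize}
In each case we also verify that no interfering helper, such as a late candidate or a thread that guessed the wrong role, can push the leader past its final assertion while violating $\mathcal{I}(w')$. Such interference only leads to blocked threads or to the leader failing an assertion.

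The main obstacle is the append and dequeue waves. There, values in $\greg_2$ are overwritten asynchronously by many threads, so we must rule out spurious matches: reads of one's own write, stale commands, or ids belonging to non-chain helpers. We would handle this by a sub-induction along the chain positions, using freshness (ids are pairwise distinct and distinct from the constants $\bot$, $\prop(a)$, $\del(a)$, \textsf{last}). The key point to establish is that each $\assert(\greg_2 \neq y, \greg_2 = y)$ pattern guarantees that some \emph{other} thread wrote $y$ in between. This fixes the order in which chain members acted. Once the invariant is established, the lemma follows by taking $w$ to be the channel content in $\mathcal{I}$ at the end of $\pi_m$.
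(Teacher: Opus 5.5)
Your outline follows the paper's own structure. Cutting the run at the leader's visits to the labels $\ell_q$ is the paper's decomposition into steps $\xleadsto{\op}$. Your chain $h_1,\dots,h_p$ linked through $\lreg_2$ is its $\chain$ together with the word-of-chain, and your three cases are its per-command lemma.

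The real difference is how stray threads are treated. The paper first normalises the run. It argues that each round can be replayed so that it ends in a \emph{very good} configuration: every chain member is exactly at some $\hlab_a$, and every other helper is at $\ell_{\mathtt{helper}}$ or $\ell_{\mathtt{halt}}$. It does this by postponing the steps of helpers stuck mid-protocol that join later and deleting those of helpers that never join, so the per-round lemma only ever sees clean configurations. You keep the original run and use an invariant that tolerates strays. Each of your cases must therefore itself rule out forged acknowledgements, stale $\ins(c)$ responses and wrong-role guesses. This spares you the rescheduling argument, which the paper only sketches and which needs care, since a deleted stray write may be the value some thread used to pass an $\assert(\greg_2 \neq y)$. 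The price is a heavier case analysis.

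As stated, your invariant is too rigid to hold on an unnormalised run. When the leader reaches $\ell_{q_j}$, a chain member need not be at $\hlab_{a_i}$. For example, the helper just inducted may have written $\bot(a)$ but not yet executed its closing \textsf{goto}. A member may also already have left $\hlab_{a_i}$ through the nondeterministic branch into the entry of its next sub-protocol. So clause~3 fails literally. You can fix it in either of two ways:
\begin{itemize}
\item relax clause~3 to allow $h_i$ to be anywhere between its closing \textsf{goto} and its next read of $\greg_2$; or
\item commute these purely local steps across the leader's final move, which is exactly the piece of the paper's normalisation you still need.
\end{itemize}

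Likewise, clause~4 leaves helpers still inside the insertion code unconstrained. That is acceptable, provided your insertion cases show that such a helper can enter the chain only through a proper selection, as you indicate via the $\bot(a)$ acknowledgement and distinct ids.
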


For purpose of the proof, let us fix a run 	$\pi = \initpdsconf(n) \movesto{}^* \tuple{\labelmap, \lregmap,\gregmap}$ such that $\labelmap(1) \in \lab_q$ for some state $q \in \States$. 
We have already established that there can be at-most a single leader, hence  $\pi$   can be decomposed into $\pi = \pi_1 \cdot \pi_2$, where the unique elected leader is in state $\ell_{q_{\tt in}}$ for the first time in $\pi_2$, equivalently, the leader election happened in $\pi_1$.  For sake of simplification of the proofs,  we will assume in the starting configuration of $\pi_2$ that there is a unique leader in the state $\ell_{q_{\tt in}}$ and every other thread is in the label $\ell_{\tt helper}$. We may assume this since any unsuccessful thread that participated in the leader election will eventually deadlock and can provide no assistance to executions in $\pi_2$.   So to prove Lemma \ref{apx:undec:correct:if}, it is enough to show the following lemma which relates and execution of $\pi_2$ with a run in $\pcs$.

\begin{lemma}
		Corresponding to a fresh $n$-run in $\pds$ of the form $ \tuple{\labelmap_1, \lregmap_1,\gregmap_1} \movesto{}^* \tuple{\labelmap_2, \lregmap_2,\gregmap_2}$, where for all $i \in [2,n]$, $\labelmap_1(i) = \ell_{\tt helper}$,  $\labelmap_1(i) = \ell_{q_{\tt in}}$ and $\labelmap_2(1) = \ell_q$, there is a run in $\pcs$ of the form $(q_{\tt in},\epsilon) \xrightarrow{}^* (q,w)$ for some $w \in \Sigma^*$.
\end{lemma}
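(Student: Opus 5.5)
We prove the statement by an invariant argument along the run. We extract from the given fresh run the sequence of \emph{completed protocol rounds} of the leader (thread $1$), and map each round to one \PCS\ transition. The leader's code is a code-to-code translation of $\pcs$. Every time it passes through a label $\ell_{q'}$, the path it took from the previous label $\ell_{q''}$ went through exactly one of the fragments $\ell^{q'',q'}_{!a}$ (Algorithms~\ref{main:alg:ins}, \ref{main:alg:insempty}, \ref{main:alg:inspropogate}) or $\ell^{q'',q'}_{?a}$ (Algorithm~\ref{main:alg:del}). That fragment corresponds to a transition $\tuple{q'',!a,q'}$ or $\tuple{q'',?a,q'}$ in $\Trans$. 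This gives a candidate sequence of \PCS\ transitions, and what remains is to show that the channel contents make every $?a$ legal. We do this by induction on the number of completed rounds, maintaining the following invariant at each visit of the leader to some $\ell_{q'}$.

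\emph{Invariant.} There are distinct helpers $h_1,\dots,h_m$ (distinct ids in $\lreg_1$, guaranteed by freshness) forming a chain: the leader's $\lreg_2$ holds $\mathrm{id}(h_1)$ (or $\bot$ if $m=0$), each $h_i$ has $\lreg_2=\mathrm{id}(h_{i+1})$, and $h_m$ has $\lreg_2=\bot$. Each $h_i$ sits at the helper label $\hlab_{b_i}$, or is in the middle of reaching it. The \PCS\ configuration $(q',b_1\cdots b_m)$ is reachable. All other helpers are either free (never inducted) or halted.

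\emph{Enqueue rounds.} For the empty case, the leader writes $\ins(a)$, reads some id $\neq\ins(a)$, and then requires $\bot(a)$ after writing $\bot$ and the id. Only the helper whose id was chosen can read $\bot$ followed by its own id, because freshness makes ids unique and the intervening $\bot$ rules out reading its own earlier write. That helper must have passed $\assert(\greg_2=\ins(a))$, so it stores $a$, and it acknowledges with $\bot(a)$. Hence exactly one new helper with symbol $a$ is appended. For the nonempty case, we do an inner induction along the chain. The wave $\prop(a),\mathrm{id}(h_1),\mathrm{id}(h_2),\dots$ can only be advanced by $h_i$, since only $h_i$ can read its own id after $\prop(a)$ and it writes $\lreg_2$. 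The tail runs the same selection subprotocol. The return wave $\mathrm{id}(h_{i})$ written by $h_i$ lets $h_{i-1}$ pass. Thus the leader's final assertion succeeds only after a correct append at the tail.

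\emph{Dequeue rounds.} The invariant for dequeue is the main obstacle, because every helper guesses its role. We must show that the leader completes $\ell^{q'',q'}_{?a}$ only if $m\ge1$, $b_1=a$, and afterwards the leader's $\lreg_2$ equals $\mathrm{id}(h_2)$ (or $\bot$). We first argue that the only thread that can read $\del(a)$ followed by $\mathrm{id}(h_1)$ with its own $\lreg_1$ equal to that id is $h_1$, using uniqueness of ids. Then $h_1$ passing $\assert(\greg_2=\del(a))$ at $\ell^a_{\del(a)}$ forces $b_1=a$, since its label records its symbol. For $m=1$, only $h_1$ writes \textsf{last}, and it does so after checking $\lreg_2=\bot$. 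Wrong guessers cannot produce \textsf{last}: a tail posing as head fails to read its own id. For $m>1$, we show by induction from the tail that the ids on $\greg_2$ appear in the order $\mathrm{id}(h_m),\dots,\mathrm{id}(h_2)$. Each write is enabled only by reading the successor's id, and no thread other than $h_i$ holds $\mathrm{id}(h_{i+1})$ in $\lreg_2$. The leader reads a value $v$, then $v\neq\lreg_2$... then $v$ again after an intervening $\bot$; only $h_1$ writes $\bot$ followed by its $\lreg_2$. This pins $v=\mathrm{id}(h_2)$, and $h_1$ halts. Stale or late writes by free or halted helpers can only block progress, not fake these patterns, again by freshness.

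Finally, the initial configuration of the run satisfies the invariant with $m=0$ and \PCS\ configuration $(q_{\tt in},\epsilon)$. Applying the invariant at the final visit $\labelmap_2(1)=\ell_q$ yields a run $(q_{\tt in},\epsilon)\xrightarrow{}^*(q,w)$.
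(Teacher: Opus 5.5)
Your proposal is correct and follows essentially the paper's route: cut the run into leader rounds $\pdsconf_i \xleadsto{\op} \pdsconf_{i+1}$, read the channel contents off the chain word, and use the per-operation analysis to see that each round is a legal PCS step. That analysis covers unique selection via fresh ids, the $\bot(a)$ acknowledgement fixing the stored symbol, the outward and inward waves, and head identification via \textsf{last} or the doubled id. The one difference is that the paper first normalizes the run so that every round boundary is \emph{very good}, by postponing or deleting the steps of stuck helpers. You instead leave stray helpers in place and argue that their writes can only block these patterns, never fake them. With that choice, your invariant should fix each $h_i$'s symbol by the helper code copy it runs rather than by its exact label, since a chain helper may already have taken its next nondeterministic guess.
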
\label{apx:undec:lem:correctf}

For any $n$-configuration $ \pdsconf =  \tuple{\labelmap, \lregmap,\gregmap}  \in \Confof{\pds}(n)$, we  define the \emph{chain}, $\chain(\pdsconf)$ as as the sequence $(i_0=1) \to i_1 \to i_2 \dots \to i_k$, where each $i_1,i_2, \dots, i_k \in [2,n]$ such that the following holds. 
\begin{itemize}
	\item For all $j \in [1,k]$, $\lregmap(i_j)(x_1) = \lregmap(i_{j-1})(x_2)$. It is the chain obtained by traversing the  id's stored in the local register $x_2$, starting from the leader.
\end{itemize}

For any $n$-configuration $ \pdsconf =  \tuple{\labelmap, \lregmap,\gregmap}  \in \Confof{\pds}(n)$, we say it is \emph{good} if the leader is in $\ell_q$ for some $q \in \States$ and if $\chain(\pdsconf) = (i_0=1) \to i_1 \to i_2 \dots \to i_k$, then the following holds
\begin{itemize}
	\item $\labelmap(i_0) \in \{ \ell_q \mid q \in \States \}$.
	\item For all $j \in [2,k]$, $\labelmap(i_k) \in \{ \ell^h_a \mid a \in \Sigma \}$
\end{itemize}
We define the \emph{word-of-chain} of such a good configuration as $\word(\pdsconf) = \Sigma(i_1) \dots \Sigma(i_k)$, where for any $j \in [1,k]$, $\Sigma(i_j) = a$ if $\labelmap(i_j) = \ell^h_a$. We say a good configuration $ \pdsconf =  \tuple{\labelmap, \lregmap,\gregmap}  \in \Confof{\pds}(n)$ is \emph{very good} if for all helpers $j \in [2,n]$ that is not a part of  the chain, $\labelmap(j) \in \{\ell_{\tt helper}, \ell_{\tt halt}  \}$

Recall that the leader from the state $\ell_q$ for some state $q \in \States$,  can perform a move to the label $\ell^\op_{q'}$, where $\op \in  \{ !a,?a \mid a \in \Alphabet\}$. From these labels, the leader can write $\tins{a}$, $\tprop{a}$ or $\tdel a$ on to $\greg_2$. Given a good configuration $\pdsconf_1$, we say  $\pdsconf_1 \xleadsto{\op} \pdsconf_2$ if the following holds
\begin{itemize}
	\item  $\labelmap(\pdsconf_1)(1),  \labelmap(\pdsconf_2)(1) \in \{ \ell_q \mid q \in \States \}$. 
	\item  The leader writes to the global register $\greg_2$ the commands $\tins{a}$, $\tprop{a}$ or $\tdel a$, exactly once.
\end{itemize}

The following observation is immediate. Suppose $\pdsconf_1 \xleadsto{\op} \pdsconf_2$ and let $\pdsconf_1  \movesto{} \pdsconf'_1$. Suppose $\labelmap(\pdsconf'_1)(1) = \ell^{\cmd}_{q'}$ and $\labelmap(\pdsconf_1)(1) = \ell_{q}$
 then $\tuple{q,\cmd,q'} \in \Trans$. We show in the following lemma that the sequence of moves also simulates the channel operation.
 
 \begin{lemma}
 	Suppose $\pdsconf_1 \xleadsto{\op} \pdsconf_2$ and $\pdsconf_1$ is a very good configuration, the the following holds. 
 	\begin{itemize}
 		\item If $\op = \tins a$ then $\word(\chain(\pdsconf_1)) = \epsilon$ and  $\word(\chain(\pdsconf_2)) = a$
		\item If $\op = \tprop a$ then $\word(\chain(\pdsconf_2)) = \word(\chain(\pdsconf_1)) \cdot a$ 
		\item If $\op = \tdel a$ then $\word(\chain(\pdsconf_1)) = a \cdot \word(\chain(\pdsconf_2))$ 
 	\end{itemize}
 	
 	Moreover there is another run $\pdsconf_1 \xleadsto{\op} \pdsconf'_2$ where $ \pdsconf'_2$ is also very good.
 \end{lemma}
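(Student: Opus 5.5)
We argue by case analysis on the operation $\op$, following the segment of $\pi$ between $\pdsconf_1$ and $\pdsconf_2$. This segment is delimited by the leader's single write of a command to $\greg_2$ and its return to some $\ell_{q'}$. Throughout we use freshness: the ids stored in the $\lreg_1$'s are pairwise distinct and distinct from the constants $\bot$, $\bot(a)$, $\ins(a)$, $\prop(a)$, $\del(a)$ and $\textsf{last}$. Hence a thread reading a given value from $\greg_2$ identifies the writer's intention unambiguously. The recurring tool is a small \emph{read-after-own-write} observation. If a thread writes $u$ to $\greg_2$ and later passes $\assert(\greg_2 \neq u)$ followed by $\assert(\greg_2 = u)$, then some other thread wrote $u$ in between. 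Since ids are unique, the only thread that can write another thread's id is one storing it in $\lreg_2$, i.e.\ its predecessor in the chain (or the leader). We first record an invariant for very good configurations: the chain is acyclic, every chain member is at some $\ell^h_b$, and exactly the tail has $\lreg_2=\bot$. This follows by induction along $\pi$ and is the base fact used in each case.

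\textbf{Case $\ins(a)$.} The leader is at $\ell_{\tt nlist}$ only when its $\lreg_2=\bot$, so $\word(\chain(\pdsconf_1))=\epsilon$. After the leader's $\lreg_2 := \greg_2$ with $\lreg_2 \neq \ins(a)$, the value read must be a helper id, since no other thread writes anything else in that window. The leader then writes $\bot$ followed by that id. By the read-after-own-write observation applied to Algorithm~\ref{main:alg:hlp:ins}, only the helper owning that id can pass its two asserts. The final $\assert(\greg_2=\bot(a))$ forces that helper to have responded to $\ins(a)$ and to jump to $\hlab_a$. It also ends with $\lreg_2=\bot$, so the new chain is $1\to i_1$ with word $a$.

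\textbf{Case $\prop(a)$.} Here we show by induction on the position $j$ in the chain that the outward wave visits $i_1,\dots,i_k$ in order. At each step, member $i_j$ reads its own id after $\prop(a)$ and writes $\lregmap(i_j)(\lreg_2)$. The tail then runs the insertion subprotocol exactly as in the previous case, inducting one helper storing $a$. The inward wave returns in reverse order because each $i_j$ waits to read its successor's id, which only the successor itself writes back. The leader's two asserts then guarantee that the wave completed before it writes $\bot(a)$. This yields $\word(\chain(\pdsconf_2))=\word(\chain(\pdsconf_1))\cdot a$.

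\textbf{Case $\del(a)$.} This is the main obstacle, because helpers guess their role (DelSelf versus DelHelp) and wrong guesses must be shown either to deadlock or to be harmless. I would first try to show that the only thread able to read $\del(a)$ followed by its own id is $i_1$, since the leader writes $\lregmap(1)(\lreg_2)$. The DelHelp writes form the sequence tail, $i_{k-1}, \dots$, each written only after reading the successor's id. Consequently $i_1$ reads $i_2$'s id only after the whole wave has completed. Among the values the leader can read twice separated by $\bot$, the only possibility is then $i_2$'s id, or $\textsf{last}$ if $k=1$. Hence the leader's new $\lreg_2$ equals $i_2$'s id, and $i_1$ halts having been in $\ell^h_a$. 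This gives $\word(\chain(\pdsconf_1))=a\cdot\word(\chain(\pdsconf_2))$. The delicate point is excluding interference from helpers outside the chain or from stale readers. This is where very goodness is used: outside threads are at $\ell_{\tt helper}$ or $\ell_{\tt halt}$ and cannot write ids of chain members.

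Finally, for the ``moreover'' part we exhibit the intended schedule. In it, every chain member guesses its role correctly and runs its protocol to completion. Any non-selected candidate helpers are simply not scheduled; such candidates only ever left $\ell_{\tt helper}$ by their own choice, so in this schedule they all stay there. The resulting $\pdsconf'_2$ is therefore very good.
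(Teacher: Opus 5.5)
Your proposal is correct and follows essentially the same route as the paper. Both proofs split into the cases $\ins(a)$, $\prop(a)$, $\del(a)$, and both rely on the same points:
\begin{itemize}
\item freshness of the ids ensures that only the selected helper (resp.\ the head) can read back its own id;
\item the $\bot(a)$ acknowledgement fixes the stored symbol;
\item wrong DelSelf/DelHelp guesses deadlock;
\item the leader's repeated read identifies the head's successor;
\item the very good $\pdsconf'_2$ comes from running the intended schedule and not scheduling the superfluous candidates.
\end{itemize}
Your explicit induction along the chain for the $\prop(a)$ wave fills in the case the paper omits as ``similar''. It is a welcome addition, but not a different approach.
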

 \begin{proof}
 	\begin{itemize}
 	\item Suppose  $\op = \tins a$,  then leader thread executed the commands in Algorithm \ref{main:alg:insempty}.  This case is enabled only if the local register $\lreg_2 = \bot$ in the leader. With this we have $ \word(\chain(\pdsconf_1))=\epsilon$. To show the latter, we argue that exactly one helper thread is inducted into the chain. More over the inducted helper will be in the state $\ell^h_a$.  Firstly note that we start with a very good configuration $\pdsconf_1$. Hence all the helpers are in either $\ell_{\tt helper}$ or in $\ell_{\tt halt}$ state. Let $\sset$ be the set of helpers in $\pdsconf_2$ whose label is not in 
 	$\{\ell_{\tt helper},\ell_{\tt halt} \}$. Then 
 upon the leader executing a write of $\tins a$ into the register $\greg_2$, the subset of helpers in $\sset$ read this value and a subset $\sset' \subseteq \sset$ of them write their own id onto it (see line 1-3 in Algorithm \ref{main:alg:hlp:ins}). These subset of helper threads are then stalled in line $4-5$ awaiting to read back their own id. The only possibility is for the leader to execute code in line $3$ of Algorithm \ref{main:alg:insempty}, where one of the id of  helper is chosen, recorded into its local register $\lreg_2$ and written back to $\greg_2$.   Note that at this point, helpers from $\sset \setminus \sset'$ can write their id into $\greg_2$. But this would lead to a deadlock where no thread can make any move. Since we start with an assumption that $\pdsconf_1 \xleadsto{\op} \pdsconf'_2$, we may assume this is not the case.  At this point the only possible move is for one of the helper thread in $\sset'$ to read its own id from $\greg_2$ and proceed with execution of lines $6-8$ of Algorithm \ref{main:alg:hlp:ins}. Clearly then exactly one helper is inducted into the chain and currently is in the label $\ell^h_a$.  We next note that all the helpers other than the one inducted into the chain, from the set   $\sset$ are either in lines $1-4$ of Algorithm \ref{main:alg:hlp:ins}. Since none of the write of these threads are useful for the protocol to complete, we may assume another execution with only the inducted helper in the set $\sset$, this will provide us with $\pdsconf_1 \xleadsto{\op} \pdsconf'_2$ where $ \pdsconf'_2$ is also very good.

\item We omit the case when $\op = \tprop a$ as this is very similar to the previous case.

\item Suppose  $\op = \tdel a$,  then we show that there is a unique helper in the chain, that is adjacent to the leader that drops out of the chain. The leader to simulate the $\op$, writes $\tdel a$ onto the register $\greg_2$ and awaits reading a value other than the one it wrote to the memory. Note that for this to happen, the chain cannot be empty.  We now argue that none of the helpers in the chain that are not adjacent can assume the role of helper that is adjacent to the leader. That is, they cannot execute the code DelSelf(a) in Algorithm \ref{main:alg:hlp:delself}.  Note that for this, the helper has to be able to read its own id from the global register $\greg_2$, this is impossible.  Next we argue that the helper that is adjacent to the leader cannot assume to be a helper that will not be deleted. That is, it cannot execute code DelHelp in Algorithm \ref{main:alg:hlp:delhlp}. For this, we note that the leader on writing the the id of the helper adjacent to it, expects either the value $\textsf{last}$ or an id of a thread (other than the one it wrote) written twice. Both of these are not possible if the helper adjacent to the leader executes delHelp. Hence we may assume that each of the helper threads in the chain have correctly guessed its position in the chain and has read the $\tdel a$ from $\greg_2$. Assuming the length of the chain is greater than $1$,  the last helper in the chain writes its id onto $\greg_2$, followed by its predecessor thread and so on till the helper that is adjacent to the leader. Since this helper is assumed to be executing the code DelSelf(a), on reading its own id followed by the id of its sucessor, writes this id again. This ensures that the id is written twice for the leader to read. More over since no other threads other than the helper in chain participate and since the deleted helper jumps to $\ell_{\tt halt}$, the resulting configuration is very good. The case when the length of the chain is $1$, the unique helper thread writes $\textsf{last}$ onto $\greg_2$ as required. 
\end{itemize}
 \end{proof}

 \begin{proof}[Proof of Lemma \ref{apx:undec:lem:correctf}]
 	Consider any run of $\pds$ of the form $\pi = \pdsconf_1 \movesto{}^* \pdsconf_n$. Then the run can be decomposed as $\pdsconf_1\xleadsto{\op_1} \pdsconf_2 \xleadsto{\op_2} \dots \xleadsto{\op_{n-1}} \pdsconf_n$. We may assume that every $\pdsconf_i$ is very good. A helper stuck in the
 	middle of a protocol either joins the chain later, or never joins. In the
 	first case we postpone its steps; in the second we delete them.  Then consider the following sequence,
 	\[
 			\tuple{q_1,w_1}, \tuple{q_2,w_2}, \cdots \tuple{q_n,w_n} \quad \text{ where } \quad \labelmap(\pdsconf_i)(1) = \ell_{q_i} \text{ and } w_i = \word(\chain(\pdsconf_i))
 	\] 
 	
 	Then this forms a valid sequence of $\pcs$, more over any two adjacent configurations are related by a transition, this defines the required run.
\end{proof}

We next show the other direction of the proof.

\begin{lemma}\label{lem:correctb}
	Let $(q_{\tt in}, \epsilon) \movesto{}^* (q, w)$ be a run of $\pcs$ of length
	$n$. Then there is a fresh $(n+1)$-run of $\pds$ of the form
	$\langle \boldsymbol{\lambda}_1, \boldsymbol{l}_1, \boldsymbol{g}_1 \rangle
	\movesto{}^*
	\langle \boldsymbol{\lambda}_2, \boldsymbol{l}_2, \boldsymbol{g}_2 \rangle$
	such that $\boldsymbol{\lambda}_1(1) = \ell_{q_{\tt in}}$,
	$\boldsymbol{\lambda}_2(1) = \ell_q$, and
	$w = w(\chain(\langle \boldsymbol{\lambda}_2, \boldsymbol{l}_2,
	\boldsymbol{g}_2 \rangle))$.
\end{lemma}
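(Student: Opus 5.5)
We argue by induction on the length $n$ of the \PCS\ run $(q_{\tt in},\epsilon) = (q_0,w_0) \movesto{\tau_1} (q_1,w_1) \cdots \movesto{\tau_n} (q_n,w_n) = (q,w)$. We show a stronger statement: for each $m \leq n$ there is a fresh $(n+1)$-run reaching a \emph{very good} configuration $\pdsconf_m$ with $\labelmap(\pdsconf_m)(1) = \ell_{q_m}$, $\word(\chain(\pdsconf_m)) = w_m$, and with at least $n-m$ helpers still at $\ell_{\tt helper}$ holding value $0$ in $\lreg_1,\lreg_2$. Thread $1$ is the leader. Threads $2,\dots,n+1$ are helpers that jumped to $\ell_{\tt helper}$ at the start. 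This is enough, because each step of the \PCS\ run inducts at most one helper. For the base case $m=0$, thread $1$ runs the leader election of Figure~\ref{main:fig:LE} alone, with a fresh id, and reaches $\ell_{q_{\tt in}}$. The leader's $\lreg_2$ must be $\bot$ (the empty chain), which we ensure by initialising it in the leader code. The resulting configuration is very good with empty word-of-chain.

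For the inductive step we split on $\tau_{m+1} = \tuple{q_m,\op,q_{m+1}}$. The leader first takes the non-deterministic branch at $\ell_{q_m}$ towards $\ell^{\op}_{q_{m+1}}$. We then schedule a single concrete interleaving, choosing exactly which threads participate; this is the point of the existential direction, so every helper not named stays idle.

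\textbf{Case $!a$, empty chain.} The leader takes Algorithm~\ref{main:alg:insempty}. One fresh helper $h$ reads $\ins(a)$ and generates a fresh nonce id, which is legal since all previous nonces are distinct. It writes the id, the leader copies it into $\lreg_2$, and the leader writes $\bot$ and then the id. $h$ passes both asserts of Algorithm~\ref{main:alg:hlp:ins}, sets $\lreg_2:=\bot$, writes $\bot(a)$ and moves to $\hlab_a$. The leader passes its assert and reaches $\ell_{q_{m+1}}$, with chain $1\to h$.

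\textbf{Case $!a$, non-empty chain $1\to i_1\to\cdots\to i_k$.} The leader writes $\prop(a)$ and then $\id(i_1)$. Each $i_j$ with $j<k$, in order, takes the branch of Algorithm~\ref{main:alg:hlp:append} that reads $\prop(a)$ and its own id, and writes $\id(i_{j+1})$. The tail $i_k$ runs the insertion sub-protocol with a fresh helper exactly as in the empty case. The return wave then proceeds inward: $i_k$ writes its id, each $i_j$ reads its successor's id and writes its own, and finally the leader reads $\id(i_1)$ and writes $\bot(a)$. The two consecutive asserts $\greg_2\neq\lreg_2$ and $\greg_2=\lreg_2$ are satisfied because between them some other thread overwrites and then restores the expected id. Each helper returns to its $\ell^h$ label with unchanged symbol, and the word becomes $w_m\cdot a$.

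\textbf{Case $?a$, with $w_m = a\cdot w_{m+1}$.} The head $i_1$ guesses DelSelf$(a)$ and every other chain member guesses DelHelp. The leader writes $\del(a)$ and then $\id(i_1)$. If $k=1$, $i_1$ writes \textsf{last} and halts, and the leader sets $\lreg_2:=\bot$. Otherwise the tail writes its id, the wave moves inward to $i_2$, which writes $\id(i_2)$, and $i_1$ reads it and writes $\bot$ followed by $\id(i_2)$. The leader stores $\id(i_2)$ and passes its asserts. $i_1$ halts, so the configuration is still very good and the word is $w_{m+1}$.

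The main obstacle is the bookkeeping in the non-empty append and dequeue waves. There we must check, assert by assert, that the global register holds exactly the required value at each read under our interleaving. In particular, every "$\neq$ then $=$" pair must be separated by precisely the intended overwrite, and helpers off the chain must never be scheduled. I would handle this with a small invariant that gives the value of $\greg_2$ and the label of each chain thread at each wave position, proved by an inner induction on $j$. Freshness holds throughout because each nonce is drawn from an unused value.
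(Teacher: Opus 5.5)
Your proposal is correct and follows essentially the same route as the paper. The paper also runs an induction over the \PCS\ steps with the same three invariants (a very good configuration, matching leader label and word-of-chain, and at least $n-i$ idle helpers at $\ell_{\tt helper}$), and it splits into the same cases: insertion into an empty chain, appending to a non-empty chain, and dequeue. You are more explicit than the paper about the concrete interleavings, and you also note that the leader's $\lreg_2$ must hold $\bot$ initially, which the paper leaves implicit; the underlying argument is the same.
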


\begin{proof}
	Let the run of the perfect channel system $\pcs$ be given by the sequence of transitions:
	\[
	\tuple{q_0,w_0} \movesto{\tau_1} \tuple{q_1,w_1} \movesto{\tau_2} \cdots \movesto{\tau_n} \tuple{q_n,w_n}
	\]
	where $q_0 = q_{\tt in}$ and $w_0 = \epsilon$. Note that this run is of length $n$. We will demonstrate the existence of a corresponding fresh $n+1$-run in the parameterized program $\pds$. The simulation will proceed in rounds; in each round $i$, we will simulate the execution of the single instruction $\tau_i$.
	
	In round $0$, we assume that a single unique thread successfully participates in the leader election protocol and is elected as the leader, reaching the state $\ell_{q_{\tt in}}$.
	
	At the beginning of round $i+1$, assuming we have successfully constructed a run of the form $\pdsconf_1 \movesto{}^* \pdsconf_2 \movesto{}^* \cdots \movesto{}^* \pdsconf_{i}$, we must show how to extend this run to simulate instruction $\tau_{i+1}$. We assume by the inductive hypothesis that at the end of round $i$, the following invariants are strictly maintained:
	\begin{enumerate}
		\item $\pdsconf_{i}$ is a \emph{very good configuration}.
		\item The leader thread is in the correct state and the channel contents match: $\labelmap(\pdsconf_{i})(1) = \ell_{q_i}$ and $\word(\chain(\pdsconf_i)) = w_i$.
		\item There is a sufficient pool of available threads: at least $n-i$ helpers are in the state $\ell_{\tt helper}$.
	\end{enumerate}
	
	We proceed by case analysis on the type of transition $\tau_{i+1}$:
	\begin{itemize}
		\item \textbf{Case 1: Simulating an Enqueue Operation ($\tau_{i+1} = \tuple{q_i,!a,q_{i+1}}$)} \\
		Our simulation allows the leader to simulate $\pdsconf_{i} \xleadsto{\tins a} \pdsconf_{i+1}$ if the channel is currently empty ($\word(\chain(\pdsconf_i)) = \epsilon$), or $\pdsconf_{i} \xleadsto{\tprop a} \pdsconf_{i+1}$ if the channel is non-empty. In both subcases, the simulation initiates from the label $\ell_{!a}^{q_i,q_{i+1}}$. 
		
		This simulation inevitably concludes in a \emph{very good configuration}. Because the enqueue operation requires inducting exactly one new helper into the active chain, the pool of available helpers decreases by one, leaving $n - (i + 1) = n - i - 1$ helpers in the state $\ell_{\tt helper}$. Furthermore, the leader progresses to the next state such that $\labelmap(\pdsconf_{i+1})(1) = \ell_{q_{i+1}}$, and the message $a$ is successfully appended to the chain, ensuring $\word(\chain(\pdsconf_{i+1})) = w_{i+1}$.
		
		\item \textbf{Case 2: Simulating a Dequeue Operation ($\tau_{i+1} = \tuple{q_i,?a,q_{i+1}}$)} \\
		In this scenario, our simulation dictates that the leader simulates $\pdsconf_{i} \xleadsto{\tdel a} \pdsconf_{i+1}$. Similar to the enqueue case, the simulation initiates from the corresponding label $\ell_{?a}^{q_i,q_{i+1}}$. 
		
		This process is guaranteed to end in a \emph{very good configuration}. During a dequeue operation, exactly one helper thread (the head of the chain) is removed and transitions to the terminal state $\ell_{\tt halt}$. Consequently, the number of active helpers in the chain decreases, but the pool of unassigned helpers in $\ell_{\tt helper}$ remains unchanged at $n-i$. Finally, the leader's state updates to $\labelmap(\pdsconf_{i+1})(1) = \ell_{q_{i+1}}$, and the first message $a$ is successfully removed from the simulated channel, maintaining the invariant $\word(\chain(\pdsconf_{i+1})) = w_{i+1}$, as required.
	\end{itemize}
	This completes the induction and the proof.
\end{proof}

	\section{Single Global Reachability  Acyclic Case}\label{apx:sec:oneglobal:acyclic}

In this appendix, we give the missing proofs of
Section~\ref{main:sec:acyclic}. Unlike the main text, we allow constants. We
first adapt the definitions to this setting.

\paragraph{Definitions in the presence of constants.}
In Section~\ref{sec:oneglobal}, we assumed that there are no constants other than $0$,
and that no run ever writes $0$. So every value written to $r$ was a nonce.
Here constants may be written. So we count only non-constant values. A run is
\emph{$k$-writing} if it writes some value of $\domain \setminus \constants$
at least $k$ times. The width of a support counts only its non-constant
values. A run or pseudo-run is \emph{replicating} if some thread writes to
$r$ a value of $\domain \setminus \constants$ that it did not generate
itself. A support may now contain constants, but only realizable ones
(defined below). Under the assumption of Section~\ref{sec:oneglobal}, these
definitions coincide with the ones given there.
As a first step, we bound the length of a run that reaches a given label
$\lambda$.

\begin{lemma}[See Lemma \ref{lem:dec:1pseudobound}]\label{apx:lem:dec:1pseudobound}
	If there is a  acyclic $1$-pseudo run that reaches a label $\lambda$,  then there is one whose
	length is bounded by $\MAX = |\Labels|.2^{(|\localregisters| \times(|\constants|+2))}$
\end{lemma}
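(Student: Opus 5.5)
We argue by a pigeonhole-and-cut on a suitably abstracted pseudo-run. Fix a shortest acyclic pseudo-$1$-run $\irun = \iconf_0 \Movesto{} \iconf_1 \Movesto{} \cdots \Movesto{} \iconf_m$ reaching $\lambda$. To each configuration $\iconf_t$ we attach an abstract state consisting of its label together with, for every local register $\lreg \in \localregisters$, a vector in $\{0,1\}^{|\constants|+2}$. The first $|\constants|$ bits record which constant (if any) the value of $\lreg$ equals. The two remaining bits record its provenance: whether the value is a nonce generated by the thread itself, or a value that entered through a pseudo-assignment (the provenance refinement discussed before Lemma~\ref{lem:dec:1pseudobound}). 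The number of abstract states is at most $|\Labels|\cdot 2^{|\localregisters|(|\constants|+2)} = \MAX$. The shared register is deliberately not abstracted: in a pseudo-run it can be overwritten by a pseudo-assignment at any point, so its contents carry no obligation across a cut.

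If $m \geq \MAX$, two configurations $\iconf_i$ and $\iconf_j$ with $i<j$ have the same abstract state. We delete the segment between them and replay the suffix $\iconf_j \Movesto{}\cdots\Movesto{}\iconf_m$ from $\iconf_i$, using a data renaming $h$ that maps the value of each register at $\iconf_j$ to its value at $\iconf_i$ and is the identity on constants. Non-constant values that are freshly created in the suffix are renamed to values unused so far. Each move of the suffix is then checked in turn:
\begin{itemize}
\item Constants are preserved, because the constant bits of $\iconf_i$ and $\iconf_j$ agree.
\item Nonce moves stay fresh, because the new values are chosen unused.
\item Pseudo-assignments are simply renamed by $h$. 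Any comparison with the shared register $\greg$ holds again, since we may re-insert a pseudo-assignment to $\greg$ right after the cut if needed.
\item Acyclicity is preserved by the provenance bits. A register holding a nonce at $\iconf_j$ is mapped to a register holding a nonce at $\iconf_i$, and likewise for pseudo-assigned values. Hence no pseudo-assignment in the renamed suffix can receive a nonce of the run, so $\nonce \cap \asupp$ stays empty.
\end{itemize}
This yields a strictly shorter acyclic witness, contradicting minimality, so $m < \MAX$.

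The main obstacle is that $h$ must be a well-defined injection on the non-constant values held at $\iconf_j$. Equality tests $\xvar=\yvar$ between two local registers in the suffix then evaluate identically after the cut. The abstraction above does not record equalities between local registers, so this is not automatic. My first attempt is a normalization lemma, applied before the pigeonhole. We transform the minimal run so that, at every configuration, two non-constant local registers with the same provenance are equal if and only if they were last assigned from a common source. We then argue that in a shortest run any two such registers of the same provenance class can be made to agree or disagree freely. For nonces this is done by regenerating a fresh value instead of copying. For pseudo-assigned values it is done by splitting a pseudo-assignment into two, which is allowed because the width is not constrained in this lemma. With the run normalized, equality between non-constant registers becomes a function of the per-register vectors already recorded, so $h$ is well defined and the count $|\Labels|\cdot 2^{|\localregisters|(|\constants|+2)}$ suffices. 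If this normalization fails, the fallback is to add the equality partition to the abstract state, which gives only the weaker bound described above and not the one stated.
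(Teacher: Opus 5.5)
\textbf{The gap.} Your primary route does not go through, for three reasons.

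First, the abstraction records no equalities between registers, and the normalization meant to compensate is false. A run must execute the program's instructions, so a copy $\yvar := \xvar$ cannot be replaced by a fresh draw. Even setting that aside, equality of contents is observable: in $\xvar := \ast;\ \yvar := \xvar;\ \ldots;\ \mathtt{if}\ \xvar = \yvar\ \mathtt{then\ goto}\ \lambda$ the test tells the two situations apart. Splitting a pseudo-assignment that is read by both $\xvar := \greg$ and $\yvar := \greg$ flips the same kind of test. More fundamentally, equality is a relation on registers, while your vectors are per-register data. Three registers holding nonces with $\xvar = \yvar \neq \zvar$ carry identical vectors, so no function of the vectors reproduces this pattern. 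Your renaming $h$ (sending the value of each register at $\iconf_j$ to its value at $\iconf_i$) is therefore in general neither well defined nor injective.

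Second, the shared register does carry an obligation across the cut. Suppose the thread executes $\greg := \xvar$ between two visits $\iconf_i,\iconf_j$ of a label, with $\xvar$ holding the same nonce at both, so your abstract states coincide. Suppose it later tests $\greg = \xvar$ before overwriting $\greg$. After the cut, $\greg$ still holds its old value. The only way to pass the test is to pseudo-assign the nonce $\xvar(\iconf_i)$, which puts it into $\asupp$ and makes the run cyclic. So $\greg$ must be included both in the equality pattern and in the nonce/support/constant classification.

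Third, even a well-defined map on the values held at $\iconf_j$ cannot simply be the identity elsewhere. A support value held at $\iconf_i$ but not at $\iconf_j$ may be pseudo-assigned again in the suffix, and it would then collide with its preimage.

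\textbf{What the paper does.} Its proof is precisely your fallback with these repairs. Two configurations are identified when they have the same label, the same equality pattern on all of $\localregisters\cup\globalregs$, and, for every register including $\greg$, the same constant or the same side of a fixed partition $(N,S)$ of $\domain$ separating the run's nonces from its support values. It then builds a bijective $(N,S)$-relabelling $f$ with $f(x(\iconf_j)) = x(\iconf_i)$ for every register $x$, completed by a bijection from $U_i\setminus U_j$ onto $U_j\setminus U_i$. It replays $f$ applied to the suffix from $\iconf_i$. No pseudo-assignments are inserted, and acyclicity follows because $f$ preserves $N$ and $S$.

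The pigeonhole is then over this finer equivalence, whose number of classes is not in general bounded by the displayed $|\Labels|\cdot 2^{|\localregisters|(|\constants|+2)}$. What the argument actually yields is $|\Labels|$ times the number of classified partitions of $\localregisters\cup\globalregs$, i.e.\ a bound of the form $\norm{\pds}^{\bigO(\norm{\pds})}$. That is how $\MAX$ is given in the main text, and it is all the later {\sc Pspace} arguments use. So do not try to reach the literal constant by coarsening the abstraction; the finer equivalence is the right one.
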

\begin{proof}
	
	Our first observation is that certain relabelling of values preserve acyclic $1$-pseudo runs. Let $(N,S)$ be a partition of $\domain$. We say that an acyclic $1$-pseudo run $\irun$  is $(N,S)$-consistent if $\nonce(\irun) \subseteq N$ and $\asupp(\irun) \subseteq S$.  A bijection $\ffun$ from $\domain$ to $\domain$ is a $(N,S)$-relabelling if it preserves the constant, maps $N$ onto $N$ and $S$ onto $S$.  That is for any $\val \in \constants,\   \val = \ffun(\val)$ and for $\val \in \domain\setminus \constants$, $\val \in N \iff \ffun(\val) \in N$, .

	A function $f$ from $\domain$ to $\domain$ naturally induces a function on configurations and we denote this by $f$ as well.  Note that for any configuration $\pdsconf = \tuple{\labelmap, \lregmap,\gregmap}$ and for any register $x \in \localregisters $, $\lregmap(i)(x) \in N \iff \lregmap(i)(\ffun(x))\in N$, for any thread $i$ in it. Similarly for the global register $\greg \in \globalregs$,  $\greg(\pdsconf) \in N \iff \ffun( \greg(\pdsconf)) \in N$. The function $\ffun$ extends to transitions and runs as well.  
	With this, 
	it is easy to check that if $\irun$ is a $(N,S)$-consistent acyclic $1$-pseudo run and $f$ is a $(N,S)$-relabelling then $f(\irun)$ is a also an acyclic $1$-pseudo run that is $(N,S)$-consistent.
	
	Let $\irun = \iconf_1 \dots \iconf_m$ be an acyclic $1$-pseudo run where $\iconf_1$ is not necessarily the initial configuration.  Let $(N,S)$ be any partition of $\domain$ so that $\irun$ is $(N,S)$-consistent. This is always possible since $\irun$ is acyclic.
	
	We now show that if there is a  witnessing acyclic pseudo $1$-run, then there is one whose length is bounded by $\MAX$. Let $\irun = \iconf_1 \dots \iconf_m$ be a witnessing acyclic $1$-pseudo run, with $m > \MAX$. For any $i,j \in [1,m]$,  we say $\iconf_i \equiv \iconf_j$ if the following holds.
	
	\begin{itemize}
		\item They have the same labels, that is $\Labels(\iconf_i) = \Labels(\iconf_j)$
		\item For any $x,y \in \localregisters \cup \globalregs$, $x(\iconf_i) = y(\iconf_i)$ if and only if $x(\iconf_j) = y(\iconf_j)$.
		\item For any $x \in \localregisters \cup \globalregs$, $x(\iconf_i) \in N $ if and only if $x(\iconf_j) \in N $
		\item For any $x \in \localregisters \cup \globalregs$, $x(\iconf_i) = \val$ for some $\val \in \constants$ if and only if $x(\iconf_j) = \val$
	\end{itemize}

	Then, by pigeon-hole principle, we can find $i,  j \in [1,m]$ such that $\iconf_i \equiv \iconf_j$. Let $U_i  = \{ x(\iconf_i) \mid x \in \localregisters \cup \globalregs \}$  and $U_j =  \{ x(\iconf_j) \mid x \in \localregisters \cup \globalregs \}$ be the set of values that appear in these two configurations. Let $N_i = U_i \cap N$ and $N_j = U_j \cap N$, similarly we let $S_i = U_i \cap S$ and $S_j = U_j \cap S$. We show that there is a $(N,S)$-relabelling $\ffun$ that satisfies the following properties. 
	\begin{itemize}
		\item For any $\val \in \domain \setminus ((U_i \cup U_j)) \cup \constants$, $\ffun(\val) = \val$.
		\item The relabelling $\ffun$ induces  a bijection between $N_i,  N_j$ and $S_i, S_j$.
		\item  For any $x \in \localregisters \cup \globalregs$, $\ffun(x(\iconf_j)) = x(\iconf_i)$
	\end{itemize}
	
	We show how to construct the required function $\ffun$.  Let  $\tset_i = U_i \setminus ( (U_i \cap U_j) )$ and $\tset_j = U_j \setminus  ((U_i \cap U_j))$. Note that  $|N_i| = |N_j|$ and $|S_i| = |S_j|$. Then we define a bijective  function $\gfun: \tset_i \to \tset_j$ such that $\val \in N_i$ if and only if $\gfun(\val) \in N_j$ and   $\val \in S_i$ if and only if $\gfun(\val) \in S_j$.
 Clearly such a bijection exists since $|N_i| = |N_j|$ and $|S_i| = |S_j|$. Using this, we define the bijection $\ffun$ as:
	\[
	\ffun(\val) = \begin{cases}
		\val & \text{ If } \val \in \domain \setminus ( U_i \cup U_j ) \text{ or } \val \in \constants \\
		x(\iconf_i) & \text{ If } \val  = x(\iconf_j)\ (\text{ In this case,  } \val \in U_j)\\
		\gfun(\val) & \text{ Otherwise }
	\end{cases}
	\]

	The explanation for the function $\ffun$ is as follows.  Any value that does not appear in $U_i \cup U_j$ or is a constant is mapped to itself. Every value that appear in $U_j$, appears in some register. It is then mapped to the corresponding value from the same register in  $\iconf_i$. Note that this automatically provides a bijection between values in  $U_i \cap U_j$. What remains to be mapped is $U_i \setminus U_j$, this is provided by $\gfun$.
	
	Applying this to run $\iconf_j \dots \iconf_m$ we get another $(N,S)$-consistent acyclic pseudo $1$-run which begins at the configuration $\iconf_i$. Since 
	$\iconf_1 \dots \iconf_i$ is also $(N,S)$-consistent, we get that 
	$\iconf_1 \ldots \iconf_i f(\iconf_{j+1}) \dots f(\iconf_m)$ is a short $(N,S)$-consistent acyclic $1$-pseudo run (It is important that both of them be $(N,S)$ consistent -- otherwise we cannot guarantee acyclicity). More over the run is fresh since every value that is not in $U_i \cup U_j$ is mapped to itself. Since the original run was fresh, it is guaranteed that any value in  $\iconf_{j+1} \dots \iconf_m$ will not be from $N_i \cup N_j$ and hence the new value  in $f(\iconf_{j+1}) \dots f(\iconf_m)$ is same as in the original run.
	Applying this repeatedly, gives us the result.
\end{proof}

We next show that checking if there is an acyclic witnessing run can be accomplished in {\sc PSpace}. The lemma we prove is a generalization of the one in the main section, it involves the presence of constants. In the lemma, we assume support for any word involving constants from $\constants' \subseteq \constants$ and all other values appear at-most $k \in \Naturals$ times.

\begin{lemma}(See Lemma \ref{lem:dec:tracking})
	  For a given $k \in \Naturals$ and $\constants' \subseteq \constants$, there is an algorithm that checks if there is a witnessing acyclic pseudo-$1$ run $\rho$ such that the following holds:
	  \begin{itemize}
	   \item Each non-constant value  in $\supp(\rho)$ appears at-most $k$ times. That is $\project{\wdwidth(\supp(\rho))}{ \domain \setminus \constants} \leq k$ (it is an acyclic $1$-pseudo run  whose support is bounded by  $k$). Here, $\project{\word}{\Sigma}$ returns a word by deleting from $\word$, all the values not in $\Sigma$.
	   \item Each constant appearing  in $\supp(\rho)$ is from $\constants'$. That is, $\project{\wdwidth(\supp(\rho))}{\constants}  =\project{\wdwidth(\supp(\rho))}{\constants'} $
	   \end{itemize}
	   More over, the algorithm runs in space polynomial in the size of the program and $\log(k)$. 
\end{lemma}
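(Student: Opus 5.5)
We will decide the question by a nondeterministic symbolic search over abstract states and then appeal to Savitch's theorem ({\sc NPspace} $=$ {\sc Pspace}). An abstract state is a tuple consisting of: the current label of the single thread; a partition of the register set $\localregisters \cup \{\greg\}$ recording which registers hold equal values; for each block, a tag saying whether its value is a specific constant $\val \in \constants$, a nonce generated by the thread (class $N$), or a value received through a pseudo-assignment (class $S$); and, for each $S$-block, a counter in $[0,k]$ giving how many times that value has been pseudo-assigned so far. The number of blocks is at most $|\localregisters|+1$, so each state takes space polynomial in $\norm{\pds}$ and $\log(k)$. This is the same provenance bookkeeping already used in the proof of Lemma~\ref{apx:lem:dec:1pseudobound}. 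The initial abstract state has every register in one block tagged with the constant $0$.

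The transitions follow the program semantics. Copies and constant assignments merge or split blocks and copy the tag (and counter, for $S$). A test $\xvar=\yvar$ or $\xvar\neq\yvar$ is checked against the partition. For $\zvar:=*$ we put $\zvar$ in a fresh singleton block tagged $N$; freshness means it equals no current value and no constant. Pseudo-assignments come in three kinds.
\begin{itemize}
\item $\greg \leftarrowtail \val$ with $\val\in\constants'$: move $\greg$ into the block of $\val$.
\item $\greg \leftarrowtail$ (a value in some current $S$-block): increment that block's counter, blocking if it would exceed $k$.
\item $\greg \leftarrowtail$ (a brand new $S$-value): put $\greg$ in a singleton $S$-block with counter $1$.
\end{itemize}
Acyclicity is enforced by forbidding any pseudo-assignment of a value from an $N$-block.

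The key soundness point, and the step I expect to be the main obstacle, is forgetting values. When an $S$-value leaves all registers, we drop its counter. We must argue this loses nothing: a fresh acyclic pseudo-run can be relabelled so that a value absent from all registers is never pseudo-assigned again. Any later pseudo-assignment of that value can instead use a brand new $S$-value, and this only lowers the counts.
\begin{itemize}
\item This is legitimate because, once the value is gone from the registers, the thread cannot distinguish the old value from a new one. The tests only compare current register contents, and freshness of later nonces is unaffected because the new value is in $S$, not $N$.
\item It must be done carefully in the direction ``concrete run $\Rightarrow$ abstract path''. We replace each such reappearance of a forgotten $S$-value by a fresh label via an $(N,S)$-relabelling applied to a suffix, as in Lemma~\ref{apx:lem:dec:1pseudobound}. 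The resulting run is still witnessing, still acyclic, uses only constants from $\constants'$, and has non-constant width at most $k$.
\end{itemize}

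Conversely, from any abstract path we construct a concrete pseudo-run. Each $N$-block gets a new nonce, each new $S$-block gets a new value disjoint from all nonces, and constants are used literally. Every non-constant support value is then pseudo-assigned exactly as many times as its counter records, so the width is at most $k$, and acyclicity holds by construction. The algorithm guesses the path step by step, storing only the current abstract state, and accepts when the target label is reached. Termination is guaranteed by bounding the path length by the number of abstract states, which is exponential and hence fits in a polynomial-size step counter.
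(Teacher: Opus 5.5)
Your proposal is correct and follows essentially the paper's argument. The paper keeps the same bookkeeping: the label, the equality pattern of registers, nonce/support/constant provenance, and per-support-value counters capped at $k$ that are freed once a value leaves every register. It encodes this as a finite register automaton over two slot domains, $\domain_1$ for nonces and $\domain_2$ for support values, each of size $|\localregisters\cup\globalregs|+1$, which is a non-canonical version of your partition-plus-tags states. Your relabelling step in the concrete-to-abstract direction is harmless but unnecessary: a pseudo-assignment of a currently absent value can be simulated directly by your ``brand new $S$-value'' rule, which only lowers the counts.
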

\begin{proof}
	We reduce the given problem to reachability in a register automata, operating over a finite domain. It is easy to see that the state space of such an automata is finite and reachability can be checked in space at-most logarithmic is the number of states. For the reduction, we will use the constants $\constants'$, two finite set of domains $\domain_1$ and $\domain_2$ each of size $\eta = | \localregisters \cup \globalregs | +1$. The idea behind the two different domains is to use the values $\domain_1$ as nonces and $\domain_2$  for values from support.  Note that we are only interested in acyclic run. The register automata that we construct has as its local register $\localregisters \cup \globalregs$ as its registers. The states of the register automata is $\Labels \times [1,k]^{\eta}$, let us denote this by $\States_{\tt ac}$.  Then the size of the states is at-most $|\localregisters| \times k^\eta$.
	
	We explain the components of the configuration. The configuration of the register automata is an element from   $\States_{\tt ac} \times\functions{\localregisters \cup \globalregs }{ \domain_1 \cup \domain_2 \cup \constants}  $. Notice that the size of the set of all configurations including the registers is at-most  $|\Labels| \times k^{\eta} \times | \localregisters \cup \globalregs | ^{\eta + |  \constants | }$ and hence a finite state.
	A configuration  is of the form $\tuple{\lab, i_1, \dots i_\eta, \lregmap}$. In such a configuration, $\lab \in \Labels$ is a location of the program, $i_1 \dots i_\eta$ is one entry per each element in $\domain_2$ and denotes the number of times the value was written to the global register (via a pseudo move). This will become clear as we explain the construction. The function $\lregmap$ provides the values stored in each of the register, it can either be a nonce, in which case it contains a value from $\domain_1$, a value it reads from the global register in which case it contains a value from $\domain_2$ or it can be a constant, in which case it is from $\constants'$. We explain the moves between such configurations below. For any pair of configurations $\tuple{\lab, i_1, \dots i_\eta, \lregmap}$ and $\tuple{\lab', i'_1, \dots i'_\eta, \lregmap'}$, we say $\tuple{\lab, i_1, \dots i_\eta, \lregmap} \movesto{} \tuple{\lab', i'_1, \dots i'_\eta, \lregmap'}$ if and only if one of the following holds:
	\begin{enumerate}
		\item If there is a $j \in [1,\eta]$ such that $i_j < k$, then $\ell' = \ell$, for all $m \neq j \in [1,\eta] $, $i'_m = i_m$, $\lregmap' = \updatefun{\lregmap}{\greg}{\val_j}$.
		That is, we allow any value from $\val_j \in \domain_2$ to be written to the global register at any point, provided the current utilization $i_j$ for it is strictly less than $k$.  
		\item If there is a $j \in [1,\eta]$ such that $\val_j \notin \lregmap$, then $\ell' = \ell$ , for all $m \neq j \in [1,\eta] $, $i'_m = i_m$, $\lregmap' = \lregmap$ and $i'_j = 0$. That is, we allow the current utilization of any value from $\val_j \in \domain_2$ to be reset provided it is not present in the configuration.
			\item We have $\ell' = \ell$, for all $m \in [1,\eta] $, $i'_m = i_m$, $\lregmap' = \updatefun{\lregmap}{\greg}{\val}$ for some $\val \in \constants'$.That is, we allow values from $\constants'$ to be pseudo assigned to the global register.
		\item If the current instruction is of the form $\assign{\xvar}{\yvar}$, then $\ell' = \nextof{\ell}$ , for all $m \in [1,\eta] $, $i'_m = i_m$, $\lregmap' = \updatefun{\lregmap}{\xvar}{\lregmap(\yvar)}$ . That is, we allow automata to mimic the assignment moves.
		\item If the current instruction is of the form $\assign{\xvar}{*}$, then $\ell' = \nextof{\ell}$ , for all $m \in [1,\eta] $, $i'_m = i_m$, $\lregmap' = \updatefun{\lregmap}{\xvar}{\val_1}$ for some $\val_1 \in \domain_1$, such that $\val_1 \notin \lregmap$. That is, we allow automata to mimic the non-deterministic assignment moves.
		\item If the current instruction is of the form $\goto\ \ell'$, then  for all $m \in [1,\eta] $, $i'_m = i_m$, $\lregmap' = \lregmap$. That is, we allow automata to mimic the non-deterministic jump.
		
		\item If the current instruction is of the form $\textbf{ if }(\xvar \overset{?}{=} \yvar)\textbf{ then }{ \goto\ \ell'}\textbf{ end}$ and if $\lregmap(\xvar) \overset{?}{=} \lregmap(\yvar)$, then for all $m \in [1,\eta] $, $i'_m = i_m$, $\lregmap' = \lregmap$, here $\ell'$ is as specified in the goto statement if the expression evaluates to $\true$ and $\ell' = \nextof{\ell}$ otherwise. That is, we allow automata to mimic the guarded jump.
	\end{enumerate}

	\smallskip
	\textbf{Correctness} There is an acyclic pseudo-1-run reaching a
	given label, with support bounded by $k$ and using only constants from
	$C'$, if and only if the automaton has a run reaching the same label.
	
	\emph{(Forward direction.)} Given such a real run $\rho$, replay it move
	by move: assign each non-constant support value its own $D_2$-slot for as
	long as it stays in some register, incrementing that slot's counter every
	time the value is pseudo-written (never exceeding $k$, since $\rho$
	respects that bound) and freeing the slot the moment the value leaves
	every register; replay every pseudo-written constant, which must lie in
	$C'$, via rule 3; and replay every ordinary assignment, fresh assignment,
	and jump exactly via rules 4--7, drawing a genuinely fresh $D_1$ label each
	time the program executes $x := \ast$. This yields an accepting run of the
	automaton that reaches the same label.
	
	\emph{(Backward direction.)} Given an accepting run of the automaton,
	unfold it into a real run by inventing, each time a $D_2$-slot or a
	$D_1$-label is claimed after being free, a value that has never been used
	before anywhere in the run, and reusing that same freshly-invented value
	for as long as the slot or label stays claimed. Interpret pseudo-written
	constants as themselves. The resulting run pseudo-writes each non-constant
	support value only as many times as its slot's counter was incremented,
	hence at most $k$ times; it pseudo-writes only constants from $C'$; and --
	because every reused slot or label is given a brand-new real value rather
	than reusing an old one -- the resulting configuration (location together
	with register contents) can never repeat, so the run is automatically
	acyclic, regardless of whether the automaton's own path revisited a
	control state. So it is a witnessing acyclic run, as required.
	
	\smallskip
	\textbf{Space bound} Deciding reachability in this
	automaton only ever needs to hold one configuration in memory at a time:
	guess which rule to fire next, overwrite the stored configuration with the
	result, and accept as soon as the target label is reached (giving up
	after more steps than there are states, which suffices by the standard
	fact that if a state is reachable at all, it is reachable along some
	simple path). Since one configuration already fits in space polynomial in
	the size of the program and logarithmic in $k$, as computed above, so does
	the whole procedure.
\end{proof}

We next show that any realisable constants can be assumed freely, that is if there is a run that writes a constant, then we may assume that it can be generated as many times as required. More over, if a set of constants is generatable, then there is a run that can produce any word involving these constants.

\begin{lemma}
	If there is a run that realizes a constant $\val \in \constants$, then for any $k \in \Naturals$,  there is a run that writes $\val$ at least $k$ times.
\end{lemma}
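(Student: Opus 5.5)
The plan is to replicate the whole run by taking disjoint, relabelled copies of the threads, as in the proof sketch of Lemma~\ref{lem:dec:kbounded}. The only difference is that the written value is now a constant $\val$, which is fixed by every relabelling.

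\textbf{Step 1 (decomposition).} Let $\run$ be a fresh $n$-run that realizes $\val$, i.e.\ some move of $\run$ writes $\val$ to $\greg$. Cut $\run = \run_0 \cdot \run_1$. The prefix $\run_0$ ends just before the last write to $\greg$ preceding (and including) the first write of $\val$. The segment $\run_1$ begins with that write and ends right after it. We may drop moves after the first write of $\val$.

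\textbf{Step 2 (relabelled copies).} For each $i \in [1,k]$ take a bijection $\ffun_i$ of $\domain$ that fixes $\constants$. We choose the $\ffun_i$ so that the nonce sets $\ffun_i(\nonce(\run))$ are pairwise disjoint and disjoint from $\constants$. Since $\run$ is fresh, every non-constant value in it originates from a nonce of $\run$, so $\ffun_i(\run)$ is again a fresh $n$-run writing $\val$. Run $\ffun_i(\run)$ on its own block of $n$ threads, giving $kn$ threads in total.

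\textbf{Step 3 (interleaving).} The main obstacle is the shared register. Copy $i$ reads $\greg$, and those reads are only correct if $\greg$ holds the value copy $i$ itself last wrote. This is why the segment boundaries must start with writes, exactly as in Lemma~\ref{lem:dec:1greg:forward}. To handle this I would schedule the copies sequentially and not interleave them. Execute $\ffun_1(\run)$ entirely, then $\ffun_2(\run)$, and so on.

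The issue is that $\ffun_{i}(\run)$ starts from $\greg = 0$, while after copy $i-1$ the register holds $\val$. I would fix this by splitting $\run$ at its first write to $\greg$. The moves before that write only read $\greg$ as $0$. So, before any segment containing writes is scheduled, first execute the read-only-of-$0$ prefix of every copy while $\greg$ is still $0$. After that, each remaining suffix starts with a write and is therefore independent of the prior contents of $\greg$. Then execute the suffixes one after another.

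Each suffix contains a write of $\val$, so the combined run writes $\val$ at least $k$ times. Freshness holds because the nonce sets of the copies are disjoint and the constants are untouched. If the constant $\val = 0$ written in a prefix matters, the same argument applies, since $0$ is fixed by every $\ffun_i$.
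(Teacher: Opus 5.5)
Your proof is correct and follows the paper's own argument. Cut the run at its first write to $\greg$, schedule all $k$ prefixes (which only read the initial $0$), then run the $k$ suffixes one after another; each suffix begins with a write and contains a write of $\val$. Two small notes: your explicit relabelling to make the copies' nonce sets disjoint fills in a freshness detail the paper leaves implicit, and your Step~1 cut is superseded by the split in Step~3, so you can drop it.
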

\begin{proof}
	Let $\run$ be the run that realises $\val$. We assume that the $\run$ writes $\val$ as the last transition. Then $\run$ can be decomposed as $\run_0 \run_1 $, where $\run_0$ does not contain any writes to the global memory and $\run_1$ is such that it starts with a write and ends with a write of value $\val$ to the global register. In order to write the constant $\val$,  $k$ times, we copy the $\run$ as $\run^1, \dots \run^k$ into $k$ many copies. For wach $i \in [1,k]$, we can decompose the run as $\run^i  = \run^i_0 \run^i_1$, following the decomposition of $\run$. Then the required run is obtained as   $\run^1_0 \run^2_0  \dots \run^k_0 \run^1_1 \run^2_1 \dots \run^k_1$. Notice that each of $\run^i_0$ only read the initial value from $\greg$ and $\run^i_1$ start with a write to the global register and writes $\val$ to the global register in the end. Hence they  can be scheduled independently.
\end{proof}

\begin{lemma}\label{apx:lem:oneglobal:const}
	Let $\relconst \subseteq \constants$ be a realizable set of constants, then for any word $w \in \relconst^*$, there is a run that writes $w$ to the global memory. 
\end{lemma}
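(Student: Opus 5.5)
The plan is to build the run for $w = \val_1 \dots \val_m \in \relconst^*$ by combining independent copies of realizing runs, one copy for each letter occurrence. I read ``realizable'' as: for every $\val \in \relconst$ there is a run $\run^{\val}$ that writes $\val$ to $\greg$. Since each letter of $w$ is a single constant, the previous lemma also applies. But it only helps if the resulting runs can be interleaved in the order $w$ prescribes, so I would redo its decomposition argument rather than cite it as a black box.

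First, for each $\val \in \relconst$, I fix a run $\run^{\val}$ whose last transition writes $\val$. This is possible by truncating a realizing run at its first write of $\val$. I decompose it as $\run^{\val} = \run^{\val}_0 \run^{\val}_1$. Here $\run^{\val}_0$ is the maximal prefix with no write to $\greg$, so it only ever reads the initial value $0$. The segment $\run^{\val}_1$ starts with a write to $\greg$ and ends with the write of $\val$.

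For each position $j \in [1,m]$, I take a fresh copy $\run^{(j)}$ of $\run^{\val_j}$ on a disjoint set of threads. I then rename its nonces by a bijection that fixes $\constants$, so that different copies share no nonces. This renaming preserves validity because the program only compares values for equality and only mentions constants, and it keeps the combined run fresh. The combined run is
\[
\run^{(1)}_0 \run^{(2)}_0 \cdots \run^{(m)}_0 \;\run^{(1)}_1 \run^{(2)}_1 \cdots \run^{(m)}_1 .
\]

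I then argue validity as in the previous lemma:
\begin{itemize}
\item All $\run^{(j)}_0$ run while $\greg$ still holds $0$, since none of them writes.
\item Each $\run^{(j)}_1$ begins by overwriting $\greg$. Its behaviour is therefore independent of what $\run^{(j-1)}_1$ left behind, and it touches only its own threads' local registers.
\end{itemize}
Each $\run^{(j)}_1$ ends by writing $\val_j$, and these writes occur in order $j = 1, \dots, m$. So the combined run writes the subsequence $\val_1 \dots \val_m = w$.

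The main obstacle I expect is justifying that $\run^{(j)}_1$ can run ``unaffected'' after $\run^{(j-1)}_1$. The concern is that reads inside $\run^{(j)}_1$ of the shared register happen only after some write within $\run^{(j)}_1$ itself. That holds because the first move of the segment is a write and the segment runs contiguously. A second concern is freshness when some $\val_j$ coincides with a nonce, but constants are excluded from nonces, so this does not arise. If ``realizable'' instead means that a single run writes all of $\relconst$, then I would truncate that run at the first write of each needed constant to obtain the $\run^{\val}$, and the argument is unchanged.
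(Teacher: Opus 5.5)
Your proposal is correct and is essentially the paper's proof: you split each realizing run into its write-free prefix and a suffix that starts with a write, run all prefixes first while $\greg$ still holds $0$, and then run the suffixes in the order dictated by $w$. Your extra care with disjoint thread sets and with renaming nonces across copies, so that the combined run stays fresh, fills in details the paper leaves implicit.
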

\begin{proof}
	Since $\relconst$ is given to be realizable, for each $\val \in \relconst$, there is a run $\run_\val$ that realizes it.  Let $w = a_1 a_2 \dots a_m$, then for each $a_i$, there is a run $\run_{a_i}$ that realizes it.   We decompose this run as $\pi_{a_i} = \pi^0_{a_i} \pi^{1}_{a_i}$, where $ \pi^0_{a_i}$ are sequence of transitions that do not write to the global register and $\pi^{1}_{a_i}$ is such that it starts with a write. Then the required run $\run = \pi^0_{a_1} \dots \pi^0_{a_m} \pi^1_{a_1} \dots \pi^1_{a_m}$. Clearly such a run writes $w$ to the global register.
\end{proof}

As an immediate corollary of Lemma \ref{apx:lem:oneglobal:const} and Lemma \ref{lem:dec:kbounded}, we obtain the following lemma.

\begin{lemma}
	If there is a $k$-writing run of $\pds$ and $\relconst \subseteq \constants$ is realizable, then for any $w \in ((\domain\setminus \constants) \cup \relconst)^*$  with $\wdwidth(w\setminus \relconst) \leq k$, there is a run that writes $w$.
\end{lemma}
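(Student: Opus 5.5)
The plan is to combine the two constructions already available. The first is the cloning argument of Lemma~\ref{lem:dec:kbounded}, which turns the $k$-writing run into a run writing any word of non-constants of width at most $k$. The second is the per-constant construction of Lemma~\ref{apx:lem:oneglobal:const}, which writes any word over $\relconst$. The glue is the standard segment decomposition: a run writing a word $a_1\dots a_m$ splits into a write-free prefix followed by segments $\pi_1,\dots,\pi_m$. Each $\pi_i$ opens with a write to $\greg$, leaves $a_i$ in $\greg$, and includes the subsequent non-writing moves. Such a segment is insensitive to the value of $\greg$ before it starts, so segments coming from different threads can be interleaved in any order.

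First, write the target word as $w = a_1 \dots a_m$ and put $w_N = \project{w}{\domain\setminus\constants}$, so $\wdwidth(w_N)\leq k$. By Lemma~\ref{lem:dec:kbounded} there is a run $\run^N$ that writes $w_N$. For each position $i$ with $a_i \in \relconst$, Lemma~\ref{apx:lem:oneglobal:const} (applied to the one-letter word $a_i$, or to a run realizing $a_i$) gives a run $\run^{i}$ writing $a_i$. All of these runs are taken on pairwise disjoint sets of threads, and their nonces are renamed apart, which is allowed since nonces can be relabelled injectively. Renaming inside $\run^N$ must fix the non-constant values of $w_N$ themselves. The constant runs only need to write constants, so their nonces can be moved away from everything else. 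The combined system is the union of these threads, and freshness is preserved because nonce sets are disjoint and no nonce equals a constant.

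Next, decompose $\run^N$ into its write-free prefix and segments $\sigma_1,\dots,\sigma_{|w_N|}$ aligned with the letters of $w_N$. Decompose each $\run^i$ as $\pi^{i,0}\pi^{i,1}$, with $\pi^{i,0}$ write-free and $\pi^{i,1}$ starting with a write and ending with $\greg=a_i$. The combined run executes all write-free prefixes first. Each of these only reads the initial value $0$, which is still present because nothing has been written yet. After that, the segments are executed in the order dictated by $w$: for position $i$, use the next unused $\sigma$ segment if $a_i$ is a non-constant, and $\pi^{i,1}$ if $a_i$ is a constant. Each scheduled segment begins by overwriting $\greg$ and ends with $\greg=a_i$, so the combined run writes $w$.

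The main obstacle is checking that the segments are really independent. A segment's reads must occur only after its own opening write, so that the interleaved threads never observe one another's values. In addition, the write-free prefix of $\run^N$ must read only $0$, and the segments of $\run^N$ must stay valid when other runs' segments are inserted between them. Both follow from the way segments are cut, namely each one from a write up to but excluding the next write. The one subtlety is that $\run^N$'s later segments may depend on local state built by its earlier segments. That is harmless, because we preserve the relative order of $\run^N$'s segments and only insert foreign segments between them, and those foreign segments touch no thread of $\run^N$ and fully overwrite $\greg$ before $\run^N$ resumes.
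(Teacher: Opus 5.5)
Your proposal is correct and follows the paper's own route. The paper states this lemma as an immediate corollary of Lemma~\ref{apx:lem:oneglobal:const} and Lemma~\ref{lem:dec:kbounded}, and your argument supplies the glue that combination needs. You run the copies on disjoint thread sets with nonces renamed apart, execute all write-free prefixes first, and then schedule the write-initiated segments in the order dictated by $w$. This is the same scheduling the paper uses in proving Lemma~\ref{apx:lem:oneglobal:const}.
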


As a consequence  we have:
\begin{corollary}\label{apx:cor:dec:acyclic}
	Let $k$ be largest value below $\MAX$ such there is a $k$-writing run. Let $\constants' \subseteq \constants$ be the set of all constants that are realizable.  There is an acyclic witnessing run iff there is a witnessing pseudo $1$-run such that  $\project{\wdwidth(\supp(\rho))}{ \domain \setminus \constants} \leq k$ and $\project{\wdwidth(\supp(\rho))}{\constants}  =\project{\wdwidth(\supp(\rho))}{\constants'} $.
\end{corollary}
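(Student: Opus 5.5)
The plan is to argue the two directions separately, following the proof of Lemma~\ref{lem:dec:1greg:forward} but now tracking non-constant and constant values in the support independently.

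\emph{Forward direction.} Let $\run$ be an acyclic witnessing run. Projecting $\run$ onto the witnessing thread gives an acyclic witnessing pseudo $1$-run $\irun$, and $\run$ supports it. Every constant in $\supp(\irun)$ is written to $\greg$ by $\run$, so it is realizable; this gives $\project{\wdwidth(\supp(\rho))}{\constants}=\project{\wdwidth(\supp(\rho))}{\constants'}$. For the width bound on non-constant values we cannot use $\irun$ directly, since its width may exceed $\MAX$. We first shorten it with Lemma~\ref{apx:lem:dec:1pseudobound}. The relabelling used there fixes constants and maps support values to support values, so the shortened run $\irun'$ still has only realizable constants in its support. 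Its non-constant values all occur in $\supp(\irun)$, so $\run$ writes some non-constant value $w'$ times, where $w'=\project{\wdwidth(\supp(\irun'))}{\domain\setminus\constants}\le\MAX$. Hence a $w'$-writing run exists, and by maximality of $k$ we get $w'\le k$. This is the step that needs the most care. We must check that the relabelling step of Lemma~\ref{apx:lem:dec:1pseudobound} never raises the multiplicity of a non-constant support value beyond what $\run$ writes. It should not, because the pseudo-assignments of the shortened run are a relabelled subsequence of those of the original, and the relabelling maps $S$ injectively into values that $\run$ writes at least as often. If this tracking turns out to be delicate, a fallback is to work directly with the width of $\irun$: the run $\run$ is $\wdwidth$-writing for that width, and this is capped by taking the minimum with $\MAX$ through monotonicity (a $j$-writing run is $j'$-writing for every $j'\le j$).

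\emph{Backward direction.} Let $\irun$ be a witnessing acyclic pseudo $1$-run whose non-constant support width is at most $k$ and whose constants all lie in $\constants'$. Set $\relconst=\constants'$, which is realizable. The combined lemma above (Lemma~\ref{apx:lem:oneglobal:const} together with Lemma~\ref{lem:dec:kbounded}) then gives a run $\run$ that writes exactly $\supp(\irun)$. To make the resulting run fresh, we rename the nonces of $\irun$ so that they are disjoint from $\nonce(\run)$; acyclicity ensures that none of these nonces lie in the support. The interleaving construction from the proof of Lemma~\ref{lem:dec:1greg:forward} then goes through verbatim. We cut $\run$ into segments $\run_0\run_1\cdots\run_m$, where each $\run_i$ begins with a write and leaves the value $\val_i$ in $\greg$. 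Each such segment therefore simulates the pseudo-assignment $\dotarrow{(\greg\leftarrowtail\val_i)}$, whether $\val_i$ is a nonce or a constant. The composed run $\run_0\rho_0\run_1\rho_1\cdots\run_m\rho_m$ is a fresh acyclic witnessing run.
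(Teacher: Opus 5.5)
Your backward direction is correct and is the paper's route: realizability of $\constants'$ and the $k$-writing run give, via Lemma~\ref{apx:lem:oneglobal:const} and Lemma~\ref{lem:dec:kbounded}, a run writing $\supp(\irun)$. After renaming the nonces of $\irun$, the interleaving of Lemma~\ref{lem:dec:1greg:forward} then gives a fresh acyclic witnessing run.

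\textbf{The gap is in your primary width argument for the forward direction.} You claim that $\run$ writes some non-constant value $w'$ times, where $w'$ is the width of the shortened run $\irun'$. This is false in general, and the justification ("$\ffun$ maps $S$ injectively into values that $\run$ writes at least as often") overlooks what the shortening does. In the proof of Lemma~\ref{apx:lem:dec:1pseudobound} the new run is $\iconf_1\cdots\iconf_i\,\ffun(\iconf_{j+1})\cdots\ffun(\iconf_m)$. The prefix is \emph{not} relabelled, while $\ffun$ sends $x(\iconf_j)$ to $x(\iconf_i)$.

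Suppose a register holds a support value $a$ at $\iconf_i$ and a different support value $b$ at $\iconf_j$. Then every pseudo-assignment of $b$ after $\iconf_j$ becomes a pseudo-assignment of $a$. So in $\supp(\irun')$, $a$ occurs once for each $a$ before $\iconf_i$ and once for each $b$ after $\iconf_j$. Both counts can equal the width of $\irun$, so $\irun'$ can have up to twice that width, and repeated shortening compounds this. Meanwhile $\run$ need only write $a$ and $b$ as often as they occur in $\supp(\irun)$. Injectivity of $\ffun$ does not help, because the collision is between an unrelabelled prefix value and a relabelled suffix value.

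\textbf{Your fallback is the right repair, but as written it is incomplete.} Working directly with $\irun$ only succeeds when its width is small, so you need an explicit case split. Let $w$ be the non-constant width of $\irun$; $\run$ is $w$-writing.

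If $w\le\MAX$, maximality of $k$ gives $w\le k$, and $\irun$ itself is the required pseudo-run, with no shortening.

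If $w>\MAX$, monotonicity makes $\run$ $\MAX$-writing, so $k=\MAX$. Only now apply Lemma~\ref{apx:lem:dec:1pseudobound}, and bound the width of $\irun'$ by its length, which is at most $\MAX=k$ (Corollary~\ref{cor:dec:widthbound}). Its constants are realizable because $\ffun$ fixes constants and every pseudo-assignment of $\irun'$ is a relabelled pseudo-assignment of $\irun$. In this case you never need to compare the width of $\irun'$ with what $\run$ writes.

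With this case split the forward direction is complete. It is the argument the paper leaves implicit when it derives the corollary from the width bound and Lemma~\ref{lem:dec:kbounded}.
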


The rest of this subsection is devoted to showing how to determine the largest $k \leq \MAX$ for which there is a $k$-writing run and the set of all constants $\constants' \subseteq constants$ that are realizable (there is a run that writes each value in it). 
As a first step towards this, we provide the proof of Lemma \ref{lem:dec:RepToWrite} that states that replicating run can also produce any word of width $k$.

\begin{proof} [Proof of Lemma  \ref{lem:dec:RepToWrite} ]
	Lets us assume w.l.o.g. $\run$ has no prefix which is replicating. 
	Notice that the only replicating transition is its last transition.
	Let us also fix a $k \in \Naturals$.

	Let $\irun$ be the projection of $\run$ to the replicating thread (say $i$).
	Let $a$ be the value that is replicated and let $\supp(\irun) = a_1 a_2 \ldots a_m$ ($a$ is one of $a_1, \ldots a_m$).   We decompose, as usual,  $\run$ as $\run_0 \run_1 \dots \run_m\run_{m+1}$  where $\run_i$, $1 \leq i \leq m$, begins with a write and ends in a configuration where $\greg = a_i$. The $\run_0$ is all the initial transitions up to the first write.   We decompose $\irun$ as $\hat\irun_0 \irun_0 \dotarrow{(\greg \leftarrowtail a_1)} \hat\irun_1 \irun_1  \dotarrow{(\greg \leftarrowtail a_2)} \hat\irun_2 \irun_2 \ldots \dotarrow{(\greg \leftarrowtail a_m)} \hat\irun_m \irun_m$ where $\hat\irun_j$ does not  writes to $\greg$ and $\irun_j$ begins with a write to $\greg$, $0 \leq j \leq m$. 
	Notice then that the part $ \hat\irun_j$ is one that reads $a_j$ from $\greg$. Moreover $\irun_j$ can be scheduled independently since it starts with a write to the global register.
	
	Now we make $k$ copies of $\irun$, let them be $\irun^1, \irun^2, \dots,\irun^k$. Moreover for each $i \in [1,k]$, let $\irun^i$ be decomposed as  $\hat\irun^{i}_0 \irun^i_0 \dotarrow{(\greg \leftarrowtail a_1)} \hat\irun^{i}_1 \irun^i_1  \dotarrow{(\greg \leftarrowtail a_2)} \hat\irun^{i}_2 \irun^i_2 \ldots \dotarrow{(\greg \leftarrowtail a_m)} \hat\irun^{i}_m \irun^i_m$, in a manner similar to the decomposition of $\irun$.
	We schedule these copies in conjunction with the decomposition of $\pi$ as follows: First we schedule $\pi_0$ followed by $\hat\irun^{1}_0, \dots, \hat\irun^{k}_0$, note that each of these portions require the initial memory in the global register and hence can be freely scheduled. We follow this up by $\irun^1_0, \dots, \irun^k_0$ of all the copies (one by one), note that these portions start with a write and hence can be scheduled independently. Then we execute $\pi_1$ which again starts with a write and hence is independent of what is in the global register, followed by $\hat\irun^{1}_1, \dots, \hat\irun^{k}_1$ of all the copies one by one, note that each of these portions require $a_1$ to be in the global register and  $\pi_1$ ends with such a write. We follow this by $\irun^1_1, \dots, \irun^k_1$ of all the copies one by one, followed by $\pi_2$, followed by $\hat\irun^{1}_2, \dots, \hat\irun^{k}_2$ of all copies  and so on.  That is, our run will be of the form 
	\[
	\pi_0 \cdot (\hat\irun^{i}_0)_{i \in [1,k]}  \cdot (\irun^i_0)_{i \in [1,k]}  \cdot \pi_1    \ldots \pi_m \cdot (\hat\irun^{i}_m)_{i \in [1,k]}  \cdot (\irun^i_m)_{i \in [1,k]}
	\]
	
	Notice that for any $i \in [1,k]$,  $\hat\irun^{i}_0$ reads the initial memory, for any $j \in [1,m]$, the sequence $\hat\irun^i_j$ reads the value $a_i$ (since in $\pi_j$, the last write to $\greg$ is the value $a_i$). Each of $\irun^i_j$ and $\pi_j$ can be scheduled independent of what is in the shared register since they start with a write to $\greg$.  Hence each of the $k$-copies $\irun_i$ reads the value that was read by $\irun$ in $\pi$. More over the values written by $\irun_i$ is never read by any other thread. Hence, we  conclude that each  $\irun_i$ mimics the behavior of  $\irun$ and writes the value $a$ to the shared register
\end{proof}

We next prove the generalization of the Lemma \ref{lem:dec:kwriting} in the presence of constants. This lemma characterizes a $k$-writing run.

\begin{lemma} [See Lemma \ref{lem:dec:kwriting}]\label{apx:lem:dec:kwriting}  
	For any $k$, there is a  $k$-writing  run if and only if there is a $j < k$ and a realizable set of constants $\relconst$ such that both of the following holds:
	\begin{itemize}
		\item 	There is a $j$-writing non-replicating run $\run$
		\item There is an acyclic
	pseudo $1$-run $\irun$ with $\asupp(\irun) \cap \constants \subseteq \relconst^*$ and   $\wdwidth(\project{\supp(\irun)}{\domain \setminus \constants}  ) \leq j$ which is 
	either replicating or $k$-writing.
	\end{itemize}
\end{lemma}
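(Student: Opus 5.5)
The proof splits into the two directions. The ``if'' direction is a composition argument; the ``only if'' direction needs an inductive choice of a minimal $k$-writing run.

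\emph{If direction.} Let $\run$ be a $j$-writing non-replicating run and $\irun$ an acyclic pseudo-$1$-run as in the statement. By Lemma~\ref{lem:dec:kbounded}, combined with Lemma~\ref{apx:lem:oneglobal:const} for the realizable constants in $\relconst$, there is a run that writes $\supp(\irun)$. Its non-constant part has width at most $j$ and its constants lie in $\relconst$. Exactly as in the converse of Lemma~\ref{lem:dec:1greg:forward}, we interleave the segments of this supporting run with the segments of $\irun$ between consecutive pseudo-assignments. Each supporting segment begins with a write and ends leaving the required value in $\greg$, so it is unaffected by what precedes it. After renaming the nonces of $\irun$ away from those of the support (possible by acyclicity), the combined run is fresh. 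The resulting real run contains the thread of $\irun$ with the same behaviour. If $\irun$ is $k$-writing, the combined run is $k$-writing. If $\irun$ is replicating, the combined run is a replicating run, and Lemma~\ref{lem:dec:RepToWrite} yields a $k$-writing run.

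\emph{Only if direction.} Among all $k$-writing runs, pick one, say $\run$, that is minimal: no proper prefix is $k$-writing or replicating. There are two cases.

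\emph{Case 1: some proper prefix of $\run$ is replicating.} Take the shortest replicating prefix $\run'$ and the projection $\irun$ of $\run'$ onto the replicating thread. The replicated value $a$ is not generated by that thread, so if $\irun$ is acyclic we are in the replicating sub-case. The support of $\irun$ is written by $\run'$ minus its last move, which is non-replicating. That shorter run writes some width $j$, and we need $j<k$; this should follow from minimality, since otherwise an earlier prefix would already be $k$-writing.

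\emph{Case 2: no proper prefix is replicating.} Then $\run$ is non-replicating, and every value written by a thread is its own nonce or a constant. Take a thread $i$ that performs a write of the $k$-fold value $a$. Because $a$ is $i$'s own nonce and $i$ does not replicate, every write of $a$ is performed by thread $i$, so the projection $\irun$ of $\run$ onto $i$ is $k$-writing. We must also argue that $\irun$ is acyclic. The supporting run is $\run$ with thread $i$'s moves deleted; only $i$'s self-generated values would be lost, but such values in $\supp(\irun)$ would have to be written by another thread, which would then be replicating. Hence the deleted moves do not affect the support, and the support consists of other threads' nonces and realizable constants. This witnessing run writes $\supp(\irun)$. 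Its width $j$ must be $<k$: otherwise a strictly smaller run, with thread $i$ removed, would already be $k$-writing, and we take $\run$ to have the minimal number of threads among $k$-writing runs. Constants appearing in the support are written by $\run$, hence realizable; take $\relconst$ to be this set.

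The main obstacle is making the minimality argument precise. We must choose a measure (number of threads, then length) so that removing a thread genuinely gives a valid run that still writes $\supp(\irun)$, with width $j<k$. This requires verifying that the non-witnessing threads do not depend on values written by thread $i$; that holds exactly because they are non-replicating and so never write $i$'s nonces, though they might still read them. In that case we must argue those reads can be replaced by pseudo-assignments served elsewhere, or otherwise pick $i$ to be a thread none of whose nonces are read by others.
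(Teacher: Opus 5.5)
Your \emph{if} direction is the paper's argument and is fine: support via Lemma~\ref{lem:dec:kbounded} and Lemma~\ref{apx:lem:oneglobal:const}, interleaving as in Lemma~\ref{lem:dec:1greg:forward}, then Lemma~\ref{lem:dec:RepToWrite} in the replicating case.

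The gap is in the \emph{only if} direction, Case~2, where you never establish $j<k$. You delete thread $i$ and switch to minimizing the number of threads. As you note yourself, other threads may read and branch on values written by $i$, so the deletion need not give a run, and the new measure conflicts with the prefix-minimality used in Case~1. The missing observation is that no deletion is needed. The statement only asks for \emph{some} $j$-writing non-replicating run, and $\pi$ itself qualifies: a $k$-writing run is $j$-writing for every $j\le k$, and $\pi$ writes $\supp(\rho)$ anyway.

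The bound $j<k$ then follows from prefix-minimality alone. Take $\pi$ so that no proper prefix is $k$-writing. Then its last move is the $k$-th write of some non-constant $a$, performed (as $\pi$ is non-replicating) by the generator $i$ of $a$. Each pseudo-assignment of a non-constant $e$ in $\rho$ sits between two consecutive blocks of $i$'s moves, and the register changed to $e$ in between. So each such pseudo-assignment is caused by a distinct write of $e$ by another thread, strictly before the last move. If $e$ occurred $k$ times in $\supp(\rho)$, the prefix of $\pi$ without its last move would already be $k$-writing. This is how the paper argues, with $\pi$ serving as the $j$-writing run.

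Your case split also needs repair. Your opening minimality condition forbids replicating proper prefixes, which makes Case~1 empty. In Case~2, ``no proper prefix is replicating'' does not imply that $\pi$ is non-replicating. The final move, the $k$-th write of $a$, may be made by a thread that read $a$ rather than generated it. Choose $\pi$ minimal only with respect to being $k$-writing, and split, as the paper does, on whether $\pi$ is replicating. In the replicating case take the shortest replicating prefix $\pi'$, which may be $\pi$ itself. The same counting gives width $<k$, because $\pi'$ minus its last move is a proper prefix of $\pi$.

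In Case~1 you leave acyclicity conditional (``if $\rho$ is acyclic''); it must be proved. It holds because $\pi'$ minus its last move is non-replicating. So every non-constant value another thread writes there is that thread's own nonce, and by freshness it is not a nonce of the replicating thread. Realizability of the constants read in Case~1 follows as in your Case~2.
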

\begin{proof}
	Let $\run$ be a $k$-writing run of minimum length (i.e. no strict prefix of $\run$ is $k$-writing).  Such a run ends with a transition where a thread $i$ writes $d$ to $\greg$, where $d$ is the only nonce written to the global register $k$ times along $\run$.  Let $\irun$ be the $1$-pseudo run  obtained by projecting this run onto the thread $i$. We first observe that every constant that it reads along this projection is indeed realisable (in fact by a prefix of $\run$).

	Suppose  $\run$ is non-replicating. Then it is acyclic (since each nonce can be written only by one thread). Thus,  the $1$-pseudo run $\irun$ obtained by projecting this run on $i$ is an acyclic $k$-writing $1$-pseudo run. More over, it has  as a support a word whose width say $j$ on the non-constants letters. Since we have assumed our run to be of minimum length, it is easy to see that $j$  is less than $k$. %
	Further $\run$ also serves as the $j$-writing run required by the Lemma.

	Suppose $\run$ is a replicating run. Let $\run'$ be its shortest replicating prefix and let $i$ be the thread
	executing the replicating transition.  The projection of $\run'$ to $i$ serves as an acyclic replicating pseudo $1$-run (whose support has width  $< k$).
	Let the prefix of $\run'$ obtained by omitting its last transition  be $\run''$.
	Now, $\run''$ supports the projection of $\run'$ on the thread $i$ and is also
	non-replicating.
	
	For the converse: Since there is a $j$-writing run and $\relconst$ are realizable,  there is also a run that can write the word that the supports  $\irun$. 
	Following 
	Lemma \ref{lem:dec:1greg:forward}, we can compose them to obtain  $k$ writing run. 
	Otherwise, we have a replicating acyclic $1$-pseudo run $\irun$ and a supporting run $\run$. Then, along the lines of Lemma \ref{lem:dec:1greg:forward}, have a replicating run. Then by Lemma \ref{lem:dec:RepToWrite}, we have the required $k$-writing run.
\end{proof}

We next show how to compute the realizable set constants. The proof of the below lemma is along the lines of the previous Lemma \ref{apx:lem:dec:kwriting}, we omit the proof.

\begin{lemma}\label{apx:lem:dec:acyclicconst}
	There is an acyclic  run that realizes a constant $\const \in \constants$ if and only if there is a $k$-writing run for some $ k \leq \MAX$, a set of realizable constants $\relconst$ and a  pseudo $1$-run $\irun$  that  writes the constant $\const$ to $\greg$ such that $\asupp(\irun) \cap \constants \in\relconst^*$ and    $\wdwidth(\supp(\irun_{\setminus \constants})) \leq k$.
\end{lemma}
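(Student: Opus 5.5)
Both directions follow the proof of Lemma~\ref{apx:lem:dec:kwriting}: a minimal realizing run is cut at its last transition, and the converse composes a pseudo-run with a support as in Lemma~\ref{lem:dec:1greg:forward}.

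\emph{Forward direction.} Let $\run$ be an acyclic run realizing $\const$, and take it of minimal length. Then its last transition is a write of $\const$ to $\greg$ by some thread $i$. Let $\irun$ be the projection of $\run$ onto $i$; it writes $\const$ to $\greg$. Let $\relconst$ be the set of constants that $\irun$ pseudo-reads. Every $\const' \in \relconst$ was written by a strict prefix of $\run$, so $\relconst$ is realizable, possibly recursively through shorter runs. Hence $\asupp(\irun)\cap\constants\in\relconst^*$.

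For the width bound, let $k$ be the non-constant width of $\supp(\irun)$. The prefix of $\run$ before its last transition writes $\supp(\irun)$, so it is a $k$-writing run. We need $k \leq \MAX$. I would get this from Lemma~\ref{apx:lem:dec:1pseudobound}: shorten $\irun$ to length at most $\MAX$, which gives width at most $\MAX$. The shortening must keep the final write of $\const$; one way is to treat the write of $\const$ as reaching a fresh target label. Its support is then a subword-like relabelling of the old support, so it is still supported by the same $k$-writing run, and also by a $k'$-writing run for its own width $k' \leq \MAX$.

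\emph{Converse.} Suppose we are given a $k$-writing run, a realizable $\relconst$, and $\irun$ as in the statement. By the lemma combining Lemma~\ref{apx:lem:oneglobal:const} and Lemma~\ref{lem:dec:kbounded}, some run $\run$ writes exactly $\supp(\irun)$. Compose $\run$ with $\irun$ as in Lemma~\ref{lem:dec:1greg:forward}. Each segment $\run_i$ begins with a write and leaves $\greg$ holding $\val_i$, which simulates the $i$-th pseudo-assignment. Renaming nonces keeps the combined run fresh, and the composed run ends by writing $\const$.

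\emph{Main obstacle.} The hard step is acyclicity of the composed run, or the matching assumption on $\irun$. I would first require $\irun$ to be acyclic, which the ``acyclic run'' in the statement implicitly needs. Then, as in Lemma~\ref{lem:dec:1greg:forward}, the nonces of $\irun$ can be renamed to avoid $\asupp(\irun)$, and the result is an acyclic run realizing $\const$.
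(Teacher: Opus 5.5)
Your proposal is correct and follows the route the paper intends: the paper omits this proof, saying it follows Lemma~\ref{apx:lem:dec:kwriting}. That includes your observation that $\irun$ must be acyclic, which is also how Algorithm~\ref{alg:saturate} uses the lemma.

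One small repair is needed. The relabelling in Lemma~\ref{apx:lem:dec:1pseudobound} can merge the pseudo-read counts of different values and so increase the width, so the new support is not a subword of the old one. Instead, shorten $\irun$ only when its width $k$ exceeds $\MAX$. In that case the original run is also $\MAX$-writing, and the shortened $\irun$ has width at most its length, hence at most $\MAX$.
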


Finally the following lemmas will be useful in computing the maximum $k$ for which there is a $k$-writing run and the set of all realizable constants.

\begin{lemma}
	\label{lem:33}
	Given a set of constants $C' \subseteq C$ and $j \in \mathbb{N}$, there is an
	algorithm that checks whether there is a pseudo-1-run $\rho$ and a value
	$a \in \nonce(\rho) \setminus C$ such that
	\begin{enumerate}
		\item $a$ is written to the shared register at least $j+1$ times along
		$\rho$ (i.e.\ $\rho$ is $(j+1)$-writing, witnessed by $a$),
		\item every other non-constant value is written at most $j$ times, i.e.\
		$\mathrm{wd}\big(\mathrm{supp}(\rho) \setminus (C \cup \{a\})\big) \le j$, and
		\item $\mathrm{valsupp}(\rho) \cap C \subseteq C'$.
	\end{enumerate}
	The algorithm runs in space polynomial in the size of the system
	and logarithmic in $j$.
\end{lemma}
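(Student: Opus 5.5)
We reduce the check to reachability in a finite-state abstraction, extending the automaton built in the proof of the generalized version of Lemma~\ref{lem:dec:tracking}. That automaton already tracks a pseudo-$1$-run symbolically. Its registers range over $\localregisters \cup \globalregs$. Values come from two finite pools: $\domain_1$ for nonces generated by the thread and $\domain_2$ for non-constant values received by pseudo-assignment. Each $\domain_2$-slot carries a counter bounded by $j$, which is reset when the value leaves all registers. Pseudo-assignments of constants are allowed only from $C'$, which handles condition~(3) as before. Note that the lemma does not demand acyclicity: the designated value $a$ may be a nonce of $\rho$ and still be pseudo-assigned. Its counts are tracked separately, so we do not need the acyclic restriction for $a$. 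For other values we keep the $\domain_1/\domain_2$ separation, which is harmless and suffices for the uses in Lemma~\ref{apx:lem:dec:kwriting}.

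The new ingredient is a \emph{marked} nonce. The automaton guesses nondeterministically the moment the thread executes the $x := \ast$ that generates $a$, and records which register class holds $a$ with a distinguished tag in $\domain_1$. It also keeps a write counter $c_a \in [0, j+1]$, saturating at $j+1$. Each transition that writes to $\greg$ a register currently carrying the tag increments $c_a$. A pseudo-assignment of $a$ itself to $\greg$ may be permitted at any time, since the lemma bounds only values other than $a$ by $j$, and does not change any bounded slot counter. The tag can never be reassigned, because $a$ is a unique nonce. Even after $a$ leaves all registers, it may be re-pseudo-assigned later, and we keep the tag available for exactly this. Acceptance requires that $c_a$ has reached $j+1$.

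Correctness follows the forward/backward replay argument of the earlier lemma. Forward, a real $\rho$ is replayed step by step, giving $a$ the tag and the other support values their slots. The counters never exceed $j$ by condition~(2), and $c_a$ reaches $j+1$ by condition~(1). Backward, we concretize freed slots with brand-new values and the tag with a single fixed fresh value. The resulting run then writes $a$ at least $j+1$ times and every other non-constant value at most $j$ times. For space, a configuration consists of a label, an assignment of registers to $\domain_1 \cup \domain_2 \cup C' \cup \{\text{tag}\}$, $|\domain_2|$ counters up to $j$, and one counter up to $j+1$. This takes $O(\norm{\pds}\log \norm{\pds} + \norm{\pds}\log j)$ bits, and nondeterministic reachability in this graph runs in that space; Savitch's theorem then gives a deterministic polynomial-space bound in $\norm{\pds}$ and $\log j$.

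The main obstacle is the bookkeeping in the backward direction. We must check that the tag is never confused with a recycled $\domain_1$ value, and that writes of $a$ are counted exactly when the written register holds $a$. Both follow from the tag being a dedicated, never-freed symbol.
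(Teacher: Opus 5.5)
Your construction follows the paper's almost line by line, with one exception. The shared parts are a $D_1$ pool for the thread's own nonces, $D_2$ slots with counters capped at $j$ and reset when freed, pseudo-writes of constants only from $C'$, a guessed witness nonce with a saturating write counter, and a nondeterministic reachability search. The exception is that you let the tagged value $a$ be pseudo-assigned at any time, even after it has left all registers, while keeping every other nonce out of the support. This is a genuine gap.

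Your remark that the deviation is ``harmless and suffices for the uses in Lemma~\ref{apx:lem:dec:kwriting}'' is exactly the step that fails. Those uses, and Algorithm~\ref{alg:saturate}, need an \emph{acyclic} $\rho$. The reason is that $\rho$ is later composed with a non-replicating $j$-writing run, and such a run can never supply $\rho$'s own nonce. The paper's automaton decides precisely this acyclic version. It has no rule that pseudo-writes a $D_1$ value, and it forces the witness to stay in some register until $\kappa = j+1$. That is without loss of generality, because an acyclic run can never write $a$ again once it is lost.

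Your extra rule is unsound for that use. Take the program $x_1 := \ast;\ r := x_1;\ \ell\colon r := c;\ x_2 := r;\ \mathtt{assert}(x_2 = x_1);\ r := x_2;\ \mathtt{goto}\ \ell$.
\begin{itemize}
\item Your automaton accepts for every $j$: write $a$, write $c$, pseudo-assign $a$, pass the assertion, write $a$ again, and repeat. No support value other than $a$ is ever used.
\item In every fresh run, however, each nonce is written only once and nobody replicates. A thread writes $x_2$ only after checking that it equals its own nonce. No one else ever writes that nonce after its owner has overwritten $r$ with $c$.
\end{itemize}
Plugged into the saturation loop, your check would push $K$ from $1$ all the way to $\MAX$, although no $2$-writing run exists.

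Your automaton also fails to match the statement read literally, which is what motivated the rule. Without acyclicity, $\rho$ may pseudo-assign nonces $b \neq a$, at most $j$ times each. Your forward replay cannot represent this, because $b$ lives in $D_1$ and only $D_2$ values, constants and the tag can be pseudo-written. Take $x_1 := \ast;\ r := x_1;\ r := c;\ x_2 := r;\ \mathtt{assert}(x_2 = x_1);\ x_1 := \ast;\ r := x_1;\ r := x_1$ with $j = 1$. The literal property holds: $b$ occurs once in the support and $a$ is written twice. Your automaton rejects whichever nonce it tags.
\begin{itemize}
\item If it tags $b$, then $c_b$ stays at $1$.
\item If it tags $a$, then $b$ cannot be pseudo-assigned, so the assertion fails.
\end{itemize}
So your automaton decides an intermediate property that fits neither reading. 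The repair is to delete the pseudo-assignment rule for $a$ and require $a$ to persist until $c_a = j+1$. That gives the paper's construction and the acyclic check that the downstream lemmas actually need.
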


\begin{proof}
	As before, we reduce the problem to a reachability question in a
	finite-state register automaton.
	
	Fix the constant set $C' \subseteq C$ together with two auxiliary finite
	domains $D_1$ and $D_2$, each of size $\eta = |L \cup G| + 1$: $D_1$
	supplies fresh, nonce-style values, while $D_2$ supplies the abstract
	``slots'' used to track the part of the support that is not the witness
	value, bounded by $j$. This choice of $\eta$ suffices because, at any
	moment, at most $|L \cup G|$ distinct non-constant values can be sitting in
	the registers at once; adding one more slot for the witness gives exactly
	the $\eta = |L \cup G| + 1$ values we ever need to tell apart. Since the
	program can only compare register contents for equality (rule~9 below) and
	only ever introduces new values through the instruction $x := \ast$
	(rule~7), any two runs that agree up to a consistent renaming of which live
	value occupies which of these $\eta$ slots look identical to the program.
	Nothing is therefore lost by working over a fixed alphabet of this size and
	recycling a slot as soon as its current occupant has left every register.
	
	The automaton keeps $L \cup G$ as its registers, and its control states are
	given by $Q_{j+1} = \Lambda \times [0,j]^{\eta} \times (D_1 \cup \{\bot\}) \times [0,j+1]$;
	a full configuration also carries a register valuation
	$\vec l : L\cup G \to D_1 \cup D_2 \cup C$. Concretely, a configuration
	$\langle \ell, i_1, \dots, i_\eta, x, \kappa, \vec l \rangle$ consists of the
	current program location $\ell$, one usage counter $i_m$ per $D_2$-slot
	(each capped at $j$), the guessed witness value $x$ (held at $\bot$ until it
	is chosen), the number $\kappa$ of times $x$ has so far been written to the
	shared register, and the register valuation $\vec l$ itself. Once $x$ is
	set it never changes again, and it is required to persist in some register
	for the remainder of the run. None of this is expensive to store: the
	counters and $\kappa$ each need only $O(\log j)$ bits, $x$ needs
	$O(\log \eta)$ bits, and $\vec l$ needs $O(|L \cup G| \cdot \log(\eta + |C'|))$
	bits, so a single configuration fits comfortably in space polynomial in the
	size of the system and logarithmic in $j$.
	
	The transitions capture the following possibilities.
	
	\begin{enumerate}
		\item Any of the $\eta$ tracked $D_2$-values may be pseudo-written to the
		register, as long as its counter has not yet reached $j$; doing so
		increments that counter by one and leaves the witness $x$ and its
		count $\kappa$ untouched.
		
		\item Once a $D_2$-value has left every register, its counter may be reset
		to $0$, freeing that slot to represent a different value from then on.
		
		\item Any constant $d \in C'$ may likewise be pseudo-written to the
		register. Since only constants from $C'$ are ever offered here,
		every accepting run automatically respects
		$\mathrm{valsupp}(\rho) \cap C \subseteq C'$ with no further
		bookkeeping needed.
		
		\item As long as the witness has not yet been fixed ($x = \bot$), the
		automaton may non-deterministically commit to one of the values
		$d \in D_1$ currently present in some register, setting $x := d$ and
		resetting $\kappa$ to $0$ — this is the moment the algorithm decides
		which value will play the role of $a$.
		
		\item Once $x$ has been fixed, whenever the program actually executes the
		write instruction $r := x$ (with $\vec l(x) = d$), the automaton
		follows this real assignment, updates $\vec l$ accordingly, and
		increments $\kappa$ by one. This is what lets $\kappa$ climb, one real
		write at a time, all the way up to $j+1$.
		
		\item An ordinary local assignment $x := y$ is simulated faithfully,
		copying $\vec l(y)$ into $x$, and leaves the counters, the witness,
		and $\kappa$ untouched. This rule is guarded so that it never fires on
		the specific witness-write already handled by rule~5, and it requires
		that the witness value continue to sit in some register after the
		update, preserving the invariant that $x$, once chosen, is never lost.
		
		\item A non-deterministic assignment $x := \ast$ is simulated by writing a
		genuinely fresh value $d_1 \in D_1$ — one not currently held in any
		register — into $x$; the counters, witness, and $\kappa$ are again
		unaffected, subject to the same invariant that the witness value
		must remain present in some register throughout.
		
		\item An unconditional jump $\mathtt{goto}\ \ell'$ simply moves control to
		$\ell'$, leaving every other component unchanged.
		
		\item A guarded jump
		$\mathtt{if}\ (x \overset{?}{=} y)\ \mathtt{then\ goto}\ \ell'\ \mathtt{end}$
		follows the true branch to $\ell'$ exactly when $\vec l(x) = \vec l(y)$,
		and otherwise falls through to $\mathrm{next}(\ell)$; all other
		components stay the same either way.
	\end{enumerate}
	
	\smallskip
	\textbf{Correctness} 
	If a real pseudo-1-run $\rho$ witnesses the lemma via some value $a$, replay
	it label by label: every ordinary value that the environment pseudo-supplies
	to the register gets its own $D_2$-value for as long as it stays in a
	register (freeing the value once it leaves, exactly the recycling from (a)). Similarly, every self generated nonce is represented by a value from $D_1$. Notice that each of these cannot exceed the number of total registers at any point.
	The moment $a$ is first generated by the thread's own $x := \ast$
	instruction, guess it as the witness $x$, it is guaranteed to stay in some local register till it is written at-least $j+1$ times.  Simulate each of the (at least
	$j+1$) times the thread's own instruction $r := x$ genuinely writes $a$ to
	the register as a firing of the counting rule (c), driving $\kappa$ to
	$j+1$. Since $\rho$ obeys the $\le j$ bound on every other, pseudo-supplied
	value, no $D_2$-slot's counter ever needs to exceed $j$ either. The result
	is an accepting run.
	
	Conversely, given an accepting run, we provide concrete values to the abstracted values: every time a
	$D_2$-value is introduced fresh, invent a brand-new pseudo-supplied value for
	it and keep using that same value until the value is freed (these are the
	values the environment feeds into the register); treat pseudo-written
	constants as themselves; and the moment the witness is guessed, generate
	one fresh value $a$ for the thread itself to hold, and let the thread
	genuinely write $a$ to the register, via its own instruction $r := x$,
	once for every later firing of the counting rule (c). This produces an
	actual pseudo-1-run: the thread itself writes $a$ at least $j+1$ times,
	once per firing of (c); every other, pseudo-supplied value, having lived
	in exactly one slot whose counter never passed $j$, occurs at most $j$
	times in the support; and only constants from $C'$ are ever pseudo-supplied.
	So it is a witnessing run via $a$, as required.
	
	\smallskip
	\textbf{Space bound} Deciding whether the automaton has an
	accepting run is just a reachability search, and reachability search never
	needs to remember more than the single configuration it currently has in
	hand: repeatedly guess which rule to fire next, overwrite the stored
	configuration with the result, and accept as soon as the target location
	is reached with $\kappa = j+1$ (giving up after enough steps to rule out an
	infinite search). Since one configuration already fits in space polynomial
	in the size of the system and logarithmic in $j$, as computed above, so
	does the whole procedure.
\end{proof}

A similar construction will yeild the following lemma which states that checking if constants can be generated can also be achieved in {\sc PSpace}.

\begin{lemma}
	\label{lem:replicatepspace}
	Given a set of constants $C' \subseteq C$ and $j \in \mathbb{N}$, there is an
	algorithm that checks whether there is an acyclic replicating pseudo-1-run
	$\rho$ such that
	\begin{enumerate}
		\item every non-constant value in $\mathrm{supp}(\rho)$ is written at
		most $j$ times, i.e.\ $\mathrm{wd}\big(\mathrm{supp}(\rho) \setminus C\big) \le j$,
		\item $\mathrm{valsupp}(\rho) \cap C \subseteq C'$.
	\end{enumerate}
	The algorithm runs in space polynomial in the size of the system
	and logarithmic in $j$.
\end{lemma}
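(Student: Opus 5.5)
We follow the template of the proof of Lemma~\ref{lem:33}. We reduce the question to reachability in a finite-state register automaton and then decide reachability by a nondeterministic on-the-fly search, one configuration at a time.

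\textbf{The automaton.} As before, we fix $\constants'$ and two auxiliary finite domains $\domain_1$ and $\domain_2$, each of size $\eta = |\localregisters \cup \globalregs| + 1$. Values in $\domain_1$ stand for nonces generated by the thread itself. Values in $\domain_2$ stand for non-constant values supplied by pseudo-assignments. The registers of the automaton are $\localregisters \cup \globalregs$. A control state consists of:
\begin{itemize}
\item a label $\lab \in \Labels$;
\item one usage counter in $[0,j]$ for each $\domain_2$-slot;
\item a single bit recording whether a replicating write has already happened.
\end{itemize}
The transitions are rules 1--7 of the acyclic tracking construction, with two changes. Pseudo-writes of $\domain_2$-slots respect the cap $j$. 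A slot may be reset to $0$ only once its value has left every register. Pseudo-writes of constants are allowed only for constants in $\constants'$.

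\textbf{Detecting replication.} Replication is now a syntactic property of the abstract valuation. Suppose the current instruction is a write $\greg := \yvar$ with $\yvar \in \localregisters$, and $\lregmap(\yvar) \in \domain_2$. Then the thread is writing a non-constant value that it did not generate, because all of its own nonces live in $\domain_1$. In this case we set the bit. We accept when the bit is set; the target label is irrelevant here. Acyclicity is automatic, because $\domain_1$ and $\domain_2$ are disjoint. A pseudo-assignment can therefore never deliver a value that the thread generated itself.

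\textbf{Correctness.} We argue as in the earlier lemmas.
\begin{itemize}
\item \emph{Forward direction.} Take a concrete acyclic replicating pseudo-1-run $\rho$ satisfying (1) and (2). Replay it step by step:
  \begin{itemize}
  \item assign each pseudo-supplied non-constant value a $\domain_2$-slot for as long as it is live;
  \item assign each own nonce a $\domain_1$ value;
  \item free slots as values die.
  \end{itemize}
  At most $|\localregisters \cup \globalregs|$ values are live at any moment, so $\eta$ slots suffice. Bound (1) keeps every counter $\le j$. The replicating write copies a value that is not a nonce of the thread. By acyclicity and (2), that value is either a pseudo-supplied non-constant value or a constant. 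Only the first case counts as replication under the definition with constants, and in that case the value sits in a $\domain_2$-slot, so the bit gets set.
\item \emph{Backward direction.} Take an accepting automaton run. Unfold it by inventing a brand-new concrete value each time a slot or a $\domain_1$ value is claimed afresh, and reusing it while the claim lasts. The result is a fresh pseudo-1-run with the following properties:
  \begin{itemize}
  \item it is acyclic, since pseudo-supplied values and nonces are disjoint;
  \item it satisfies (1) and (2) by the counters and by rule 3;
  \item it is replicating, witnessed by the write that set the bit.
  \end{itemize}
\end{itemize}

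\textbf{Main obstacle and space bound.} The step I expect to need the most care is the backward direction's freshness bookkeeping, as in the proof of Lemma~\ref{lem:33}. When a slot is reused, it must receive a genuinely new concrete value. Otherwise two distinct slot incarnations would merge into one value, and its width in the support would exceed $j$. For the space bound, a single configuration needs:
\begin{itemize}
\item $\bigO(\log|\Labels|)$ bits for the label;
\item $\eta \cdot \bigO(\log j)$ bits for the counters;
\item one bit for replication;
\item $\bigO(|\localregisters \cup \globalregs| \log(\eta + |\constants'|))$ bits for the valuation.
\end{itemize}
This is polynomial in $\norm{\pds}$ and logarithmic in $j$. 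Nondeterministic reachability over such configurations, with a step counter bounded by the number of configurations, therefore runs in that space. Savitch's theorem then gives the deterministic bound.
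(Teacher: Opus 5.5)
Your proposal is the paper's argument. You reuse the $\domain_1$/$\domain_2$ tracking automaton of Lemma~\ref{lem:dec:tracking}, with the width cap $j$ and constant pseudo-writes restricted to $\constants'$, and you replace the target-label condition by detecting a write to $\greg$ of a $\domain_2$-valued register; acyclicity then follows from the disjointness of the two domains.

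One small adjustment is needed. The paper's acceptance condition is \emph{any} assignment of a $\domain_2$ value to the global register, and this includes $\greg := \greg$ executed while $\greg$ holds a pseudo-supplied value, which is a replicating write by definition. Your restriction to sources $\yvar \in \localregisters$ misses that case, so let the source range over $\localregisters \cup \globalregs$.
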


\begin{proof}[Proof Idea]
	Recall that a run is replicating exactly when some thread writes to the
	shared register a value that is non-local to it -- i.e.\ a value it did
	not itself generate as a nonce, so it must have obtained it earlier from
	the register. In the pseudo-1-run setting this means: at some point, the
	thread performs a genuine write $r := x$ where $\vec l(x)$ is a value that
	was earlier pseudo-supplied to it (tracked as a $D_2$-value, as in
	Lemma~\ref{lem:dec:tracking}), rather than a nonce it minted itself via
	$x := \ast$ (a $D_1$-value). We reuse the construction of
	Lemma~\ref{lem:dec:tracking} unchanged -- so the width-$j$ bound on the
	support and the restriction of constants to $C'$ are enforced exactly as
	before -- and simply replace its target label with the condition of
	reaching a configuration from which the next instruction assigns to the
	global register a value from $D_2$. Conversely, any such reachable
	configuration unfolds, via the same freshness argument as in
	Lemma~\ref{lem:dec:tracking}, into a genuine bounded run that becomes
	replicating the moment that one instruction is executed.
\end{proof}

The following algorithm then checks for the existence of an acyclic witness in {\sc Pspace}.

\begin{algorithm}
	\caption{Writing bound, realisable constants and demand bound}
	\label{alg:saturate}
	\KwOut{$K$: the largest $k \le \mathrm{MAX}$ such that a $k$-writing run exists;\\
		$\constants'$: the set of realisable constants;\\
		$J$: A $j$ such that a \kreadrepPar{j} run exists}
	$K \gets 0$;\quad $\constants' \gets \emptyset$;\quad $J \gets \infty$;\quad
	$\mathit{rep} \gets \false$;\quad $\mathit{changed} \gets \true$\;
	\While{$\mathit{changed}$}{
		$\mathit{changed} \gets \false$\;
		\If{$\neg\mathit{rep}$}{
			\uIf{there is an acyclic, replicating, $(K,\constants')$-supported pseudo-1-run}{
				$\mathit{rep} \gets \true$;\quad $J \gets K$;\quad $K \gets \mathrm{MAX}$;\quad
				$\mathit{changed} \gets \true$\;
			}
			\ElseIf{$K < \mathrm{MAX}$ \textbf{and} there is an acyclic,
				$(K,\constants')$-supported pseudo-1-run that writes one of its
				own nonces at least $K+1$ times}{
				$K \gets K + 1$;\quad $\mathit{changed} \gets \true$\;
			}
		}
		\tcp{Constants written by acyclic threads (Lemma 24)}
		\ForEach{$\val \in \constants \setminus \constants'$}{
			\If{there is an acyclic, $(K,\constants')$-supported pseudo-1-run
				that writes $\val$ to $r$}{
				$\constants' \gets \constants' \cup \{\val\}$;\quad $\mathit{changed} \gets \true$\;
			}
		}
		 \If{$\mathit{rep}$}{
			\tcp{Lower the demand bound (Lemma 13)}
			\If{$J > 1$ \textbf{and}
				(there is an acyclic, replicating, $(\mathrm{MAX},\constants')$-supported
				pseudo-1-run that is \kreadrepPar{(J-1)}
				\textbf{or} there is a cyclic pseudo-1-run with support constants in
				$\constants'$ that is \kwriteprPar{J} and \kreadrepPar{(J-1)})}{
				$J \gets J - 1$;\quad $\mathit{changed} \gets \true$\;
			}
			\ForEach{$\val \in \constants \setminus \constants'$}{
				\If{there is a cyclic pseudo-1-run with support constants in $\constants'$
					that is \kwriteprPar{J} and writes $\val$ to $r$}{
					$\constants' \gets \constants' \cup \{\val\}$;\quad $\mathit{changed} \gets \true$\;
				}
			}
		}
	}
	\Return $(K, \constants', J)$\;
\end{algorithm}

\begin{algorithm}
	\caption{Checking for existence of an acyclic witness}
	\label{alg:acyclic}
	Compute $(K, \constants', J)$ with Algorithm~\ref{alg:saturate}\;
	\tcp{Corollary 4}
	\eIf{there is an acyclic witnessing $(K,\constants')$-supported pseudo-1-run}{
		\Return \true\;
	}{
		\Return \false\;
	}
\end{algorithm}

\begin{algorithm}
	\caption{Checking for existence of a cyclic witness}
	\label{alg:cyclic}
	Compute $(K, \constants', J)$ with Algorithm~\ref{alg:saturate}\;
	\tcp{Lemma 10; $J < \infty$ holds exactly when a replicating run exists}
	\eIf{$J < \infty$ \textbf{and} there is a cyclic witnessing pseudo-1-run with
		support constants in $\constants'$ that is \kwriteprPar{J}}{
		\Return \true\;
	}{
		\Return \false\;
	}
\end{algorithm}

	\section{Single Global Reachability  Cyclic Case}\label{apx:sec:oneglobal:cyclic}

As stated in the main section , our objective is to prove that determining the existence of a cyclic witnessing run is decidable in {\sc Pspace}.

As a start, we show in the below lemma states that finding a cyclic witness in the presence of constants amount to finding a \kwritepr  cyclic pseudo $1$-run and its corresponding support.

\begin{lemma}\label{apx:lem:constcycwitness} (See Lemma \ref{lem:dec:cyclicchar})
	There is a cyclic witnessing run if and only if there are a \kreadrep replicating run and   a  witnessing \kwritepr  cyclic pseudo $1$-run  $\irun$ with $\asupp(\irun) \cap \constants \subseteq \relconst$, for some $k \in \Naturals$ and some realizable set $\relconst \subseteq \constants $.
\end{lemma}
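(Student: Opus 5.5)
We prove the two directions separately, following the pattern of Lemma~\ref{lem:dec:1greg:forward}. The new ingredient is the cloning argument of Lemma~\ref{lem:dec:RepToWrite}, now applied so that clones of a replicating thread echo the witness's own nonces back to it.

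\emph{Forward direction.} Fix a cyclic witnessing run $\run$ and project it onto the witnessing thread to get a cyclic witnessing pseudo-$1$-run $\irun$. Every constant pseudo-assigned in $\irun$ was written by some thread in a prefix of $\run$. So the set $\relconst$ of constants in $\asupp(\irun)$ is realizable, with $\run$ itself as witness. Let $k$ be the largest integer such that $\irun$ is \kwritepr. Since $\irun$ is cyclic, this $k$ is finite. Pick $\val \in \nonce(\irun) \cap \asupp(\irun)$ that is written exactly $k$ times before its first pseudo-assignment. That pseudo-assignment is realised in $\run$ by a write of $\val$ by some other thread $t$. This is a replicating write, because $\val$ was generated by the witness. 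Before this write, $\val$ was present in $\greg$ only through the $k$ writes made by the witness. Hence the projection of $\run$ onto $t$ receives at most $k$ pseudo-assignments of $\val$ before replicating it. So $\run$ is \kreadrep.

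\emph{Backward direction.} Assume we have a \kreadrep replicating run $\run_R$, a witnessing \kwritepr cyclic $\irun$ with constants in a realizable set $\relconst$, and $k$. We build a run in three layers.
\begin{itemize}
\item For the constants, interleave runs realizing $\relconst$ using Lemma~\ref{apx:lem:oneglobal:const}.
\item For each pseudo-assignment of a non-local value $\val$ (neither a nonce of $\irun$ nor a constant), use a renamed copy of the replicating run made $m$-writing for large $m$ via Lemma~\ref{lem:dec:RepToWrite}. Lemma~\ref{lem:dec:kbounded} then gives a support for the whole subword of such values.
\item For each cyclic value $\val \in \nonce(\irun)\cap\asupp(\irun)$ we must produce writes of $\val$ on demand. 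Take the projection $\irun_R$ of $\run_R$ onto the thread that replicates some value $a$ after at most $k$ pseudo-assignments of $a$. Rename $a$ to $\val$. Make as many clones of $\irun_R$ as there are pseudo-assignments of $\val$ in $\irun$. Each clone's support (apart from $a$) is supplied by a fresh renamed copy of the remaining threads of $\run_R$, decomposed as in Lemma~\ref{lem:dec:1greg:forward}.
\end{itemize}
The scheduling is what makes the third layer work. Each of the $\ge k$ writes of $\val$ by $\irun$ before its first pseudo-read is followed immediately by the corresponding demanded read segment $\hat\irun_j$ of every clone. These segments read $\greg$ without writing it, so they can all be scheduled while $\greg$ holds the value. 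After $k$ writes, every clone has received its demands and can perform its replicating write at the appropriate pseudo-assignment point of $\irun$. Segments that begin with a write are independent of the current register content, exactly as in Lemma~\ref{lem:dec:RepToWrite}. Freshness is obtained by renaming all nonces in the copies apart.

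\emph{Main obstacle.} The hardest step is this scheduling in the backward direction. A clone's $\le k$ demands for $a$ may be interleaved with demands for other values and with its own writes, and these must line up with the order in which $\irun$ writes $\val$. I would handle it by fixing, for each clone, the decomposition of $\irun_R$ at its pseudo-assignments. The other values are served by the independent supporting copies, which can be run at any time. Only the $a$-demands are aligned with the witness's $k$ writes, and this is possible because all $k$ writes precede the first pseudo-read of $\val$ (the \kwritepr condition).

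A second subtlety is the case where $\irun_R$'s own support contains its own nonces, making it cyclic. I would avoid this by choosing $\run_R$ minimal, so that its replicating transition is the first one. Its projection before that transition then reads no value that it generated itself and also replicated.
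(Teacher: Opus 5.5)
Your three-layer plan is the same as the paper's: constants come from realizable runs, foreign values come from the replicating run via Lemmas~\ref{lem:dec:RepToWrite} and~\ref{lem:dec:kbounded}, and each cyclic nonce $\val$ is echoed by the low-demand replicator, fed by the witness's $\ge k$ writes of $\val$. The third layer, however, has a genuine gap. You detach the replicating thread $t$ from $\run_R$ and feed each clone of $\irun_R$ through an independent renamed copy of $\run_R$, in the style of Lemma~\ref{lem:dec:1greg:forward}. That works only if $\irun_R$ is acyclic before its replicating transition. Otherwise the clone needs its own renamed nonces echoed back, and an independent copy echoes different ones. The remaining threads of $\run_R$ on their own need not even form a run without $t$'s writes.

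Your remedy is to take $\run_R$ minimal so that its replicating transition is the first one. This loses the demand bound. Suppose threads of one kind can echo a value only after $k+1$ pseudo-assignments of it. Suppose threads of another kind replicate after a single demand, but only once one of their own nonces has been echoed back. Then \kreadrep runs exist, yet in every run the first replication is by an echoer with at least $k+1$ demands. Moreover, in every \kreadrep run the low-demand replicator's projection is cyclic. This is precisely the situation of case~2 of Lemma~\ref{lem:sglobal:replicatingrun}, so acyclicity cannot be assumed.

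The missing idea is the splice of Lemma~\ref{apx:lem:dec:replrun}: never separate $t$ from its own run.
\begin{itemize}
\item For each echo needed, take a renamed copy of the whole $\run_R$.
\item Cut it at the $\le k$ writes of $a$ that $t$ consumes, as $\pi_0\,\pi^1_a\,\hat\pi^1_a\cdots\pi^k_a\,\hat\pi^k_a\,\pi^{k+1}$.
\item Split each write-free block $\hat\pi^i_a$ into the moves $\hat\pi^i_1$ of the other threads and the moves $\hat\pi^i_2$ of $t$.
\item Schedule the witness's $i$-th segment, which ends with its $i$-th write of $\val$, between $\hat\pi^i_1$ and $\hat\pi^i_2$.
\end{itemize}
Then $t$ reads $\val$ wherever it read $a$, and every other thread sees exactly what it saw in $\run_R$. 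Also, $t$'s interactions with its partners, including echoes of its own nonces, stay intact, so the cyclicity of $\irun_R$ is harmless.

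There is also a smaller slip in the forward direction. Other threads may replicate $\val$ before the write that realises the witness's first pseudo-assignment, without the witness seeing it, so $t$ may have consumed those writes too. Instead, take the first replicating write of $\val$ in $\run$. Before it, every write of $\val$ is by the witness, and there are at most $k$ of them.
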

\begin{proof}
	One direction of the lemma is easy to see. Note that if there is a cyclic witnessing run $\run$, projecting on to the witnessing thread provides us with a cyclic pseudo $1$-run $\irun$. Clearly this pseudo run is \kwritepr for some $k$. More over, the run itself is a \kreadrep since $\irun$ is part of this run. Further more any constants that $\irun$ reads is clearly realizable (in fact by some prefix of $\run$). 
	
	For the other direction, recall that if there is a run that realizes a set of constants $\relconst$, then we can write any word $w \in \relconst^*$ (see Lemma \ref{apx:lem:oneglobal:const}).  Let $\irun$ be the given witnessing cyclic pseudo $1$-run, let $\run'$ be the given \kreadrep. Let $w = \project{\supp(\irun)}{\constants}$ be the word got from the support of $\irun$ by retaining only the constants ( then $w \in \relconst^*$). Since $\relconst$ is given to be realisable, we may assume a run $\run_1$ that can write $w$. Similarly,  since we are promised a replicating run ( \kreadrep replicating run ), we may in fact assume that the run $\run_1$ that writes the values in $\supp(\irun) \setminus (\nonce(\irun))$.  That is, it provides for all the values in the supportof $\irun$ that are not a part of its  nonce.
	
	By combining this run $\run_1$ with $\irun$ (as in Lemma \ref{lem:dec:1greg:forward}), we can obtain a pseudo run $\irun_1$ such that $\supp(\irun_1) \in \nonce(\irun)^*$.  In Lemma \ref{apx:lem:dec:replrun} (stated and proved below), we prove that if there is a pseudo run that is a \kwritepr and there is a run that is \kreadrep, then we can combine these in such a way to obtain a pseudo run $\irun_2$ such that $\asupp(\irun_2) \cap \nonce(\irun_2) = \emptyset$. Now combining $\irun_1$ with the assumed \kreadrep $\run'$, we can then obtain the required witnessing run $\run$
\end{proof}

\begin{lemma}\label{apx:lem:dec:replrun}
If there is a   witnessing cyclic pseudo $1$-run $\irun$ that  \kwriteprPar{k} such that $\asupp(\irun) \subseteq \nonce(\irun)$ and a  \kreadrepPar{j} for some $j \leq k$  run, then there is  a witnessing cyclic run. More over, the runs we construct is also  a  \kreadrep.
\end{lemma}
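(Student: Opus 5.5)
Let $\irun$ be the given witnessing cyclic pseudo $1$-run. It is \kwriteprPar{k} and $\asupp(\irun) \subseteq \nonce(\irun)$, so every pseudo-assigned value is one of its own nonces. Let $\run'$ be the given \kreadrepPar{j} run, $j \le k$. Fix a thread $t$ in $\run'$ whose projection decomposes as $\irun'_1 \movesto{\tau} \irun'_2$, where $\tau$ replicates a value $a$ and $\irun'_1$ contains at most $j$ pseudo-assignments of $a$.

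The plan is to build a \emph{replicator gadget}. Truncate $\run'$ right after $\tau$ and decompose it as in Lemma~\ref{lem:dec:1greg:forward}: segments $\run'_0\run'_1\cdots$, each starting with a write and leaving $\greg$ at the next support value of $t$'s projection. The moves of $t$ are interleaved as in the proof of Lemma~\ref{lem:dec:RepToWrite}. The only pseudo-reads of $t$ that are not supplied by the other threads of $\run'$ are its at most $j$ reads of $a$. Every other support value of $t$ is either supplied by the rest of $\run'$ or is non-local to $t$.

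Rename $a$ (a nonce, hence safe to relabel) to a target value $\val$. After renaming, all other threads of $\run'$ except those that originally generated $a$ still run. The generators of $a$ are replaced: their writes of $a$ are exactly what $\irun$'s writes of $\val$ must play. Concretely, I would take $\run'$, delete the thread(s) generating $a$, and schedule the remaining threads so that each of $t$'s at most $j$ pseudo-reads of $a$ is served by one write of $\val$ performed by $\irun$. The gadget then ends with $t$ writing $\val$ on its own, at any later moment we choose.

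Next, assemble the run. For each $\val \in \nonce(\irun)\cap\asupp(\irun)$ and each pseudo-assignment of $\val$ in $\irun$, instantiate a fresh copy of the gadget; all of its other nonces are renamed to be globally fresh. Since $\irun$ is \kwriteprPar{k} and $j \le k$, before the first pseudo-assignment of $\val$ the thread executing $\irun$ writes $\val$ at least $k \ge j$ times.

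Use the idea of Lemma~\ref{lem:dec:RepToWrite}: one real write of $\val$ can serve many copies at once. Right after each write of $\val$ by $\irun$, schedule the pending "read-$a$" segments of all gadget copies for $\val$ in succession. These segments do not write to $\greg$, so nothing interferes; then let $\irun$ continue. After $\irun$ has made $j$ writes of $\val$, every copy for $\val$ is ready to replicate. At the position of each pseudo-assignment $\dotarrow{(\greg\leftarrowtail\val)}$, execute the remainder of one dedicated copy. That remainder begins with a write, so it is independent of the current register contents, and it ends with its replicating write of $\val$. This replaces the pseudo-assignment by a real move.

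The main obstacle is this interleaving. A gadget's segments between reads of $a$ start with writes and may clobber $\greg$ while $\irun$ is mid-segment. I would handle this as in Lemma~\ref{lem:dec:1greg:forward}: insert each such gadget block only at points where $\irun$ is about to write $\greg$ itself, or just before a pseudo-assignment. I must also check that the gadget's other support values are supplied internally, which holds because they came from the other threads of $\run'$.

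Freshness then follows from renaming. The resulting run reaches $\lab$ with the projection of the witnessing thread equal to $\irun$, so it is a cyclic witnessing run. Finally, each gadget thread $t$ reads $\val$ via at most $j \le k$ pseudo-assignments before replicating, so the constructed run is again \kreadrepPar{j}, and hence \kreadrep since $j \le k$.
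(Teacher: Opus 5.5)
\paragraph{The gap is in the replicator gadget.} You rename $a$ to $\val$ throughout $\run'$ and delete the thread that generated $a$, on the grounds that its writes of $a$ are what $\irun$'s writes of $\val$ must play. That deletion breaks $\run'$. The generator is an arbitrary thread.
\begin{itemize}
\item It may write other values to $\greg$ that $t$ or other threads read. So your later claim that the gadget's remaining support is ``supplied internally'' fails.
\item Every move of another thread executed while $\greg$ held one of its writes now sees different register contents.
\item The value $a$ is typically read, tested and even replicated by many threads besides $t$. Some of the very writes of $a$ that $t$ consumes may be performed by such replicators, which needed the generator's writes first.
\end{itemize}
You route $\irun$'s writes of $\val$ only to $t$'s at most $j$ reads. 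Nothing serves the other threads' reads of $a$, and $\irun$ could not serve them in general. Their number is unbounded, while only $k$ writes of $\val$ are guaranteed before $\val$ is first pseudo-assigned. Keeping the generator while renaming $a$ globally is not an option either: $\val$ would then be generated by two threads, violating freshness.

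\paragraph{The missing idea is to make the relabelling local to $t$.} Since $a$ is non-local to $t$, you may take $\tau$ to be $t$'s first write of $a$. Then $t$ observes $a$ only through its at most $j$ pseudo-assignments of $a$. So keep every thread of $\run'$ unchanged, generator included, and let them go on reading $a$ from the original writers. Change only when $t$ reads. After the $i$-th write of $a$ that $t$ consumes, schedule in order:
\begin{enumerate}
\item the other threads' non-writing moves;
\item the segment of $\irun$ that starts with a write and ends with its $i$-th write of $\val$;
\item only then $t$'s non-writing moves, which now read $\val$ instead of $a$.
\end{enumerate}
Applying the swap $a \leftrightarrow \val$ to $t$'s local state alone preserves all of $t$'s tests, and $\tau$ then replicates $\val$. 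This is the paper's interleaving $\pi^i_e \cdot \hat\pi^i_1 \cdot \irun^i_\val \cdot \hat\pi^i_2$. With the gadget built this way, the rest of your plan goes through as in the paper: let one write of $\val$ serve many copies, and place each copy's write-initiated remainder at a pseudo-assignment of $\val$.
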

\begin{proof}
	
		Firstly note that the run $\pi$, replicates a value that was generated in $\pi$, where as the pseudo run reads back its own nonce.  The challenge is to combine $\pi$ with $\irun$  in such a way that each value in $\asupp(\irun') \cap \nonce(\irun')$ is written by a thread other than $\irun$.
	
	Since  pseudo run $\irun$ is one that  \kwritepr, it writes each nonce value $\val$, at least $k$ times. Our idea is to induct on number of elements in $\asupp(\irun)$ (recall that $\asupp(\irun) \subseteq \nonce(\irun)$) and reducing it each time till there are none. Suppose it is empty, we have nothing to do. Otherwise, there is at least one element (say $\val$). The pseudo run can be decomposed as   $\irun =  \irun^0 \irun^1_\val \irun^2_\val \dots \irun^k_\val \irun^{k+1}$, where $\irun^0 $ is the execution up to the first write to the shared register, for each $i \in [1,k]$,  $\irun^i_\val$  is an execution that starts with a write to $\greg$ and is  up to the point where $\val$ is written to the shared register followed by all the transitions that do not write to the shared register and the last $\irun^{k+1}$ is rest of the execution.%
	
	Note that the pseudo run expects to read back $\val$ only in $\irun^{k+1}$. We show how to combine $  \irun$ with the given \kreadrepPar{j} run $\run_1$ to obtain another run, without loss of generality, since $ j \leq k$, we will assume $j=k$. In particular, this run can replicate/write $\val$. Note that $\irun$ can read this value arbitrarily many times in  $\irun^{k+1}$.  For the sake of readability,	we will assume that $\irun^{k+1}$ requires the value exactly  once, the idea can easily be generalized to other values by considering additional copies of $\pi$.

	Assume $\pi$ is \kreadrep, then there is a value say $e$ that it replicates. Let us assume that a thread $t$ replicates this value after reading it exactly $k$ times.  Without loss of generality, we will assume that $\pi$ ends in a replicating transition.  Then, the value $e$ is written by other threads at least $k$ times before the thread $t$ replicates.  We decompose $\pi$ as 
	
	$$
	\pi = \pi_0 \cdot \pi^1_e \cdot \hat\pi^1_e \cdot  \pi^2_e \cdot \hat\pi^2_e \cdots \pi^k_e \cdot\hat\pi^k_e\cdot \pi^{k+1}
	$$
	Where $\pi_0$ is the sub-execution of $\pi$ up to the first write to $\greg$,  for each $i \in [1,k]$, $ \pi^i_e $ is sub-execution that start with a write to the shared register and ends in a transition that writes the value $e$ to the shared register and $\hat\pi^i_e $ is the sub-execution of  $\pi$ that do not write to the shared register, immediately after $\pi^i_e $, $\pi^{k+1}$ is rest of the execution. Note that $\pi^{k+1}$ ends with a transition that replicates $e$.
	
	We will decompose  for each $i \in [1,k]$, $ \hat\pi^i_e $ further as $ \hat\pi^i_e  =  \hat\pi^i_1  \hat\pi^i_2$, where $ \hat\pi^i_1 $ is the execution of all the threads other than $t$ and $ \hat\pi^i_2$ is the execution of $t$ in $ \hat\pi^i_e$. Note that such a decomposition is possible since the execution in each thread involves no writes to $\greg$. Now consider the following pseudo run $\bar\irun$ that combines $\irun$ with $\pi$.
	$$
	\bar\irun = 	\irun_0 \cdot \pi_0 \cdot \pi^1_e \cdot\hat\pi^1_1 \cdot \irun^1_\val \cdot \pi^1_2 \cdot  \pi^2_e \cdot\hat\pi^2_1 \cdot \irun^2_\val \cdot \pi^2_2 \cdots \pi^k_e \cdot\hat\pi^k_1 \cdot \irun^k_\val \cdot \pi^k_2  \cdot \pi^{k+1}
	$$
	
	Firstly it is easy to see that such an execution is possible. Notice that in $\bar\irun $, the thread $t$ reads the value $\val$ instead of $e$ and the value read by any other thread  from $\pi$ is exactly the same. Moreover, the values other than $e$ that is  read by the thread $t$ is also the same as in $\pi$. Upon executing the replicating transition, the thread $t$ replicates the value $\val$ instead of $e$.  By arranging enough copies of $\pi$, we can also arrange $\val$ to be replicated arbitrary number of times. 
\end{proof}

The following lemma characterises the witnessing cyclic run.

\begin{lemma}\label{apx:lem:dec:replruncomplete} (See Lemma \ref{lem:sglobal:replicatingrun})
	There is a witnessing  cyclic run if and only  there is a witnessing run that \kreadrep and there is a witnessing cyclic pseudo $1$-run $\irun$ that  \kwritepr  with $\asupp(\irun) \cap \constants \subseteq \relconst$ for som $k \in \Naturals$ and realisable set of constants $\relconst$.

\end{lemma}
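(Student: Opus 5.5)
The plan is to obtain this statement essentially as a restatement of Lemma~\ref{apx:lem:constcycwitness}, with the extra observation that the supporting replicating run may itself be taken to be witnessing. Both directions go through projections and the combination lemmas already established.

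\emph{Forward direction.} Start from a cyclic witnessing run $\run$ and let $i$ be the witnessing thread. The projection $\irun$ of $\run$ onto $i$ is a cyclic witnessing pseudo $1$-run.
\begin{itemize}
\item Let $k$ be the largest integer such that $\irun$ is \kwritepr. Such a $k$ exists because $\nonce(\irun)\cap\asupp(\irun)\neq\emptyset$ is finite and each such value is written finitely often before its first pseudo-assignment.
\item Take a value $\val$ attaining the minimum. It is written exactly $k$ times by $i$ before $i$ first requests it.
\item Since $i$ does not write $\val$ between its writes and the pseudo-assignment, some other thread must have written $\val$ to $\greg$. That thread's projection writes the non-local value $\val$ after at most $k$ pseudo-assignments of it, since only the $k$ writes by $i$ preceded.
\item Hence $\run$ is \kreadrep. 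It is also witnessing, so $\run$ itself serves as the required witnessing \kreadrep run.
\item The constants in $\asupp(\irun)$ were written by prefixes of $\run$, so they form a realizable set $\relconst$.
\end{itemize}

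\emph{Backward direction.} Follow the proof of Lemma~\ref{apx:lem:constcycwitness}.
\begin{itemize}
\item By Lemma~\ref{apx:lem:oneglobal:const}, realizability of $\relconst$ yields a run writing the constant part of $\supp(\irun)$.
\item Because a replicating run exists, Lemma~\ref{lem:dec:RepToWrite} gives $k'$-writing runs for every $k'$. By Lemma~\ref{lem:dec:kbounded}, after renaming so that these values avoid $\nonce(\irun)$, there is a run writing the non-nonce part of the support.
\item Interleaving these supports as in Lemma~\ref{lem:dec:1greg:forward} turns $\irun$ into a pseudo run $\irun_1$ whose support lies in $\nonce(\irun)^*$. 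It is still \kwritepr, since writes before first pseudo-assignments are untouched.
\item Lemma~\ref{apx:lem:dec:replrun}, applied with $j=k$ and the given \kreadrep run, then produces a witnessing cyclic run. The inner induction there eliminates the self-read nonces one at a time.
\end{itemize}
Strictly, only a replicating \kreadrep run is needed here; its being witnessing is not used.

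\emph{Main obstacle.} The delicate step is the bookkeeping in the forward direction: justifying that the replicator of $\val$ saw at most $k$ pseudo-assignments of $\val$. Its own projection may receive $\val$ via pseudo-assignment only when $\greg$ changes to $\val$, and every such change arises from one of the at most $k$ writes of $\val$ by $i$. I would argue this by choosing the first write of $\val$ by a thread other than $i$ and counting the preceding writes of $\val$, all of which are by $i$.

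In the backward direction, the renaming to keep supporting values disjoint from $\nonce(\irun)$ must preserve freshness. I would handle it exactly as in Lemma~\ref{lem:dec:1greg:forward}, by renaming the nonces of the supporting runs.
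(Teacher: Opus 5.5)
Your proposal is correct and follows the paper's own argument. In the forward direction you project the cyclic witness onto the witnessing thread and take the run itself as the can-replicate-with-$k$-demands run; your count via the first write of $\val$ by a thread other than the witness makes precise what the paper states in one line. In the backward direction you first supply the constants and the non-self-generated support values as in the proof of Lemma~\ref{apx:lem:constcycwitness}, and then combine with the replicating run via Lemma~\ref{apx:lem:dec:replrun}, exactly as the paper does.
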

\begin{proof}
	Let $\run$ be the given cyclic witness. Then there is an witnessing thread,  projecting on this thread, provides us a $1$-pseudo run $\irun$ that is  \kwritepr for some $k$. More over the run itself is \kreadrep as required.

	For the other direction,  We are given  is a \kwritepr cyclic witnessing $1$-pseudo run  and a  \kreadrep run. More over we have that the constants appearing in $\asupp(\irun)$ are realizable.  Then, we may combine these as in Lemma \ref{apx:lem:dec:replrun} to obtain the cyclic witness run. 
\end{proof}

As a next step, we prove the below lemma that bounds the length of a minimal  replicating ( and hence also the $k$ in a minimal \kreadrep).

\noindent
\textbf{Lemma \ref{lem:dec:oneglobal:maxcyclic}. }{\it 
	If a replicating run exists, then there is one that is \kreadrep for some $k \leq \MAX$.
}
\begin{proof}[proof sketch]
	We establish the $\NMAX$ bound as follows:  Let $\pi$ be a replicating run. Without loss of generality, assume that it has no prefix that is replicating. Thus its projection on any thread is acyclic. In particular,
	projecting on the only thread that replicates gives us an acyclic replicating pseudo $1$-run $\irun$  whose length can be bounded by $\MAX$ (along the same lines as Lemma \ref{lem:dec:1pseudobound}). The number of pseudo assignments here is bounded by $k  \leq\MAX$ and hence, if we can arrange  a supporting run, by Lemma \ref{lem:dec:1greg:forward}, we can obtain  a replicating run that \kreadrep.  Let $j$ be the width of the support $\irun$. Since $\pi$ is replicating, by Lemma \ref{lem:dec:RepToWrite}, we have $j$-writing run and hence by Lemma \ref{lem:dec:kbounded}, we also have a supporting run for $\irun$.
\end{proof}

We next show that the existence of a witnessing $1$-pseudo run  \kwritepr can be checked in space polynomial in the size of the system and logarithmic in $k$. Similarly, we can also check for the existence of a $1$-pseudo run that is \kreadrepPar{k-1} can be achieved in the same space. The idea here is to keep track of the number of times each nonce is written to the global register and allow its consumption only if it has been written at-least $k$ times. The proof is similar to \ref{lem:33}, we omit the same.

\begin{lemma}
	Given $k$ and $\constants' \subseteq \constants$, the existence of a \kwritepr witnessing pseudo-1-run $\irun$, such that $\asupp(\irun) \cap \constants \subseteq \constants'$   can be checked in space polynomial in the system size and $\log(k)$.
\end{lemma}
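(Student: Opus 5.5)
We reduce the question to reachability in a finite-state register automaton whose configurations fit in polynomial space. The construction follows the one in the proof of Lemma~\ref{lem:dec:tracking} and Lemma~\ref{lem:33}. The automaton uses the registers $\localregisters \cup \globalregs$ and the constants $\constants'$, together with two auxiliary finite domains $\domain_1$ and $\domain_2$, each of size $\eta = |\localregisters \cup \globalregs| + 1$. Values in $\domain_1$ stand for the thread's own nonces. Values in $\domain_2$ stand for non-constant values that the environment pseudo-supplies and that the thread did not generate. Constants from $\constants'$ may be pseudo-written at any time, exactly as in rule~3 of Lemma~\ref{lem:dec:tracking}; this enforces $\asupp(\irun) \cap \constants \subseteq \constants'$. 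The new feature is that a cyclic run may pseudo-read its own nonces. For each $\domain_1$-slot we therefore keep a write counter $\kappa_d \in [0,k]$ and a one-bit flag recording whether $d$ has already appeared in the support. A configuration is then the current label, the valuation $\localregisters \cup \globalregs \to \domain_1 \cup \domain_2 \cup \constants'$, and these counters and flags. This takes $\bigO(\eta \log k)$ bits plus a polynomial amount, so it is polynomial in $\norm{\pds}$ and $\log k$.

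The transitions are as follows. Program instructions are simulated faithfully. A nondeterministic assignment $\xvar := *$ claims a $\domain_1$-slot not currently held in any register and resets its counter and flag. A write $\greg := \yvar$ whose value is a $\domain_1$-slot $d$ with its flag unset increments $\kappa_d$, capped at $k$. A pseudo-assignment of a $\domain_2$-value, or of a constant in $\constants'$, is always allowed. A pseudo-assignment of a $\domain_1$-slot $d$ is allowed only if $\kappa_d = k$, and it sets the flag of $d$. This guard is exactly the \kwritepr condition: before its first pseudo-assignment, $d$ must have been written at least $k$ times.

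Two points need care. First, no width bound is imposed on $\domain_2$-values. Such values are supplied by a supporting run, and in the intended use (Lemma~\ref{apx:lem:constcycwitness}) a replicating run is available, so any width can be supported. We therefore track only membership of $\domain_2$-values, with no counters. Second, a slot may be recycled once its value has left every register. Counters of $\domain_1$-slots that have not yet been pseudo-read must then be discarded safely. Since the value can never reappear in a register, it can never be pseudo-read in a way the program could distinguish from a fresh $\domain_2$ or $\domain_1$ value. So recycling only ever requires resetting the slot.

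Correctness follows the two-directional replay argument of Lemma~\ref{lem:33}. In the forward direction, map each live value of a concrete run to a slot and replay every move. In the backward direction, each time a slot is freshly claimed, instantiate it with a brand-new concrete value; this guarantees freshness and also preserves exactly which own nonces get pseudo-read. To ensure the run is genuinely cyclic, we add one bit recording that some $\domain_1$ pseudo-assignment has occurred, and we accept only at the target label with this bit set. Reachability is then decided by the standard nondeterministic search that stores one configuration at a time, followed by Savitch's theorem, which gives the stated space bound.

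The main obstacle is this recycling subtlety, together with the invariant that a nonce can be pseudo-read only while it is still held in some register. Concretely, one must argue that the meaning of \kwritepr is preserved when a pseudo-read nonce is forgotten and later pseudo-supplied again. The approach is to treat such a later reappearance as a $\domain_2$-value, which the definition of \kwritepr allows, because only the first pseudo-assignment of each nonce is constrained.
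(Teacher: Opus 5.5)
Your construction is essentially the paper's. The paper likewise keeps, for each of the thread's own nonces, a count of its writes, permits its pseudo-assignment only once that count reaches $k$, and defers correctness to the replay argument of Lemma~\ref{lem:33}, which is what you spell out. One small repair is needed in your forward direction. With your extra cyclicity bit, a run whose only own-nonce pseudo-reads happen after the nonce has left every register is replayed via $\domain_2$-values, so the bit is never set. You can fix this in either of two ways. You can drop the bit: the formal writes-$k$-times-before-consumption condition is vacuous on acyclic runs, and Algorithm~\ref{alg:cyclic} has a replicating run available anyway. Or, for $k \geq 1$, you can first insert a configuration-preserving pseudo-assignment of $d$ immediately after its $k$-th write, where $\greg$ already holds $d$ and your guard $\kappa_d = k$ is met.
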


\begin{lemma}
	Given $k$ and $\constants' \subseteq \constants$, the existence of a \kwritepr  pseudo-1-run $\irun$ that is also \kreadrepPar{k-1}, such that $\asupp(\irun) \cap \constants \subseteq \constants'$   can be checked in space polynomial in the system size and $\log(k)$.
\end{lemma}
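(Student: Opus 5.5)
The plan is to reduce the question to reachability in a finite-state register automaton over a bounded abstract domain, reusing the constructions of Lemma~\ref{lem:dec:tracking} and Lemma~\ref{lem:33}. The one new ingredient is the bookkeeping that separates \emph{self-generated} nonces from \emph{pseudo-supplied} values. We need two things at once: the pseudo-1-run must be \kwritepr for the given $k$, and it must be \kreadrepPar{k-1}, i.e.\ contain a replicating transition of some non-local value $e$ that was pseudo-assigned at most $k-1$ times before that transition.

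\textbf{Abstract state.} The automaton keeps the registers $\localregisters \cup \globalregs$ ranging over three kinds of values.
\begin{itemize}
\item $\domain_1$ holds own nonces, with $|\localregisters \cup \globalregs|+1$ slots. Each slot carries a write counter capped at $k$ and a flag recording whether the value has already been pseudo-assigned.
\item $\domain_2$ holds pseudo-supplied foreign values, with the same number of slots. Each slot carries a pseudo-assignment counter capped at $k$.
\item The constants $\constants'$.
\end{itemize}
The transitions mimic program instructions exactly as in rules 4--7 of Lemma~\ref{lem:dec:tracking}. A real write $\greg := \xvar$ with $\xvar$ holding a $\domain_1$-value whose flag is still off increments that slot's write counter. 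A pseudo-assignment may supply three kinds of values:
\begin{itemize}
\item a constant from $\constants'$;
\item a $\domain_2$-value, incrementing its counter;
\item an own $\domain_1$-value $\val$, but only if its write counter has reached $k$. This move sets the flag, which enforces the \kwritepr condition: $\val$ is written at least $k$ times before its first pseudo-assignment.
\end{itemize}
Slots are recycled once their value disappears from all registers, exactly as in rule 2 of Lemma~\ref{lem:dec:tracking}. For the replication part, the automaton accepts the moment it executes a real write $\greg := \xvar$ where $\xvar$ holds a $\domain_2$-value whose counter is at most $k-1$.

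A cyclic-read own nonce written back to $\greg$ after being pseudo-supplied is still self-generated, so by definition it does not count as replication. I would nonetheless include in the automaton the alternative acceptance mode that matches whatever notion of replicating value the definition of \kreadrep intends.

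\textbf{Correctness.} This follows the same two directions as Lemma~\ref{lem:33}.
\begin{itemize}
\item \emph{Forward.} Replay a genuine pseudo-1-run. At any moment at most $|\localregisters \cup \globalregs|$ distinct non-constant values are live, so the slots suffice. The counters never need to exceed $k$: once a counter reaches $k$ for a write count, we simply saturate it.
\item \emph{Backward.} Unfold an accepting automaton run by inventing a brand-new concrete value each time a slot is freshly claimed. This yields a fresh pseudo-1-run with the required counts and constants in $\constants'$.
\end{itemize}

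\textbf{Space bound.} A configuration needs $O(\log k)$ bits per counter and polynomially many bits for the valuation. Nondeterministic reachability then runs in space polynomial in $\norm{\pds}$ and $\log k$, and Savitch's theorem gives the deterministic bound.

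\textbf{Main obstacle.} The delicate step is slot recycling for $\domain_1$-values. A nonce that has been written fewer than $k$ times may vanish from all registers and later reappear only via pseudo-assignment. Once its slot is freed, the automaton cannot check that reappearance. The fix I would try is to forbid freeing a $\domain_1$ slot while its value still has a pending possible pseudo-read. Concretely, when a slot becomes unused, the automaton nondeterministically decides whether that value will ever be pseudo-assigned again. If yes, it keeps a ghost copy occupying an extra slot; there are at most boundedly many such ghosts, because each is needed only until its saturated counter allows the read. I would argue, via the relabelling technique of Lemma~\ref{apx:lem:dec:1pseudobound}, that the number of simultaneously pending ghosts can be bounded polynomially, so the state stays polynomial.
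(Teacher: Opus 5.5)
Your automaton up to the ``Main obstacle'' paragraph is the construction the paper has in mind. The paper only says: count the writes of each own nonce to $\greg$, allow a pseudo-assignment of that nonce only once the count reaches $k$, and otherwise proceed as in Lemma~\ref{lem:33}.

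The gap is in your last paragraph. You conclude that freeing $\domain_1$ slots is unsound, replace it by ``ghost'' slots, and assert that only polynomially many ghosts are ever pending. Nothing supports that assertion, and for arbitrary runs it is false. A thread can generate nonces $v_1,\dots,v_m$, write each one $k$ times, let each disappear from every register, and only afterwards receive all of them by pseudo-assignment; then $m$ ghosts are pending at once. Moreover, a ghost is held in no register, so the thread can never write it again. Its counter is frozen, so ``needed only until its saturated counter allows the read'' bounds nothing. The relabelling of Lemma~\ref{apx:lem:dec:1pseudobound} shortens runs; it does not bound this quantity. So the construction you actually commit to is not shown to have polynomial-size configurations.

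The missing observation is that the obstacle does not exist. The program only tests equalities between register contents. Once an own nonce $\val$ is absent from every register (including $\greg$), nothing later can relate a new occurrence of $\val$ to the old one. Hence, in any witness, you may rename every occurrence of $\val$ to a brand-new non-constant value, from such a later pseudo-assignment until $\val$ next vanishes from all registers. This renaming has the following effects:
\begin{itemize}
\item The result is still a fresh pseudo-1-run through the same labels, with the same constants.
\item Pseudo-assignments of own nonces are only removed, so each nonce's first pseudo-assignment can only move later or disappear. Hence \kwritepr is preserved.
\item The foreign value written by the replicating transition is untouched, so \kreadrepPar{k-1} is preserved.
\end{itemize}
After this normalisation, every pseudo-assignment of an own nonce happens while that nonce is still live, which is exactly what your counter-and-flag check enforces. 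So $\domain_1$ slots can be recycled exactly like $\domain_2$ slots. Your backward unfolding, which invents a brand-new concrete value at every slot claim, already performs this renaming. Drop the ghosts, and your forward and backward arguments and the space bound go through.

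Also drop the hedged extra acceptance mode. Writing back a live own nonce is not replicating by the definition. It cannot be renamed away either, since the value still equals the content of some register. Accepting there would make the check unsound.
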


We next show characterization on when  there is a \kreadrep  run.

\begin{lemma}\label{apx:lem:dec:replrun:char}
	There is a \kreadrep replicating run if and only if there is a subset $\relconst \subseteq \constants$ of constants that is realizable and one of the following holds
	\begin{itemize}
		\item There is a  run that \kreadrepPar{j} for some $j < k$ and there is a cyclic pseudo $1$-run $\irun$ that  \kwriteprPar{j} and \kreadrep  with $\asupp(\irun) \cap \constants \subseteq \relconst$.
		\item There is a  acyclic $1$-pseudo run \kreadrep   and a corresponding supporting run.
	\end{itemize}
\end{lemma}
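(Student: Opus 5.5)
We prove the two directions separately, reusing the decomposition and interleaving technique of Lemma~\ref{lem:dec:1greg:forward} together with the combination construction of Lemma~\ref{apx:lem:dec:replrun}.

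\emph{Forward direction.} Fix a \kreadrep replicating run $\run$ and take it of minimal length, so its last transition is the replicating write of some value $\val$ by a thread $t$. Let $\relconst$ be the set of constants that appear in the support of any thread's projection. These are realizable, because some prefix of $\run$ writes each of them. Let $\irun$ be the projection of $\run$ onto $t$; it is \kreadrep. We split on whether $\irun$ is acyclic.
\begin{itemize}
\item If $\irun$ is acyclic, we obtain the second alternative directly: $\irun$ is an acyclic \kreadrep pseudo-$1$-run, and $\run$ itself supports it.
\item If $\irun$ is cyclic, let $j$ be the largest integer for which $\irun$ is \kwriteprPar{j}. Choose a nonce $d \in \nonce(\irun)\cap\asupp(\irun)$ that $t$ writes exactly $j$ times before its first pseudo-assignment. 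That pseudo-assignment of $d$ comes from another thread replicating $d$ after reading it at most $j$ times. The prefix of $\run$ up to that replication is therefore a \kreadrepPar{j} run. By minimality of $\run$, this prefix can be taken strictly shorter than $\run$.
\end{itemize}
The delicate point in the cyclic case is getting the strict inequality $j<k$ that the first alternative demands. My plan is to argue that the replicating thread in the prefix must itself read $d$, and that each of those reads is paid for by writes of $t$. If strictness still fails, I would weaken the bound to $j\le k$ and check that the downstream algorithm tolerates the weaker condition; this is the step I expect to be the main obstacle.

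\emph{Converse, second alternative.} Suppose we have an acyclic \kreadrep pseudo-$1$-run together with a supporting run. Apply exactly the construction of Lemma~\ref{lem:dec:1greg:forward}, renaming nonces apart. The pseudo-run's moves are preserved, including the replicating transition and its count of at most $k$ pseudo-assignments before it, so the composed run is \kreadrep.

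\emph{Converse, first alternative.} Suppose we have a \kreadrepPar{j} run $\run_1$ and a cyclic pseudo-$1$-run $\irun$ that is \kwriteprPar{j} and \kreadrep. First use Lemma~\ref{apx:lem:oneglobal:const} to supply the constants in $\relconst$. Next, the non-constant values of the support that are not nonces of $\irun$ are supplied through Lemma~\ref{lem:dec:RepToWrite}, since a replicating run exists. Composing these as in Lemma~\ref{lem:dec:1greg:forward} leaves a pseudo-run whose support lies in $\nonce(\irun)^*$.

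We then apply Lemma~\ref{apx:lem:dec:replrun}, using its interleaving $\bar\irun$, to realize every self-read of $\irun$ through fresh copies of $\run_1$. This requires checking two things. The first is that this interleaving does not change the moves of $t$ in $\irun$. The second is that the copies of $\run_1$ read the values $\irun$ writes, never $\irun$'s own reads. Given both, the replicating transition of $\irun$ still occurs after at most $k$ pseudo-assignments of its value. That value was supplied either by genuine writes of the copies or by pseudo-writes already counted, so the resulting run is \kreadrep.
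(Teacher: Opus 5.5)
You follow the paper's route. You take a shortest replicating run $\pi$ that is \kreadrep and project it onto its replicating thread $t$. The acyclic case gives the second alternative with $\pi$ as support. The cyclic case yields a shorter replicating prefix. The converse composes as in Lemma~\ref{lem:dec:1greg:forward} and Lemmas~\ref{apx:lem:constcycwitness} and~\ref{apx:lem:dec:replrun}.

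The gap is the step you flagged, and it cannot be closed in the direction you aim for. Neither $j<k$ nor your fallback $j\le k$ is derivable, because minimality forces $j>k$.

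First, end the prefix at the \emph{first} write of $d$ by a thread other than $t$. The thread whose write produces $t$'s first pseudo-assignment of $d$ may have received $d$ more than $j$ times if earlier echoes occurred. Before the first foreign write, only $t$ has written $d$, at most $j$ times. So the prefix is a replicating run that is \kreadrepPar{j}. It is strictly shorter than $\pi$ by construction, not by minimality.

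Now, \kreadrep is upward closed in $k$. If $j\le k$, this prefix would be a shorter \kreadrep replicating run, contradicting the choice of $\pi$. Hence $j>k$. The paper's proof invokes the same minimality and writes $j<k$; that is a slip.

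The first alternative should carry $j>k$. This is how it appears in the main-text Lemma~\ref{lem:sglobal:replicatingrun} and in Algorithm~\ref{alg:saturate}, which pairs a \kreadrepPar{J} run with a pseudo-run that is \kwriteprPar{J} and \kreadrepPar{(J-1)}.

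With $j<k$ the alternative is degenerate, since the given run is already \kreadrep. Worse, the forward implication is false in general. For the least $k_0$ admitting a \kreadrepPar{k_0} run, the first alternative is vacuous. Now consider a program where the only threads that can copy a non-local value after one pseudo-assignment must first write their nonce $y$ repeatedly and then observe $r\neq y$ followed by $r=y$. Suppose also that every other copier needs at least two pseudo-assignments of the value it copies. There $k_0=1$, yet no acyclic pseudo-$1$-run is \kreadrepPar{1}.

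With $j>k$, your forward direction is finished by the minimality argument above, and your converse stands. Lemma~\ref{apx:lem:dec:replrun} only needs the auxiliary run's demand ($j$) not to exceed the pseudo-run's write count ($j$); the relation between $j$ and $k$ plays no role there. Moreover, $t$'s pseudo-assignments of its replicated value, which is not one of its nonces, are reproduced one for one, so the composed run is \kreadrep.
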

\begin{proof}
	Let $\run$ be the given \kreadrep replicating run, we will assume that it is the shortest such run. Then there is a thread that  \kreadrep.  Projecting on this thread, provides us a $1$-pseudo run $\irun$. This pseudo run can either be cyclic or acyclic.  If it is acyclic, then clearly $\run$ itself is a supporting run. Suppose it is cyclic,  since we have assumed $\run$ to of minimal length, the run clearly is  \kreadrepPar{j} for some $j < k$. Clearly, the replicating thread is \kreadrepPar{j}. It is also easy to see that the constants appearing in $\irun$  are all realisable by the very run.

	For the other direction,  suppose there is a   \kreadrep acyclic $1$-pseudo run and a  supporting run.  Then we may combine both of these as in Lemma \ref{lem:dec:1greg:forward}. Other wise we are given s a \kreadrep cyclic $1$-pseudo run that is  \kwriteprPar{j} and a  \kreadrepPar{j} run. We also have that, the set of constants $\relconst$ are realizable.  Then, we may combine these as in Lemma \ref{apx:lem:constcycwitness} to obtain the  \kreadrep replicating run. 
\end{proof}

So all that remains to be shown is how to compute the set of all realizable constants. Note that a constant can either be realized by an acyclic run (that is, the instance writing the constant is acyclic) or by a cyclic run. We show in each case how to find it. We  note that in order to write a constant, we will have to reach a state/location in the program which executes $ \assign{\greg}{\val}$ or  $ \assign{\greg}{\yvar}$ where $\yvar = \val$. This is similar to finding a witnessing run and can be summarized using the following lemmas. With this, we also obtain that the set of all realizable constants can be found in {\sc PSpace}.

\begin{lemma}\label{apx:lem:dec:acyclicconst:cyc}
	There is an acyclic  run that realizes a constant $\const \in \constants$ if and only if there is a $k$-writing run for some $ k \leq \MAX$, a set of realizable constants $\relconst$ and a  pseudo $1$-run $\irun$  that  writes the constant $\const$ to $\greg$ such that $\asupp(\irun) \cap \constants \in\relconst^*$ and    $\wdwidth(\supp(\irun_{\setminus \constants})) \leq k$.
\end{lemma}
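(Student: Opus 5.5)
We prove the two directions separately, following the pattern of Lemma~\ref{apx:lem:dec:kwriting}. For the forward direction, fix an acyclic run $\run$ that realizes $\const$, and truncate it at the first transition writing $\const$ to $\greg$. Let $i$ be the thread performing this write and let $\irun$ be the projection of $\run$ onto $i$. By construction $\irun$ is a pseudo $1$-run that writes $\const$ to $\greg$. Every constant pseudo-assigned in $\irun$ was written by some strict prefix of $\run$, so the set $\relconst$ of constants appearing in $\asupp(\irun)$ is realizable (witnessed by prefixes of $\run$). Since thread $i$ is acyclic, $\irun$ is an acyclic pseudo $1$-run. By the length bound of Lemma~\ref{apx:lem:dec:1pseudobound}, we may replace $\irun$ with an acyclic pseudo $1$-run of length at most $\MAX$ that still writes $\const$ and uses only support constants from $\relconst$. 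This relabelling keeps the constants fixed, and we take ``reaching a label'' to mean reaching the write instruction for $\const$. Its non-constant width $k$ is therefore at most $\MAX$. Finally, $\run$ itself writes the original support, and after this shortening a run writing the new support still exists by Lemma~\ref{lem:dec:kbounded}. The point is that $\run$ is $k'$-writing for the original width $k' \geq k$, so it is in particular $k$-writing.

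For the converse, suppose we are given a $k$-writing run with $k\le\MAX$, a realizable set $\relconst$, and a pseudo $1$-run $\irun$ writing $\const$ with non-constant width at most $k$ and constant support in $\relconst$. The lemma following Lemma~\ref{apx:lem:oneglobal:const} (combining Lemmas~\ref{apx:lem:oneglobal:const} and~\ref{lem:dec:kbounded}) yields a run $\run$ that writes exactly $\supp(\irun)$. We then interleave $\run$ with $\irun$ as in the proof of Lemma~\ref{lem:dec:1greg:forward}. We cut $\run$ into segments $\run_0\run_1\cdots\run_m$, each beginning with a write and leaving the $m$-th support value in $\greg$, and we insert the segments $\irun_j$ of $\irun$ between them. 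This gives a genuine run in which the thread of $\irun$ writes $\const$.

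The main obstacle is freshness and acyclicity of the composed run. As in Lemma~\ref{lem:dec:1greg:forward}, we rename the nonces of $\irun$ apart from those of $\run$. Because $\irun$ is acyclic, none of its nonces occur in its support, so this renaming does not disturb the pseudo-assignments. Constants are untouched by the renaming, so the realized constant $\const$ is preserved. We must also check that the resulting run is acyclic, as the statement demands. Here I would argue that the thread of $\irun$ remains acyclic by construction. If needed, I would weaken the claim to the existence of a run realizing $\const$ whose writing thread is acyclic, which is how ``acyclic run'' is used for constants elsewhere in this section. This is the step I expect to need the most care. The remaining steps are direct reuses of earlier lemmas.
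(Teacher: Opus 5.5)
Your plan follows the paper's intended route. The paper omits the proof and says only that it follows Lemma~\ref{apx:lem:dec:kwriting}: project onto the writing thread for $\Rightarrow$, and compose as in Lemma~\ref{lem:dec:1greg:forward} for $\Leftarrow$. Two steps do not go through as written.

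\textbf{The bound on $k$ in the forward direction.} You assert that the original width $k'$ is at least the width $k$ of the shortened pseudo-run, and Lemma~\ref{apx:lem:dec:1pseudobound} gives no such guarantee. Its cut-and-relabel step sends a value $u=x(\iconf_j)$ used after the cut to the value $v=x(\iconf_i)$ used before it. Suppose $v$ is pseudo-assigned $p$ times up to $\iconf_i$, and $u$ is first pseudo-assigned between $\iconf_i$ and $\iconf_j$ and then $s$ times after $\iconf_j$. The shortened run pseudo-assigns $v$ $p+s$ times, while the original width can be as small as $\max(p,s+1)$. So $\run$ need not be $k$-writing.

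Fix this by a case split:
\begin{itemize}
\item If $k'\le\MAX$, do not shorten at all. Take $k=k'$, with $\run$ itself as the $k$-writing run.
\item If $k'>\MAX$, then $\run$ is $\MAX$-writing. Any shortened pseudo-run has width at most its length, hence at most $\MAX$, so take $k=\MAX$.
\end{itemize}
Your appeal to Lemma~\ref{lem:dec:kbounded} there is unnecessary. The right-hand side asks only for a $k$-writing run, not for a run supporting the new pseudo-run.

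\textbf{Acyclicity of $\irun$ in the converse.} You use that $\irun$ is acyclic, but the statement as given does not grant this, and without it the converse is false. Take one local register $x$ and the code $x:=*;\ \greg:=x;\ \assert(\greg\neq x);\ \assert(\greg=x);\ \greg:=\const$.
\begin{itemize}
\item A single thread gives a $1$-writing run.
\item There is a cyclic pseudo $1$-run that writes $\const$. Its support is $b\,a$, where $a$ is its own nonce and $b\neq a$ is any other non-constant. This support has width $1$ and contains no constants.
\item No instruction copies $\greg$ into a local register, so no thread other than the originator ever holds $a$. Hence no run realizes $\const$.
\end{itemize}
So the lemma must be read with an \emph{acyclic} pseudo $1$-run on the right-hand side. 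This is how Algorithm~\ref{alg:saturate} uses it, and it is what your forward direction produces. State this explicitly instead of invoking it silently.

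With that hypothesis, the acyclicity of the composed run, which you flagged as delicate, is immediate and needs no weakening. The projection of $\run_0\irun_0\run_1\irun_1\cdots$ onto the new thread has the renamed $\nonce(\irun)$ as its nonces. Its support is a subsequence of $\supp(\irun)$, whose non-constant values are nonces of $\run$. Your renaming keeps these two sets disjoint.
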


\begin{lemma}\label{apx:lem:cycrealize}
	A constant $\const \in \constants$ is realizable by a cyclic  run if and only if there is a $\relconst \subseteq \constants $ that is realizable, $k \in \Naturals$ such that there is a that is \kreadrep replicating run and   a   \kwritepr  cyclic pseudo $1$-run  $\irun$ with $\asupp(\irun) \cap constants \subseteq \relconst$ that writes $\const$ to $\greg$.
\end{lemma}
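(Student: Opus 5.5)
The plan is to prove the two directions separately, following the templates of Lemma~\ref{apx:lem:dec:replruncomplete} and Lemma~\ref{apx:lem:dec:replrun:char}, with the target ``reaching $\lab$'' replaced by ``executing a write of $\const$ to $\greg$''. Such a write happens at a label whose instruction is $\assign{\greg}{\const}$, or $\assign{\greg}{\yvar}$ with $\yvar$ holding $\const$. We may therefore add a fresh label $\lab_\const$ and a jump to it guarded by that write. Realizing $\const$ then amounts to reaching $\lab_\const$, and the witnessing-run machinery applies.

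\emph{Forward direction.} Let $\run$ be a shortest run realizing $\const$ whose writing thread $i$ is cyclic. Its last move is the write of $\const$ by $i$. Project $\run$ onto $i$ to get a cyclic pseudo-$1$-run $\irun$ that writes $\const$, and let $k$ be maximal such that $\irun$ is \kwritepr. I then check three conditions.
\begin{itemize}
\item Every constant in $\asupp(\irun)$ was written by some thread in a prefix of $\run$. So $\relconst := \asupp(\irun)\cap\constants$ is realizable, each constant by a prefix of $\run$.
\item Since $\irun$ is cyclic, some nonce of $i$ is read back after being pseudo-supplied by another thread. That thread replicates a non-local value, so $\run$ is replicating.
\item By maximality of $k$, some nonce $\val$ is written exactly $k$ times by $i$ before its first pseudo-assignment. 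The thread that writes $\val$ back has therefore requested it at most $k$ times, so the prefix of $\run$ ending at that write is \kreadrep.
\end{itemize}

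\emph{Backward direction.} Here I reuse the proof of Lemma~\ref{apx:lem:constcycwitness} verbatim. First, Lemma~\ref{apx:lem:oneglobal:const} gives a run writing the constant part of $\supp(\irun)$. The replicating run, via Lemma~\ref{lem:dec:RepToWrite} and Lemma~\ref{lem:dec:kbounded}, supplies the non-constant, non-own-nonce values. Splicing these in as in Lemma~\ref{lem:dec:1greg:forward} yields a pseudo-run whose support lies in $\nonce(\irun)^*$. Next, Lemma~\ref{apx:lem:dec:replrun} handles the remaining self-nonces, one at a time, using copies of the \kreadrep run. The result is a genuine run in which the $\irun$-thread still reaches its write of $\const$. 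That thread is cyclic by construction, so $\const$ is realized by a cyclic run.

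\emph{Main obstacle.} The delicate step is the interaction of Lemma~\ref{apx:lem:dec:replrun} with the realizability of $\relconst$. The replicating copies must not read constants beyond those realized. Their constant reads must also be scheduled compatibly with the interleaving built from Lemma~\ref{apx:lem:oneglobal:const}. I would handle this by noting two facts: the \kreadrep run is itself a genuine run, so it needs no external support; and every inserted segment begins with a write to $\greg$, so the segments can be interleaved freely. Freshness is preserved by renaming the nonces of each copy apart.
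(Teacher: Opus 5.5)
Your proposal is correct and follows the paper's route. The paper gives no separate proof; it only remarks that realizing $\const$ means reaching a location that executes the write of $\const$, so the lemma is Lemma~\ref{apx:lem:constcycwitness} with that target (forward by projection, backward by the splicing of Lemma~\ref{lem:dec:1greg:forward} followed by Lemma~\ref{apx:lem:dec:replrun}), which is what you spell out in more detail. One small tightening in your forward direction: take the \emph{earliest} replicating write of $\val$ in $\run$, since only before that point are all writes of $\val$ made by thread $i$ (at most $k$ of them); the thread that feeds $\val$ back to $i$ may itself have been fed by earlier replicators and seen more than $k$ pseudo-assignments of $\val$.
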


\section*{The overall procedure to find witnessing run}

To discover the maximum $k$ for which there is a $k$-writing run and the set of constants with acyclic run. We iterate the following sequence, for each $i$ upto $\MAX$, starting from $1$ till it is no longer possible.
\begin{itemize}
	\item If there is a $\val \in \constants$ such that there is an acyclic pseudo run $\irun$ whose support is bounded by $i-1$ that writes $\val$, then we add $\val$ to $\relconst$.
	\item  Check for $i$-writing acyclic  pseudo run $\irun$ with  $\supp(\wdwidth(\irun \setminus \constants)) \leq i-1$ and $\asupp(\irun) \subseteq \relconst$.
\end{itemize}

The above procedure provides us with maximum $i$ for which there is a $i$ writing run and the set of  realizable constants using an acyclic run. 
We check if there is an acyclic  replicating pseudo $1$-run $\irun$ with  $\supp(\wdwidth(\irun \setminus \constants)) \leq i$ and $\asupp(\irun) \subseteq \relconst$. In this case we let $i = \MAX$.  If there is a $\val \in \constants$ such that there is an acyclic pseudo run $\irun$ whose support is bounded by $\MAX$ that writes $\val$, then we add $\val$ to $\relconst$. This provides us with  the set of all realizable constants using an acyclic run. 

We then check for witnessing acyclic $1$-pseudo run (assuming $\relconst$ and $i$-writing run) and return success if found.

We  then do the following to discover the minimum $i$ for which there is a \kreadrep replicating run and the set of all constants realizable by a cyclic run. For $i$ starting from $\MAX$ and decreasing each time, we repeat the following till the second action fails.

\begin{itemize}
	\item If there is a $\val \in \constants$ such that there is a cyclic pseudo run $\irun$ that is  \kwriteprPar{i} and $\asupp(\irun) \subseteq \relconst$ that writes $\val$, then we add $\val$ to $\relconst$.
	\item  Check for an \kwriteprPar{i}  cyclic pseudo run $\irun$, that is also \kreadrepPar{i-1} with   $\asupp(\irun) \subseteq \relconst$.
\end{itemize}

This gives us a minimum $i$ for which there is a \kreadrepPar{i} replicating run.  We check for a  witnessing \kwritepr cyclic pseudo run (assuming $\relconst$ and  \kreadrepPar{i} replicating run and return success if found. We return failure otherwise. A detailed pseudo code is provided in \ref{alg:acyclic} and \ref{alg:cyclic}.

\paragraph{$\nu$-nets.} \label{apx:sec:sl:nunet}
A $\nu$-net is a Petri net in which every token carries a tag from the
infinite domain $\domain$. Transitions refer to tags through two disjoint
sets of variables: variables in $\var$ match tags that are already present,
and variables in $\newvar$ create tags that are not present.
Formally, a $\nu$-net is a tuple $N = (\places,\transition,\infun,\outfun)$,
where $\places$ is a finite set of places, $\transition$ is a finite set of
transitions, and $\infun: \places \times \transition \to \var^\oplus$ and
$\outfun: \places \times \transition \to (\var \cup \newvar)^\oplus$ are the
flow functions. A \emph{marking} is a map $\marking: \places \to \domain^\oplus$.
We write $\marking_1 \geq \marking_2$ if $\marking_1(p) \geq \marking_2(p)$
for every $p \in \places$, and $\domain(\marking)$ for the set of tags
occurring in $\marking$.

For a transition $t$, let $\mathit{Var}(t)$ be the set of variables occurring
in $\infun(\cdot,t)$ or $\outfun(\cdot,t)$. A \emph{mode} of $t$ in a marking
$\marking$ is an injective map $\sigma: \mathit{Var}(t) \to \domain$ such that
$\sigma(x) \in \domain(\marking)$ for every $x \in \mathit{Var}(t) \cap \var$
and $\sigma(y) \notin \domain(\marking)$ for every
$y \in \mathit{Var}(t) \cap \newvar$. A mode extends to multisets of variables
by $\sigma(\mset)(\val) = \sum_{z \,:\, \sigma(z) = \val} \mset(z)$. The transition
$t$ is \emph{enabled} in $\marking$ with mode $\sigma$ if
$\sigma(\infun(p,t)) \leq \marking(p)$ for every $p \in \places$; firing it
yields the marking $\marking'$ with
$\marking'(p) = \marking(p) - \sigma(\infun(p,t)) + \sigma(\outfun(p,t))$,
and we write $\marking \movesto{t} \marking'$. We write $\movesto{}^*$ for the
reflexive transitive closure of $\movesto{}$. A marking $\marking'$ is
\emph{coverable} from $\marking$ if $\marking \movesto{}^* \marking''$ for some
$\marking'' \geq \marking'$.

\paragraph{The construction.}
Let $\pds$ be a program with a single local register $\xvar$, and let
$\ell_{\tt fin} \in \Labels$ be the target label. We build a $\nu$-net $N$
with an initial marking $\marking_{\tt in}$ and a target marking
$\marking_{\tt fin}$ such that $\marking_{\tt fin}$ is coverable from
$\marking_{\tt in}$ if and only if $\ell_{\tt fin}$ is reachable by a fresh
run of $\pds$.

The places are
$\places = \globalregs \cup \constants \cup \Labels \cup \{\mathit{done}\}$:
one place for each global register, each constant and each label, and one
extra place $\mathit{done}$. A token in a label place $\ell$ with tag $\val$
represents a thread at label $\ell$ whose local register holds $\val$. The
place of a global register holds a single token, whose tag is the register's
current value. The place of a constant $c$ holds a single token tagged $c$,
which never changes.

In $\marking_{\tt in}$, the place of each global register and the place of
the constant $0$ each hold one token tagged $0$; the place of every other
constant $c$ holds one token tagged $c$; the label places and $\mathit{done}$
are empty. The target marking $\marking_{\tt fin}$ has a single token tagged
$0$ in $\mathit{done}$ and no other tokens.

\paragraph{Transitions.}
The transitions are listed in Table~\ref{tab:nu-net}. In the
\emph{Consumes} column, $p\colon u$ means $\infun(p,t) = \{u\}$; in the
\emph{Produces} column, it means $\outfun(p,t) = \{u\}$. For places that are
not listed, $\infun(p,t)$ and $\outfun(p,t)$ are empty.
\begin{itemize}
	\item $t_{\tt in}$ creates a new thread at $\initlab$ whose local register
	holds $0$; it reads, and returns, the token of the constant $0$.
	\item An assignment moves the thread's token from $\ell$ to
	$\nextof{\ell}$ and copies a tag: into the register's place for
	$g := \xvar$ and $g := h$, and into the thread's own token for
	$\xvar := h$. The instruction $\xvar := *$ gives the thread's token a tag
	that is not present in the marking.
	\item A test has one transition per branch. Since modes are injective,
	using the \emph{same} variable on both compared places forces equal tags
	(the \emph{then} branch), and using \emph{different} variables forces
	different tags (the \emph{else} branch). Tests with $\neq$ swap the two
	branches; tests comparing a place with itself, and the assignments
	$\xvar := \xvar$ and $g := g$, are handled as unconditional jumps.
	\item $t_{\tt fin}$ moves a thread at $\ell_{\tt fin}$ to $\mathit{done}$
	and retags it with $0$, so that $\marking_{\tt fin}$ can use a fixed tag.
\end{itemize}
Because modes are injective, distinct variables of a transition always denote
distinct tags. To allow values to coincide, we also add, for every transition
in Table~\ref{tab:nu-net}, each transition obtained by renaming some of its
$\var$-variables to a common variable, except that the two variables compared
in the \emph{else} branch of a test are never merged.

\begin{table}[t]
	\centering
	\caption{Transitions of the $\nu$-net. Here $u$ is the tag of the executing
		thread, $v, w \in \var$, $y \in \newvar$, $g \in \globalregs$ and
		$h \in \globalregs \cup \constants$ (in tests, $g \in \globalregs \cup
		\constants$ as well), and $g \neq h$. Variants that merge variables are added
		as described in the text.}
	\label{tab:nu-net}
	\renewcommand{\arraystretch}{1.2}
	\setlength{\tabcolsep}{8pt}
	\begin{tabular}{@{}lll@{}}
		\hline
		Instruction & Consumes & Produces \\
		\hline
		$t_{\tt in}$ (new thread) & $0\colon u$ & $0\colon u,\ \initlab\colon u$ \\
		$\ell$: $g := \xvar$ & $\ell\colon u,\ g\colon v$
		& $\nextof{\ell}\colon u,\ g\colon u$ \\
		$\ell$: $g := h$ & $\ell\colon u,\ g\colon v,\ h\colon w$
		& $\nextof{\ell}\colon u,\ g\colon w,\ h\colon w$ \\
		$\ell$: $\xvar := h$ & $\ell\colon u,\ h\colon v$
		& $\nextof{\ell}\colon v,\ h\colon v$ \\
		$\ell$: $\xvar := *$ & $\ell\colon u$ & $\nextof{\ell}\colon y$ \\
		$\ell$: goto $\ell'$ & $\ell\colon u$ & $\ell'\colon u$ \\
		$\ell$: if $\star$ (then) & $\ell\colon u$ & $\ell'\colon u$ \\
		$\ell$: if $\star$ (else) & $\ell\colon u$ & $\nextof{\ell}\colon u$ \\
		$\ell$: if $\xvar = h$ (then) & $\ell\colon u,\ h\colon u$
		& $\ell'\colon u,\ h\colon u$ \\
		$\ell$: if $\xvar = h$ (else) & $\ell\colon u,\ h\colon v$
		& $\nextof{\ell}\colon u,\ h\colon v$ \\
		$\ell$: if $g = h$ (then) & $\ell\colon u,\ g\colon v,\ h\colon v$
		& $\ell'\colon u,\ g\colon v,\ h\colon v$ \\
		$\ell$: if $g = h$ (else) & $\ell\colon u,\ g\colon v,\ h\colon w$
		& $\nextof{\ell}\colon u,\ g\colon v,\ h\colon w$ \\
		$t_{\tt fin}$ & $\ell_{\tt fin}\colon u,\ 0\colon v$
		& $\mathit{done}\colon v,\ 0\colon v$ \\
		\hline
	\end{tabular}
\end{table}

\begin{lemma}\label{lem:nu-net}
	The label $\ell_{\tt fin}$ is reachable by a fresh run of $\pds$ if and only
	if $\marking_{\tt fin}$ is coverable from $\marking_{\tt in}$.
\end{lemma}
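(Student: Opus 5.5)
The plan is to prove both directions by a step-by-step simulation. The simulation rests on the following abstraction relation between configurations of $\pds$ and markings of $N$. An $n$-configuration $\pdsconf=\tuple{\labelmap,\lregmap,\gregmap}$ is \emph{represented} by a marking $\marking$ if there is an injection $\theta$ from $\domain(\pdsconf)\cup\constants$ into tags, fixing each constant $c$ to the tag $c$, such that:
\begin{itemize}
\item each global-register place $g$ holds exactly the token $\theta(\gregmap(g))$;
\item each constant place $c$ holds exactly the token $c$;
\item the multiset of label-place tokens is $\sum_{i}\theta(\lregmap(i)(\xvar))$, placed at $\labelmap(i)$.
\end{itemize}
Because tags may be renamed, we may take $\theta$ to be the identity on real values and work directly with data values. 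The invariant to maintain, besides the relation itself, is that the constant and global places always hold exactly one token each. This is immediate from Table~\ref{tab:nu-net}: every transition that consumes from such a place also produces exactly one token into it.

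\emph{From runs to coverings.} Given a fresh $n$-run $\initpdsconf(n)\movesto{}^*\pdsconf$ in which some thread sits at $\ell_{\tt fin}$, we first fire $t_{\tt in}$ $n$ times. Each firing copies the tag $0$ from the constant place $0$, so the result represents $\initpdsconf(n)$. We then simulate each move $\pdsconf_1\movesto{\instr}_i\pdsconf_2$ by the transition of the table for $\instr$, applied to thread $i$'s token. The mode is chosen to map the variables to the actual values involved. Whenever two of those values coincide, we use the variant that merges the corresponding $\var$-variables; this is exactly why merged variants were added, and modes remain injective. For a test, the \emph{then}/\emph{else} transition is selected by the actual outcome. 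The \emph{else} variant never needs merged compared variables, since in that branch the two values are really different.

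For $\xvar:=*$ we must check that the new-variable constraint holds. Freshness of $\pi$ guarantees that the nonce lies outside $\constants$ and differs from all earlier nonces. It could still equal a value present in the marking only if that value entered via a nonce, a constant, or $0$, and all of these are excluded. So $\sigma(y)\notin\domain(\marking)$. Finally we fire $t_{\tt fin}$, which puts a token tagged $0$ in $\mathit{done}$ and so covers $\marking_{\tt fin}$.

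\emph{From coverings to runs.} Suppose $\marking_{\tt in}\movesto{}^*\marking''\ge\marking_{\tt fin}$. Some firing of $t_{\tt fin}$ must occur, since it is the only transition producing into $\mathit{done}$; we truncate the sequence just before that firing. Next we reorder all firings of $t_{\tt in}$ to the front. This is legitimate because $t_{\tt in}$ only reads and returns the constant token $0$, which is never altered, and it adds tokens that only enable more. Let $n$ be the number of $t_{\tt in}$ firings. Inductively, each later firing $t$ with mode $\sigma$ determines a unique thread (the token consumed from a label place, identified by the representation) together with an instruction, and yields a legal move of $\pds$ that preserves the representation. The $\nu$-net constraint that $\newvar$-tags are absent from the marking gives freshness with respect to current values. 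We still need freshness with respect to \emph{all previously generated} nonces, including ones no longer present. Here we use that data only flows by copying, so a value absent from the marking can never reappear. Hence we may rename each newly created tag to a globally new value without affecting any later firing, and the resulting run is fresh. The thread whose token fed $t_{\tt fin}$ is at $\ell_{\tt fin}$.

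The main obstacle is the test/merge bookkeeping: showing that the then/else transitions, together with their merged variants, enable exactly the moves $\pds$ allows. Both \emph{else} being enabled on equal values and a needed coincidence being unavailable have to be ruled out. I would verify this row by row: injectivity forces distinct variables to denote distinct tags, merged variants realise every equality pattern, and the compared \emph{else} variables are never merged. A secondary point is making the renaming argument for global freshness precise, which I would do by induction on the length of the firing sequence.
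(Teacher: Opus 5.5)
Your proposal is correct and follows the paper's proof essentially step for step: the same one-token invariant and encoding, the forward simulation via merged variants with freshness keeping the $\newvar$ tag out of the marking, and the backward simulation after truncating at $t_{\tt fin}$ and moving the $t_{\tt in}$ firings to the front, followed by renaming reused nonces. Two small points to tighten: truncate before the \emph{first} firing of $t_{\tt fin}$, and note that an absent value \emph{can} reappear through a later $\newvar$ firing (which is exactly the non-freshness being repaired), just never by copying, and this is what makes renaming each created tag up to its next creation sound.
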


\begin{proof}
	By inspection of Table~\ref{tab:nu-net}, every transition returns exactly one
	token to each register or constant place it reads, and no transition changes
	the tag of a constant place. Hence, in every marking reachable from
	$\marking_{\tt in}$, each register and constant place holds exactly one
	token, and the place of each constant $c$ holds the tag $c$. Such a marking
	$\marking$ encodes a configuration $\mathit{conf}(\marking)$ of $\pds$: one
	thread for each token in a label place, at that label and with the token's
	tag as its local value, and each global register holding the tag in its
	place.
	
	($\Rightarrow$) Let $\pi$ be a fresh $n$-run of $\pds$ that reaches
	$\ell_{\tt fin}$. Firing $t_{\tt in}$ $n$ times from $\marking_{\tt in}$ yields
	a marking that encodes the initial $n$-configuration. We then simulate the
	steps of $\pi$ in order. If a thread at $\ell$ executes $\tau$, we fire the
	transition for $\tau$ (for a test, the one for the branch taken) in the
	variant that merges exactly the variables whose values coincide, with the
	mode that maps each variable to the corresponding value; this mode is
	injective by the choice of variant. For $\xvar := *$, the nonce $\val'$
	chosen in $\pi$ is neither a constant nor an earlier nonce, whereas every tag
	in the current marking is a constant or an earlier nonce; hence
	$\val' \notin \domain(\marking)$ and we may map $y$ to $\val'$. Each step
	preserves the encoding. When a thread reaches $\ell_{\tt fin}$, we fire
	$t_{\tt fin}$, which puts a token tagged $0$ into $\mathit{done}$; the
	resulting marking covers $\marking_{\tt fin}$.
	
	($\Leftarrow$) Suppose $\marking_{\tt in} \movesto{}^* \marking'' \geq
	\marking_{\tt fin}$. Only $t_{\tt fin}$ produces tokens in $\mathit{done}$,
	so it fires; the prefix of the firing sequence before its first firing ends
	in a marking with a token in $\ell_{\tt fin}$. Since $t_{\tt in}$ only reads
	and returns the token of the constant $0$, whose tag never changes, all
	firings of $t_{\tt in}$ in this prefix can be moved to its beginning; let $n$
	be their number. Every other transition in the prefix corresponds to one step
	of a thread of $\pds$ on the encoded configurations: assignments copy tags as
	the instruction prescribes, and, since modes are injective, a test fires its
	\emph{then} transition exactly when the compared values are equal. This
	yields an $n$-run of $\pds$ that reaches $\ell_{\tt fin}$.
	
	This run need not be fresh. A variable $y \in \newvar$ is mapped to a tag
	absent from the current marking, so the new value is not a constant (constant
	tags are always present) and is not held by any register or thread; but it
	may coincide with a nonce generated earlier that has since disappeared. We
	repair this from left to right: whenever a nonce step produces a value $\val$
	that already occurred earlier in the run, we replace $\val$ by a value that
	occurs nowhere in the run, in all configurations from this step up to the
	next step that generates $\val$ again. Since $\val$ is absent just before the
	step, every occurrence of $\val$ in that segment originates from this nonce,
	so the replacement preserves every equality and inequality tested by the run.
	The result is a fresh $n$-run of $\pds$ that reaches $\ell_{\tt fin}$.
\end{proof}

By Lemma~\ref{lem:nu-net}, \PCRFR\ with a single local register reduces to
coverability in $\nu$-nets, which is decidable. This proves decidability of
\PCRFR\ with a single local register.

	\section{Single Local Register a Direct \WSTS\ Approach} \label{sec:onelocal}

In this section, we show that \PCRFR\ is decidable if each thread has only one local register.  We reduce this problem to reachability in a \emph{well-structured system} \cite{561359,FINKEL200163}.
We
fix $\pds = (\globalregs,P)$ with $\globalregs = \{ \greg_1, \dots,\greg_m\}$,
we let the unique local register in $P$ to be $\lreg$. For the 
sake of simplicity, we  assume that the program does not use 
constants and later explain how handle constants separately.

Let $R: \domain \to \domain$ be a bijection respecting the constants.  For a configuration $\pdsconf$, $R(\pdsconf)$ is the configuration obtained by replacing any value $\val$ by $R(\val)$ in every register (local and global). Quite clearly if $\pdsconf_1 \movesto{} \pdsconf_2 \movesto{} \ldots \pdsconf_m$ is a witnessing run then so is
$R(\pdsconf_1) \movesto{} R(\pdsconf_2) \movesto{} \ldots R(\pdsconf_m)$. 
Keeping this in mind, we define the following relation on configurations:
$\pdsconf_1 \equiv \pdsconf_2$ iff there is a bijection $R$ on $\domain$ such
that $\pdsconf_2 = R(\pdsconf_1)$. Observe that this relation is an equivalence relation. This equivalence yields an abstract state space on which we can faithfully map (to be made precise)  concrete moves as abstract moves yielding a well-structured system.  

The idea behind the abstraction is the following, first,  we do not maintain the exact value of the global registers but just maintain a partition on $\globalregs$ (where two registers are in the same set iff they have the same value), secondly, for each thread we simply track its label and whether  its value is equal to one of the global registers (if so which)  or not.  The space of abstract configurations is contained in $(\Partition(\globalregs) \times (\functions{\Labels}{\Naturals}_\greg)_{\greg \in \globalregs} \times \functions{\Labels}{\Naturals}^*) $. 
In such an abstract configuration $\tuple{\pt, (\ffun_\greg)_{\greg \in \globalregs},\word}$, the first component holds the partition of the global registers, the second component is a tuple of functions --- one for each global register $\greg$ --- that counts, for each program label, the number of threads that are in this label and whose local variable holds the same value as $\greg$. If $\greg$ and $\greg'$ are equivalent according to $\pt$ then corresponding function should be the same. Finally, the last component is a sequence of functions -- each element of this sequence corresponds to a distinct value that appears in the local register of some thread but is different from all the global registers. Each such element is a function that counts, for each program label, the number of threads with this value that are currently at that label.

  The mapping abstraction $\abstraction$ that associates with each configuration $\pdsconf \in\Confof{\pds}$ an abstract configuration is defined as follows: 
  Let $\pdsconf= \tuple{\labelmap, \lregmap,\gregmap} \in \Confof{\pds}(n)$ for some $n \in \Naturals$ be any configuration  with $n$ threads. Let $\vset = \{ \gregmap(\greg)\mid \greg \in \globalregs \} = \{\val_1, \dots,\val_k\}$ be the set of all values appearing in the global register.  We let $\Parts(\pdsconf) = \tuple{\{\greg\in \globalregs \mid  \gregmap(\greg) = \val_1\}, \dots, \{\greg\in \globalregs \mid \gregmap(\greg) = \val_k\}}$  be the partition of the global registers based on the values in $\vset$. 
For any $\val \in \domain$ we define a map $\hfun^\pdsconf_\val : \Labels \to \Naturals$ as follows: Given any $\lab \in \Labels$, it returns the number of threads that are in  label $\lab$ and with the value $\val$ in its local register. Recall that all this is with respect to the  configuration $\pdsconf = \tuple{\labelmap, \lregmap,\gregmap} $.  
$$\hfun^\pdsconf_{\val}(\lab) = |\{i \in [1,n] \mid \lregmap(i)(\lreg) = \val, \labelmap(i) = \lab \}|$$
For each $\greg \in \globalregs$, we let $\ffun_\greg = \hfun^\pdsconf_{\gregmap(\greg)}$. %
Let $\bar{\vset} = \domain(\pdsconf) \setminus \vset$ be the set of all values in $\pdsconf$ that do not appear in any global register. Suppose $\bar\vset =   \{\val'_1, \dots , \val'_{k'} \}$, we let $\word(\pdsconf) = \hfun^\pdsconf_{\val'_1} \hfun^\pdsconf_{\val'_2} \dots \hfun^\pdsconf_{\val'_{k'}}$ to be a sequence of functions, one for each  value in $\bar{\vset} $. We will assume they are listed according to some fixed ordering on the domain, we define the abstraction of $\pdsconf$ as: 
$$			\abstraction(\pdsconf) = \tuple{\Parts(\pdsconf), (\ffun_\greg)_{\greg \in \globalregs}, \word(\pdsconf)} 
$$
We let $\ABS = \{\abstraction(\pdsconf) \mid \pdsconf \in \Confof\pds\}$ to be the set of all possible abstract configurations.
Given two  abstractions $\abs_1 = \tuple{\pt^1, (\ffun^1_\greg)_{\greg \in \globalregs},\word^1} ,\abs_2 = \tuple{\pt^2, (\ffun^2_\greg)_{\greg \in \globalregs},\word^2} \in \ABS$, we say $\abs_1 \equiv_\abs \abs_2$ if $\pt^1 = \pt^2$, for each $\greg \in \globalregs$, $\ffun^1_\greg = \ffun^2_\greg$ and there is a bijection between $\word^1$ and $\word^2$ (i.e. one is a permutation of the other). We  abuse  notation and say $\abs_1 = \abs_2$ if $\abs_1 \equiv_\abs \abs_2$.
Our first lemma explains the connection between $\equiv_\abs$ and $\equiv$.

It is easy to see that from $\abs$ we can determine unambiguously, the number of threads involved in any associated concrete configuration, the number of threads at a particular program label, the number of distinct data values in use and so on. Essentially all the relevant information other than the exact data values.

\paragraph{Freshness.}
In a step of $\movesto{}$, a nonce assignment $\xvar := *$ may pick any
value. A fresh run needs a value that was never seen before. Here we use a
weaker, local condition. A step is \emph{locally fresh} if every nonce it
produces is neither a constant nor a value present in the current
configuration. In this section, we abuse notation: we write $\movesto{}$ only
for locally fresh steps, and runs use only such steps. So, in the
abstraction, a nonce assignment always puts the thread's value into a new
class of its own.

Locally fresh is as good as fresh. Every fresh run is locally fresh, because
every value present in a configuration is a constant or an earlier nonce.
Conversely, a locally fresh run may reuse a value that was used earlier but
is no longer present anywhere. This value cannot be compared with its old
occurrences any more. So, from that step on, we can rename it to a value that
occurs nowhere in the run. Doing this for every reuse gives a fresh run that
reaches the same labels. When a larger configuration already holds the chosen
value, we pick another new value; the results agree up to $\equiv$.

For any two abstractions $\abs_1,\abs_2 \in \ABS$, we say $\abs_1 \movesto{}_{\ABS} \abs_2$ (or simply $\abs_1 \movesto{}_{} \abs_2$) if and only if there are $\pdsconf_1, \pdsconf_2$ such that $\abstraction(\pdsconf_1) = \abs_1$, $\abstraction(\pdsconf_2) = \abs_2$ and $\pdsconf_1 \movesto{} \pdsconf_2$. We will  let $\movesto{}^*_\ABS$ to also denote the reflexive transitive closure of $\movesto{}_{\ABS}$.  The following  lemma relate an abstract configuration with a concrete one. The proof of the lemma can be found in appendix. %

\begin{lemma}\label{lem:onelocal:absequiv}
	$\pdsconf_1 \equiv \pdsconf_2$ if and only if $\abstraction(\pdsconf_1) = \abstraction(\pdsconf_2)$
\end{lemma}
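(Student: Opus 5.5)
The plan is to prove the two directions separately. In both, the abstraction is read as recording, for each data value present in a configuration, the multiset of labels of the threads holding that value. Values held by global registers are indexed by the registers (through the partition), and the remaining values form an unordered list (the permutation in $\equiv_\abs$).

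\emph{($\Rightarrow$)} Suppose $\pdsconf_2 = R(\pdsconf_1)$ for a bijection $R$. First, $\gregmap_2(\greg) = R(\gregmap_1(\greg))$ and $R$ is injective, so $\gregmap_1(\greg)=\gregmap_1(\greg')$ iff $\gregmap_2(\greg)=\gregmap_2(\greg')$; hence $\Parts(\pdsconf_1)=\Parts(\pdsconf_2)$. Next, since $R$ does not touch labels, $\hfun^{\pdsconf_2}_{R(\val)} = \hfun^{\pdsconf_1}_{\val}$ for every $\val$. Applying this with $\val=\gregmap_1(\greg)$ gives $\ffun^1_\greg = \ffun^2_\greg$. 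Finally, $R$ maps $\domain(\pdsconf_1)\setminus\vset_1$ bijectively onto $\domain(\pdsconf_2)\setminus\vset_2$ and preserves $\hfun$. So the sequences $\word(\pdsconf_1)$ and $\word(\pdsconf_2)$ are permutations of each other, and the abstractions are equal.

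\emph{($\Leftarrow$)} This is the substantive direction, and here the statement needs care. Configurations are indexed by thread identities, so equality of abstractions only gives $\pdsconf_2 = R(\pdsconf_1)$ up to a permutation of thread indices. I would therefore read $\equiv$ as allowing a renaming of threads, which is harmless since the semantics is symmetric in thread identities. With that reading, the construction goes in four steps.
\begin{enumerate}
\item Define $R$ on the global values. Since the partitions coincide, map the common value of each block in $\pdsconf_1$ to the common value of the same block in $\pdsconf_2$. This is well defined and injective.
\item Define $R$ on the non-global values. The permutation witnessing $\equiv_\abs$ pairs each $\val'_i$ of $\pdsconf_1$ with some $\val''_{\sigma(i)}$ of $\pdsconf_2$ such that $\hfun^{\pdsconf_1}_{\val'_i} = \hfun^{\pdsconf_2}_{\val''_{\sigma(i)}}$. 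Set $R(\val'_i)=\val''_{\sigma(i)}$.
\item Extend $R$ to a bijection of $\domain$. The two partial maps from steps 1 and 2 are bijections between finite sets of equal size, so they extend to a bijection of $\domain$, and this extension can be chosen to fix the constants. This works because constants are excluded in this section; with constants present, one would add the requirement that the abstraction records equality with constants.
\item Match the threads. For each value $\val$ and label $\lab$, the number of threads of $\pdsconf_1$ at $\lab$ holding $\val$ equals the number of threads of $\pdsconf_2$ at $\lab$ holding $R(\val)$. Every thread value lies in $\domain(\pdsconf)$, which is either global or non-global, so these equalities cover all threads. Choosing a bijection of threads class by class gives a thread renaming under which $\pdsconf_2 = R(\pdsconf_1)$.
\end{enumerate}

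The main obstacle is this bookkeeping around thread identities. The abstraction deliberately forgets thread identities, so literal equality of indexed configurations cannot be recovered from it. I would state explicitly that $\equiv$ is taken modulo thread permutation. That reading is clearly the one intended, because the abstraction uses only counts of threads.
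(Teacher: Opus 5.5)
Your proof is correct and follows the same route as the paper. In the forward direction you transport the partition, the per-value label counts, and the list $\word$ along $R$. In the backward direction you build $R$ from the value bijection that equal abstractions induce, then extend it to all of $\domain$.

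Your version is more complete than the paper's sketch. The paper uses only the permutation of $\word$, and so omits the values held in global registers, which you map block by block via the partition. It also ignores thread identities. You are right that the backward direction holds only modulo a renaming of threads, since $R$ leaves labels and thread indices untouched.
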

\begin{proof}
	For the forward direction, let   $\pdsconf_1 \equiv \pdsconf_2$ and   $\abs(\pdsconf_1) = \tuple{\pt^1, (\ffun^1_\greg)_{\greg \in \globalregs},\word^1}$, $\abs(\pdsconf_2) = \tuple{\pt^2, (\ffun^2_\greg)_{\greg \in \globalregs},\word^2}$. We will show $\abstraction(\pdsconf_1) = \abstraction(\pdsconf_2)$. 
	Since $\pdsconf_1 \equiv \pdsconf_2$, there is a bijection $R$ such that $\pdsconf_2 = R(\pdsconf_1)$, and hence $\pdsconf_1,\pdsconf_2 \in \Confof{\pds}(n)$ for some $n \in \Naturals$.  Notice that for any $\greg_1,\greg_2 \in \globalregs$, $\gregmap(\pdsconf_1)(\greg_1) = \gregmap(\pdsconf_1)(\greg_2) $ if and only if $\gregmap(\pdsconf_2)(\greg_1) = \gregmap(\pdsconf_2)(\greg_2) $, hence  $\pt^1 = \pt^2$. Similarly, we also have for any $\greg \in \globalregs$ and a thread $i \in [1,n]$, $\gregmap(\pdsconf_1)(\greg)  = \lregmap(\pdsconf_1)(\lreg)(i)$ if and only if $\gregmap(\pdsconf_2)(\greg)  = \lregmap(\pdsconf_2)(\lreg)(i)$. With this we also have $\ffun^1_\greg = \ffun^2_\greg$. Finally for any $i,j \in [1,n]$ we have $\lregmap(\pdsconf_1)(\lreg)(i) = \lregmap(\pdsconf_1)(\lreg)(j)$ if and only if $\lregmap(\pdsconf_2)(\lreg)(i) = \lregmap(\pdsconf_2)(\lreg)(j)$, hence there is a bijection between $\word_1 $ and $\word_2$. %
	With this we have 	$\abstraction(\pdsconf_1) =  \abstraction(\pdsconf_2)$

	For  the other direction, since we have  $\abstraction(\pdsconf_1) = \abstraction(\pdsconf_2)$, there is a bijection between $\word^1$ and $\word^2$. This induces a natural bijection between the data values that appear in $\pdsconf_1$ and the data values that appear in $\pdsconf_2$.  Any bijective function $R$ that extends this bijection, will provide witness for $\pdsconf_1 \equiv \pdsconf_2$.
\end{proof}

The below lemma relates the moves of two equivalent configurations,  the proof for it can be found in the appendix.  As a consequence, we will have that  $\movesto{}$ on configurations is faithfully lifted to abstract configurations. 

\begin{lemma}\label{lem:onelocal:pcequiv}
        For any two configurations $\pdsconf_1,\pdsconf_2 \in \Confof{\pds}$, if $\pdsconf_1 \movesto{} \pdsconf_2$, then the following holds
        \begin{itemize}
                \item For any $\pdsconf \equiv \pdsconf_1$, there is a $\pdsconf' \equiv \pdsconf_2$ such that $\pdsconf \movesto{} \pdsconf'$.
                \item  For any $\pdsconf' \equiv \pdsconf_2$, there is a $\pdsconf \equiv \pdsconf_1$ such that $\pdsconf \movesto{} \pdsconf'$.
        \end{itemize}
\end{lemma}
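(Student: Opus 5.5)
The plan is to transport the move along the bijection that witnesses the equivalence; no abstraction argument is needed. Suppose $\pdsconf_1 \movesto{\instr}_i \pdsconf_2$. For the first item, let $\pdsconf \equiv \pdsconf_1$, witnessed by a bijection $R$ on $\domain$ respecting the constants with $\pdsconf = R(\pdsconf_1)$. I will show that thread $i$ executing the same instruction $\instr$ from $R(\pdsconf_1)$ yields $R(\pdsconf_2)$. Setting $\pdsconf' = R(\pdsconf_2)$ then gives $\pdsconf' \equiv \pdsconf_2$ directly. The second item is symmetric: given $\pdsconf' = R(\pdsconf_2)$, take $\pdsconf = R(\pdsconf_1)$. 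Equivalently, apply the first-item argument with $R^{-1}$, using that $\equiv$ is an equivalence relation.

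The core is a case analysis on the instruction type, checking that the rules commute with $R$. Labels are untouched by $R$, so the label updates agree in every case. For a local or global assignment $\xvar := \yvar$, the copied value $t$ becomes $R(t)$. This holds when $\yvar$ is a register, because $R$ is applied pointwise. It also holds when $\yvar$ is a constant, because $R$ fixes constants. For conditional jumps, $R$ is injective, so $u = v$ iff $R(u) = R(v)$; hence the same branch is taken, and this also covers comparisons with constants. The cases $\mathtt{goto}$ and $\mathtt{if}\ \star$ change only the label, and we choose the same successor.

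The one case needing care, and the main obstacle, is the nonce assignment $\xvar := *$ under the standing convention that $\movesto{}$ denotes locally fresh steps. If $\pdsconf_1 \movesto{} \pdsconf_2$ assigns $\val$, we let the move from $R(\pdsconf_1)$ assign $R(\val)$. We must check that $R(\val)$ is locally fresh there. Since $\val \notin \constants$ and $R$ maps constants onto constants bijectively, $R(\val) \notin \constants$. Since $\val \notin \domain(\pdsconf_1)$ and $R$ is injective, $R(\val) \notin R(\domain(\pdsconf_1)) = \domain(R(\pdsconf_1))$. The resulting configuration is exactly $R(\pdsconf_2)$.

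This completes both items. As a consequence, combined with Lemma~\ref{lem:onelocal:absequiv}, $\movesto{}_{\ABS}$ is independent of the chosen representatives.
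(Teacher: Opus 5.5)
Your proposal is correct and follows the paper's own route: you transport the move along the bijection $R$ witnessing the equivalence and take $\pdsconf' = R(\pdsconf_2)$ (respectively $\pdsconf = R(\pdsconf_1)$). Your case analysis, in particular the check that $R(\val)$ is still locally fresh using injectivity and the fact that $R$ fixes constants, spells out what the paper's sketch compresses into ``moves depend only on the relationship between registers.''
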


\begin{proof}[proof sketch of Lemma \ref{lem:onelocal:pcequiv}]
	Since $\pdsconf \equiv \pdsconf_1$,  there is a bijection $R$ such that $R(\pdsconf_1) = \pdsconf$. Since $\pdsconf_1 \movesto{} \pdsconf_2$ and since any move of the program is dependant only on the relationship between the registers, we also have $R(\pdsconf_1) \movesto{} R(\pdsconf_2)$, then the required configuration is $\pdsconf' = R(\pdsconf_2)$.
\end{proof}

\paragraph*{ Ordering on $\ABS$}
Next we define an order on the set of all  abstract configurations.
Given two  abstractions $\abs_1 = \tuple{\pt^1, (\ffun^1_\greg)_{\greg \in \globalregs},\word^1} ,\abs_2 = \tuple{\pt^2, (\ffun^2_\greg)_{\greg \in \globalregs},\word^2} \in \ABS$, we  say $\abs_1 \preceq \abs_2$ if  $\pt^1 = \pt^2$   and the following holds. 
\begin{itemize}
	\item For each $ \greg \in \globalregs, \ffun^1_\greg \leq \ffun^2_\greg$. For  $\ffun,\gfun \in \functions{\Labels}{\Naturals}$,  we say $\ffun \leq \gfun$ if for all $\lambda \in \Labels$, $\ffun(\lambda) \leq \gfun(\lambda)$.
	\item $\word^1 \leq^* \word^2$. Here, for any two sequence $\alpha=\ffun_1 \ffun_2 \dots \ffun_k \in  \functions{\Labels}{\Naturals}^*$ and $ \beta=\gfun_1 \gfun_2 \dots \gfun_{k'} \in \functions{\Labels}{\Naturals}^*$,  we say $\alpha \leq^* \beta$ if there is an injective function $\hfun: [1,k] \to [1,k']$ such that for all $i \in [1,k]$, $\ffun_i \leq \gfun_{{\hfun(i)}}$.
\end{itemize}

Let $ \tset_\ABS - \tuple{\tuple{\ABS, \preceq},\movesto{}_{\ABS}}$, we  next show that $\tset_\ABS$ is a \WSTS. We first start with a lemma that will be useful, we first introduce some notations. 

Let $\pdsconf_1 = \tuple{\labelmap_1, \lregmap_1,\gregmap_1} \in \Confof{\pds}(n)$ and $\pdsconf_2 =  \tuple{\labelmap_2, \lregmap_2,\gregmap_2} \in \Confof{\pds}(m)$ for some $n < m$. 
We say $\pdsconf_1 \leq \pdsconf_2$ ($\pdsconf_1 < \pdsconf_2$ when $\pdsconf_1 \neq \pdsconf_2 $) if the following holds.
\begin{itemize}
\item The shared registers hold the same value and the configuration  i.e.  for any $\greg \in \globalregs$, we have $\greg(\pdsconf_1) = \greg(\pdsconf_2)$.
\item $\pdsconf_1$ can be obtained from $\pdsconf_2$ by deleting from threads in it. That  is, there is an injection $\hfun: [1,n] \to [1,m]$ such that: 
$\gregmap_1 = \gregmap_2$
and for all $i \in [1,n]$, we have $\labelmap_1(i) = \labelmap_2(\hfun(i)) $ and $\lregmap_1(i) = \lregmap_2(\hfun(i)) $. 
\end{itemize}
When the the function $\hfun$ is important, we say $\pdsconf_1 \leq_\hfun \pdsconf_2$.  It is easy to see that if $\pdsconf_1 \leq \pdsconf_2$ then $\abstraction(\pdsconf_1) \preceq  \abstraction(\pdsconf_2) $. The following lemma will be useful to prove  $\tset_\ABS$ is a \WSTS.

\begin{lemma}\label{apx:onelocal:pcsim}
	For any two configurations $\pdsconf_1,\pdsconf_2 \in \Confof{\pds}(n)$ for some $n \in \Naturals$, if $\pdsconf_1 \movesto{} \pdsconf_2$, let the move be due to the thread $j \in [1,n]$ then, for any $\pdsconf \in \Confof\pds(m)$ the following holds.
	\begin{enumerate}
		\item  If  $\pdsconf_1 < \pdsconf$, then there is a $\pdsconf' \in \Confof\pds(m)$ such that  $\pdsconf_2 < \pdsconf'$ and  $\pdsconf \movesto{} \pdsconf'$.
		\item If  $\pdsconf <_\hfun \pdsconf_1$  and there is an thread $i \in [1,m]$ such that $h(i) = j$, then there is a $\pdsconf' \in \Confof\pds$ such that $\pdsconf' < \pdsconf_2$ and  $\pdsconf \movesto{} \pdsconf'$.
	\end{enumerate}
\end{lemma}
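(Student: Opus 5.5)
Both parts are simulation arguments: we let the larger configuration copy the move of the smaller one, or the smaller one copy the move of the larger one, through the injection $\hfun$. The key fact is that a single step depends only on three things: the executing thread's label, that thread's local register, and the global valuation $\gregmap$. All three agree along $\hfun$, since $\pdsconf_1 \leq_\hfun \pdsconf$ gives identical global registers and identical label and local valuation for matched threads.

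For part (1), let $\pdsconf_1 <_\hfun \pdsconf$ and let thread $j$ execute instruction $\instr$ in $\pdsconf_1 \movesto{\instr}_j \pdsconf_2$. We let thread $\hfun(j)$ execute the same $\instr$ in $\pdsconf$, and we do a case analysis over the transition rules.
\begin{itemize}
\item For local and global assignments, gotos and non-deterministic choices, thread $\hfun(j)$ reads the same values, so it produces the same update. We resolve the non-deterministic branch the same way.
\item For conditional jumps, the tested registers are either thread $\hfun(j)$'s own register or global registers. These hold equal values, so the same branch is taken.
\item For the nonce assignment $\xvar := \ast$ (in the locally fresh semantics of this section), we would like to reuse the nonce $\val$ chosen in $\pdsconf_1$. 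However, $\val$ might occur in $\pdsconf$ in some thread outside the image of $\hfun$. This is the main obstacle. I would handle it by choosing in $\pdsconf$ a value $\val'$ that is absent from both $\pdsconf$ and $\pdsconf_2$.
\end{itemize}
In the nonce case the result $\pdsconf'$ is then related to $\pdsconf_2$ only after renaming $\val$ to $\val'$. Since $\val$ occurs in $\pdsconf_2$ only in thread $j$, this renaming fixes all other values. So we obtain $\pdsconf_2 \leq_\hfun R(\pdsconf')$ for a bijection $R$ swapping $\val$ and $\val'$, with $R(\pdsconf') \equiv \pdsconf'$. This is what the WSTS argument needs, since it works up to $\equiv$ on abstractions, as remarked in the freshness paragraph: ``the results agree up to $\equiv$''. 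I would state the conclusion of case (1) modulo $\equiv$, or equivalently note that $\pdsconf \equiv R^{-1}(\pdsconf)$ also moves. In all other cases $\pdsconf' $ agrees with $\pdsconf$ off thread $\hfun(j)$, and on that thread and on $\gregmap$ it agrees with $\pdsconf_2$. Hence $\pdsconf_2 \leq_\hfun \pdsconf'$, and the inequality is strict because the unmatched threads of $\pdsconf$ remain.

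Part (2) is symmetric: $\pdsconf <_\hfun \pdsconf_1$, and the moving thread $j$ has a preimage $i$. Thread $i$ of $\pdsconf$ has the same label and local value as thread $j$, and $\pdsconf$ has the same globals. Thread $i$ performs the identical move, and the resulting $\pdsconf'$ satisfies $\pdsconf' \leq_\hfun \pdsconf_2$. This holds because only thread $i$ and possibly $\gregmap$ changed, and they changed exactly as thread $j$ and $\gregmap$ did in $\pdsconf_2$. Freshness is easy here: a nonce absent from $\pdsconf_1$ is also absent from the smaller $\pdsconf$. The inequality is strict because $\hfun$ misses some thread of $\pdsconf_1$ (and hence of $\pdsconf_2$), and the moving thread $j$ lies in the image of $\hfun$.
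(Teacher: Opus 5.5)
Your proof is the paper's simulation argument: replay the move of thread $j$ by thread $\hfun(j)$ in part (1), or by its preimage $i$ in part (2), and it is correct. In the nonce case of (1) you go beyond the paper's one-line proof, and rightly so: since $\movesto{}$ denotes locally fresh steps in this section, the nonce $\val$ may already occur in an unmatched thread of $\pdsconf$ (e.g.\ one stuck at a trap label), so item (1) can fail as literally stated and holds only up to $\equiv$; your renaming repair is exactly what is needed, and it suffices for Lemma~\ref{lem:onelocal:wsts} because $\abstraction$ is $\equiv$-invariant by Lemma~\ref{lem:onelocal:absequiv}.
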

\begin{proof}
	We will prove the former, the proof of the second item of the lemma is similar. Since  $\pdsconf_1 < \pdsconf$, we have that global registers of both $\pdsconf_1$ and $ \pdsconf$ are the same, that is, for all $\greg \in \globalregs$, $\gregmap(\pdsconf_1)(\greg) = \gregmap(\pdsconf)(\greg) $. More over, by definition, we have a $\hfun$ such that  for all $j \in [1,n]$,  we  have $\lregmap(\pdsconf_1)(j) = \lregmap(\pdsconf)(\hfun(j)) $ and $\labelmap(\pdsconf_1)(j) = \labelmap(\pdsconf)(\hfun(j)) $. Suppose $k \in [1,n]$ was the thread that made the move from $\pdsconf_1$. Since this thread is also present in $\pdsconf$ as $\hfun(k)$ and since all the global registers have the same value, the move  can be replicated by $\hfun(k)$ in  $\pdsconf$ to obtain the desired $\pdsconf'$
\end{proof}

With this lemma in place, we can now prove that $\tset_\ABS$ is a \WSTS.

\begin{lemma}\label{lem:onelocal:wsts}
	The transition system $\tset_\ABS $  is a \WSTS.
\end{lemma}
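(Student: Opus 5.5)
Recall that a WSTS consists of a well-quasi-ordering together with a transition relation that is monotone (upward compatible) with respect to it. Both properties must be checked for $\tset_\ABS$. The plan is to prove (i) that $\preceq$ is a well-quasi-order on $\ABS$, and (ii) that $\movesto{}_\ABS$ is upward compatible: if $\abs_1 \movesto{}_\ABS \abs_2$ and $\abs_1 \preceq \abs_3$, then there is $\abs_4$ with $\abs_2 \preceq \abs_4$ and $\abs_3 \movesto{}_\ABS \abs_4$. If needed, one could also record that the abstract transition relation is finitely branching and effectively computable, since $\movesto{}_\ABS$ only changes boundedly many entries. That fact is used later for the backward coverability algorithm, not for the WSTS property itself.

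For (i), write $\ABS$ as a finite union, indexed by the partitions $\pt \in \Partition(\globalregs)$, of subsets of $(\functions{\Labels}{\Naturals})^{|\globalregs|} \times \functions{\Labels}{\Naturals}^*$. First, $\functions{\Labels}{\Naturals} \cong \Naturals^{|\Labels|}$ under the pointwise order is a wqo by Dickson's lemma, and so is the finite product over $\globalregs$. Second, the ordering $\leq^*$ on sequences is the multiset embedding: it asks for an injection mapping each element to a pointwise-larger element, disregarding order. This is coarser than Higman's subword embedding, so Higman's lemma gives that $\leq^*$ is a wqo. Products of wqos are wqos. Since $\pt$ ranges over a finite set compared by equality, and a finite disjoint union of wqos is a wqo, $\preceq$ is a wqo. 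This step is routine.

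For (ii), the main step is to transfer \cref{apx:onelocal:pcsim}(1) to the abstract level, which requires concretisation. Suppose $\abs_1 \movesto{}_\ABS \abs_2$ via concrete configurations with $\pdsconf_1 \movesto{} \pdsconf_2$, where $\abstraction(\pdsconf_i) = \abs_i$, and suppose $\abs_1 \preceq \abs_3$. I will construct a concrete $\pdsconf \in \abstraction^{-1}(\abs_3)$ with $\pdsconf_1 \leq \pdsconf$. To do this, start from $\pdsconf_1$ and keep its global valuation; this is allowed because the partition components coincide. For each $\greg$, add $\ffun^3_\greg - \ffun^1_\greg$ threads holding $\gregmap(\greg)$, distributed over the labels accordingly. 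Fix the injection $\hfun$ witnessing $\word^1 \leq^* \word^3$. For each matched pair, add $\gfun_{\hfun(i)} - \ffun_i$ threads carrying the value of that class. For each unmatched element of $\word^3$, add threads carrying a brand-new value that occurs neither in $\pdsconf_1$ nor in $\pdsconf_2$. By construction $\abstraction(\pdsconf) \equiv_\abs \abs_3$, and \cref{lem:onelocal:absequiv} lets us replace $\abs_3$ by this representative.

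Now apply \cref{apx:onelocal:pcsim}(1) to $\pdsconf_1 \leq \pdsconf$ (if the two are equal the claim is trivial). This yields $\pdsconf'$ with $\pdsconf \movesto{} \pdsconf'$ and $\pdsconf_2 < \pdsconf'$, and monotonicity of $\abstraction$ under $\leq$ gives $\abs_2 \preceq \abstraction(\pdsconf')$. Set $\abs_4 = \abstraction(\pdsconf')$.

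The obstacle I expect is the locally-fresh convention. If the move is a nonce assignment $\xvar := *$, the nonce chosen in $\pdsconf_1$ might already be present among the added threads of $\pdsconf$, and then the simulated step would not be locally fresh. I would handle this either by choosing the padding values to avoid the nonce value of the step $\pdsconf_1 \movesto{} \pdsconf_2$, which is possible because $\domain$ is infinite, or by invoking the remark that one may pick another new value and the results agree up to $\equiv$. In either case the nonce thread lands in a singleton class in $\word$, so the ordering is preserved. A second subtle point is that $\word^1 \leq^* \word^3$ is only required up to the permutation built into $\equiv_\abs$; since $\leq^*$ ignores order anyway, this causes no difficulty.
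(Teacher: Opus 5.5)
Your proof is correct and follows the paper's own argument: Dickson's and Higman's lemmas give the wqo, and monotonicity comes from realising $\abs_1 \preceq \abs_3$ as concrete configurations $\pdsconf_1 < \pdsconf_3$ and then invoking Lemma~\ref{apx:onelocal:pcsim}(1). Your explicit padding construction (which should add threads once per class of $\pt$ rather than once per register) and your choice of padding values that avoid the step's nonce spell out the ``without loss of generality'' step and the local-freshness subtlety that the paper passes over.
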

\begin{proof}[Proof of Lemma \ref{lem:onelocal:wsts}]
	
	We first show that $\tuple{\ABS, \preceq}$ is a \wqo. Any element in the set $\ABS$ is of the form $\tuple{\pt, (\ffun_\greg)_{\greg \in \globalregs},\word} $. The first component is a finite set with the order being equality. The second component can be thought of as a tuple of natural numbers with point wise ordering between them. The third component can be thought of as the usual subword $\leq^*$ ordering over a tuple of natural numbers. Hence as an direct application of Dickson's lemma and Higman's lemma \cite{579572cb-3498-3043-b697-8509d85bae0c}\cite{higman1952ordering}, we obtain that $\tuple{\ABS, \preceq}$ is a \wqo.
	
	We show that $\tset_\ABS $ satisfies the monotonicity property. For this, if $\abs_1 \movesto{}_{\ABS} \abs_2$ and if $\abs_1 \preceq \abs_3$, we need to show that there is a $\abs_4$ such that $\abs_2 \preceq \abs_4$ and $\abs_3 \movesto{} _{\ABS} \abs_4$. We know from definition of $\movesto{}_\ABS $, that there are $\pdsconf_1, \pdsconf_2 \in \Confof\pds$ such that	$\abstraction(\pdsconf_1) =\abs_1$, $\abs_2 =  \abstraction(\pdsconf_2 )$ and $\pdsconf_1 \movesto{} \pdsconf_2$. Let us assume that this move was due to some thread $j$.
	Let  $\pdsconf_3$ be such that $\abstraction(\pdsconf_3) = \abs_3$. Without loss of generality we may assume  $\pdsconf_1 < \pdsconf_3$. Then by Lemma \ref{apx:onelocal:pcsim}, we can find a $\pdsconf_4$ such that $\pdsconf_3 \movesto{} \pdsconf_4$ with $\pdsconf_2 < \pdsconf_4$. Now from the observation that if $\pdsconf < \pdsconf'$ then $\abstraction(\pdsconf) \preceq \abstraction(\pdsconf')$, we have  $\abs_2 \preceq \abs_4$, where  $\abs_4 = \abstraction(\pdsconf_4)$. From the definition of $\movesto{}_\ABS$, we also have $\abs_3 \movesto{} \abs_4$ as required.
\end{proof}

We next show that the \PCRFR problem reduces to the coverability problem in $\tset_\ABS $.  The coverability question asks, given  $\abs \in \ABS$ and a set of initial configurations $\initABS$, whether there is a run $\abs_{\tt init} \movesto{}^* \abs'$ for some $\abs_{\tt init} \in \initABS$ and $\abs' \in \upclosure{\abs}$. 
We first argue that the \PCRFR\ problem PDS reduces to the coverability problem for $\tset_\ABS$.  %

Let   $\finalABS  = \{ \abstraction(\pdsconf) \mid\pdsconf \in \Confof{\pds}(1),  \labelmap(\pdsconf)(1) = \ell  \} $ be the abstractions of all the final configurations with a single thread.  We show that there is a witnessing run for   \PCRFR\ problem if and only if  some configuration in $\finalABS$ is coverable from  $\initABS = \{\abstraction(\pdsconf) \mid \pdsconf \in \InitConfof{\pds}\}$. 
Even though we have described the set $\finalABS$ as abstractions of potentially infinite set, it is possible to provide an explicit definition as a finite set.

\begin{lemma}
	There is a $\pdsconf_1 \in \InitConfof\pds$ and a run $\pdsconf_1 \movesto{}^*  \pdsconf_m$such that $\labelmap( \pdsconf_m)(1) = \ell$ if and only there is a run from an $\abs_1 \in \initABS$ and a run $\abs_1 \movesto{}^* \abs_m$ with $\abs_m \in \finalABS\uparrow$.
\end{lemma}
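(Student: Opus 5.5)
The plan is to prove both directions by moving runs back and forth through $\abstraction$, using Lemma~\ref{lem:onelocal:absequiv} and Lemma~\ref{lem:onelocal:pcequiv} to lift individual steps. Throughout, runs use only locally fresh steps; as argued in the freshness paragraph, this loses nothing with respect to fresh runs.

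\emph{Forward direction.} Let $\pdsconf_1 \movesto{} \pdsconf_2 \movesto{} \cdots \movesto{} \pdsconf_m$ be a run with $\pdsconf_1 \in \InitConfof\pds$ and $\labelmap(\pdsconf_m)(1)=\ell$. Each concrete step gives an abstract step $\abstraction(\pdsconf_i) \movesto{}_\ABS \abstraction(\pdsconf_{i+1})$ directly by the definition of $\movesto{}_\ABS$. Moreover $\abstraction(\pdsconf_1) \in \initABS$ by definition. It remains to show $\abstraction(\pdsconf_m) \in \finalABS\uparrow$. Let $\pdsconf$ be the $1$-configuration obtained from $\pdsconf_m$ by deleting every thread except thread $1$, keeping the global valuation unchanged. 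Then $\pdsconf \leq \pdsconf_m$, and hence $\abstraction(\pdsconf) \preceq \abstraction(\pdsconf_m)$ by the observation stated after the definition of $\leq$. Since $\abstraction(\pdsconf) \in \finalABS$, this direction is complete.

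\emph{Backward direction.} Let $\abs_1 \movesto{}_\ABS \cdots \movesto{}_\ABS \abs_m$ with $\abs_1\in\initABS$ and $\abs_m \succeq \abs_f$ for some $\abs_f\in\finalABS$. I build concrete configurations $\pdsconf_1,\dots,\pdsconf_m$ by induction on $i$, with $\abstraction(\pdsconf_i)=\abs_i$ and $\pdsconf_i \movesto{} \pdsconf_{i+1}$. For the base case, choose $\pdsconf_1\in\InitConfof\pds$ with $\abstraction(\pdsconf_1)=\abs_1$. For the inductive step, the abstract step $\abs_i \movesto{}_\ABS \abs_{i+1}$ is witnessed by some concrete step $\conf \movesto{} \conf'$ with $\abstraction(\conf)=\abs_i$ and $\abstraction(\conf')=\abs_{i+1}$. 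By Lemma~\ref{lem:onelocal:absequiv} we have $\conf\equiv\pdsconf_i$. The first item of Lemma~\ref{lem:onelocal:pcequiv} then yields $\pdsconf_{i+1}\equiv\conf'$ with $\pdsconf_i\movesto{}\pdsconf_{i+1}$, and Lemma~\ref{lem:onelocal:absequiv} gives $\abstraction(\pdsconf_{i+1})=\abs_{i+1}$. One technical point needs care here: Lemma~\ref{lem:onelocal:pcequiv} must produce a locally fresh step. Renaming by a bijection $R$ preserves local freshness, because $R$ maps the values present in a configuration bijectively and respects constants.

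\emph{Main obstacle.} The step I expect to be hardest is extracting, from $\abs_m \succeq \abs_f$, a thread of $\pdsconf_m$ at label $\ell$. The order $\preceq$ requires equal partitions and a pointwise or injective domination of the counting functions. So the unique thread of the $1$-configuration behind $\abs_f$, which sits at $\ell$, is counted either in some $\ffun_\greg$ or in some entry of the word component. Domination forces the corresponding count in $\abs_m$ at label $\ell$ to be positive. Hence some thread $i$ of $\pdsconf_m$ is at $\ell$. Finally, I rename thread indices so that $i$ becomes thread $1$, which is harmless since the semantics is symmetric in thread identities. The resulting run reaches a configuration with $\labelmap(\pdsconf_m)(1)=\ell$, as required.
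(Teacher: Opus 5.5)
Your proof is correct and uses the paper's approach: lift each abstract step to a concrete one via Lemmas~\ref{lem:onelocal:absequiv} and~\ref{lem:onelocal:pcequiv}, by induction on the length of the run. The one difference is that you build the concrete run forwards from a genuine initial configuration using the first item of Lemma~\ref{lem:onelocal:pcequiv}, whereas the paper builds it backwards from the last configuration using the second item. Going forwards, the first configuration lies in $\InitConfof\pds$ by construction, which the paper's backward construction leaves implicit. You also spell out the two endpoint arguments the paper omits: projecting onto thread $1$ to land in $\finalABS\uparrow$, and extracting a thread at $\ell$ from $\abs_m \succeq \abs_f$.
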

\begin{proof}
	From 	$\pdsconf_1 \movesto{}^*  \pdsconf_m$, we immediately obtain a run $\abstraction(\pdsconf_1) \movesto{}^* \abstraction(\pdsconf_m)$, directly from the definition of $\movesto{}_\ABS$.

	For the other direction, suppose $\abs_1 \movesto{}^* \abs_m$, we can induct on the length of $m$ to construct the necessary $\pdsconf_1 \movesto{}^*  \pdsconf_m$. If $m=2$, then we obtain $\pdsconf_1 \movesto{}  \pdsconf_2$, immediately from the definition. If $m > 2$, we construct the run backwards starting from $m$.    Let $\pdsconf_2 \movesto{}  \pdsconf_{m}$ be the run corresponding to $\abs_2 \movesto{}^* \abs_{m}$ obtained inductively. By definition, if $\abs_1 \movesto{} \abs_2 $ then $\pdsconf'_1 \movesto{} \pdsconf'_2$ for some $\abs_1 = \abstraction(\pdsconf'_1)$, $\abs_2 = \abstraction(\pdsconf'_2) $. Moreover we have $\pdsconf'_2  \equiv \pdsconf_2$. Then by invoking Lemma \ref{lem:onelocal:pcequiv}, we obtain $\pdsconf_1$ such that  $\pdsconf_1 \movesto{}^*  \pdsconf_m$.
\end{proof}

 Hence to obtain  decidability for the \PCRFR, it is sufficient to solve the  coverability problem of $\tset_\ABS$. Towards this, we use the famous result of     Theorem \ref{thm:wsts:dec} which provides us with an algorithm for the coverability problem of \WSTS.

\begin{theorem}\label{thm:wsts:dec} \cite{10.5555/29601.29644}\cite{561359}
	Given a well structured transition system $\tuple{\tuple{\Gamma,\leq},\movesto{}}$, a set of initial configurations $\Gamma_{\tt init}$ and a $\gamma \in \Gamma$ it is possible to decide, if there is a run of the form $\gamma_{\tt init} \movesto{}^* \gamma'$ for some $\gamma' \in \upclosure{\gamma}$ if  1)\  $\leq$ is effectively decidable, 
		2) \  Given an $\alpha \in \Gamma$, a finite set $\aset$ with $\pre^*(\upclosure{\alpha}) \supseteq \upclosure{\aset} \supseteq \pre(\upclosure{\alpha})$ is computable and 
		3) \  Given an $\alpha \in \Gamma$, it is possible to check if $\alpha \in \Gamma_{\tt init}$.
	
\end{theorem}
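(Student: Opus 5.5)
This is the classical backward-reachability algorithm for well-structured transition systems, and I would prove it in that standard way. Write $U_0 = \upclosure{\gamma}$ and let $U_{i+1} = U_i \cup \pre(U_i)$. Because $\leq$ is a \wqo, every upward-closed set is the upward closure of a finite basis. The plan is to represent each $U_i$ by such a basis $B_i$ and show three things: the sequence stabilises, the limit equals $\pre^*(\upclosure{\gamma})$, and membership of an initial configuration in the limit can be tested.

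\textbf{Upward closure.} First I would check that $\pre(U)$ is upward closed whenever $U$ is. Suppose $\alpha \movesto{} \beta \in U$ and $\alpha \leq \alpha'$. Monotonicity gives $\alpha' \movesto{} \beta'$ with $\beta \leq \beta'$. Hence $\beta' \in U$, and so $\alpha' \in \pre(U)$. By induction every $U_i$ is upward closed, and so is $\pre^*(\upclosure{\gamma}) = \bigcup_i U_i$.

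\textbf{Computing bases.} Set $B_0 = \{\gamma\}$. Given $B_i$, for each $\alpha \in B_i$ apply hypothesis (2) to get a finite $\aset_\alpha$ with $\pre(\upclosure{\alpha}) \subseteq \upclosure{\aset_\alpha} \subseteq \pre^*(\upclosure{\alpha})$. Then put $B_{i+1} = B_i \cup \bigcup_{\alpha \in B_i} \aset_\alpha$. By induction,
\[
U_i \subseteq \upclosure{B_i} \subseteq \pre^*(\upclosure{\gamma}).
\]
The right inclusion uses that $\pre^*$ is monotone and idempotent, and that $\pre^*(\upclosure{\gamma})$ is upward closed. The left inclusion uses that $\pre$ distributes over unions of the sets $\upclosure{\alpha}$, $\alpha \in B_i$, together with the containment $\pre(\upclosure{B_i}) \subseteq \bigcup_{\alpha \in B_i} \upclosure{\aset_\alpha}$.

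\textbf{Termination.} The sets $\upclosure{B_0} \subseteq \upclosure{B_1} \subseteq \cdots$ form an increasing chain of upward-closed sets. In a \wqo\ every such chain stabilises (equivalently, the ascending chain condition on upward-closed sets). Stabilisation can be detected effectively: $\upclosure{B_{i+1}} = \upclosure{B_i}$ holds iff every element of $B_{i+1}$ dominates some element of $B_i$, which is decidable by hypothesis (1). Suppose stabilisation occurs at index $N$. Then $\pre(\upclosure{B_N}) \subseteq \upclosure{B_{N+1}} = \upclosure{B_N}$, so $\upclosure{B_N}$ is closed under $\pre$ and contains $\upclosure{\gamma}$. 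It therefore contains $\pre^*(\upclosure{\gamma})$, and combined with the inclusion above we get $\upclosure{B_N} = \pre^*(\upclosure{\gamma})$.

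\textbf{Deciding coverability.} Finally, $\upclosure{\gamma}$ is coverable from $\Gamma_{\tt init}$ iff $\Gamma_{\tt init} \cap \upclosure{B_N} \neq \emptyset$. The main obstacle is this final test: hypothesis (3) only lets us test membership of a given element in $\Gamma_{\tt init}$, not whether $\Gamma_{\tt init}$ meets an upward-closed set.

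My first attempt would be to read the hypothesis as allowing us to decide whether $\upclosure{\beta} \cap \Gamma_{\tt init} \neq \emptyset$ for each $\beta \in B_N$. Alternatively, I would assume $\Gamma_{\tt init}$ is downward closed, which holds in our application since all initial abstractions are $\preceq$-comparable instances of $\abstraction(\initpdsconf(n))$. In that case it suffices to test whether some $\beta \in B_N$ lies in $\Gamma_{\tt init}$.

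For $\tset_\ABS$ I would instead sidestep the issue directly. Because $\initABS$ is a chain $\abstraction(\initpdsconf(1)) \preceq \abstraction(\initpdsconf(2)) \preceq \cdots$, checking $\beta \preceq \abstraction(\initpdsconf(n))$ for the $n$ equal to the number of threads counted in $\beta$ decides the intersection.
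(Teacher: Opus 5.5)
Your argument is correct. It is the standard backward-reachability proof from the works the paper cites; the paper itself gives no proof of this theorem. The sandwich $U_i \subseteq \upclosure{B_i} \subseteq \pre^*(\upclosure{\gamma})$ handles the relaxed form of hypothesis (2). So does your remark that the single equality $\upclosure{B_{N+1}} = \upclosure{B_N}$, at whatever index it first holds, already forces $\upclosure{B_N} = \pre^*(\upclosure{\gamma})$. Note that the computed sets need not coincide with the $U_i$.

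Your concern about hypothesis (3) is well founded. As printed, the statement fails for infinite $\Gamma_{\tt init}$. Take $\Gamma = \mathbb{N}$ with no transitions, $\gamma = 0$, and $\Gamma_{\tt init}$ the set of $t$ such that a given Turing machine halts in exactly $t$ steps. Membership is decidable, but coverability is the halting problem. So an extra assumption such as downward closure is genuinely needed, not an artifact of your proof. The classical formulations start from a single initial configuration, where the final test is just $\beta \leq \gamma_{\tt init}$ using (1).

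For the application, comparability of the $\abstraction(\initpdsconf(n))$ does not by itself give downward closure. The actual reason is that every element of $\ABS$ below $\abstraction(\initpdsconf(n))$ has the one-block partition, an empty third component, and all its threads at $\initlab$. It therefore equals $\abstraction(\initpdsconf(m))$, where $m$ is its number of threads. This same fact is what justifies your final chain test.
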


Thus, it suffices to solve the coverability problem for $\tset_\ABS$.
It is easy to check that $\preceq$  and
membership in $\initABS$ are decidable.  We only need to show item (2)
of Theorem \ref{thm:wsts:dec}.

\paragraph*{Property  $2$ of Theorem \ref{thm:wsts:dec}}

The key observation that allows us to compute a finite $\aset$ such that $\pre(\upclosure\abs) \subseteq \upclosure \aset  \subseteq \pre^*(\upclosure\abs )  $ is that we can limit the number of extra threads to be considered to just one more than that in $\abs$. 
That is, let  $\sset = \{ \abs'' \in \upclosure{\abs}  \mid   |\abs''| - |\abs| \leq 1\}$, where $|\abs|$ represents the number of threads in it. We show  $\pre(\upclosure{\abs}) = \upclosure{\aset } $,     where $\aset = \pre(\sset) $.

Clearly, ${\aset} \subseteq  \pre(\upclosure{\abs})$, we show that $\pre(\upclosure{\abs}) \subseteq \upclosure{\aset}  $. 
Let $\abs_1 \in \pre(\upclosure{\abs})$, since $\abs_1 \in \pre(\upclosure{\abs})$, there is a $\abs_2 \in \upclosure{\abs}$ such that $\abs_1 \movesto{} \abs_2$. 
Suppose $|\abs_2| - |\abs| \leq 1$,   then $\abs_2 \in \sset$ and  clearly $\abs_1 \in \aset$. We will assume otherwise. 
Since $\abs_1 \movesto{} \abs_2$, there are $\pdsconf_1 \movesto{} \pdsconf_2$ for some $\pdsconf_1,\pdsconf_2 \in \Confof{\pds}(n)$ such that  $\abstraction(\pdsconf_1) = \abs_1$ and $\abstraction(\pdsconf_2) = \abs_2$. Let the thread that executed the move be $j \in [1,n]$. Without loss of generality, since $  \abs \preceq \abs_2$,  we may assume there is $\pdsconf$ such that $\abstraction(\pdsconf) = \abs$ and $\pdsconf < \pdsconf_2$.  Suppose the thread that made the move is also present in $\pdsconf$, i.e.  $\pdsconf <_j \pdsconf_2$, then by using Lemma \ref{apx:onelocal:pcsim} one can  find a $\pdsconf' \leq \pdsconf_1$ such that $\pdsconf' \movesto{} \pdsconf$ and $\abstraction(\pdsconf') \in \aset$. 
Suppose not, then  add the thread $j$ to $\pdsconf$ to obtain a $\pdsconf''$, let $\abs'' = \abstraction(\pdsconf'')$.  We  have  $|\abs''| - |\abs| =1$ and hence $\abs'' \in \sset$. More over we also have $\pdsconf < \pdsconf'' <_j \pdsconf_2$. Using the argument similar to the previous case,  we obtain a $\pdsconf' < \pdsconf_1$ such that $\pdsconf' \movesto{} \pdsconf''$. Let  $\abstraction(\pdsconf') = \abs'$, then we also have $\abs' \movesto{} \abs''$ and hence $\abs' \in \aset$ as required.

\begin{lemma}
 Given an	$\abs \in \ABS$, it is possible to decide,  whether there is a run of the form $\abs_{\tt init} \movesto{}^* \abs'$ for some $\abs' \in \upclosure{\abs}$ and $\abs_{\tt init} \in \initABS$.
\end{lemma}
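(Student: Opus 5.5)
We would prove this by applying Theorem~\ref{thm:wsts:dec} to the \WSTS\ $\tset_\ABS$ of Lemma~\ref{lem:onelocal:wsts}, with initial set $\initABS$. The task is therefore to discharge the three hypotheses of that theorem.

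\textbf{Condition (1): $\preceq$ is decidable.} The plan is to decide $\abs_1 \preceq \abs_2$ directly from the definition:
\begin{itemize}
\item compare the two partitions for equality;
\item compare the functions $\ffun^1_\greg \leq \ffun^2_\greg$ pointwise over the finite set $\Labels$;
\item test $\word^1 \leq^* \word^2$ by brute force over all injections $[1,k] \to [1,k']$, of which there are finitely many.
\end{itemize}

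\textbf{Condition (3): membership in $\initABS$ is decidable.} An initial $n$-configuration has all registers equal to $0$. So its abstraction is the one-block partition of $\globalregs$, with every $\ffun_\greg$ mapping $\initlab$ to $n$ and all other labels to $0$, and with empty $\word$. Membership of a given $\abs$ is thus a syntactic check.

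\textbf{Condition (2): computing a finite basis $\aset$ of predecessors.} We rely on the argument given just above: with $\sset = \{\abs'' \in \upclosure{\abs} \mid |\abs''| - |\abs| \leq 1\}$ and $\aset = \pre(\sset)$, we have $\pre(\upclosure\abs) = \upclosure\aset$, which lies between $\pre(\upclosure\abs)$ and $\pre^*(\upclosure\abs)$. Two facts remain to be shown.

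\begin{itemize}
\item \emph{Finiteness and computability of $\sset$.} It suffices to take the minimal elements of $\sset$. These are $\abs$ itself together with every abstraction obtained by adding one thread. The added thread is placed at one of the finitely many labels, and its local value is either the value of some block of the partition, the value of some existing element of $\word$, or a brand-new value, which appends a new singleton function to $\word$.

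\item \emph{Computability of $\pre$ of a single abstract configuration $\abs''$.} We pick a concrete representative $\pdsconf''$ with finitely many threads and values. Then we enumerate the thread $j$ that moved and the instruction it executed. For each such choice, we invert the effect of the instruction on the partition of $\globalregs$, on the class of thread $j$'s value, and on its label. The inverse image is only determined up to $\equiv$ and may have several options: for instance, the overwritten old value of a register or of $\lreg$ may be any existing class or a fresh one. There are finitely many such options.

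For each candidate $\pdsconf$, we check concretely that $\pdsconf \movesto{} \pdsconf''$ with a locally fresh step. Lemma~\ref{lem:onelocal:pcequiv} guarantees that the result does not depend on which representative is chosen, so the finite set of abstractions of the surviving candidates is exactly $\pre(\abs'')$.
\end{itemize}

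We expect the main obstacle to be the precise inversion of steps. The delicate cases are assignments that overwrite a global register or the local register and thereby destroy information about the old value; here all possible old classes must be enumerated, including one not otherwise present. Nonce assignments also need care: their predecessor may hold any value, but the thread's new value must be in a singleton class of $\word$ in $\abs''$. Handling both correctly also requires keeping equivalent global registers consistent in the functions $\ffun_\greg$. Once these are settled, Theorem~\ref{thm:wsts:dec} yields the decision procedure. The same argument extends to constants by treating them as fixed global registers.
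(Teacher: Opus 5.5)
Your proposal is correct and follows the paper's route. You apply Theorem~\ref{thm:wsts:dec} to $\tset_{\ABS}$, treat (1) and (3) as routine, and obtain (2) from the one-extra-thread set $\sset$ with $\aset = \pre(\sset)$, usefully spelling out the finiteness of $\sset$ and the computability of $\pre$ on a single abstract configuration, which the paper leaves implicit. One wording slip: what you enumerate ($\abs$ together with all one-thread extensions) is all of $\sset$ up to $\equiv_{\abs}$, not its minimal elements---the only minimal element is $\abs$ itself, and $\pre$ of that alone would not yield a basis of $\pre(\upclosure{\abs})$, which is exactly why the extra thread is needed.
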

\begin{theorem}
	 The \PCRFR\ problem for  \PDS\  with a single local register  is decidable.	
\end{theorem}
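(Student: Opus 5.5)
The statement to prove is the final theorem: \PCRFR\ for \PDS\ with a single local register is decidable. The plan is to put together the ingredients of the direct \WSTS\ approach just developed. The key fact is the preceding lemma, which says the existence of a witnessing run is equivalent to coverability in $\tset_\ABS$: some element of $\finalABS$ must be coverable from $\initABS$. So it suffices to decide that coverability question. I would do this by invoking Theorem~\ref{thm:wsts:dec} on $\tset_\ABS$, which is a \WSTS\ by Lemma~\ref{lem:onelocal:wsts}.

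Concretely, the steps are as follows. First, I would make $\finalABS$ explicit as a finite set. A single-thread configuration at label $\ell$ is determined, up to $\equiv$, by two pieces of data: the partition of $\globalregs$, and whether the thread's value lies in some block of the partition (and which block). There are finitely many such choices, and by Lemma~\ref{lem:onelocal:absequiv} their abstractions give all of $\finalABS$. Coverability of a finite set then reduces to finitely many single-target coverability queries.

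Second, I would check the three hypotheses of Theorem~\ref{thm:wsts:dec}.
\begin{enumerate}
\item $\preceq$ is decidable. Comparing abstractions means checking equality of partitions, pointwise $\le$ on finitely many label-count vectors, and the existence of an injection between finite sequences, which a finite search handles.
\item Membership in $\initABS$ is decidable. An abstraction lies in $\initABS$ exactly when the partition is the single block $\globalregs$, every $\ffun_\greg$ is concentrated on $\initlab$, and $\word$ is empty.
\item A finite basis of $\pre(\upclosure{\abs})$ is computable.
\end{enumerate}

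For the third hypothesis I would use the argument given just above: $\pre(\upclosure{\abs}) = \upclosure{\aset}$ with $\aset = \pre(\sset)$, where $\sset$ consists of the elements of $\upclosure{\abs}$ having at most one more thread than $\abs$. Then $\sset$ is finite up to $\equiv_\abs$ and computable. The added thread is at some label and holds either a value equal to a global class, the value of an existing $\word$-entry, or a new value. Its predecessors under a single locally fresh step can be enumerated by instantiating a small concrete configuration and applying each instruction backwards, using Lemma~\ref{lem:onelocal:pcequiv} to see that the choice of representative does not matter.

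The main obstacle is this effective computation of $\pre(\sset)$, specifically the predecessors of nonce assignments and of writes that change the partition. Such a step can merge or split global classes and move a thread's value between a global class and a $\word$-entry. One must also reconstruct values that disappeared in the step: for example, before $\greg := \xvar$ the old value of $\greg$ may belong to no other register, so the predecessor gains a fresh class, and the number of ways to choose it is finite. Afterwards, Theorem~\ref{thm:wsts:dec} yields decidability, and local freshness transfers to genuine freshness by the renaming argument in the Freshness paragraph. I would also add constants as fixed extra global registers, as in the $\nu$-net construction.
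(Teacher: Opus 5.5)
Your proposal is correct and follows the paper's own route. Like the paper, you reduce to coverability of $\finalABS$ from $\initABS$ in the WSTS $\tset_\ABS$ and check the three hypotheses of Theorem~\ref{thm:wsts:dec}, taking the pred-basis to be $\pre(\sset)$ with $\sset$ the elements of $\upclosure{\abs}$ having at most one extra thread, and you treat constants as fixed extra global registers. The step you flag as the main obstacle is easier than you suggest: since moves preserve the number of threads, you can compute $\pre(\sset)$ by enumerating the finitely many abstractions with $|\abs|$ or $|\abs|+1$ threads and testing a single forward step on a concrete representative, with Lemma~\ref{lem:onelocal:pcequiv} guaranteeing independence of the representative, so no backward reconstruction of vanished values is needed.
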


We also remark that to deal with the constants, we may introduce one global register for each constant in the program and initialize them with the value of the constant and the program may then use these registers instead of the constants.

	\section{Hardness of \PDS with a Single Local Register}
\label{apx:sec:hardness}

We reduce reachability in lossy counter systems to \PCRFR\ with a single
local register. Fix a lossy counter system $\lossy(n) = \lossytuples$ with
$n$ counters. We build a program $\pds$ from it.

\paragraph{Overview.}
First, one thread is elected as the leader, as in the undecidability proof.
The election uses two global registers of its own. They are not used
afterwards. Every other thread picks a counter $i \in [1,n]$ and runs the
helper code for counter $i$ (Algorithm~\ref{main:alg:lossy:help}).

The leader simulates the states of $\lossy(n)$. The helpers store the counter
values. For each counter $i$, there are two global registers,
$\greg^i_1$ and $\greg^i_2$. The register $\greg^i_1$ holds the current
\emph{phase} of counter $i$. The register $\greg^i_2$ carries messages between
the leader and the helpers. So $\pds$ has $2n+2$ global registers and one
local register $\xvar$. The values $\inc$, $\dec$ and $\bot$ are constants.

A helper of counter $i$ is \emph{active} if it is at label
$\ell^i_{\mathit{wait}}$ or at label $\ell^i_{\mathit{count}}$. The value of
counter $i$ is the number of active helpers of counter $i$ whose local
register holds the current phase. A zero test writes a fresh value into
$\greg^i_1$. No helper holds this value. So the counter becomes $0$.

The leader has one label $\ell_q$ for each state $q \in \States$. It starts
at $\ell_{\initstate^\lossy}$. Let $(q,\op_1,q_1), \ldots, (q,\op_m,q_m)$ be
x`the transitions of $\lossy(n)$ from $q$. The code at $\ell_q$ is
\[
\ell_q:\ \mathit{if}(\star)\
\begin{cases} [\op_1];\\ \goto\ \ell_{q_1}; \end{cases}
\ \cdots\
\begin{cases} [\op_m];\\ \goto\ \ell_{q_m}; \end{cases}
\]
Here $\mathit{if}(\star)$ picks one block nondeterministically. The snippet
$[\op]$ simulates the operation $\op$. The snippets and the helper code are
shown in Figure~\ref{fig:lossy}.

\begin{itemize}
	\item $[\inc_i]$: The leader writes $\inc$ into $\greg^i_2$. A new helper
	answers with its id. The leader reads the id into $\xvar$. It checks that
	$\xvar$ is neither $\inc$ nor the current phase. It writes $\bot$, and
	then the id. Then it waits until $\greg^i_2 = \bot$.
	\item Helper of counter $i$: The helper picks a fresh id. It waits for
	$\inc$ and writes its id. It waits until its id is overwritten. Then it
	waits until its id comes back. Now it knows that the leader chose it. It
	copies the current phase into $\xvar$ and writes $\bot$. Now it is active.
	\item An active helper may halt at any time. This models a loss.
	\item $[\dec_i]$: The leader writes $\dec$ into $\greg^i_2$. It waits until
	$\greg^i_2$ holds the current phase. An active helper writes this value
	and halts. Then the leader writes $\bot$.
	\item $[\zero_i]$: The leader writes a fresh value into $\greg^i_1$. This
	starts a new phase.
\end{itemize}

\begin{figure}[t]
	\centering
	\begin{minipage}[t]{0.45\textwidth}
		\vspace{0pt}
		\begin{algorithm}[H]
			$\greg^i_2 := \inc$\;
			$\xvar := \greg^i_2$\;
			$\assert(\xvar \notin \{\inc, \greg^i_1\})$\;
			$\greg^i_2 := \bot$\;
			$\greg^i_2 := \xvar$\;
			$\assert(\greg^i_2 = \bot)$\;
			\caption{$[\inc_i]$}
			\label{main:alg:hlp:pa}
		\end{algorithm}
	\end{minipage}\hfill
	\begin{minipage}[t]{0.40\textwidth}
		\vspace{0pt}
		\begin{algorithm}[H]
			$\greg^i_2 := \dec$\;
			$\assert(\greg^i_2 = \greg^i_1)$\;
			$\greg^i_2 := \bot$\;
			\caption{$[\dec_i]$}
			\label{main:alg:hlp:da}
		\end{algorithm}
		\vspace{-8pt}
		\begin{algorithm}[H]
			$\xvar := *$\;
			$\greg^i_1 := \xvar$\;
			\caption{$[\zero_i]$}
			\label{main:alg:hlp:za}
		\end{algorithm}
	\end{minipage}
	
	\begin{algorithm}[H]
		$\hlab_i$: $\xvar := *$;\ $\assert(\greg^i_2 = \inc)$;\ $\greg^i_2 := \xvar$\;
		$\assert(\greg^i_2 \neq \xvar)$;\ $\assert(\greg^i_2 = \xvar)$\;
		$\xvar := \greg^i_1$;\ $\greg^i_2 := \bot$\;
		$\ell^i_{\mathit{wait}}$: \textbf{if} $\star$ \textbf{then} $\goto\ \ell_{\tt halt}$\;
		$\ell^i_{\mathit{count}}$: $\assert(\greg^i_2 = \dec)$\;
		$\greg^i_2 := \xvar$;\ $\goto\ \ell_{\tt halt}$\;
		\caption{[Helper-$i$]}
		\label{main:alg:lossy:help}
	\end{algorithm}
	\caption{The snippets of the leader and the code of a helper of counter $i$.}
	\label{fig:lossy}
\end{figure}

The correctness of the construction follows from the lemma below.
Reachability in lossy counter systems is not primitive recursive. Together,
they give Theorem~\ref{thm:dec:sl:hardness}.

\begin{lemma}\label{main:lem:lossy2pds}
	A state $q \in \States_\lossy$ is reachable in $\lossy(n)$ if and only if the
	label $\ell_q$ is reachable by a fresh run of $\pds$.
\end{lemma}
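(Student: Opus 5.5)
We prove the two directions separately. Throughout, we rely on the leader election of Lemma~\ref{lem:undec:uniqueleader}: in any fresh run at most one thread reaches the leader code, and that thread alone visits the labels $\ell_q$. It is convenient to interpret Lemma~\ref{main:lem:lossy2pds} as saying that the leader can be at $\ell_q$.

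\emph{($\Rightarrow$).} Let $\tuple{\initstate^\lossy,\vec 0}\movesto{}^*\tuple{q,\vec d}$ be a run of $\lossy(n)$. Since losses can be postponed or folded into decrements, we may assume each lossy step drops one unit. We build a fresh run with one leader and, for every counter $i$, as many helpers of counter $i$ as there are $\inc_i$ steps, all initially at $\hlab_i$. We maintain the following invariant. After simulating a prefix ending in $\tuple{q',\vec d'}$, the leader is at $\ell_{q'}$, every $\greg^i_2$ holds $\bot$ or a value no pending helper reacts to, and for each $i$ exactly $d'_i$ helpers of counter $i$ are active with local register equal to $\greg^i_1$.

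We then simulate each operation with exactly one helper cooperating. For $\inc_i$, a single fresh helper executes the handshake interleaved with $[\inc_i]$. Its id is a fresh nonce, so it differs from $\inc$ and from the current phase, and the asserts pass. For $\dec_i$, one current-phase active helper writes its register after reading $\dec$ and halts. For a loss, one active helper jumps from $\ell^i_{\mathit{wait}}$ to $\ell_{\tt halt}$. For $\zero_i$, the leader writes a fresh nonce into $\greg^i_1$, so the remaining active helpers hold stale phases and no longer count. Since every nonce is chosen new, the constructed run is fresh.

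\emph{($\Leftarrow$).} Take a fresh run reaching $\ell_q$. We decompose the leader's execution into blocks, one per snippet $[\op]$ it executes between successive labels $\ell_{q'}$. For each counter $i$ we define the abstract value $c_i$ as the number of helpers of counter $i$ that are at $\ell^i_{\mathit{wait}}$ or $\ell^i_{\mathit{count}}$ and whose register equals the current $\greg^i_1$. We then show that the sequence of leader states, paired with these values $\vec c$ sampled at block boundaries, is a lossy run. The argument takes each kind of block in turn.

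\emph{Increment.} The leader writes $\inc$, reads some $v\neq \inc$, and later sees $\bot$. We must show $c_i$ grows by at most $1$. A helper becomes active only after reading its own id, then a different value, then its own id again. Ids are pairwise distinct by freshness. After writing $\inc$, the leader writes only $\bot$ and $v$, so only the owner of $v$ can see its id restored. Other helpers could write their ids, but only ids reappearing later matter.

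\emph{Decrement.} The leader needs to read the current phase in $\greg^i_2$. Only an active helper of the current phase holds that value: it is fresh, and besides the leader only such helpers write it. So at least one current-phase helper halts, and $c_i$ decreases by at least one, which is a lossy decrement.

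\emph{Zero test.} $\greg^i_1$ receives a fresh value that no helper holds, so $c_i$ becomes $0$.

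\emph{Between blocks.} Counts can only drop, through halts. Stale helpers can never count again, because their phase value is never rewritten into $\greg^i_1$ (freshness). Thus any decrease outside the blocks is a lossy step.

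The main obstacle is the increment case with interference. A helper might have read an $\inc$ written during an earlier increment, or a stale helper from an older phase might interleave its writes, so that two helpers become active in one block, or one becomes active outside any block. We handle this with a timing argument about who writes each helper's id. A helper's id can reappear in $\greg^i_2$ only if the leader copies it, and the leader copies exactly one value per $[\inc_i]$. Each activation is therefore charged to a distinct increment block. Activation copies $\greg^i_1$ at that moment, so if it happens after a later zero test the helper is stale. This yields $c_i \le$ the value of a genuine run, and the surplus is absorbed by lossiness. Completing the argument requires checking carefully that the leader's assert $\xvar\neq\greg^i_1$ together with freshness rules out confusion between an id and a phase value.
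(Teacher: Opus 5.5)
\textbf{Where the proposal has a gap.} Your overall plan is the same as the paper's. For ($\Rightarrow$) you use one leader plus one fresh helper per increment. For ($\Leftarrow$) you cut the leader's execution into snippet blocks and count the active helpers holding the current phase. The problem is in the increment case of ($\Leftarrow$), which you correctly flag as the crux but do not close.

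Your charging argument shows only that each activation can be attributed to a distinct $[\inc_i]$ block. It says nothing about \emph{when} the activation happens, i.e.\ when the helper executes $\xvar:=\greg^i_1$. Your remark that a helper activating after a later zero test ``is stale'' is backwards. Such a helper would copy the \emph{new} phase and be counted in it. Then an increment issued before a $[\zero_i]$ could feed a $[\dec_i]$ after it, which no lossy run allows.

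The same problem undermines your other cases. The claim ``between blocks counts can only drop'' assumes that no activation happens outside an increment block. In the decrement case, a delayed helper could activate inside the $[\dec_i]$ block and then answer it. In that scenario $c_i$ need not decrease, and it may be $0$ at the start of the block.

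\textbf{The missing idea.} What you need is the role of the leader's final $\assert(\greg^i_2=\bot)$ in $[\inc_i]$. After the leader writes the id $v$, it writes nothing else before this read. So the $\bot$ it reads must come from a helper whose id has been restored and which has not yet written its unique $\bot$.

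By induction over completed $[\inc_i]$ blocks, every earlier chosen helper already wrote its $\bot$ inside its own block. Its $\bot$ was needed for that block to end. Hence the only possible writer is the owner of $v$; if $v$ is not an id, nobody joins.

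That helper copies $\greg^i_1$ before writing $\bot$. So its activation lies strictly inside the block, and the leader does not modify $\greg^i_1$ there. This gives at once ``at most one join per $[\inc_i]$ block, with the current phase'' and ``no joins anywhere else''. That is the paper's ``Joining'' fact, and it is what makes your decrement, zero-test and between-block claims valid.
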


\begin{proof}
	For a configuration $\pdsconf$ of $\pds$, let $\#_k(\pdsconf)$ be the set of
	active helpers of counter $k$ whose local register holds the value of
	$\greg^k_1$. When the leader is at a label $\ell_q$, let
	\[
	\Conf_{\mathsf{lcs}}(\pdsconf) = \bigl(q, |\#_1(\pdsconf)|, \ldots,
	|\#_n(\pdsconf)|\bigr).
	\]
	
	\smallskip
	\noindent\textbf{($\Rightarrow$)}
	Take a run of $\lossy(n)$ from $(\initstate^\lossy, 0, \ldots, 0)$ to
	$(q, a_1, \ldots, a_n)$. The run may contain losses. We build a fresh run of
	$\pds$. We use one leader and one helper for each increment in the run.
	
	After the election, the leader is at $\ell_{\initstate^\lossy}$. No helper is
	active. So $\Conf_{\mathsf{lcs}}$ is $(\initstate^\lossy, 0, \ldots, 0)$.
	We now simulate the steps of the run one by one. We keep this invariant:
	$\Conf_{\mathsf{lcs}}$ is equal to the current configuration of $\lossy(n)$.
	\begin{itemize}
		\item \emph{Increment of counter $k$.} A new helper registers for counter
		$k$. It runs in parallel with the leader's $[\inc_k]$. It becomes active
		and stores the current phase. So $|\#_k|$ grows by one.
		\item \emph{Decrement of counter $k$.} Then $a_k \geq 1$. One helper in
		$\#_k$ moves to $\ell^k_{\mathit{count}}$. The leader runs $[\dec_k]$. The
		helper answers and halts. So $|\#_k|$ drops by one.
		\item \emph{Zero test of counter $k$.} Then $a_k = 0$. So $\#_k$ is empty.
		The leader runs $[\zero_k]$. The new phase is fresh. So $\#_k$ stays
		empty.
		\item \emph{Loss on counter $k$.} One helper in $\#_k$ halts at
		$\ell^k_{\mathit{wait}}$. So $|\#_k|$ drops by one.
	\end{itemize}
	In each case, the other counters do not change. The leader moves to the
	right label $\ell_{q'}$. The nonces are the helper ids and the new phases. We
	choose each of them fresh. So the run is fresh. At the end, the leader is at
	$\ell_q$.
	
	\smallskip
	\noindent\textbf{($\Leftarrow$)}
	Take a fresh run $\rho$ of $\pds$ in which the leader reaches $\ell_q$. Only
	the leader visits labels of the form $\ell_{q'}$ and the labels of the
	snippets. We cut $\rho$ into segments. The first segment ends when the leader
	first reaches $\ell_{\initstate^\lossy}$. Each later segment starts at a
	label $\ell_{q'}$ of the leader. It ends at the next such label. In each later
	segment, the leader runs one snippet $[\op]$ for one transition
	$(q', \op, q'')$.
	
	We use three facts. They rely on freshness: ids are distinct, and phases are
	different from ids and from constants.
	\begin{enumerate}
		\item \emph{Joining.} A helper joins $\#_k$ only when it finishes its
		registration. This happens only during $[\inc_k]$, exactly once.
		Indeed, a helper finishes only after its id comes back. Only the leader
		writes an id back, once per $[\inc_k]$. Ids are distinct, so only one
		helper sees its id. The leader leaves $[\inc_k]$ only after it sees
		$\bot$ again. Only this helper can write that $\bot$. So it finishes
		inside this $[\inc_k]$. Just before, it copies $\greg^k_1$. The leader
		does not change $\greg^k_1$ during $[\inc_k]$. So the helper joins
		$\#_k$.
		\item \emph{Leaving.} A helper leaves $\#_k$ when it halts, when it answers
		a decrement, or when a zero test changes $\greg^k_1$. Leaving only lowers
		a counter.
		\item \emph{Decrement.} In $[\dec_k]$, the leader passes its check only if
		some thread writes the current phase into $\greg^k_2$. Only a helper in
		$\#_k$ can do this. This helper then leaves $\#_k$.
	\end{enumerate}
	
	We show by induction: at the end of each segment, $\lossy(n)$ can reach
	$\Conf_{\mathsf{lcs}}$.
	
	\emph{Base case.} At the end of the election, no helper is active. So
	$\Conf_{\mathsf{lcs}} = (\initstate^\lossy, 0, \ldots, 0)$. This is the
	initial configuration of $\lossy(n)$.
	
	\emph{Inductive step.} Let a segment run $[\op]$ for $(q', \op, q'')$. Let it
	start with counter values $a_1, \ldots, a_n$ and end with
	$a'_1, \ldots, a'_n$. The leader reaches $\ell_{q''}$, so the transition
	$(q', \op, q'')$ exists in $\lossy(n)$. By fact~1, no helper joins $\#_j$
	for $j \neq k$. By fact~2, these counters can only drop. Such drops are
	losses.
	\begin{itemize}
		\item $\op = \inc_k$. By fact~1, exactly one helper joins $\#_k$. Others
		may leave. So $a'_k \leq a_k + 1$. The system increments counter $k$ and
		then loses the extra units.
		\item $\op = \dec_k$. By fact~3, at least one helper leaves $\#_k$. No
		helper joins. So $a'_k \leq a_k - 1$. This helper was in $\#_k$ at the
		start, so $a_k \geq 1$. The system decrements counter $k$ and then loses
		the extra units.
		\item $\op = \zero_k$. At the end, $\greg^k_1$ holds a fresh value. No
		helper holds it. So $a'_k = 0$. The system loses all units of counter $k$
		and then does the zero test.
	\end{itemize}
	So $\lossy(n)$ can reach $(q'', a'_1, \ldots, a'_n)$. For the last segment,
	this shows that $q$ is reachable in $\lossy(n)$.
\end{proof}

\begin{corollary}\label{cor:single-local-complexity}
	\PCRFR\ with a single local register is in $\mathbf{F}_{\omega\cdot 2}$ and
	is $\mathbf{F}_{\omega}$-hard. In particular, it is not primitive recursive.
\end{corollary}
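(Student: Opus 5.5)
The corollary combines results already stated in the paper, so the plan is to assemble them rather than build new machinery. For the upper bound, I would invoke Lemma~\ref{lem:nu-net}. It reduces \PCRFR\ with a single local register to coverability of $\marking_{\tt fin}$ from $\marking_{\tt in}$ in a $\nu$-net $N$. First I would check that the reduction is elementary. The net has $|\globalregs|+|\constants|+|\Labels|+1$ places. Each transition of Table~\ref{tab:nu-net} has a constant number of variables, so the variants obtained by merging $\var$-variables are boundedly many per instruction, and $N$ is computable in polynomial time from $\pds$. Coverability in $\nu$-nets lies in $\mathbf{F}_{\omega\cdot 2}$ \cite{LazicSchmitz16}, and this class is closed under elementary (indeed polynomial-time) reductions. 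Hence \PCRFR\ with one local register is in $\mathbf{F}_{\omega\cdot 2}$.

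For the lower bound, I would use Lemma~\ref{main:lem:lossy2pds}. A state $q$ is reachable in $\lossy(n)$ iff $\ell_q$ is reachable by a fresh run of $\pds$. Two points need checking. The first is that $\pds$ really has one local register $\xvar$ and $2n+2$ global registers, and that its size is polynomial in $|\lossy(n)|$: there is one snippet per transition, a constant-size helper per counter, and the leader election of Section~\ref{sec:undec}. The second is that the target label $\ell_q$ can be taken as thread $1$'s label, as the paper notes. Reachability of a final state in lossy counter systems is $\mathbf{F}_{\omega}$-complete \cite{Schnoebelen10,Schmitz19}, so \PCRFR\ with a single local register is $\mathbf{F}_{\omega}$-hard under polynomial reductions. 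This is Theorem~\ref{thm:dec:sl:hardness}.

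The step I expect to be the main obstacle is the election. In the construction, the leader election uses the local register $\xvar$ to hold the thread's id, whereas the election of Figure~\ref{main:fig:LE} was described with $\lreg_1$ only. So I would check that the election and the leader's snippets reuse $\xvar$ consistently after election; $[\inc_i]$ overwrites $\xvar$, which is harmless once the leader is unique. Finally, non-primitive-recursiveness follows because $\mathbf{F}_\omega$-hard problems are not primitive recursive: every primitive recursive problem lies in some $\mathbf{F}_k$ with $k<\omega$, and $\mathbf{F}_\omega$ strictly contains all of these.
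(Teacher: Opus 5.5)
Your proposal is correct and follows the paper's own argument. Membership comes from the polynomial-time reduction to $\nu$-net coverability (Lemma~\ref{lem:nu-net}) together with the $\mathbf{F}_{\omega\cdot 2}$ bound of \cite{LazicSchmitz16}, and $\mathbf{F}_{\omega}$-hardness comes from the polynomial-time reduction from lossy counter system reachability (Lemma~\ref{main:lem:lossy2pds}). The election point you flag as the main obstacle is not a real one: the protocol of Figure~\ref{main:fig:LE} already uses only the single local register $\lreg_1$, so it is simply renamed to $\xvar$.
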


\paragraph{Complexity.}
Here $\mathbf{F}_{\omega}$ is the class of problems solvable in Ackermann
time. $\mathbf{F}_{\omega\cdot 2}$ is the class of problems solvable in
``double-Ackermann'' time. Both classes contain problems that are not
primitive recursive.

For the upper bound, we use the reduction of Lemma~\ref{lem:nu-net}. The net
has one place for each global register, constant and label, plus the place
$\mathit{done}$. Each instruction gives a constant number of transitions,
including their variants. So the net is computable in polynomial time.
Coverability in $\nu$-nets is complete for $\mathbf{F}_{\omega\cdot 2}$
\cite{LazicSchmitz16}. So \PCRFR\ with a single local register is in
$\mathbf{F}_{\omega\cdot 2}$.

For the lower bound, we use the reduction of this section. It uses $2n+2$
global registers and one local register. It is computable in polynomial time.
Reachability in lossy counter systems is $\mathbf{F}_{\omega}$-complete
\cite{Schnoebelen10}. So \PCRFR\ with a single local register is
$\mathbf{F}_{\omega}$-hard.

The two bounds do not match. The exact complexity lies between Ackermann and
double-Ackermann time. We leave it open.

	\section{Parametrized	 Reachability (without Freshness) is {\sc PSpace complete}} \label{sec:PRNoFresh1}

In this section, we prove that the PReach problem is {\sc Pspace-complete}, informally our strategy is as follows: We first compute a set of (global register,value) pairs for which there is a run that writes the value to the global register.  Now  under this assumption that such values can be read from the global register at any point, we check if there is a witnessing $1$-pseudo run. That is, we check if there is a $1$ pseudo run with support  only from the computed pairs.  Without loss of generality, we assume that $0$ cannot be written to any register in any instruction. Note that any writes to $0$ can be replaced by a new constant that is treated identically.

The support of any $1$-pseudo runs is potentially over an infinite set of alphabets. In order to compute the (register, value) pairs effectively,  we introduce a modified version of the support over a  finite set of alphabets. We define a morphism $\msupp: \globalregs \times \domain \to \globalregs \times (\constants \cup \{\ast\})$  as $\msupp(\tuple{\greg,\const}) = \tuple{\greg,\const}$ for any $\greg \in \globalregs, \const \in \constants$ and $\msupp(\tuple{\greg,\val}) = \tuple{\greg,\ast}$ otherwise. With an abuse of notation, given a $1$-pseudo run $\rho$, we let $\msupp(\rho) =  \msupp(\supp(\rho))$. The idea here is, if there is a run that writes a value from $\domain \setminus \constants$ to a global register, then we may assume any value from $\domain$ can be written to that register any number of times (recall that we are operating under the non-freshness assumption). Hence the value itself is not important.  

Computing such (value register pairs) has additional challenges. In order for a thread to write a value to the global register, it may depend on yet another value being available in the global register.  We show how to iteratively discover these interdependencies and the set of values that can be written to the global registers and the  witnessing $1$-pseudo run..

Given a Program $\pds$, a regular language $\reglang$ over  $\Sigma = \globalregs \times (\constants \cup  \{\ast \})$ and a pair $\tuple{\greg,\const}$, where   $\const  \in \constants$ and  $\greg \in \globalregs$, we say $\tuple{\greg,\const}$ is generatable by $\reglang$  if there is an $1$-pseudo run $\rho = \initpdsconf \Movesto{} \pdsconf$ such that $\msupp(\rho) \in \lang(\reglang)$ and $\greg(\pdsconf) =\const$. That is, there is an $1$-pseudo run that only uses values from the regular language $ \reglang$ in its support. Similarly given a global register $\greg \in \globalregs$, we say $\tuple{\greg,\ast}$ is generatable by $\reglang$ if there is an $1$-pseudo run $\rho  = \initpdsconf \Movesto{} \pdsconf$,  such that $\msupp(\rho) \in \lang(\reglang)$ and $\gregmap(\pdsconf)(\greg) \in \domain \setminus \constants$. Given a label $\lab \in \Labels$  if there is an pseudo $1$-run $\rho$ that reaches a configuration $\pdsconf$ with $\labelmap(\lab) = \lab$ and  $\msupp(\rho) \in \reglang$ then we say $\lab$ is \emph{pseudo-reachable} by $\reglang$. 

 The crucial idea  to compute the generatable set from $\globalregs \times ( \constants \cup  \{\ast \} )$ is to guess the order in which the dependencies can occur and then  to compute the values in that order.  
 The very first generatable pair in the order has no dependencies. The subsequent pairs are only dependent on the previously computed pairs. 
 However one needs to be careful with the dependencies that can arise when a value is generated. Consider the tuples $\tuple{\greg_1,\const_1}$, $\tuple{\greg_2,\const_2}$, and $\tuple{\greg_3,\const_3}$.  Suppose $\tuple{\greg_2,\const_2}$ is generatable under the assumption that $\tuple{\greg_1,\const_1}$ is available and  $\tuple{\greg_3,\const_3}$ is generatable under the assumption that $\tuple{\greg_2,\const_2}$ is available.  Then there is an indirect dependency on  $\tuple{\greg_1,\const_1}$, when generating $\tuple{\greg_3,\const_3}$. Suppose there a $1$-pseudo-run that generates $\tuple{\greg_3,\const_3}$ and along this run it read the value $\const_2$ from $\greg_2$.  Since $\tuple{\greg_1,\const_1}$ is required  to generate  $\tuple{\greg_2,\const_2}$,  it can no longer read  the initial value from $\greg_1$. %
To address this, we guess the word  $\word \in (\globalregs \times (\constants\setminus \{ 0 \} \cup \{\ast \})^*$  such that there is no repetition of letters in it, let  $\word = a_1 \dots a_m$.  This word corresponds to the set of all values that can be generated, along with an ordering between them. 
We say the word is  a \emph{generatable sequence} if for each $i \in [1,m]$, $a_i$ is generatable by the regular language $a_1^* a_1 \{a_1,a_2\}^* a_2 \{a_1,a_2,a_3\}^* \dots a_{i-2} \{a_1,\dots a_{i-1}\}^*a_{i-1}$.  Consider $i = 3$, then the regular language corresponding to it is  $a_1^* a_1 \{a_1,a_2\}^*a_2$. Notice that the language ensure that the dependency of $a_2$ is necessarily present before the consumption of $a_2$ is possible.
The following lemma states that  to check if there is a witnessing run, we need to check if $\lab$ can be reached by a  pseudo $1$-run that only reads values from   a generatable sequence.  Finding a witnessing $1$ pseudo run can be achieved in {\sc Pspace}  in a manner similar to the proof in Section \ref{apx:sec:oneglobal:acyclic}. The same can also be obtained by using the result in  \cite{DemriL09}.

\begin{lemma}\label{lem:nofresh:dec}
	Given a program $\pds$ and a label $\lab \in \Labels$, the label $\lab$
	is parametrized reachable if and only if there exists a generatable sequence
	$ \word = a_1 \cdots a_m $
	such that $\lab$ is pseudo-reachable by
	$ a_1^* a_1 \{a_1,a_2\}^* a_2 \{a_1,a_2,a_3\}^* \cdots a_m \{a_1,\ldots,a_m\}^*. $
\end{lemma}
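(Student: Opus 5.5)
We prove the two directions separately, both resting on one fact about the non-fresh semantics: \emph{cloning}. Given any $n$-run $\run$ and any thread $i$, we can add a copy $i'$ of $i$ that executes each move of $i$ immediately after it, choosing the same value at every nondeterministic assignment. Because $i$ and $i'$ see the same global values at each step, the duplicated run is a valid $(n+1)$-run. The extra write by $i'$ rewrites a value that $i$ has just written, so every thread other than $i'$ observes exactly what it observed before.

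From cloning we derive the realization property used in the converse. Suppose $\tuple{\greg,\const}$ with $\const\in\constants$ is written by some run. Then, for every $k$, some run leaves $\greg$ holding $\const$ at $k$ prescribed moments; we obtain it by taking $k$ copies and scheduling their write-segments as in Lemma \ref{apx:lem:oneglobal:const}. Similarly, suppose $\tuple{\greg,\ast}$ is written. Since freshness is not required, we rename the written value in the run to an arbitrary non-constant $\val$, replacing all of its occurrences. Any non-constant value can then be supplied to $\greg$ any number of times.

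($\Rightarrow$) Take a witnessing $n$-run $\run$. Collect the pairs $\msupp(\greg,\val)$ over all writes of $\run$, ordered by the first time each pair is written; this gives a word $\word=a_1\cdots a_m$ without repetition. Since $0$ is never written, the letters avoid $0$.

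To show that $\word$ is a generatable sequence, fix $i$ and look at the first write producing $a_i$. Project $\run$, truncated just after that write, onto the writing thread. The projection is a pseudo-1-run whose final configuration has $\greg$ equal to the right value. Every pseudo-assignment in it supplies a value that was already written before that point, so it uses only letters $a_1,\dots,a_{i-1}$. Moreover, a letter $a_j$ can appear only after $a_j$ has first been written. Hence its $\msupp$ belongs to the prescribed language, with the caveat that the structured language forces each $a_j$ to occur once. We handle this by allowing empty occurrences, or by observing that the language is intended up to this closure; the authors' language with marked occurrences is matched by choosing, for each $j<i$, one pseudo-assignment after its first availability. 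The same argument applied to the witnessing thread shows that $\lab$ is pseudo-reachable by the full language.

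($\Leftarrow$) This is the main step. We build real runs for $a_1,a_2,\dots$ by induction on $i$, following Lemma \ref{lem:dec:1greg:forward}. Given the pseudo-1-run $\irun$ generating $a_i$, its support has the form $u_1a_1u_2a_2\cdots$, where each pseudo-assignment is of a letter $a_j$ with $j<i$ already realized. By induction and cloning we have runs realizing each $a_j$ as often as needed, and each of them starts its critical segment with a write, so it can be scheduled independently. We interleave these realizing segments just before each pseudo-assignment of $\irun$, renaming non-constant values so that they match. This produces a real run that writes $a_i$.

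The key obstacle is a realizing segment for $a_j$ that itself needs $\greg'$ to hold some earlier value while other segments are overwriting $\greg'$. The ordering in the regular language resolves this: every value that $a_j$'s run reads was realized earlier. Its supporting segments can therefore be placed inside $a_j$'s block, recursively. Because non-fresh renaming lets copies share values freely, the interleaving never conflicts. Finally, applying the same composition to the witnessing pseudo-1-run for $\lab$ yields a run in which some thread reaches $\lab$.
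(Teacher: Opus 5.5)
\textbf{The gap is in ($\Leftarrow$).} Your realization property (via Lemma~\ref{apx:lem:oneglobal:const}) and your claim that each realizing segment \emph{starts its critical segment with a write, so it can be scheduled independently} are one-register arguments from Section~\ref{sec:oneglobal}. With several global registers they break down. A pseudo-assignment in the consumer's pseudo-1-run changes exactly one register. The real segment you splice in may write other registers as well. It also runs long after the initial configuration, so a register the consumer still sees as $0$ may already have been overwritten. Your \emph{key obstacle} paragraph only considers what a realizing segment needs, not what it destroys. The statement that \emph{the interleaving never conflicts} is asserted rather than shown.

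No schedule can repair this, because with the definitions as given the right-hand side does not imply reachability. Here is a counterexample.
Take global registers $\greg_1,\greg_2$ and distinct nonzero constants $c_1,c_2$. After a nondeterministic choice, a thread runs either $G$: $\greg_2 := c_2;\ \greg_1 := c_1$, or $W$: $\assert(\greg_1 = c_1);\ \assert(\greg_2 \neq c_2);\ \goto\ \lab$.
Let $\word = a_1 = \tuple{\greg_1,c_1}$. The pseudo-1-run of $G$ has empty support and ends with $\greg_1 = c_1$, so $\word$ is a generatable sequence. The pseudo-1-run of $W$ that starts by pseudo-assigning $c_1$ to $\greg_1$ passes both tests, since its view of $\greg_2$ is still $0$. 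It reaches $\lab$ with support $a_1 \in a_1^*a_1\{a_1\}^*$.
Yet in any real run, $\greg_1$ holds $c_1$ only after some thread has executed $\greg_2 := c_2$. Every write to $\greg_2$ writes $c_2$, so no thread passes both tests, and $\lab$ is not parametrized reachable.
In your construction, the realizing segment for $a_1$ is exactly $G$'s run, which does begin with a write. Splicing it in before $W$'s pseudo-assignment kills $W$'s second test.
The paper's proof has the same hole: its Round~0 claims that the prefixes $\rho_i^0$ \emph{only access the initial global memory}, but the prefix of $G$ is $\greg_2 := c_2$.
So what is missing is not a cleverer interleaving but a stronger right-hand side. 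The other writes performed by a generating pseudo-run must constrain the consumers; note that the $\word$ you extract in ($\Rightarrow$) does record them. You would then also need an argument that every register a consumer reads really holds the value in its view at that moment.

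\textbf{A smaller point in ($\Rightarrow$).} You rightly notice that the languages force every earlier letter to occur, but your patch is not an argument. If the projected thread never pseudo-reads $a_j$, there is no pseudo-assignment to choose.
What works is to project with one pseudo-assignment for every write made by another thread, plus a duplicate right after each of the thread's own writes. The thread's view then always equals the real registers. Its support is the entire write sequence so far, and its first occurrences come in the order of $\word$. A suitably placed extra pseudo-assignment of $a_{i-1}$ handles the trailing $a_{i-1}$ in the generatability languages. The paper's proof simply asserts this membership.
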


\begin{proof}
	We prove the two implications separately.
	
	\smallskip
	\noindent\textbf{($\Rightarrow$)}
	Assume that $\lab$ is parametrized reachable in $\pds$. Hence, for some
	$n \in \Naturals$, there is a witnessing run
	$ \pi = \initpdsconf(n) \movesto{}^* \tuple{\labelmap,\lregmap,\gregmap} $
	such that $\labelmap(1)=\lab$.
	
	We extract from $\pi$ the values that are written to global registers for
	the first time. Let
	$ \tau_1,\ldots,\tau_s $
	be the subsequence of transitions of $\pi$ that write to a global register.
	For each $j \in [1,s]$, suppose that $\tau_j$ writes a value $d$ into the
	global register $\greg$. We associate with this write a symbol
	$$ b_j = \begin{cases} \tuple{\greg,c}, & \text{if } d=c \text{ for some } c \in \constants,\\ \tuple{\greg,\ast}, & \text{otherwise.} \end{cases} $$
	Thus, $b_j$ records the global register being written and whether the written
	value is a constant or a value generated by a non-deterministic assignment.
	
	Let
	$ \word = a_1 \cdots a_m $
	be obtained from $b_1 \cdots b_s$ by retaining only the first occurrence of
	each symbol. In other words, each $a_i$ represents a distinct kind of global
	write, listed according to the first point at which it occurs in $\pi$.
	
	We claim that $\word$ is generatable. For every $a_i$, let $t_i$ be a thread
	that performs the first write represented by $a_i$. Projecting $\pi$ onto
	$t_i$ yields a $1$-pseudo-run $\rho_i$ that generates $a_i$. Before this first
	write, the run $\rho_i$ can only read values corresponding to symbols that
	occurred earlier in $\word$. Therefore,
	$ \supp(\rho_i) \in a_1^* a_1 \{a_1,a_2\}^* a_2 \cdots \{a_1,\ldots,a_{i-1}\}^* a_{i-1}, $
	and hence $a_i$ is generatable from the preceding symbols.
	
	Finally, project $\pi$ onto the witnessing thread $1$. This gives a
	$1$-pseudo-run $\rho_\lab$ reaching $\lab$. Since every value read by this
	thread was written somewhere earlier in $\pi$, its support can use only the
	symbols $a_1,\ldots,a_m$. Moreover, the first occurrence of $a_i$ in the
	support can only appear after the first occurrences of
	$a_1,\ldots,a_{i-1}$. Hence,
	$ \supp(\rho_\lab) \in a_1^* a_1 \{a_1,a_2\}^* a_2 \cdots a_m\{a_1,\ldots,a_m\}^*. $
	Thus, $\lab$ is pseudo-reachable by the required expression.
	
	\smallskip
	\noindent\textbf{($\Leftarrow$)}
	Assume that there is a generatable word
	$ \word = a_1 \cdots a_n $
	and that $\lab$ is pseudo-reachable by
	$ a_1^* a_1 \{a_1,a_2\}^* a_2 \cdots a_n\{a_1,\ldots,a_n\}^*$.
	For every $i \in [1,n]$, since $a_i$ is generatable, there is a
	$1$-pseudo-run $\rho_i$ that generates $a_i$. In addition, let
	$\rho_\lab$ be a $1$-pseudo-run reaching $\lab$.
	
	The difficulty is that pseudo-transitions may read values from the global
	register without specifying which thread produced them. We construct a genuine
	run by using sufficiently many copies of the generating pseudo-runs. These
	copies ensure that whenever another thread needs the value represented by
	$a_i$, some producer thread can write that value to the global register.
	
	For each $i \in [1,n]$, define the number $k_i$ of copies of $\rho_i$
	backwards as follows:
	$ k_n = |\rho_\lab| $,
	and, for every $i \in [1,n-1]$,
	$ k_i = \sum_{j=i+1}^{n} \bigl(k_j \cdot |\rho_j| + |\rho_\lab|\bigr). $
	
	Intuitively, $k_i$ is chosen large enough to supply the value represented by
	$a_i$ to every later generator $\rho_j$, for $j>i$, and to the witnessing
	pseudo-run $\rho_\lab$. Each of the $k_j$ copies of a later generator
	$\rho_j$ may ask for it up to $|\rho_j|$ times, and the witnessing run
	$\rho_\lab$ may ask for it up to $|\rho_\lab|$ times.
	
	We start with
	$ K = 1 + \sum_{i=1}^{n} k_i $
	threads. One distinguished thread, denoted by $\thread_\ell$, will simulate
	$\rho_\lab$. For every $i \in [1,n]$, we use $k_i$ many threads, denoted
	collectively by $\Thread_{k_i}$, each of which simulates $\rho_i$ and is able
	to produce the value represented by $a_i$.
	
	We now decompose the pseudo-runs according to the first occurrence of each
	required pseudo-read. Since $\rho_i$ generates $a_i$, its support has the
	form
	$$ \supp(\rho_i) \in a_1^* a_1 \{a_1,a_2\}^* a_2 \cdots \{a_1,\ldots,a_{i-1}\}^* a_{i-1}. $$
	Accordingly, we decompose
	$ \rho_i = \rho_i^0 \cdot \rho_i^1 \cdots \rho_i^{i-1} \cdot \tau_i $,
	where $\tau_i$ is the transition that generates $a_i$,
	$\supp(\rho_i^0)=\epsilon$, and, for every $j \in [1,i-1]$,
	$ \supp(\rho_i^j) \in a_j \{a_1,\ldots,a_j\}^*$.
	
	Thus, $\rho_i^j$ is the part of $\rho_i$ that starts when the value
	represented by $a_j$ becomes available and ends immediately before the first
	use of the next new value.
	
	Similarly, decompose the witnessing pseudo-run as
	$ \rho_\lab = \rho_\lab^0 \cdot \rho_\lab^1$, 
	where 	$ \supp(\rho_\lab^0)=\epsilon $ and  $ \supp(\rho_\lab^1)\in \{a_1,\ldots,a_n\}^*$.
		We construct a genuine run in $n+1$ rounds.
	
	\medskip
	\noindent\emph{Round $0$.}
	For each $i \in [1,n]$, every thread in $\Thread_{k_i}$ executes
	$\rho_i^0$. The thread $\thread_\ell$ executes $\rho_\lab^0$. Since these
	prefixes contain no pseudo-transitions, they only access the initial global
	memory and can be executed directly. At the end of this round, every thread
	in $\Thread_{k_1}$ is ready to execute $\tau_1$ and thereby produce $a_1$.
	
	\medskip
	\noindent\emph{Round $j$, for $j \in [1,n]$.}
	At the beginning of this round, sufficiently many threads are ready to
	execute $\tau_j$ and produce the value represented by $a_j$. Whenever a later
	pseudo-run needs a pseudo-transition supplying $a_j$, we first schedule one
	such producer thread to execute $\tau_j$, and then execute the corresponding
	segment of the consumer thread. Thus, every pseudo-read of $a_j$ is replaced
	by an actual write of the corresponding value immediately before it is read.
	
	The definition of $k_j$ guarantees that enough producer threads are available.
	Indeed, the value $a_j$ is needed only by copies of $\rho_i$ with $i>j$ and
	by the witnessing run $\rho_\lab$. The term
	$ k_i \cdot |\rho_i| $
	bounds the total number of times that copies of $\rho_i$ may require an
	earlier value, while the term $|\rho_\lab|$ bounds the number of requests
	made by the witnessing thread. Hence, after completing round $j$, enough
	threads remain to supply every later demand for $a_j$.
	
	After round $n$, for every $i \in [1,n]$, there are still enough threads
	available to produce the value represented by $a_i$ whenever it is requested
	by $\rho_\lab$. We therefore execute $\rho_\lab^1$, replacing each of its
	pseudo-transitions by a real write performed by a suitable producer thread.
	
	In this way, every pseudo-transition of every $\rho_i$ and of $\rho_\lab$ is
	replaced by a genuine transition that writes the required value to the global
	register. The resulting execution is therefore a valid parametrized run of
	$\pds$, and the distinguished thread $\thread_\ell$ reaches $\lab$.
	Consequently, $\lab$ is parametrized reachable in $\pds$.
\end{proof}

 Combining this with the results from \cite{DemriL09} that states reachability in a register automata over finite/infinite domain is {\sc Pspace hard} gives us the following theorem.

\begin{theorem}\label{thm:nofresh:dec}
	The PReach problem is {\sc Pspace-complete}.
\end{theorem}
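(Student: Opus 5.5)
The argument splits into an upper bound and a lower bound. For membership in {\sc Pspace} I would rely on Lemma~\ref{lem:nofresh:dec}. It reduces PReach to two checks: guess a generatable sequence $\word = a_1 \cdots a_m$, verify that it is generatable, and verify that $\lab$ is pseudo-reachable by the final regular expression $a_1^* a_1 \{a_1,a_2\}^* \cdots a_m\{a_1,\ldots,a_m\}^*$. Since $\word$ repeats no letter of $\globalregs \times (\constants \cup \{\ast\})$, we have $m \leq |\globalregs|\cdot(|\constants|+1)$. The guessed word therefore has polynomial size and can be stored explicitly. By Savitch's theorem, nondeterminism inside polynomial space costs nothing.

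The core subroutine decides, for a regular language $\reglang$ of this staged form and a target (a pair $\tuple{\greg,\const}$, a pair $\tuple{\greg,\ast}$, or a label $\lab$), whether some pseudo-$1$-run $\rho$ reaches the target with $\msupp(\rho)\in\lang(\reglang)$. I would build a product of the program with the automaton for $\reglang$, as in the proof of Lemma~\ref{lem:dec:tracking}. The automaton for $\reglang$ only needs the current stage index $i \le m$. Its state keeps the current label and the stage index. For data, it keeps a valuation of $\localregisters \cup \globalregs$ into $\constants \cup \domain_1$, where $\domain_1$ is a finite pool of $|\localregisters\cup\globalregs|+1$ abstract non-constant values, reused as in that proof.

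Each move of the product does one of the following:
\begin{enumerate}
\item It simulates a program instruction; nondeterministic assignment may choose any constant or any pool value, since freshness is not required.
\item It performs a pseudo-assignment to $\greg$ of a constant $\const$ or of an arbitrary pool value, when $\tuple{\greg,\const}$ (respectively $\tuple{\greg,\ast}$) is among $a_1,\ldots,a_i$.
\item It advances the stage to $i+1$, which is required exactly when the pseudo-assignment uses $a_{i+1}$.
\end{enumerate}
A configuration has polynomial size, so reachability in the product can be guessed step by step in {\sc NPspace} $=$ {\sc Pspace}. Running this check once for each $a_i$, and once for $\lab$, gives the upper bound.

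The step that needs care is the justification of the finite pool for $\ast$-values. Without freshness, equalities between registers are the only thing the program can observe. A value that is pseudo-assigned may be any non-constant, including one already held in another register. This is sound because, as argued before Lemma~\ref{lem:nofresh:dec}, a producer of $\tuple{\greg,\ast}$ can be cloned and made to generate any desired value. I would check that the cloning in the ($\Leftarrow$) direction of Lemma~\ref{lem:nofresh:dec} indeed lets a producer copy write any prescribed non-constant value, by renaming the nonce it generated.

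For hardness I would invoke \cite{DemriL09}: reachability for a single register automaton is {\sc Pspace}-hard. I would reduce it to PReach with $\globalregs=\emptyset$. Threads then cannot interact, so a label is parametrically reachable iff one thread reaches it, which is exactly register-automaton reachability in the program syntax. The only work is the routine encoding of the automaton's guards and updates as program instructions.
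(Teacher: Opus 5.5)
Your proposal is correct and follows the paper's own proof: guess the repetition-free word $a_1\cdots a_m$, which has polynomial length; verify each letter and the target by a {\sc Pspace} reachability search over a finite abstraction of pseudo-$1$-runs; and obtain hardness from {\sc Pspace}-hardness of register-automaton reachability. Your single pool of $|\localregisters\cup\globalregs|+1$ abstract values with a stage counter is the simplification of the acyclic-case construction that the paper only gestures at, since without freshness there is no nonce/support split and no width counting. One small nit: the value a producer writes for $\tuple{\greg,\ast}$ need not be a nonce it generated itself, so the renaming should be a constant-preserving bijection applied to the producer's whole pseudo-run and, recursively, to its suppliers; this is harmless without freshness.
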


\begin{proof}[Proof of Theorem \ref{thm:nofresh:dec1}]
	We guess a  sequence $\word$ as required by Lemma \ref{lem:nofresh:dec}, this is of size at most linear in the size of the given system. We check if this sequence is generatable by iterating over each letter in the word.  By a proof similar to one in Section \ref{apx:sec:oneglobal:acyclic}, this can be done in {\sc Pspace}. We then check if there is a witnessing $1$-pseudo run. This again can also be achieved in {\sc Pspace}. The correctness follows from Lemma \ref{lem:nofresh:dec}.
\end{proof}

	\section{High level macro implementation}\label{sec:highmac}
	
	\noindent This section details the explicit algorithmic translations for these macros—namely the assert constraint, deterministic if/else branching, and non-deterministic If(*) block selection. We have given an implementation for all the three macros as a code snippet below.

\begin{nolinenumbers}
	
	\begin{figure}[htpb]
		
		\begin{minipage}[t]{0.48\textwidth}
			\begin{algorithm}[H]
				\LinesNotNumbered
				\caption{Implementation of $\mathtt{assert}(e_1, \dots, e_n)$}
				\begin{algorithmic}[1]
					\Procedure{:}{}
					\State  \textbf{if} $\neg e_1$
					\textbf{ then goto trap}  \textbf{ end}				
					\State  \textbf{if} $\neg e_2$
					\textbf{ then goto trap}  \textbf{ end}				
					\State \hspace{\algorithmicindent} $\vdots$					
					\State \textbf{if} $e_n$
					\textbf{ then goto} $\ell_{\tt nxt}$ \textbf{ end}
					\State trap: \textbf{goto trap} 
					\State$\ell_{\tt nxt}$:  next instruction
					\EndProcedure
				\end{algorithmic}
			\end{algorithm}
		\end{minipage}\hfill
		\begin{minipage}[t]{0.51\textwidth}
			\begin{algorithm}[H]
				\caption{Implementation of
					\texttt{if(*) \{ins$_1$\}; $\dots$; \{ins$_k$\}}}
				\begin{algorithmic}[1]
					\State \textbf{if} $*$
					\textbf{ then goto} $\ell_1$ \textbf{ end}
					\State \textbf{if} $*$
					\textbf{ then goto} $\ell_2$ \textbf{ end}
					\Statex \hspace{\algorithmicindent} $\vdots$
					 \State \textbf{if} $*$ \textbf{then goto } $\ell_k$				
					\State $\ell_1$: $\{\text{instr}_1\}$ \textbf{goto} $8$				
					\Statex \hspace{\algorithmicindent} $\vdots$
					
					\State $\ell_k$: $\{\text{instr}_k\}$ \textbf{goto} $L_{\mathrm{end}}$
					
					\State $L_{\mathrm{end}}$: next instruction
				\end{algorithmic}
			\end{algorithm}
		\end{minipage}
			\end{figure}
			\begin{figure}[htpb]
			\vspace{0pt}
			\begin{algorithm}[H]
				\LinesNotNumbered
				\caption{Implementation of
					\texttt{if $e$ then \{instr$_1$\} else \{instr$_2$\} end}}
				\begin{algorithmic}[]
					\State \textbf{if} $e$
					\textbf{ then goto} $\ell_t$ \textbf{ end}
					
					\Statex $\ell_f$: $\{\text{instr}_2\}$
					\State \textbf{goto} $L_{\mathrm{end}}$
					
					\Statex $\ell_t$: $\{\text{instr}_1\}$
					
					\Statex $L_{\mathrm{end}}$:
				\end{algorithmic}
			\end{algorithm}
		
	\end{figure}
	
\end{nolinenumbers}

\end{document}